\definecolor{blue}{HTML}{1F77B4}
\definecolor{orange}{HTML}{FF7F0E}
\definecolor{green}{HTML}{2CA02C}
\pgfplotsset{compat=1.14}
\newtheorem{tm}{Theorem}[section]
\newtheorem{prop}[tm]{Proposition}
\newtheorem{lem}[tm]{Lemma}
\newtheorem{rk}[tm]{Remark}
\numberwithin{equation}{section}
\numberwithin{tm}{section}
\title{Dynamics of solutions to a multi-patch epidemic model with a saturation incidence mechanism}
\author{ Yawo Ezunkpe\footnote{yawo.ezunkpe@sjsu.edu; Department of Aerospace Engineering, San Jose State University, San Jose, California, USA},\quad   Cynthia T. Nnolum\footnote{cynthia.nnolum@unlv.edu; Department of Mathematical Sciences, University of Nevada Las Vegas, Las Vegas, USA}, \quad  Rachidi B. Salako\footnote{rachidi.salako@unlv.edu; Department of Mathematical Sciences, University of Nevada Las Vegas, Las Vegas, USA},  \quad and\quad Shuwen Xue\footnote{sxue@niu.edu; Department of Mathematical Sciences, Northern Illinois University, Dekalb, IL 60115, USA. } }
\date{}
\begin{document}

\maketitle

\begin{abstract} 
This study examines the behavior of solutions in a multi-patch epidemic model that includes a saturation incidence mechanism. When the fatality rate due to the disease is not null, our findings show that the solutions of the model tend to stabilize at disease-free equilibria. Conversely, when the disease-induced fatality rate is null, the dynamics of the model become more intricate. Notably, in this scenario, while the saturation effect reduces the basic reproduction number $\mathcal{R}_0$, it can also lead to a backward bifurcation of the endemic equilibria curve at $\mathcal{R}_0=1$. Provided certain fundamental assumptions are satisfied, we offer a detailed analysis of the global dynamics of  solutions based on the value of $\mathcal{R}_0$. Additionally, we investigate  the asymptotic profiles of  endemic equilibria as  population dispersal rates tend to zero. To support and illustrate our theoretical findings, we conduct numerical simulations.




 
\end{abstract}

\noindent{\bf Keywords}: Patch model; Epidemic model;
Asymptotic Behavior; Persistence.

\smallskip

{
\noindent{\bf 2020 Mathematics Subject Classification}: 34D05, 34D23, 92D25, 92D30, 37N25}

\section{Introduction} 
Over the past few decades, numerous epidemic models have been proposed and analyzed \cite{1991book,arino2003multi,arino2009diseases,2008book,brauer2019mathematical}. The predictions about disease dynamics derived from both theoretical and numerical studies of these models have proven essential for devising and implementing effective disease control strategies \cite{Diekmann2000,hethcote1976qualitative}. In most of these works, selecting  appropriate incidence mechanism in epidemic modeling plays essential role on the dynamics of solutions. Indeed, as strongly highlighted by the works \cite{AM2004, DD2003,GM2003,HD1991,LHL1987,LLL1986}, a simple change in the incidence mechanism of an epidemic model may lead to substantial changes in dynamical behaviors of solutions to the model.   Additionally, factors such as environmental variability and population movements  significantly influence the spread of diseases within populations.

\medskip

\noindent In the influential work \cite{allen2007asymptotic}, the authors introduce and analyze the following multi-patch epidemic model:
\begin{equation}\label{standard-incidence}
\begin{cases}
\displaystyle\frac{d S_i}{dt}=d_S\sum_{j\in\Omega}L_{ij} S_j- \frac{\beta_{i}S_iI_i}{S_i+I_i}+\gamma_{i} I_{i},  & i\in\Omega,\ t>0, \cr 
\displaystyle\frac{d I_{i}}{dt}=d_I\sum_{j\in\Omega}L_{ij} I_{j}+ \frac{\beta_{i}S_iI_i}{S_i+I_i}-\gamma_{i} I_{i}, & i\in\Omega,\ t>0.
\end{cases}
\end{equation}
This model explores how population movement and spatial heterogeneity affect disease dynamics. It represents a population distributed across a discrete network $\Omega$, consisting of a finite number $|\Omega|=n$ of patches (or cities). For each patch $i \in \Omega$, $S_i(t)$ and $I_i(t)$ denote the number of susceptible and infected individuals at time $t>0$ on patch-i, respectively. The parameters $L_{ij} \geq 0$ for $i \neq j\in\Omega$ represent the degree of movements  from patch $j$ to patch $i$. For $i\in\Omega$,  $L_{ii} = -\sum_{j \neq i} L_{ji}$ is  the total degree of movement out from patch $i$. The disease-specific parameters $\beta_i$ and $\gamma_i$ denote the local transmission and recovery rates on patch $i$, respectively, while the positive numbers $d_S>0$ and $d_I>0$ are the dispersal rates for susceptible and infected individuals, respectively.  An important fact about system \eqref{standard-incidence} is that the total population size is constant over time since the model does not account for changes in population demographics. Under the assumption that the connectivity matrix $L = (L_{ij})$ is symmetric and irreducible, \cite{allen2007asymptotic} demonstrates that, when the total initial population size $N>0$ is given, the basic reproduction number (BRN) $\hat{\mathcal{R}}_0$ (as defined in formula \eqref{R-hat-0} below) serves as a critical threshold for determining disease persistence. Specifically, if $\hat{\mathcal{R}}_0\le 1$, the model \eqref{standard-incidence} predicts eventual disease extinction. Conversely, if $\hat{\mathcal{R}}_0>1$, the model \eqref{standard-incidence} predicts disease persistence and the existence of a unique endemic equilibrium (EE) solution. Additionally, their study reveals that as $d_S$ approaches zero, the profiles of the EEs indicate that if there is at least one ``low risk" patch (that is a patch where the disease transmission rate is less than the recovery rate), the infected component of the EEs will approach zero across all patches. Biologically, this suggests that reducing the dispersal rate of the susceptible population can significantly mitigate the disease's impact. For further insights into system \eqref{standard-incidence}, interested readers can consult \cite{chen2020asymptotic, gao2021impact, gao2020fast, gao2020does, li2019dynamics, SW2024a}. For some recent studies on continuous time and space related epidemic models to \eqref{standard-incidence}, we refer to \cite{Allen, CuiLamLou, CuiLou, li2020dynamics, LSS2023, LouSalako2021, Peng2013, PengZhao, salako2024} and the references therein.

   The disease standard-incidence mechanism, given by $\beta_iS_iI_i/(S_i+I_i)$, is employed in modeling system \eqref{standard-incidence}. This incidence rate, as introduced by \cite{Jong}, is based on the random-mixing assumption, where the probability of a susceptible individual $S_i$ contracting the infection is proportional to the encounter rate with infected individuals, represented by $I_i/(S_i + I_i)$. In contrast, the mass-action transmission mechanism, originating from \cite{kermack1927contribution}, assumes that the rate of new infections per unit area and time is directly proportional to the product of the numbers of infected and susceptible individuals. Consequently, the incidence function $\beta_iS_iI_i$ is used in the mathematical modeling. Studies such as \cite{li2023sis,SW2024c,SW2024a,salako2024} analyze system \eqref{standard-incidence} with the mass-action transmission rate described by
\begin{equation}\label{mass-action-incidence}
\begin{cases}
\displaystyle\frac{d S_i}{dt}=d_S\sum_{j\in\Omega}L_{ij} S_j- \beta_{i}S_iI_i+\gamma_{i} I_{i},  & i\in\Omega,\ t>0, \cr 
\displaystyle\frac{d I_{i}}{dt}=d_I\sum_{j\in\Omega}L_{ij} I_{j}+ \beta_{i}S_iI_i-\gamma_iI_i, & i\in\Omega,\ t>0,
\end{cases}
\end{equation}
and investigate the global dynamics of its solutions. The parameters in system \eqref{mass-action-incidence} carry the same meanings as those in system \eqref{standard-incidence}.  When the total population size $N>0$ is given, both systems \eqref{standard-incidence} and \eqref{mass-action-incidence} have the same  (unique) disease free equilibrium (DFE). However, they have different BRNs as the BRN of system \eqref{standard-incidence} is independent of $N$ while the BRN of system \eqref{mass-action-incidence} depends linearly on $N$. Moreover, under appropriate hypotheses, it was established in \cite{SW2024a} that system \eqref{mass-action-incidence} may have at least two EEs for a range of its BRN less than one.  The latter result strongly highlights the effect of incidence mechanism on the dynamics of these simple multiple patches epidemics models. It also illustrates how population movements may complicate disease dynamics because such interesting multiplicity result of EEs does not hold for the single-strain model \eqref{mass-action-incidence}. For related results on the PDE analogue of system \eqref{mass-action-incidence}, we refer interested readers to \cite{CS2023b,castellano2022effect,DengWu,Li2018,peng2023novel,peng2021global,salako2023,salako2024,tao2023analysis,wen2018asymptotic,WuZou} and the references cited therein.

  In the current work, we consider the saturated-incidence function, represented by $\beta_iS_iI_i/(\zeta_i+S_i+I_i)$, and investigate the dynamics of solutions to the multiple patch epidemic system
\begin{equation}\label{model}
\begin{cases}
\displaystyle\frac{d S_i}{dt}=d_S\sum_{j\in\Omega}L_{ij} S_j- \frac{\beta_{i}S_iI_i}{ \zeta_i+S_i+I_i}+\gamma_{i} I_{i},  & i\in\Omega,\ t>0, \cr 
\displaystyle\frac{d I_{i}}{dt}=d_I\sum_{j\in\Omega}L_{ij} I_{j}+ \frac{\beta_{i}S_iI_i}{ \zeta_i+S_i+I_i}-\gamma_{i} I_{i}-\mu_iI_i, & i\in\Omega,\ t>0,
\end{cases}
\end{equation}
where $\mu_i\ge 0$, $i\in\Omega$, is the disease induced fatality rate on the patch-i.  For  $i\in\Omega$, $\zeta_i>0$ accounts for the  saturation effect of the population during the mixing of the infected population with the susceptible population on the patch-i. Following \cite{GLPZ2023}, $\zeta_i$, $i\in\Omega$, may also be viewed as a portion of the population on patch-i that is naturally resistant to  infection.  When $\zeta_i=0$ and $\mu_{i}=0$ for all $i \in \Omega$, system \eqref{model} simplifies to system \eqref{standard-incidence}. In this study, we focus on the scenario where $\zeta_i > 0$ for all patches $i\in\Omega$. A PDE version of system \eqref{model}, which involves populations engaging in local and random movements   within spatially and temporally varying environments, was recently analyzed in \cite{GLPZ2023}. Additionally, \cite{FLY2024} explored system \eqref{model} with $\bm\mu:=(\mu_i)^T_{i\in\Omega} = \bm0$ in continuous space environments, considering populations that employ nonlocal dispersal movements. Our current work builds upon these studies by examining the dynamics of solutions to system \eqref{model}, which is a space-discrete and time-continuous model. Notably, some of our key findings are novel even in the context of the continuous models discussed in \cite{GLPZ2023, FLY2024}. Specifically, for $\bm\mu=\bm 0$:  Theorems \ref{TH2} and \ref{TH3} establish the global stability of the DFE under certain general conditions; Theorem \ref{TH6} explores the structure of the set of the  EE solutions   as  BRN varies; and Theorem \ref{TH5} confirms the uniqueness of the EE solution under specific assumptions about the model parameters. When the disease induced fatality rate is positive on at least one patch, as mentioned above, Theorem \ref{TH1} shows that the disease will be eventually eradicated.

 When the disease induced fatality rate is negligible, i.e., $\bm \mu = \bm 0$, the BRN $\mathcal{R}_0$ of system \eqref{model} is strictly decreasing in positive $\bm \zeta =(\zeta_i)^T_{i \in \Omega}$  and strictly increasing with respect to the total population size (see Proposition \ref{prop2}). Additionally, Proposition \ref{prop2}-(iii)-(iv) demonstrate the existence of a critical total population size $\mathcal{N}_0$, which increases with respect to the infected population dispersal rate and saturation incidence $\bm\zeta$, respectively, and is independent of the susceptible population rate. The BRN of system \eqref{model} exceeds unity if and only if the total population size is greater than this critical threshold and the dispersal rate $d_{I}$ of the infected population is small. Moreover, Proposition \ref{prop2}-(iv-3) shows that a large saturation incidence can significantly lower the BRN $\mathcal{R}_0$. These findings underscore significant differences compared to the dynamics of solutions in the multiple patch epidemic model \eqref{standard-incidence}, where the BRN is unaffected by the total population size.

 There are several interesting studies on continuous space-time epidemic models. For some recent studies on PDE epidemic models, we refer interested readers to \cite{2008book,CuiLamLou,CuiLou,DNS2023,KuoPeng,li2020dynamics,Li2018,LouSalako2021,Peng2013,Tuncer2012,SLX2019,SS2024,WWK2022,SL2020}.

 The organization of the manuscript is as follows. Section \ref{Sec2} contains four subsections: The first subsection  provides the basic notations, assumptions, and definitions used throughout the work. The second subsection presents the main results along with their relevant biological implications. The third subsection includes extensive numerical simulations that illustrate these theoretical results. The final subsection  offers discussions and comparisons with previous findings. Section \ref{Sec3} contains preliminary results, and the proofs of the main results are detailed in Section \ref{Sec4}.

 \section{Notations, Assumptions, Definitions, and Main Results }\label{Sec2} 
 \subsection{Notations, Assumptions, and Definitions}
Throughout the paper, 
a bold letter always represents  a column vector in $\mathbb{R}^n$, and its no-bold form with a numeric subscript will be a component of it. For example, for any $\bm Z\in\mathbb{R}^n$, one has $\bm Z=(Z_1, \dots, Z_n)^T$, where $Z_j\in\mathbb{R}$ for  $j\in\Omega:=\{1,2,\cdots,n\}$. 
We write $\bm 0=(0, \dots, 0)^T$  and $\bm 1=(1, \dots, 1)^T$. For $\bm Z\in\mathbb{R}^n$, define 
 $$
{\bm Z}_{m}:=\min_{j=1,\cdots, n} Z_j,\quad  {\bm Z}_{M}:=\max_{j=1\cdots,n} Z_j,\quad 
 \|\bm Z\|_1:=\sum_{j=1}^n|Z_{j}|,\quad \text{and}\quad \|\bm Z\|_{\infty}:=\max_{j=1,\cdots, n}| Z_j|.
$$
We denote by ${\rm diag}(\bm Z)$ the diagonal matrix with diagonal entries $[{\rm diag}(\bm Z)]_{ii}= Z_i$ for all $i=1,\cdots,n$. Let $\mathbb{R}_+$ denote the set of nonnegative real numbers. Given $\bm Z, \bm Y\in\mathbb{R}^n$, we write:  $\bm Z\ge  \bm Y$ or $\bm Y\le \bm Z$ if $\bm Z-\bm Y  \in \mathbb{R}^n_+$ ; $\bm Z> \bm Y$ or $\bm Y<\bm Z$ if $\bm Z-\bm Y\in \mathbb{R}^n_+\setminus\{\bm 0\}$; and $\bm Z\gg \bm Y$ or $\bm Y\ll \bm Z$ if $ Z_i>Y_i$  for all $i=1,\cdots,n$. 
 Next, for  $\bm Z, \bm Y\in \mathbb{R}^n$, define the Hadamard product $ \bm Z\circ \bm Y :=(Z_1Y_1,\cdots, Z_n Y_n)^T$, 
 and set  ${\bm Z}/{\bm Y}=(Z_1/Y_1,\cdots,$
 $ Z_n/ Y_n)^T$ if $Y_i\neq 0$ for all $i\in\Omega$.
Adopting these notations, system \eqref{model} can be rewritten as 
 \begin{equation*}
     \begin{cases}
         \bm S'=d_S\mathcal{L}\bm S+(\bm\gamma-\bm\beta\circ\bm S/(\bm\zeta+\bm S+\bm I))\circ\bm I, & t>0,\cr 
        \bm I'=d_I\mathcal{L}\bm I+(\bm\beta\circ \bm S/(\bm\zeta+\bm S+\bm I)-\bm\gamma)\circ\bm I-\bm\mu \circ\bm I, & t>0.
     \end{cases}
 \end{equation*}

\medskip

\noindent Throughout this work, we make the following assumptions on the parameters of the epidemic system \eqref{model}:
\begin{itemize}
\item[{\bf (A1)}] $L_{ii}=-\sum_{j\ne i}L_{ji}$ for $i=1,\cdots,n$,  $\mathcal{L}=(L_{ij})_{i,j=1}^n$ is quasipositive (i.e., $L_{ij}\ge 0$ for any $i\neq j$) and irreducible.
\item[{\bf (A2)}] $\bm\zeta, \bm\beta, \bm\gamma\gg \bm 0$, and $d_S, d_I>0$.
\end{itemize}

  Biologically, assumption {\bf(A1)}  means that the patches are fully connected, allowing individuals to move directly or indirectly between any two patches. Assumption {\bf (A2)} indicates that all members of the population have positive dispersal rates and that individuals can both contract and recover from the disease on any patch. 
 Due to biological interpretations of the vectors $\bm S$ and $\bm I$, we will only be interested in nonnegative solutions of \eqref{model}. Hence, the initial data of system \eqref{model} will always satisfy the standing assumption:
\begin{itemize}
\item[{\bf (A3)}]  $\bm S^0\ge\bm 0$, $ \bm I^0> \bm 0$. 
\end{itemize}

 Assumption {\bf (A3)}  implies that the initial total number of infected individuals is positive.  For any initial data $(\bm S(0), \bm I(0))=(\bm S^0, \bm I^0)\in\mathbb{R}_+^n\times \mathbb{R}_+^n$,  \eqref{model} has a unique nonnegative solution $(\bm S(t),\bm I(t))$ defined on a maximal interval of existence $[0,T_{\max})$.  Since $\bm\mu\ge\bm 0$, summing up all the equations in \eqref{model}, we find that 
\begin{equation}\label{Eq1:1}
\frac{d}{dt}\sum_{j\in\Omega} (S_j+I_j)=-\sum_{j\in\Omega}\mu_iI_j(t)\le 0 \quad 0<t<T_{\max},
\end{equation}which means that the total population is non-increasing. 
 Therefore, for any initial data $(\bm S^0, \bm I^0)$ satisfying {\bf (A3)}, the solution satisfies 
\begin{equation*}
 \sum_{j\in\Omega}( S_{j}(t)+ I_{j}(t))\le \sum_{j\in\Omega}(S_j^0+I_j^0),\quad \forall\ 0\le t<T_{\max}.
\end{equation*}
This means that the solution of  \eqref{model} exists globally and $T_{\max}=\infty$. Note that when $\bm\mu=\bm 0$, equality holds in \eqref{Eq1:1} for all $t\ge 0$.  It is easy to see that if $\bm I^0=\bm 0$ then $\bm I(t)=\bm 0$ for all $t\ge 0$. 
By {\bf (A1)}, $\mathcal{L}$ induces a  strongly positive matrix-semigroup $\{e^{t\mathcal{L}}\}_{t>0}$. Hence, if $(\bm S^0,\bm I^0)$ satisfies {\bf (A3)},  then $\bm S(t)\gg \bm 0$ and $\bm I(t)\gg \bm 0$ for all $t>0$. 

 For  $n\times n$ real-valued square matrix $M$, let $\sigma(M)$ be the set of eigenvalues of $M$,  $\sigma_{*}(M)$ be the spectral bound, i.e.,
\begin{equation*}
    \sigma_{*}(M):=\max\{\mathfrak{R}e(\lambda)\ :\ \lambda\in\sigma(M)\},
\end{equation*}
where $\mathfrak{R}e(\lambda)$ is the real part of $\lambda\in \mathbb{C}$, 
and $\rho(M)$ be the spectral radius, i.e.,
$$
\rho(M):=\max\{|\lambda|\ :\ \lambda\in\sigma(M)\}.
$$
Since $\mathcal{L}$ is quasi-positive and irreducible, it generates a strongly-positive matrix-semigroup $\{e^{t\mathcal{L}}\}_{t\ge 0}$ on $\mathbb{R}^n$.  Moreover, since $\sum_{i\in\Omega}{L}_{ij}=0$  for each $j\in\Omega$,  by the Perron-Frobenius theorem, $\sigma_*(\mathcal{L})=0$ is a simple eigenvalue of $\mathcal{L}$.  Furthermore, there is an  eigenvector $\bm\alpha$ associated with $\sigma_*(\mathcal{L})$ satisfying 
\begin{equation}\label{alpha-eq}
\mathcal{L}\bm\alpha=\bm 0,\quad \sum_{j\in\Omega}{\alpha}_j=1,\quad \text{and} \quad {\alpha}_j>0,\ \forall \ j\in\Omega,
\end{equation}  
and $\bm\alpha$ is the unique nonnegative eigenvalue of $\mathcal{L}$ with $\sum_{j\in\Omega}{\alpha}_j=1$. 

 An equilibrium solution  $(\bm S, \bm I)$ of  \eqref{model}  is a nonngative solution of the system of algebraic equations  
\begin{equation}\label{EE-system-1}
    \begin{cases}0=d_S\mathcal{L}\bm S +(\bm \gamma-\bm\beta\circ\bm S/(\bm\zeta+\bm S+\bm I))\circ\bm I\cr 
    0=d_I\mathcal{L}\bm I+(\bm\beta\circ\bm S/(\bm\zeta+\bm S+\bm I)-\bm\gamma)\circ \bm I-\bm \mu\circ\bm I.
    \end{cases}
\end{equation}
An equilibrium solution of system \eqref{model} of the form $(\bm S,\bm 0)$ is called a \textit{disease free equilibrium} (DFE). Since $\sigma_*(\mathcal{L})=0$ is a simple eigenvalue of $\mathcal{L}$, then $(\bm S,\bm 0)$ is a DFE of system \eqref{model} if and only if
\begin{equation}\label{DFE-eq1}
    \bm S=\|\bm S\|_1\bm\alpha.
\end{equation}
where $\bm\alpha$ is given by \eqref{alpha-eq}.

 Any equilibrium solution $(\bm S,\bm I)$ of \eqref{model} satisfying $\bm I>\bm 0$ and $\bm S>0$  will be called an \textit{endemic equilibrium} (EE)  solution. Since  {\bf (A1)} holds, then $\bm S\gg \bm 0$ and $\bm I\gg 0$ for any EE solution $(\bm S,\bm I)$ of \eqref{model}.  As we shall soon see from Theorem \ref{TH1} below, system \eqref{model} has no EE solution whenever $\bm\mu>\bm 0$.

\subsection{Main Results}
Next, we state our main results. To this end, we first consider the case of $\bm \mu>\bm 0$, and then discuss the case of $\bm\mu=\bm 0$.  {Throughout the paper, $\bm\alpha$ is fixed and satisfies \eqref{alpha-eq}.}


\subsubsection{Large-time behavior of solutions of system \eqref{model} when $\bm\mu> {\bm0}$.}

Our main result on system \eqref{model} when $\bm\mu>\bm0$ reads as follows.

\begin{tm}\label{TH1}  Suppose that {\bf (A1)-(A3)} holds.  Suppose also that $\bm\mu> {\bm 0}.$ Then, $\|\bm S^0+\bm I^0\|_1> \int_0^{\infty}\sum_{j\in\Omega}\mu_j I_j(t)dt$ and $(\bm S(t), \bm I(t))\to \big(\big(\|\bm S^0+\bm I^0\|_1-\int_0^{\infty}\sum_{j\in\Omega}\mu_j I_j(t)dt\big)\bm\alpha,{\bm 0}\big)$ as $t\to\infty$. 
    
\end{tm}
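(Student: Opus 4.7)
The plan is to combine a LaSalle--Barbalat-type invariance argument with the monotonicity of the total population to identify the $\omega$-limit set, and then to rule out degeneracy of the limit by a direct estimate that exploits the saturation in the incidence.

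Setting $N(t):=\|\bm S(t)+\bm I(t)\|_1$, identity~\eqref{Eq1:1} integrates to $N(0)-N(t)=\int_0^t\sum_j\mu_jI_j(s)\,ds$, so $N(\cdot)$ is nonincreasing; hence $N(t)\to N_\infty\in[0,N(0)]$ and
\[
\int_0^\infty\sum_{j\in\Omega}\mu_jI_j(s)\,ds=N(0)-N_\infty.
\]
In particular the trajectory lies in a fixed compact subset of $\mathbb{R}_+^n\times\mathbb{R}_+^n$, and the right-hand side of~\eqref{model} is uniformly bounded along the orbit. The nonnegative, integrable function $t\mapsto\sum_j\mu_jI_j(t)$ thus has a bounded derivative, so by a Barbalat-type lemma it tends to $0$ as $t\to\infty$. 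I would next pass to the $\omega$-limit set $\omega$: on any entire orbit $(\bm S^*(t),\bm I^*(t))\subset\omega$ one has $\sum_j\mu_jI_j^*(t)\equiv 0$, which together with $\bm\mu>\bm 0$ forces $I_{j_0}^*(t)\equiv 0$ for every $j_0$ with $\mu_{j_0}>0$. To propagate this to every patch I would notice that if $I_j^*\equiv 0$ then the $j$-th coordinate of the $\bm I^*$-equation reduces to $\sum_{k\ne j}L_{jk}I_k^*(t)=0$, and nonnegativity of each summand forces $I_k^*\equiv 0$ whenever $L_{jk}>0$; irreducibility of $\mathcal{L}$ (assumption \textbf{(A1)}) then yields $\bm I^*\equiv\bm 0$ on the whole orbit. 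With $\bm I^*\equiv\bm 0$, the $\bm S^*$-equation reads $\bm S^{*\prime}=d_S\mathcal{L}\bm S^*$, and since $\sigma_*(\mathcal{L})=0$ is a simple eigenvalue with eigenvector $\bm\alpha$ (the remaining spectrum having negative real part), the only backward-bounded solutions are the constants $c\bm\alpha$, $c\ge 0$. Matching $c=\lim_n N(t_n)=N_\infty$ along an approximating sequence gives $\omega=\{(N_\infty\bm\alpha,\bm 0)\}$, and hence $(\bm S(t),\bm I(t))\to(N_\infty\bm\alpha,\bm 0)$.

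The remaining obstacle, and in my view the main one, is the strict inequality $N_\infty>0$. I would argue by contradiction: suppose $N_\infty=0$, so that $S(t):=\|\bm S(t)\|_1\to 0$. Pick $T_0$ with $S(t)\le \bm\zeta_m\bm\gamma_m/(2\bm\beta_M)$ for $t\ge T_0$. Using $\sum_jS_jI_j\le S(t)\|\bm I(t)\|_1$ and $\zeta_j+S_j+I_j\ge\bm\zeta_m$,
\[
S'(t)=\sum_j\gamma_jI_j-\sum_j\frac{\beta_jS_jI_j}{\zeta_j+S_j+I_j}\ge \bm\gamma_m\|\bm I(t)\|_1-\frac{\bm\beta_M}{\bm\zeta_m}S(t)\|\bm I(t)\|_1\ge\frac{\bm\gamma_m}{2}\|\bm I(t)\|_1\ge 0
\]
for $t\ge T_0$. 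Hence $S(\cdot)$ is eventually nondecreasing; combined with $S(t)\to 0$ this forces $S(t)\equiv 0$ on a tail, contradicting the fact (guaranteed by strong positivity of $\{e^{t\mathcal{L}}\}_{t>0}$ and $\bm I^0>\bm 0$) that $\bm S(t)\gg\bm 0$ for all $t>0$. Thus $N_\infty>0$, and $\int_0^\infty\sum_j\mu_jI_j\,dt=N(0)-N_\infty<N(0)$. I want to emphasize that it is precisely the saturation $\bm\zeta\gg\bm 0$ that drives this last step: near the origin the incidence is quadratic in $(S,I)$, so the linear recovery $\gamma_jI_j$ dominates and actively replenishes the susceptible pool, a mechanism which would not be available for the standard-incidence model~\eqref{standard-incidence}.
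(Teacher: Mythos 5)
Your proposal is correct, but it reaches the conclusion by a genuinely different route from the paper. The paper's central device is the Harnack-type inequality (Lemma \ref{Harnck-lemma}), which gives $\|\bm I(t)\|_{\infty}\le c_1\min_{j}I_j(t)$ for $t\ge 1$; this converts the dissipation identity \eqref{Eq1:1} into $\frac{d}{dt}\sum_j(S_j+I_j)\le -\frac{\|\bm\mu\|_1}{c_1}\|\bm I(t)\|_{\infty}$ even when only some $\mu_j$ are positive, yielding $\int_1^\infty\|\bm I\|_\infty\,dt<\infty$ and hence $\|\bm I\|_1\to 0$ by Lipschitz continuity (itself a Barbalat-type step), after which Lemma \ref{lem2} pins down the profile of $\bm S$. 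You instead only extract integrability of $\sum_j\mu_jI_j$ (the components with $\mu_j>0$), apply Barbalat to that quantity, and then transfer the vanishing to the remaining components by LaSalle invariance: on entire orbits in the $\omega$-limit set the equation $0=d_I\sum_{k\ne j}L_{jk}I_k^*$ with nonnegative summands propagates $I^*\equiv 0$ through the strongly connected graph. So irreducibility enters your argument explicitly in the propagation step, whereas in the paper it is hidden inside the Harnack lemma; your route avoids importing that lemma from \cite{DBS2023} at the cost of invoking the invariance-principle machinery (entire orbits, spectral decomposition of $e^{td_S\mathcal{L}}$ for backward-bounded solutions). For the strict positivity of the limit, both arguments rest on the same mechanism — saturation makes the recovery term dominate the incidence when $\|\bm S\|_1$ is small — but the paper runs a comparison argument to obtain the uniform quantitative lower bound $\sum_jS_j(t)\ge\min\{\bm\gamma_m\bm\zeta_m/\|\bm\beta\|_\infty,\sum_jS_j(t_0)\}$, while you argue by contradiction and only conclude $N_\infty>0$ qualitatively, which is all the theorem requires. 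Both proofs are sound; the paper's gives slightly more quantitative information, yours is arguably more self-contained.
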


 When only disease induced death rate is taken into account by ignoring other factors that may impact population demographics, Theorem \ref{TH1} suggests that the disease will always be contained.  It would be of important biological interest to examine the global dynamics of solutions to system \eqref{model} by incorporating  population's natural birth and death rates. In general, it is a challenging task to establish an explicit formula for limit of solutions in Theorem \ref{TH1}. Nonetheless, explicit formulas may derived in some specific cases as detailed in the next remark.
 
 \begin{rk}Assume  $|\Omega|=1$ and $\mu>0$. 
 \begin{itemize}
 \item[\rm(i)] Assume in addition that $\zeta>0$. Then, the explicit formula for $\mu\int_0^{\infty}I(t)dt$,  hence for the limit of the susceptible population,  in terms of the initial data can be written  if $\beta=\mu+\gamma$  (see Theorem \ref{TH9}-{\rm (i)}). When $\beta\ge \mu+\gamma$, it always holds that $\int_0^{\infty}I(t)dt\to(S^0+I^0)/\mu$ as $\zeta\to0^+$ (see Theorem \ref{TH9}-{\rm(i)}).  
 
 \item[\rm (ii)] If  $\zeta=0$ in \eqref{model},   explicit formula of the unique solution of \eqref{model} is given by 
 \begin{equation}\label{solution-formula}
 S(t)=(S^0+I^0)Z^{\frac{\mu}{\beta}}(t)-I^0Z(t)e^{(\beta-\mu-\gamma)t} \quad \text{and} \quad I(t)=I^0Z(t)e^{(\beta-\mu-\gamma)t} \quad \forall\ t\ge 0,
 \end{equation} where   $Z(t)$ is given by \eqref{Z-formula}. As a consequence of \eqref{solution-formula},  Theorem \ref{TH9}-{\rm(ii)} below gives explicit formula for the limit of $(S(t),I(t))$ as $t$ tends to infinity in terms of the initial data.
 \end{itemize}
 \end{rk}

\subsubsection{Large-time behavior of solutions of system \eqref{model} when $\bm \mu=\bm0$.} 
Throughout this subsection, we assume that $\bm\mu=\bm0$. Thanks to the first equality in \eqref{Eq1:1}, for every positive number $N$ is fixed, the semiflow generated by  solutions of \eqref{model} leaves invariant the set 
$$
\mathcal{E}:=\Big\{(\bm S,\bm I)\in \mathbb{R}_+^n\times \mathbb{R}_+^n\ : \ \sum_{j\in\Omega}( S_j+ I_{j})= N\Big\}.
$$  
In this section, unless stated otherwise, we  fix $N>0$ and assume that our initial data is in the compact set \(\mathcal{E}\). First, thanks to \eqref{DFE-eq1}, $(N\bm\alpha,\bm 0)$ is the unique DFE of system \eqref{model} in $\mathcal{E}$. Note also from \eqref{EE-system-1} that an EE solution of system \eqref{model} in $\mathcal{E}$ is a positive solution of 
\begin{equation}\label{EE-system}
    \begin{cases}0=d_S\mathcal{L}\bm S +(\bm \gamma-\bm\beta\circ\bm S/(\bm\zeta+\bm S+\bm I))\circ\bm I\cr 
    0=d_I\mathcal{L}\bm I+(\bm\beta\circ\bm S/(\bm\zeta+\bm S+\bm I)-\bm\gamma)\circ \bm I,\cr
    N=\sum_{j\in\Omega}(S_j+I_j).
    \end{cases}
\end{equation}

\medskip

\noindent Linearizing system \eqref{model} at the DFE $(N\bm\alpha,\bm 0)\in\mathcal{E}$ when $\bm\mu=\bm0$ with respect to initial perturbations in $\mathcal{E}$ gives rise to the ODE-system 
\begin{equation}\label{linear-model}
\begin{cases}
    \frac{d\tilde{\bm S}}{dt}=d_S\mathcal{L}\tilde{\bm S} + \big(\bm\gamma-N\bm\beta\circ\bm\alpha/(\bm\zeta+N\bm\alpha) \big)\circ\tilde{\bm I} & t>0,\cr
    \frac{d\tilde{\bm I}}{dt}=d_I\mathcal{L}\tilde{\bm I}+\big(N\bm\beta\circ\bm\alpha/(\bm\zeta+N\bm\alpha) -\bm\gamma\big)\circ\tilde{\bm I} & t>0,\cr 
    0=\sum_{j\in\Omega}(\tilde{S}_j+ \tilde{I}_j).
    \end{cases}
\end{equation}
Hence, when $\bm \mu=\bm 0$, the stability of the null solution $\bm 0$ of system \eqref{linear-model} determines the local stability of the DFE $(N\bm\alpha,\bm 0)\in\mathcal{E}$ of system \eqref{model} with respect to initial perturbations in $\mathcal{E}$.  Now,  define 
 \begin{equation}\label{F-definition}
   {V}:={\rm diag}(\bm \gamma)-d_I\mathcal{L}.
 \end{equation} 
Note that  ${V}$ is invertible since $\mathcal{L}$ satisfies {\bf (A1)}, $\sigma_*(\mathcal{L})=0$ and $\bm \gamma>\bm 0$. Following the next generation matrix approach \cite{DW2002, Diekmann}, the BRN $\mathcal{R}_{0}$ of \eqref{model}  is 
 \begin{equation}\label{R-0}
      \mathcal{R}_{0}:=\rho({F}{V}^{-1})  \quad \text{where}\quad  {F}={\rm diag}(N\bm\alpha\circ \bm\beta/(\bm\zeta+N\bm\alpha)).
 \end{equation}
Note that $F$ depends on $N$ while $V$ depends on $d_I>0$. Hence, $\mathcal{R}_0$ depends both on $N>0$ and $d_I>0$, while it is independent of $d_S>0$. Thanks to \cite{allen2007asymptotic},  when $\bm\zeta =\bm0$ in \eqref{R-0}, we obtain the BRN $\hat{\mathcal{R}}_0$ of the epidemic model \eqref{standard-incidence}   
 \begin{equation}\label{R-hat-0}
  \hat{\mathcal{R}}_0=\rho(\hat{F}V^{-1})    \quad \text{where}\quad \hat{F}={\rm diag}(\bm\beta).
 \end{equation}
 Note also that $\hat{\mathcal{R}}_0$ depends on $d_I>0$, but is independent of $N>0$ and $d_S>0$. 

 Finally, when $\|\bm\beta/\bm\gamma\|_{\infty}>1$, or equivalently the set $\tilde{\Omega}:=\{j\in\Omega : \beta_j>\gamma_j\}$ is nonempty, we introduce the positive quantities 
 \begin{equation}\label{N-star-def}
     \mathcal{N}_{\rm low}^*=\min_{j\in\tilde{\Omega}}\frac{\gamma_j\zeta_j}{(\beta_j-\gamma_j)}\quad \text{and}\quad \mathcal{N}^*_{\rm up}=\min_{j\in\tilde{\Omega}}\frac{\gamma_j\zeta_j}{(\beta_j-\gamma_j)\alpha_j}.
 \end{equation}
 It is easy to see that $ \mathcal{N}^*_{\rm low}\le \mathcal{N}^*_{\rm up}$, with strict inequality if $|\Omega|\ge 2$. As shall be shown below (see Remark \ref{RK1}-{\rm (iv)}), the quantities $\mathcal{N}^*_{\rm up}$ and $\mathcal{N}^*_{\rm low}$ serve as important threshold numbers for the total size $N$ of the population when $\|\bm\beta/\bm\gamma\|_{\infty}>1$. The following result collects some important properties of $\mathcal{R}_0$.

 \begin{prop}\label{prop2} Let $\mathcal{R}_0$ and $\hat{\mathcal{R}}_0$ be defined by \eqref{R-0} and \eqref{R-hat-0}, respectively.
 \begin{itemize}
     \item[\rm (i)] $\mathcal{R}_{0}-1$ and $\sigma_*(F-{V})=\sigma_*(d_I\mathcal{L}+{\rm diag}(N\bm\alpha\circ\bm\beta/(\bm\zeta+N\bm\alpha)-\bm\gamma))$ have the same sign.

     \item[\rm (ii)] If $\bm\alpha\circ\bm\beta/(\bm\zeta+N\bm\alpha)=m\bm\gamma$ for some $m>0$, then $\mathcal{R}_{0}=mN$ for all $d_I>0$. However, if $\bm\alpha\circ\bm\beta/(\bm\zeta+N\bm\alpha)\notin{\rm span}(\bm\gamma)$, then $\mathcal{R}_{0}$ is  strictly decreasing in $d_I$, 
     \begin{equation}\label{R-0-limit}
\lim_{d_I\to 0^+}\mathcal{R}_0= \max_{i\in\Omega}\frac{N\alpha_i\beta_i}{\gamma_i(\zeta_i+N\alpha_i)}\qquad  \text{and}\quad \ \lim_{d_I\to \infty}\mathcal{R}_0=\frac{\sum_{i\in\Omega}\frac{N\beta_i\alpha_i^2}{(\zeta_i+N\alpha_i)}}{\sum_{i\in\Omega}\alpha_i\gamma_i} .
\end{equation}
\item[\rm(iii)] Fix $d_I>0$. Then $\mathcal{R}_0$ is strictly increasing in $N>0$, 
\begin{equation}\label{R-0-limit-in-N}
    \lim_{N\to 0^+}\mathcal{R}_0=0 \qquad \text{and}\qquad \lim_{N\to\infty}\mathcal{R}_0=\hat{\mathcal{R}}_0.
\end{equation}
Hence, if $\hat{\mathcal{R}}_0\le 1$, then $\mathcal{R}_0<1$ for all $N>0$. However, if $\hat{\mathcal{R}}_0>1$, then there is a unique $\mathcal{N}_0=\mathcal{N}_0(d_I,{\bm\zeta})>0$ such that $\mathcal{R}_0<1$ if $0<N<\mathcal{N}_0$; $\mathcal{R}_0=1$ if $N=\mathcal{N}_0$; and $\mathcal{R}_0>1$ if $N>\mathcal{N}_0$.

\item[\rm (iv)] If $\|\bm\beta/\bm \gamma\|_{\infty}>1$, then there is $d_*\in(0,\infty]$, independent of $\bm\zeta$, such that  $\mathcal{N}_0(d_I,\bm\zeta)$ is defined if and only if $0<d_I<d_*$. In addition, the following conclusions hold.
\begin{itemize}
    \item[\rm(iv-1)] {Fix $\bm\zeta\gg \bm 0$.} If  $ (N^*\bm \alpha\circ\bm \beta)/(\bm\zeta+N^*\bm\alpha)=\bm\gamma$ for some $N^*>0$, then $d_*=\infty$ and  $\mathcal{N}_0=N^*$ for all $ d_I>0$.
    
    \item[\rm (iv-2)] {Fix $\bm\zeta\gg \bm 0$.} If $ (N\bm \alpha\circ\bm \beta)/(\bm\zeta+N\bm\alpha)\ne\bm\gamma$ for all $N>0$, then $\mathcal{N}_0$ is strictly increasing in $d_I$ and $\mathcal{N}_0(d_I)\to\mathcal{N}^*_{\rm up}$ as $d_I\to 0^+$, where $\mathcal{N}^*_{\rm up}$ is defined by \eqref{N-star-def}.
    
    \item[\rm(iv-3)] {Fix $0<d_I<d_*$. $\mathcal{N}_0$ is strictly increasing in $\bm \zeta\gg\bm0$ and $\mathcal{N}_0(d_I,\tau\bm\zeta)=\tau\mathcal{N}_0(d_I,\bm \zeta)$ for all $\tau>0$ and $\bm\zeta\gg 0$. In particular, $\bm\zeta_m\mathcal{N}_0(d_I,\bm 1)\le\mathcal{N}_0(d_I,\bm\zeta)\le \bm\zeta_M\mathcal{N}_0(d_I,\bm 1)$ for all $\bm\zeta\gg \bm0$. Hence, $\mathcal{N}_0(d_I,\bm\zeta)\to 0$ as $\bm\zeta\to \bm 0^{+}$ and $\mathcal{N}_0(d_I,\bm\zeta)\to \infty$ as $\bm\zeta_m\to\infty$. }
\end{itemize}

 \end{itemize}
     
 \end{prop}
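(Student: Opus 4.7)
The plan is to treat the four items of the proposition in sequence, exploiting Perron--Frobenius theory and monotone perturbation of the spectral radius of the irreducible matrix $FV^{-1}$.

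For part (i), I would invoke the standard sign-equivalence from the next-generation-matrix framework. Under \textbf{(A1)}--\textbf{(A2)}, the matrix $V=\mathrm{diag}(\bm\gamma)-d_I\mathcal{L}$ is a nonsingular $M$-matrix (since $-\mathcal{L}$ is quasi-positive with $\sigma_*(\mathcal{L})=0$ and $\bm\gamma\gg\bm0$), so $V^{-1}\ge 0$ entrywise, and $F$ is diagonal and nonnegative. The classical result (e.g.\ van den Driessche--Watmough) then gives that $\rho(FV^{-1})-1$ and $\sigma_*(F-V)$ share the sign; one can give a self-contained proof by applying the Perron eigenvector of $FV^{-1}$ to $(F-V)V^{-1}=FV^{-1}-I$ and using irreducibility of the semigroup generated by $F-V=d_I\mathcal{L}+\mathrm{diag}(N\bm\alpha\circ\bm\beta/(\bm\zeta+N\bm\alpha)-\bm\gamma)$.

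For part (ii), when $\bm\alpha\circ\bm\beta/(\bm\zeta+N\bm\alpha)=m\bm\gamma$, I would exhibit $\bm\alpha$ as a Perron eigenvector: $F\bm\alpha=mN\,\mathrm{diag}(\bm\gamma)\bm\alpha=mN\,V\bm\alpha$ (using $\mathcal{L}\bm\alpha=\bm0$), hence $FV^{-1}(V\bm\alpha)=mN(V\bm\alpha)$, and positivity of $V\bm\alpha=\mathrm{diag}(\bm\gamma)\bm\alpha$ forces $\mathcal{R}_0=mN$. For the non-proportional case, I would differentiate $\rho(FV^{-1})$ in $d_I$ via Kato's formula $\frac{d\mathcal{R}_0}{dd_I}=\frac{\bm v^T F V^{-1}\mathcal{L} V^{-1}\bm u}{\bm v^T\bm u}$, where $\bm u,\bm v\gg\bm0$ are right/left Perron vectors of $FV^{-1}$, and show this quantity is strictly negative: a short computation using the characterization $\sigma_*(\mathcal{L})=0$ with eigenvector $\bm\alpha$ and the hypothesis $F\bm\alpha\notin\mathrm{span}(V\bm\alpha)$ rules out the equality case. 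Alternatively, monotonicity can be obtained from part (i) plus strict monotonicity of $\sigma_*(d_I\mathcal{L}+M)$ in $d_I$ for diagonal $M$ not proportional to a constant. The two limits in \eqref{R-0-limit} follow from continuity: at $d_I=0$, $V^{-1}=\mathrm{diag}(\bm\gamma)^{-1}$ and $FV^{-1}$ becomes diagonal with the stated maximum; as $d_I\to\infty$ one rescales $\tilde V=V/d_I\to-\mathcal{L}$ and uses that $(-\mathcal{L})^{-1}\bm v$ diverges along $\bm\alpha$, extracting the Rayleigh quotient $\sum N\beta_i\alpha_i^2/(\zeta_i+N\alpha_i)\,\big/\sum\alpha_i\gamma_i$ via the dual left eigenvector of $\mathcal{L}$, which is $\bm\alpha$ itself up to normalization (here one uses that $\sum_i L_{ij}=0$).

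For part (iii), the entries $N\alpha_i\beta_i/(\zeta_i+N\alpha_i)$ are strictly increasing in $N$ with derivative $\alpha_i\beta_i\zeta_i/(\zeta_i+N\alpha_i)^2>0$; since $FV^{-1}$ is nonnegative, irreducible (inheriting irreducibility from $\mathcal{L}$), and varies entrywise strictly increasingly in $N$, strict monotonicity of $\mathcal{R}_0=\rho(FV^{-1})$ in $N$ follows from the standard Perron--Frobenius fact. The limits $\mathcal{R}_0\to 0$ as $N\to 0^+$ and $\mathcal{R}_0\to\hat{\mathcal{R}}_0$ as $N\to\infty$ are immediate from continuity of $\rho$ and pointwise convergence of $F$. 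The existence and uniqueness of $\mathcal{N}_0$ then comes from the intermediate value theorem.

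For part (iv), I would first extract the scaling identity $\mathcal{R}_0(N,d_I,\tau\bm\zeta)=\mathcal{R}_0(N/\tau,d_I,\bm\zeta)$, obtained directly by substituting $N=\tau N'$ in $F$; this yields $\mathcal{N}_0(d_I,\tau\bm\zeta)=\tau\mathcal{N}_0(d_I,\bm\zeta)$ and, by sandwiching $\bm\zeta$ between $\bm\zeta_m\bm1$ and $\bm\zeta_M\bm1$, the two-sided bound in (iv-3). For the existence of $d_*$, I would define $d_*:=\sup\{d_I>0:\hat{\mathcal{R}}_0(d_I)>1\}$; since $\mathcal{R}_0(N,d_I)\nearrow\hat{\mathcal{R}}_0(d_I)$ as $N\to\infty$ by (iii), $\mathcal{N}_0(d_I,\bm\zeta)$ is well-defined iff $\hat{\mathcal{R}}_0(d_I)>1$, which is exactly $d_I<d_*$, independently of $\bm\zeta$. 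Item (iv-1) is then immediate from (ii). For (iv-2), I would implicitly differentiate $\mathcal{R}_0(\mathcal{N}_0(d_I),d_I)=1$ and use $\partial_N\mathcal{R}_0>0$, $\partial_{d_I}\mathcal{R}_0<0$ from (ii)--(iii) to conclude $\mathcal{N}_0'(d_I)>0$. For the limit $\mathcal{N}_0(d_I)\to\mathcal{N}^*_{\rm up}$ as $d_I\to 0^+$, using the explicit small-$d_I$ limit from \eqref{R-0-limit}, the equation $\mathcal{R}_0=1$ reduces to $\max_i N\alpha_i\beta_i/(\gamma_i(\zeta_i+N\alpha_i))=1$; solving this on the set $\tilde\Omega$ where $\beta_i>\gamma_i$ (the only indices yielding $N>0$) produces $\mathcal{N}^*_{\rm up}$, and the convergence follows by continuity/monotonicity and a standard limsup--liminf sandwich.

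The main obstacle is the strict monotonicity of $\mathcal{R}_0$ in $d_I$ in part (ii) in the non-proportional case: ruling out the degenerate equality $\bm v^T FV^{-1}\mathcal{L}V^{-1}\bm u=0$ requires using that the Perron eigenvector $\bm u$ is not a scalar multiple of $\bm\alpha=\ker\mathcal{L}$, which itself needs the hypothesis $\bm\alpha\circ\bm\beta/(\bm\zeta+N\bm\alpha)\notin\mathrm{span}(\bm\gamma)$. Handling the $d_I\to\infty$ limit also requires more care than the $d_I\to 0^+$ limit because $V^{-1}$ blows up in the direction of $\bm\alpha$; a clean way is to pass to $d_I V^{-1}\to(-\mathcal{L})^\dagger+\bm\alpha\bm\alpha^T/(\bm\alpha^T\bm\gamma)$-type pseudoinverse expansion and identify the principal term.
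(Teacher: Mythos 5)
Your architecture for parts (iii) and (iv) matches the paper's: strict monotonicity of the diagonal matrix $F$ in $N$ plus Perron--Frobenius and the intermediate value theorem for (iii); the scaling identity $F(N,\tau\bm\zeta)=F(N/\tau,\bm\zeta)$ for the homogeneity of $\mathcal{N}_0$ in $\bm\zeta$ and the resulting sandwich bound; and a monotonicity-in-$d_I$ argument for (iv-2). Where you genuinely diverge is in (i) and (ii): the paper disposes of these by citation (\cite{DW2002} for the sign equivalence, \cite[Theorems 2.6--2.7]{chen2020asymptotic} for the monotonicity in $d_I$ and the two limits), whereas you sketch self-contained proofs. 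Your treatment of the proportional case of (ii) --- exhibiting $V\bm\alpha=\mathrm{diag}(\bm\gamma)\bm\alpha\gg\bm 0$ as a positive eigenvector of the irreducible nonnegative matrix $FV^{-1}$ with eigenvalue $mN$ --- is correct and a clean replacement for the citation.

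Three points need repair. First, in the $d_I\to\infty$ limit you identify the left null vector of $\mathcal{L}$ as ``$\bm\alpha$ itself up to normalization''; since the columns of $\mathcal{L}$ sum to zero, the left null vector is $\bm 1$ while $\bm\alpha$ is the right null vector, and pairing with $\bm\alpha$ on the left would yield $\sum_i\alpha_i^3$-type weights rather than the stated quotient $\sum_i N\beta_i\alpha_i^2/(\zeta_i+N\alpha_i)\big/\sum_i\alpha_i\gamma_i$, which comes from $\langle\bm 1,F\bm\alpha\rangle/\langle\bm 1,\mathrm{diag}(\bm\gamma)\bm\alpha\rangle$. Second, the strict decrease of $\mathcal{R}_0$ in $d_I$ is the hard step and your sketch does not close it: Kato's formula reduces it to the sign of $\bm v^TFV^{-1}\mathcal{L}V^{-1}\bm u$, which is not evidently negative, and your fallback via part (i) only controls the sign of $\mathcal{R}_0-1$ at each $d_I$, not the monotonicity of $\mathcal{R}_0$ itself; this is precisely what the paper outsources to \cite{chen2020asymptotic}. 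Third, the hypothesis of (iv-2) only excludes $(N\bm\alpha\circ\bm\beta)/(\bm\zeta+N\bm\alpha)=\bm\gamma$, not membership in $\mathrm{span}(\bm\gamma)$; before invoking strict monotonicity in $d_I$ at $N=\mathcal{N}_0$ one must rule out proportionality there, which the paper does by noting that proportionality with constant $\tau$ would force $\mathcal{R}_0=\tau=1$ and hence equality with $\bm\gamma$, a contradiction. Relatedly, your (iv-3) sandwich relies on $\mathcal{N}_0$ being increasing in $\bm\zeta$, which you never establish (the paper gets it from the strict decrease of $F$, hence of $\mathcal{R}_0$, in $\bm\zeta$, by the same scheme as the $d_I$-monotonicity of $\mathcal{N}_0$).
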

 
\begin{rk}\label{Rk0} Assume that $\|\bm\beta/\bm\gamma\|_{\infty}>1 $ and let $\mathcal{N}_0$ be given by Proposition \ref{prop2}-{\rm (iii)}. Then, by Proposition \ref{prop2}-{\rm (iv-1)-(iv-2)}, $\mathcal{N}_0$ is constant in $d_I\in(0,d_*)$ if and only if $(\bm\beta/\bm\gamma)_m>1$ and $\bm\zeta\circ\bm\gamma/((\bm\beta-\bm\gamma)\circ\bm\alpha)\in {\rm span}(\bm 1)$. It also follows from Proposition \ref{prop2}-{\rm (iv-3)} that large saturation incidence is helpful to lower the BRN.
    
\end{rk}
  
\noindent The next result concerns the local stability of the DFE and the existence of EE solution of system \eqref{model}. 

  \begin{tm}\label{TH1-2}  Suppose that {\bf (A1)-(A3)} holds and $\bm \mu=\bm 0$. Then the following conclusions hold.
  \begin{itemize}
      \item[\rm (i)] If $\mathcal{R}_0<1$, then the DFE is locally asymptotically stable in $\mathcal{E}$.

      \item[\rm (ii)] If $\mathcal{R}_0>1$, then the disease is uniformly persistent in the sense that there is $m_*>0$ such that 
      \begin{equation}\label{TH1-2-eq1}
          \liminf_{t\to\infty}\min_{j\in\Omega}I_j(t)\ge m_*
      \end{equation}
      for any solution $(\bm S(t), \bm I(t))$ of system \eqref{model} with initial data $(\bm S^{0},\bm I^{0})\in\mathcal{E}$ satisfying $\bm I^{0}>\bm 0$. Furthermore, system \eqref{model} has at least one EE solution.
  \end{itemize}
      
  \end{tm}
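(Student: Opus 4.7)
The plan is to handle (i) by linearized stability and (ii) by the machinery of uniformly persistent semiflows combined with a fixed-point argument. For (i), I would linearize \eqref{model} at the DFE $(N\bm\alpha,\bm 0)$ to obtain system \eqref{linear-model}. The $\tilde{\bm I}$-equation decouples as $\tilde{\bm I}'=(F-V)\tilde{\bm I}$; by Proposition \ref{prop2}-(i) the hypothesis $\mathcal{R}_0<1$ forces $\sigma_*(F-V)<0$, so $\tilde{\bm I}(t)$ decays exponentially. The $\tilde{\bm S}$-equation then takes the form $\tilde{\bm S}'=d_S\mathcal{L}\tilde{\bm S}+\bm g(t)$ with $\bm g(t)$ exponentially small. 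Because perturbations in $\mathcal{E}$ satisfy $\bm 1^\top(\tilde{\bm S}+\tilde{\bm I})=0$, and because $\mathcal{L}$ is strictly stable on the invariant subspace $\{\bm w:\bm 1^\top\bm w=0\}$ (its only zero-eigenvector $\bm\alpha$ has $\bm 1^\top\bm\alpha=1\neq 0$), a variation-of-constants argument forces $\tilde{\bm S}(t)\to\bm 0$, yielding local asymptotic stability.

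For (ii), let $X_0:=\{(\bm S,\bm I)\in\mathcal{E}: \bm I>\bm 0\}$ and $\partial X_0:=\mathcal{E}\setminus X_0$. Strong positivity of $\{e^{t\mathcal{L}}\}_{t\ge 0}$ together with the structure of \eqref{model} makes $X_0$ positively invariant. On $\partial X_0$ the dynamics reduce to $\bm S'=d_S\mathcal{L}\bm S$, and every orbit converges to $(N\bm\alpha,\bm 0)$; hence the maximal invariant set in $\partial X_0$ is the singleton $\mathcal{M}:=\{(N\bm\alpha,\bm 0)\}$, which is isolated and acyclic. The decisive step is to verify that $W^s(\mathcal{M})\cap X_0=\emptyset$, i.e., that the DFE is a uniform weak repeller. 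I would argue by contradiction: if some solution in $X_0$ stayed within an $\epsilon$-neighborhood of $\mathcal{M}$ for all large $t$, then the $\bm I$-equation would be an $\mathcal{O}(\epsilon)$-perturbation of $\bm I'=(F-V)\bm I$. Since $F-V$ is quasi-positive and irreducible with $\sigma_*(F-V)>0$ (Proposition \ref{prop2}-(i) with $\mathcal{R}_0>1$), Perron–Frobenius furnishes a strictly positive left eigenvector $\bm\phi^\top$ with $\bm\phi^\top(F-V)=\lambda\bm\phi^\top$, $\lambda>0$. Pairing the $\bm I$-equation with $\bm\phi$ and taking $\epsilon$ small yields $(\bm\phi^\top\bm I)'\ge (\lambda/2)\bm\phi^\top\bm I$, forcing $\bm\phi^\top\bm I(t)\to\infty$, which contradicts proximity to $\mathcal{M}$.

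With $\mathcal{M}$ identified as an isolated, acyclic uniform weak repeller, standard dissipative-dynamical-systems results (Smith–Thieme or Hirsch–Smith–Zhao) upgrade weak repulsion to the uniform persistence estimate \eqref{TH1-2-eq1}. Compactness of $\mathcal{E}$ then supplies a compact global attractor $\mathcal{A}_0\subset X_0$ bounded away from $\partial X_0$, and the Magal–Zhao fixed-point theorem gives an equilibrium of $\Phi_t$ inside $X_0$, which is the desired EE of \eqref{model}. The main obstacle is the uniform weak repeller step: one must translate the spectral inequality $\sigma_*(F-V)>0$ into a quantitative lower bound that persists along the full nonlinear trajectory, while controlling the drift of $\bm S(t)$ away from $N\bm\alpha$; the remaining ingredients are fairly mechanical applications of established persistence and fixed-point theorems.
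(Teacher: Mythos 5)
Your proposal is correct and follows essentially the same route as the paper, which proves (i) by invoking Proposition \ref{prop2}-(i) for the sign of $\sigma_*(F-V)$ and establishes (ii) by adapting the standard persistence-theory argument of \cite[Theorem 2.3]{wang2004epidemic} (boundary analysis, uniform weak repulsion of the DFE via the positive left eigenvector, then a fixed-point theorem for the EE). The only detail worth making explicit is that the componentwise bound $\liminf_{t\to\infty}\min_{j\in\Omega}I_j(t)\ge m_*$ in \eqref{TH1-2-eq1} is obtained from persistence of $\|\bm I\|_\infty$ via the Harnack-type inequality of Lemma \ref{Harnck-lemma}.
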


  \noindent It is evident that Theorem \ref{TH1-2}-{\rm (i)} follows directly from Proposition \ref{prop2}-{\rm (i)}. Additionally, Theorem \ref{TH1-2}-{\rm (ii)} can be established with some modifications to the proof provided in \cite[Theorem 2.3]{wang2004epidemic}. According to Proposition \ref{prop2}-{\rm (iii)} and Theorem \ref{TH1-2}-{\rm (i)}, if the set $\tilde{\Omega}$ is empty, a mild outbreak of the disease will ultimately be controlled regardless of the dispersal rates  of the population and the population size. However, when $\tilde{\Omega}$ is not empty, Theorem \ref{TH1-2} suggests that the disease is more likely to persist if the dispersal rate of the infected population is low and the total population size exceeds the critical number $\mathcal{N}_{\rm up}^*$. In the following three results, we will determine sufficient conditions on the model parameters that ensure all solutions will eventually stabilize. 

 Our next result concerns the global stability of the DFE. To state this result, we first 
 define
  \begin{equation}  \label{tilde-R-0}
   \tilde{F}={\rm diag}(N\circ\bm\beta/(\bm\zeta+N\bm 1))\quad \text{and}\quad \tilde{\mathcal{R}}_0=\rho(\tilde{F}V^{-1})
  \end{equation} 
  where $V$ is defined as in \eqref{F-definition}. It is clear that $ \mathcal{R}_0\le \tilde{\mathcal{R}}_0$, where the equality holds if and only if $|\Omega|=1$. In the latter scenario, that is $|\Omega|=1$,  we have that $\tilde{\mathcal{R}}_0=\mathcal{R}_0=\frac{N\beta}{(\zeta+N)\gamma}$. 
The next result asserts the global stability of the DFE in $\mathcal{E}$ whenever $\tilde{\mathcal{R}}_0\le 1$ and reads as follows.

\begin{tm}\label{TH2}  Suppose that {\bf (A1)-(A3)} holds. Assume also that $\bm\mu=\bm0$ and $\tilde{\mathcal{R}}_0\le1$ where $\tilde{\mathcal{R}}_0$ is defined by \eqref{tilde-R-0}. Then $(\bm S(t),\bm I(t))\to (N\bm\alpha,\bm 0)$ as $t\to\infty$ for any solution of \eqref{model} with initial $(\bm S^0,\bm I^0)\in\mathcal{E}$.  
    
\end{tm}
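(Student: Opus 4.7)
The plan is to derive a componentwise differential inequality $\bm I'(t)\le(\tilde F-V)\bm I(t)$ driven by the quasipositive and irreducible matrix $\tilde F-V$, and then combine a Perron--Frobenius Lyapunov estimate with LaSalle's invariance principle to handle both the subcritical and critical cases.

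Because $\bm\mu=\bm 0$ and the orbit remains in $\mathcal{E}$, the total population is conserved, so $0\le S_i(t)\le N$ for every $i\in\Omega$ and $t\ge 0$. Monotonicity of $x\mapsto x/(\zeta_i+x)$ then gives
\[
\frac{S_i(t)}{\zeta_i+S_i(t)+I_i(t)}\;\le\;\frac{S_i(t)}{\zeta_i+S_i(t)}\;\le\;\frac{N}{\zeta_i+N},
\]
and substituting this bound into the $\bm I$-equation of \eqref{model} yields the announced inequality. The matrix $\tilde F-V=d_I\mathcal L+{\rm diag}\bigl(N\bm\beta/(\bm\zeta+N\bm 1)-\bm\gamma\bigr)$ is quasipositive and irreducible, so Perron--Frobenius supplies a simple real spectral bound $\lambda_*=\sigma_*(\tilde F-V)$ and a strictly positive left eigenvector $\bm\phi\gg\bm 0$; by the standard next-generation-matrix identity (the same argument that underlies Proposition~\ref{prop2}-(i) with $\tilde F$ in place of $F$), $\lambda_*$ and $\tilde{\mathcal R}_0-1$ share sign, hence $\lambda_*\le 0$. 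Setting $W(t):=\bm\phi^T\bm I(t)\ge 0$ and differentiating along the $\bm I$-equation produces
\[
\dot W(t)=\lambda_*W(t)+\sum_{i\in\Omega}\phi_i\beta_i I_i(t)\!\left[\frac{S_i(t)}{\zeta_i+S_i(t)+I_i(t)}-\frac{N}{\zeta_i+N}\right]\le\lambda_*W(t)\le 0.
\]

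In the subcritical case $\tilde{\mathcal R}_0<1$, $\lambda_*<0$ yields $W(t)\le W(0)e^{\lambda_*t}\to 0$, which, combined with $\bm\phi\gg\bm 0$, forces $\bm I(t)\to\bm 0$ exponentially; substituting this decay into the $\bm S$-equation and invoking the characterization \eqref{DFE-eq1} of the unique DFE in $\mathcal E$ then gives $\bm S(t)\to N\bm\alpha$. The critical case $\tilde{\mathcal R}_0=1$, where $\lambda_*=0$ and the Lyapunov inequality is only non-strict, is the main obstacle. I would handle it via LaSalle's invariance principle on the compact forward-invariant set $\mathcal E$: the $\omega$-limit set of any solution is contained in the largest invariant subset of $\{\dot W=0\}$. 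The bracketed factor in $\dot W$ vanishes precisely when $S_i=N$ and $I_i=0$ simultaneously, and combined with the constraint $\sum_j(S_j+I_j)=N$ this forces $\bm I\equiv\bm 0$ along any orbit on which $\dot W\equiv 0$. Restricted to $\{\bm I=\bm 0\}\cap\mathcal E$, the dynamics reduces to the linear equation $\bm S'=d_S\mathcal L\bm S$ with $\|\bm S\|_1=N$; since the nonzero eigenvalues of $\mathcal L$ have strictly negative real part, the only bounded two-sided orbit in this slice is the fixed point $N\bm\alpha$, so the $\omega$-limit set reduces to $\{(N\bm\alpha,\bm 0)\}$ and the claimed convergence follows.
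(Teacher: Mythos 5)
Your proof is correct, but it takes a genuinely different route from the paper's. You discard the infected population from the denominator outright, bounding $S_i/(\zeta_i+S_i+I_i)\le N/(\zeta_i+N)$ so that $\bm I'\le(\tilde F-V)\bm I$, and then use the linear Lyapunov functional $W=\bm\phi^T\bm I$ built from the \emph{left} Perron eigenvector of $\tilde F-V$; the critical case $\tilde{\mathcal R}_0=1$ is disposed of by LaSalle's invariance principle. Your identity for $\dot W$ is correct, each summand $\phi_i\beta_iI_i\bigl[S_i/(\zeta_i+S_i+I_i)-N/(\zeta_i+N)\bigr]$ is nonpositive, and a vanishing summand forces $I_i=0$ (either directly or because the bracket can only vanish when $S_i=N$ and $I_i=0$ simultaneously), so $\{\dot W=0\}=\{\bm I=\bm 0\}\cap\mathcal E$; on that slice the only entire orbit remaining in the compact set $\mathcal E$ is $(N\bm\alpha,\bm 0)$ because the nonzero spectrum of $\mathcal L$ lies in the open left half-plane, and the convergence follows. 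The paper instead keeps the sharper bound $N\beta_i/(\zeta_i+N+I_i)$, monitors the weighted sup-norm $\eta(t)=(\bm I(t)/\tilde{\bm\alpha})_M$ built from the \emph{right} eigenvector, proves $\eta$ is nonincreasing by comparison with the linear cooperative system, and excludes a positive limit $\eta^*>0$ by invoking the discrete Harnack inequality (Lemma \ref{Harnck-lemma}) to obtain a uniform pointwise lower bound on $\bm I$ that strictly shifts the spectral bound. Your approach is more elementary and self-contained (Perron--Frobenius, the sign relation between $\sigma_*(\tilde F-V)$ and $\tilde{\mathcal R}_0-1$, and textbook LaSalle), and it handles the subcritical and critical cases in one framework; the paper's approach yields the monotone decay of $\|\bm I/\tilde{\bm\alpha}\|_{\infty}$ and reuses the Harnack lemma already needed elsewhere. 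The one step you state too briskly is the deduction $\bm S(t)\to N\bm\alpha$ from $\bm I(t)\to\bm 0$ in the subcritical case: since $0$ is an eigenvalue of $\mathcal L$, decay of the forcing alone does not yield convergence of $\bm S$, and one needs either the paper's Lemma \ref{lem2} (which separates the conserved total mass from the exponentially decaying part) or simply your own LaSalle argument, which already applies uniformly for $\sigma_*(\tilde F-V)\le 0$.
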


\begin{rk}\label{RK1}
\begin{itemize}
    \item[\rm (i)] Thanks to Theorems \ref{TH1-2} and \ref{TH2}, for the single-patch model \eqref{model} with $\bm \mu=0$, the DFE is globally stable if  $\mathcal{R}_0\le 1$, while the disease becomes endemic and there is at least one EE solution if $\mathcal{R}_0>1$. The uniqueness and global stability of the EE solution in this case will be established in  Theorem \ref{TH3} below.
    
    \item[\rm (ii)]Similarly to $\mathcal{R}_0$ as in Proposition \ref{prop2}, for every $N>0$,  we have that $\tilde{\mathcal{R}}_0$ is nonincreasing in $d_I$, 
\begin{equation}\label{RK1-eq1}
    \lim_{d_I\to0^+}\tilde{\mathcal{R}}_0=\max_{i\in \Omega}\frac{N\beta_i}{\gamma_i(\zeta_i+N)}\qquad \text{and}\qquad \lim_{d_I\to\infty}\tilde{\mathcal{R}}_0=\frac{\sum_{i\in\Omega}\frac{N\beta_i\alpha_i}{\zeta_i+N}}{\sum_{i\in\Omega}\alpha_i\gamma_i}.
\end{equation}


Moreover, for every $d_I>0$, it also holds that $\tilde{\mathcal{R}}_0$ is strictly increasing in $N>0$, $\tilde{\mathcal{R}}_0\to 0$ as $N\to 0^+$, and $ \tilde{\mathcal{R}}_0\to \hat{\mathcal{R}}_0$ as $N\to \infty$ where $\hat{\mathcal{R}}_0$ is defined by \eqref{R-hat-0}. In particular, $\tilde{\mathcal{R}}_0<\hat{\mathcal{R}}_0$ for all $N>0$.

\item[\rm (iii)] Assume $\bm\mu=\bm 0$. If
 $\hat{\mathcal{R}}_0\le 1$, it follows from Theorem \ref{TH2} that the disease will  be eventually eradicated for any total population size $N>0$ and dispersal rate $d_S$ of the susceptible population. Hence, observing that $\hat{\mathcal{R}}_0\le \|\bm \beta/\bm\gamma\|_{\infty}$, then if $\|\bm \beta/\bm\gamma\|_{\infty}\le 1$,  the disease will be eventually eradicated for any total population size $N>0$ and population dispersal rates $d_I$ and $d_S$. {As a consequence of Proposition \ref{prop3}-{\rm (ii)} and Remark \ref{Rk3} below, there are some range of the parameters of system \eqref{model} satisfying $\mathcal{R}_0<1<\tilde{\mathcal{R}}_0$ such that the DFE is not globally stable. }
 
 \item[\rm(iv)] Assume that $\bm\mu = \bm{0}$ and  $\|\bm\beta/\bm\gamma\|_{\infty} > 1$, let $\mathcal{N}^*_{\rm low}$  be defined by \eqref{N-star-def}. 
 Thanks to \eqref{RK1-eq1} and Proposition \ref{prop2}-{\rm (iv-2)}, if   $0 < N \le \mathcal{N}^*_{\rm low}$, the DFE is globally stable for the system \eqref{model} regardless of the population dispersal rates. A stronger version will be established in Theorem \ref{TH10} below. 

\end{itemize}
    
\end{rk} 

\begin{figure}
    \centering
    \includegraphics[width=0.8\textwidth,height=0.2\textheight]{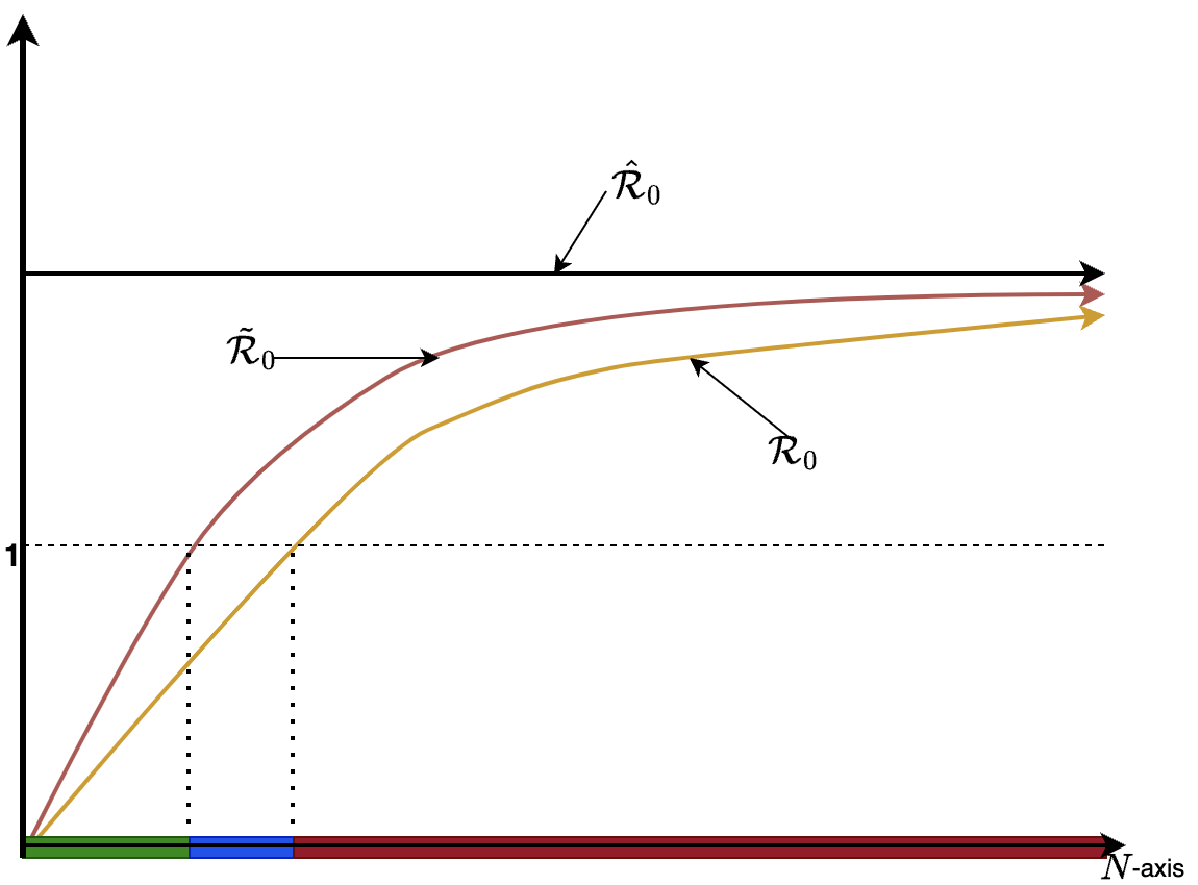}
    \caption{Schematic illustration of the curves $\mathcal{R}_0$, $\tilde{\mathcal{R}}_0$, and $\hat{\mathcal{R}}_0$ with respect to $N>0$ for a fixed value of $d_I$ and $|\Omega|\ge 2$. {Assume $\bm\mu=\bm 0$. {\bf (a)} When $N$ falls in the green interval, then the DFE is both locally asymptotically stable and globally stable for system \eqref{model}. {\bf (b)} When $N$ lies in the blue interval, then the DFE is locally asymptotically stable and possibly not globally stable. Moreover, system \eqref{model} may have at least two EEs. {\bf (c)} When $N$ lies in the red interval, then the DFE is unstable and system \eqref{model} has at least one EE solution. 
    }
    }
    \label{Fig1_1}
\end{figure}

\begin{tm}\label{TH10} Suppose that $\|\bm \beta/\bm\gamma\|_{\infty}>1$ and {\bf (A1)-(A3)} hold. Assume also that $\bm \mu=\bm 0$ and $0<N\le \mathcal{N}^*_{\rm up}$, where $\mathcal{N}^*_{\rm up}$ is defined by \eqref{N-star-def}. Then $(\bm S(t), \bm I(t))\to (N\bm\alpha,\bm 0)$ as $t\to\infty$ for any solution of system \eqref{model} with initial $(\bm S^0,\bm I^0)\in\mathcal{E}$.
    
\end{tm}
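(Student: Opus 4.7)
My plan is to convert the hypothesis on $N$ into the spectral bound $\sigma_{*}(M)\le 0$ for $M:=d_I\mathcal{L}+\mathrm{diag}(N\bm\alpha\circ\bm\beta/(\bm\zeta+N\bm\alpha)-\bm\gamma)$, and then to run a Lyapunov--LaSalle argument on the compact invariant set $\mathcal{E}$. From $N\le \mathcal{N}^{*}_{\rm up}$ and \eqref{N-star-def}, the patchwise inequality $N\alpha_j\beta_j\le \gamma_j(\zeta_j+N\alpha_j)$ holds for every $j\in\Omega$ (from the definition of $\mathcal{N}^{*}_{\rm up}$ when $j\in\tilde\Omega$, and trivially otherwise since $\beta_j\le\gamma_j$). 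Hence $\max_{j\in\Omega}\tfrac{N\alpha_j\beta_j}{\gamma_j(\zeta_j+N\alpha_j)}\le 1$; combining this with the first limit in \eqref{R-0-limit} and the monotonicity (or constancy) of $\mathcal{R}_0$ in $d_I$ from Proposition~\ref{prop2}(ii) yields $\mathcal{R}_0\le 1$ for every $d_I>0$, so $\sigma_{*}(M)\le 0$ by Proposition~\ref{prop2}(i).

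Let $\bm v\gg \bm 0$ be the positive left Perron eigenvector of $M$, so $\bm v^T M=\sigma_{*}(M)\bm v^T$, and take as candidate Lyapunov function $V(t):=\bm v^T\bm I(t)$. Differentiating along the $\bm I$-equation and using the algebraic identity
\begin{equation*}
\frac{S_j}{\zeta_j+S_j+I_j}-\frac{N\alpha_j}{\zeta_j+N\alpha_j}=\frac{\zeta_j(S_j-N\alpha_j)-N\alpha_j I_j}{(\zeta_j+S_j+I_j)(\zeta_j+N\alpha_j)}
\end{equation*}
produces
\begin{equation*}
V'(t)=\sigma_{*}(M)V-\sum_{j\in\Omega}\frac{v_j\beta_j N\alpha_j I_j^2}{(\zeta_j+S_j+I_j)(\zeta_j+N\alpha_j)}+\sum_{j\in\Omega}\frac{v_j\beta_j\zeta_j I_j(S_j-N\alpha_j)}{(\zeta_j+S_j+I_j)(\zeta_j+N\alpha_j)}.
\end{equation*}
The first two terms are non-positive under the previous paragraph; the third (cross) term is sign-indefinite and is the main technical source of difficulty. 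To absorb it, I would augment $V$ by a relative-entropy functional on the susceptibles, $U(t):=\sum_{j\in\Omega}w_j\bigl[S_j-N\alpha_j-N\alpha_j\log(S_j/(N\alpha_j))\bigr]$, with weights $w_j>0$ chosen commensurate with the Perron structure of $\bm v$. Using $\mathcal{L}\bm\alpha=\bm 0$ to rewrite $(\mathcal{L}\bm S)_j=(\mathcal{L}(\bm S-N\bm\alpha))_j$, $U'$ splits into a graph-Dirichlet dispersal contribution and a reaction contribution; the latter, together with the cross term in $V'$, should assemble into a non-positive quadratic form in $\bm I$ precisely because $N\alpha_j(\beta_j-\gamma_j)\le\gamma_j\zeta_j$.

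Setting $W:=V+U$, I then appeal to LaSalle's invariance principle on $\mathcal{E}$ once $W'\le 0$ is established. The explicit $I_j^2$ term in $V'$ ensures that $W'=0$ forces $\bm I=\bm 0$ on the limiting invariant set, and irreducibility of $\mathcal{L}$ propagates this across patches; on $\{\bm I=\bm 0\}$ the $\bm S$-equation reduces to $\bm S'=d_S\mathcal{L}\bm S$, whose unique equilibrium in $\mathcal{E}$ is $N\bm\alpha$ (by Perron--Frobenius applied to $\mathcal{L}$), so $(\bm S(t),\bm I(t))\to(N\bm\alpha,\bm 0)$. The hardest step will be the simultaneous choice of weights $w_j$ that controls both the dispersal contribution to $U'$ (without any symmetry or detailed-balance hypothesis on $\mathcal{L}$) and the reaction cross term: the improvement from $\mathcal{N}^{*}_{\rm low}$ (used in Theorem~\ref{TH2} via the crude bound $S_j\le N$ and $\tilde{\mathcal{R}}_0\le 1$) to the sharper $\mathcal{N}^{*}_{\rm up}$ amounts to replacing $S_j\le N$ by the equilibrium profile $S_j\approx N\alpha_j$, which is exactly what forces the Lyapunov construction to be $\alpha$-weighted rather than uniform.
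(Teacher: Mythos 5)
Your reduction of the hypothesis $0<N\le\mathcal{N}^*_{\rm up}$ to the patchwise inequality $N\alpha_j\beta_j\le\gamma_j(\zeta_j+N\alpha_j)$ for all $j\in\Omega$, and hence to $\mathcal{R}_0\le 1$ and $\sigma_*(M)\le 0$, is correct. But this cannot be the end of the story: Proposition \ref{prop3} and Remark \ref{Rk3} show that system \eqref{model} may possess endemic equilibria (indeed at least two) for a range of parameters with $\mathcal{R}_0<1$, so $\sigma_*(M)\le 0$ alone does not imply global stability of the DFE, and the entire weight of the proof falls on the Lyapunov construction. That construction is precisely the part you do not carry out. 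After the (correct) identity
\begin{equation*}
\frac{S_j}{\zeta_j+S_j+I_j}-\frac{N\alpha_j}{\zeta_j+N\alpha_j}=\frac{\zeta_j(S_j-N\alpha_j)-N\alpha_jI_j}{(\zeta_j+S_j+I_j)(\zeta_j+N\alpha_j)},
\end{equation*}
the cross term in $V'$ carries the factor $(S_j-N\alpha_j)I_j$, and the reaction contribution to $U'$ carries the same factor with coefficient $\tfrac{w_j}{S_j}\bigl(\gamma_j-\tfrac{\beta_jS_j}{\zeta_j+S_j+I_j}\bigr)$. Since $S_j-N\alpha_j$ changes sign along orbits in $\mathcal{E}$, making the combined coefficient non-positive in both sign regimes forces it to vanish identically, which is impossible with constant weights $w_j$ because the coefficient depends on the state $(S_j,I_j)$. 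The only remaining resource is the dispersal (Dirichlet-form) part of $U'$, but that form vanishes on the entire ray $\{\bm S\in{\rm span}(\bm\alpha)\}$, not merely at $N\bm\alpha$, so it cannot by itself dominate $(S_j-N\alpha_j)I_j$. There is also a secondary issue: $U$ contains $-N\alpha_j\log S_j$ and is unbounded near $\{S_j=0\}\cap\mathcal{E}$, so LaSalle on $\mathcal{E}$ requires a uniform positive lower bound on $\bm S(t)$ that you never establish. In short, the step you yourself flag as ``the hardest'' is the proof, and the mechanism you propose for it has a concrete obstruction.

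For comparison, the paper's argument avoids Lyapunov functions altogether. It splits on whether $\int_0^\infty\bigl(\sum_{j}I_j(t)\bigr)^2dt$ is finite (in which case $\bm I\to\bm 0$ directly) or infinite, and in the latter case builds the explicit time-dependent subsolution $\underline{\bm S}(t)=\bigl(1-\tau_0e^{-\nu\int_1^t(\sum_jI_j)^2ds}\bigr)N\bm\alpha$ for the $\bm S$-equation. The verification that $\underline{\bm S}$ is a subsolution uses exactly your patchwise inequality in the form $\bm r\circ\bm\zeta-\mathcal{N}^*_{\rm up}(\bm 1-\bm r)\circ\bm\alpha\ge\bm 0$ (so that $\bm r-\underline{\bm S}/(\bm\zeta+\underline{\bm S})\ge\bm 0$), together with the Harnack-type bound of Lemma \ref{Harnck-lemma} to control $\bm I\circ\bm I$ by $(\min_jI_j)^2$. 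The comparison principle then forces $\|\bm S(t)\|_1\to N$, hence $\|\bm I(t)\|_1\to 0$, and the conclusion follows as in Theorem \ref{TH2}. If you want to salvage your route you would need to either find state-dependent weights (a Volterra-type functional in both $\bm S$ and $\bm I$) or abandon the entropy term in favor of the paper's comparison argument.
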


 Thanks to Theorem \ref{TH10}, if the total population size is below the threshold number $\mathcal{N}^*_{\rm up}$, then the disease will be eradicated irrespective of the population dispersal rates. We point out that this is a sharp result since if $N>\mathcal{N}^*_{\rm up}$, then $\mathcal{R}_0>1$ for small dispersal rate $d_I$ of infected population.

 Due to the multiple patches and the patch-heterogeneity of the parameters of system \eqref{model}, it is a challenging question to investigate the uniqueness and/or stability of EE solution when $\mathcal{R}_0>1$. In the next two results, we identify some practical scenarios where solutions of \eqref{model} with positive initial data eventually stabilize at the EE whenever $\mathcal{R}_0>1$.  For convenience, we define $\bm r=(r_1, \dots, r_n)^T\in\mathbb{R}^n$ with  
 \begin{equation*}
    { r}_{j}
    :=\frac{ \gamma_{j}}{ \beta_{j}}, \quad \forall\; j\in\Omega.
\end{equation*}
When $\bm r$ is patch-homogeneous, that is the infection and recovery rates are proportional, our next result asserts the global dynamics of solutions of system \eqref{model} under some additional hypothesis. 

\begin{tm} \label{TH3} Suppose that {\bf (A1)-(A3)} holds and $\bm \mu=\bm 0$. Suppose also that $\bm r\in{\rm span}(\bm 1)$ and $\bm \zeta\in{\rm span}(\bm \alpha)$. 
\begin{itemize}
    \item[\rm (i)] If $\mathcal{R}_0\le 1$, then the DFE is globally stable with respect to perturbations with initial data in $\mathcal{E}$.

    \item[\rm (ii)] If $\mathcal{R}_0>1$, then system \eqref{model} has a unique EE solution ${\bm E}^*$ in $\mathcal{E}$. Moreover, ${\bm E}^*$ is globally stable with respect to perturbations with initial data in $\mathcal{E}$.
\end{itemize}
    
\end{tm}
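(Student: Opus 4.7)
Under the hypotheses, write $\bm\gamma=r\bm\beta$ and $\bm\zeta=c\bm\alpha$ for constants $r,c>0$; Proposition \ref{prop2}-{\rm (ii)} then yields the explicit formula $\mathcal{R}_0=N/(r(c+N))$, which is independent of $d_S$ and $d_I$. The plan is to first characterize every EE of \eqref{model} explicitly under these symmetries, and then to construct a Volterra--Goh Lyapunov functional for global stability.

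For the EE characterization, summing the two equations of \eqref{EE-system} (using $\bm\mu=\bm 0$) yields $\mathcal{L}(d_S\bm S^*+d_I\bm I^*)=\bm 0$; since $\ker\mathcal{L}=\mathrm{span}(\bm\alpha)$, one has $d_S\bm S^*+d_I\bm I^*=K\bm\alpha$ for some $K>0$. Substituting $\bm S^*=(K\bm\alpha-d_I\bm I^*)/d_S$ into the infected equation and simplifying with $\bm\zeta=c\bm\alpha$, $\bm\gamma=r\bm\beta$ and $\mathcal{L}\bm\alpha=\bm 0$, the system reduces to
\[
d_I(\mathcal{L}\bm I^*)_i+\beta_i I_i^*\cdot\frac{\alpha_i A-BI_i^*}{\alpha_i C+DI_i^*}=0,\qquad i\in\Omega,
\]
for explicit positive constants $A,B,C,D$ (with $\alpha_i C+DI_i^*>0$ on positive solutions). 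Applying the strong maximum principle at the argmax of $J_i:=I_i^*/\alpha_i$ and the minimum principle at its argmin forces $J$ to be constant, so that $\bm I^*=\lambda\bm\alpha$ and $\bm S^*=\mu\bm\alpha$; the mass constraint $\mu+\lambda=N$ and the patchwise reaction equation then uniquely give $\mu=r(c+N)$, $\lambda=N-r(c+N)$. Positivity of $\lambda$ is equivalent to $\mathcal{R}_0>1$, simultaneously establishing uniqueness of the EE in Part (ii) and its non-existence in Part (i).

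For global stability in Part (ii), I propose the Volterra--Goh functional
\[
V(\bm S,\bm I)=\sum_{i\in\Omega}\Bigl[\bigl(S_i-S_i^*-S_i^*\ln(S_i/S_i^*)\bigr)+\bigl(I_i-I_i^*-I_i^*\ln(I_i/I_i^*)\bigr)\Bigr].
\]
The Laplacian contribution is nonpositive by the standard identity $\sum_{i}(1-u_i^*/u_i)(\mathcal{L}\bm u)_i\le 0$, valid for any $\bm u,\bm u^*\gg\bm 0$ with $\mathcal{L}\bm u^*=\bm 0$ (proved via $\ln x\le x-1$), with equality iff $\bm u\in\mathrm{span}(\bm\alpha)$. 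Using the EE identity $S_i^*/D_i^*=r$ (which holds because $D_i^*=(c+N)\alpha_i$) together with the coordinates $x_i=S_i/S_i^*$, $y_i=I_i/I_i^*$, the reaction contribution restricted to the aligned subset $\{\bm S,\bm I\in\mathrm{span}(\bm\alpha)\}$ (where $\bm S=s\bm\alpha$, $\bm I=y\bm\alpha$, $s+y=N$) simplifies to $-N(s-\mu)^2/((c+N)s)\sum_i\alpha_i\beta_i\le 0$. Combined with the Laplacian dissipation, this gives $V'\le 0$ with equality only at the EE, and LaSalle's invariance principle then yields global stability. For Part (i), an analogous Lyapunov function centered at the DFE $(N\bm\alpha,\bm 0)$, together with the equivalence $\mathcal{R}_0\le 1\iff N/(c+N)\le r$, gives $V_0'\le 0$ with equality only at the DFE.

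The hard part will be rigorously verifying that the reaction contribution to $V'$ (respectively $V_0'$), which is sign-indefinite patchwise off the aligned subset, combines with the Laplacian dissipation to yield a nonpositive sum globally on $\mathcal{E}$. The logarithmic strength of the Lou--Ni-type identity must dominate any positive reaction contributions off the aligned subset; this delicate step depends crucially on the fact that $S_i^*/D_i^*=r$ is constant across patches, which is a direct consequence of the hypothesis $\bm\zeta\in\mathrm{span}(\bm\alpha)$.
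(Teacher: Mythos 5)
Your reduction of the EE problem is sound and lands on the same answer as the paper: summing the equilibrium equations gives $d_S\bm S^*+d_I\bm I^*=K\bm\alpha$, and under $\bm\gamma=r\bm\beta$, $\bm\zeta=c\bm\alpha$ the infected equation becomes a logistic-type equation for $\bm I^*/\bm\alpha$ whose only positive solution is constant, yielding $\bm E^*=(r(c+N)\bm\alpha,(N-r(c+N))\bm\alpha)$, existing iff $\mathcal{R}_0=N/(r(c+N))>1$. The genuine gap is the global stability step, and it is not a technicality you can defer: the Volterra--Goh functional does not close here. Writing $D_i=\zeta_i+S_i+I_i$ and $w_i=S_iI_i^*-S_i^*I_i$, the reaction contribution to $V'$ is
\begin{equation*}
\sum_{i\in\Omega}\beta_i I_i\Big(\frac{S_i}{D_i}-\frac{S_i^*}{D_i^*}\Big)\Big(\frac{S_i^*}{S_i}-\frac{I_i^*}{I_i}\Big)
=-\sum_{i\in\Omega}\frac{\beta_i\big[\zeta_i(S_i-S_i^*)w_i+w_i^2\big]}{D_iD_i^*S_i},
\end{equation*}
and the cross term $\zeta_i(S_i-S_i^*)w_i$ is sign-indefinite unless $S_i+I_i=S_i^*+I_i^*$ patchwise, which on $\mathcal{E}$ holds only on your aligned subset. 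The Laplacian dissipation you invoke vanishes identically on aligned perturbations and is proportional to $d_S$, $d_I$, whereas the indefinite reaction terms scale with $\bm\beta$, $\bm\zeta$; since the theorem is asserted for \emph{all} $d_S,d_I>0$, no domination of the reaction by the dispersal dissipation can hold uniformly, so $V'\le 0$ fails in general. Your Part (i) has the same defect: $\mathcal{R}_0\le 1$ gives $N\le r(c+N)$ but not the patchwise bound $S_i/(\zeta_i+S_i+I_i)\le r$ (that would require $N\le r(c\alpha_i+N)$, i.e.\ essentially $\tilde{\mathcal{R}}_0\le 1$, which is strictly stronger when $|\Omega|\ge 2$), so a linear-in-$\bm I$ Lyapunov function centered at the DFE does not decrease either.

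The paper closes exactly this step by a different mechanism: setting $\bm Z=(1-\tau)\bm S$ and $\bm V=\tau\bm\zeta+\tau\bm I$ (with $\tau=r$), and using $\mathcal{L}\bm\zeta=\bm 0$, the system becomes $\bm Z'=d_S\mathcal{L}\bm Z+(1-\tau)(\bm V-\bm Z)\circ\bm F$, $\bm V'=d_I\mathcal{L}\bm V+\tau(\bm Z-\bm V)\circ\bm F$, which is cooperative when the positive coefficient $\bm F$ is treated as given, and for which every $(c\bm\alpha,c\bm\alpha)$ is an exact solution. The comparison principle then makes $c_*(t)=\min\{(\bm Z/\bm\alpha)_m,(\bm V/\bm\alpha)_m\}$ nondecreasing and $c^*(t)$ nonincreasing, and an entire-solution/compactness argument squeezes them together, identifying the limit from the mass constraint. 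If you want to salvage a Lyapunov route you would have to build the functional in the $(\bm Z,\bm V)$ variables where this contraction structure lives, not in $(\bm S,\bm I)$.
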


\begin{rk}\label{RK2} Suppose that the hypotheses of Theorem \ref{TH3} hold. Hence, there exist $\tau>0$ and $m>0$ such that $\bm r=\tau\bm 1$ and $\bm\zeta =m\bm\alpha$. As a result, we have from \eqref{R-0-limit} that $\mathcal{R}_0=\frac{N}{\tau(m+N)}$ is independent of the population dispersal rates. Clearly $\mathcal{R}_0$ is strictly increasing in $N$ and strictly decreasing in $\tau$. Furthermore, thanks to Theorem \ref{TH3}, the followings hold.
\begin{itemize}
    \item[\rm (i)] If $\tau\ge 1$, we always have that $\mathcal{R}_0<1$ and the disease will be eventually eradicated.
    \item[\rm (ii)] If $0<\tau<1$ and  $N\le \frac{\tau }{(1-\tau)}m$, then $\mathcal{R}_0\le 1$  and  the disease eventually goes extinct.
    \item[\rm (iii)] If $0<\tau<1$ and $N>\frac{\tau}{1-\tau}m$, then $\mathcal{R}_0>1$, the disease is endemic, and positive solutions of \eqref{model} with initial data in $\mathcal{E}$ eventually stabilize at the unique EE solution given by  ${\bm E}^*:=(\tau(N+m)\bm \alpha,((1-\tau)N-\tau m)\bm \alpha)$. Noting that ${\bm E}^*\to (0,N\bm\alpha)$ as $\tau\to 0^+$, for every $N>0$,  then high disease infection rate significantly decreases the total size of the susceptible population at the EE solutions. 
\end{itemize} 
    
\end{rk}

\noindent Theorem \ref{TH3} shows that for single patch model, every solution eventually stabilizes at an equilibrium solution. Moreover, the EE is unique and globally stable if $\mathcal{R}_0>1$ under the hypothesis of the theorem. In particular, the global dynamics of solutions of \eqref{model} is well understood for the single patch model. In the remainder of this section, we shall always suppose that $|\Omega|\ge 2$, that is there are at least two patches in the network epidemic model \eqref{model}. {As shall be seen  from Proposition \ref{prop3}-{\rm(ii)} and Remark \ref{Rk3} below, when $\bm\mu=\bm 0$, $\bm\zeta\in{\rm span}(\bm \alpha)$, and $\bm r\notin {\rm span}(\bm 1)$, the conclusions of Theorem \ref{TH3}-{\rm(i)} may fail.} When the population disperses uniformly, the next result asserts  the global dynamics of solutions of \eqref{model}, and shows that solutions always eventually stabilize at an equilibrium solution.

\begin{tm}\label{TH4} Suppose that {\bf (A1)-(A3)} hold and $\bm \mu=\bm 0$. Suppose also that $d_S=d_I$. Then the following conclusions hold.
\begin{itemize}
    \item[\rm (i)] If $\mathcal{R}_0\le 1$, then the DFE is globally stable with respect to perturbations with initial data in $\mathcal{E}$.

    \item[\rm (ii)] If $\mathcal{R}_0>1$, then system \eqref{model} has a unique EE solution $(\bm S^*, \bm I^*)$ in $\mathcal{E}$. Furthermore, $(\bm S^*,\bm I^*)$ is globally stable with respect to perturbations with initial data in $\mathcal{E}$.
\end{itemize}
    
\end{tm}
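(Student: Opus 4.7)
The plan is to exploit the conservation law that appears when $d_S=d_I=:d$. Setting $\bm N(t):=\bm S(t)+\bm I(t)$ and adding the two equations of \eqref{model} (with $\bm\mu=\bm 0$), the nonlinear incidence terms cancel and one obtains the linear system $\bm N'=d\mathcal{L}\bm N$. Since $\mathcal{L}$ is quasipositive, irreducible, and has $\sigma_*(\mathcal{L})=0$ as a simple eigenvalue with all other eigenvalues in the open left half-plane, Perron--Frobenius theory gives $\bm N(t)\to N\bm\alpha$ exponentially as $t\to\infty$. Writing $\bm S=\bm N-\bm I$, the dynamics of \eqref{model} reduces to the asymptotics of the single equation
\begin{equation*}
\bm I'=d\mathcal{L}\bm I+\Big(\bm\beta\circ(\bm N(t)-\bm I)/(\bm\zeta+\bm N(t))-\bm\gamma\Big)\circ\bm I,
\end{equation*}
which is a small non-autonomous perturbation of the autonomous limit equation
\begin{equation*}
\bm J'=d\mathcal{L}\bm J+\Big(\bm\beta\circ(N\bm\alpha-\bm J)/(\bm\zeta+N\bm\alpha)-\bm\gamma\Big)\circ\bm J =: G(\bm J).
\end{equation*}

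The next step is to analyze $G$ on the order interval $[\bm 0,N\bm\alpha]$, which is forward invariant because $G$ points inward at both boundary faces. The vector field $G$ is cooperative (Jacobian off-diagonal entries $dL_{ij}\geq 0$) and strictly sublinear on $[\bm 0,N\bm\alpha]$: a direct computation gives
\begin{equation*}
G(\lambda\bm J)-\lambda G(\bm J)=\lambda(1-\lambda)\,\bm\beta\circ\bm J\circ\bm J/(\bm\zeta+N\bm\alpha)\gg\bm 0
\end{equation*}
for every $\lambda\in(0,1)$ and $\bm J\gg\bm 0$. Since $G'(\bm 0)=d\mathcal{L}+\mathrm{diag}(N\bm\alpha\circ\bm\beta/(\bm\zeta+N\bm\alpha)-\bm\gamma)$ is quasipositive and irreducible, its spectral bound is a simple eigenvalue whose sign coincides with that of $\mathcal{R}_0-1$ by Proposition \ref{prop2}-(i). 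The standard threshold theory for cooperative, strictly sublinear systems on compact order intervals (Hirsch; Krause--Ranft; see e.g.\ Zhao's monograph on dynamical systems in population biology) then yields: if $\mathcal{R}_0\leq 1$, then $\bm 0$ is the only equilibrium of $G$ in $[\bm 0,N\bm\alpha]$ and it attracts every trajectory in this interval; if $\mathcal{R}_0>1$, then $G$ admits a unique positive equilibrium $\bm J^*\gg\bm 0$ and it attracts every nontrivial trajectory in $[\bm 0,N\bm\alpha]\setminus\{\bm 0\}$.

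To transfer these conclusions to the full non-autonomous $\bm I$-equation, I would invoke the theory of asymptotically autonomous semiflows (Mischaikow--Smith--Thieme). Since $\bm N(t)\to N\bm\alpha$ exponentially and the limit equation is $\bm J'=G(\bm J)$, the $\omega$-limit set of any bounded trajectory $\bm I(\cdot)$ is a nonempty, compact, internally chain-transitive set of the limit semiflow, hence contained in $\{\bm 0\}\cup\{\bm J^*\}$ (with $\bm J^*$ absent when $\mathcal{R}_0\leq 1$). When $\mathcal{R}_0>1$, a uniform-persistence step---comparing the non-autonomous $\bm I$-equation for large $t$ with the autonomous system obtained by replacing $\bm N(t)$ with $N\bm\alpha-\varepsilon\bm 1$ for small $\varepsilon>0$, whose linearization at $\bm 0$ still has positive spectral bound---rules out the DFE limit whenever $\bm I^0>\bm 0$, forcing $\bm I(t)\to\bm J^*$. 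Setting $\bm I^*:=\bm J^*$ and $\bm S^*:=N\bm\alpha-\bm J^*$ then gives the unique EE and its global attractivity, while $\mathcal{R}_0\leq 1$ produces $\bm I(t)\to\bm 0$ and $\bm S(t)\to N\bm\alpha$.

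The main obstacle will be the rigorous handling of the boundary case $\mathcal{R}_0=1$ and of the uniform persistence in the non-autonomous setting. At $\mathcal{R}_0=1$ the spectral bound of $G'(\bm 0)$ vanishes, so linearization alone does not settle global attraction of $\bm 0$; here the strict sublinearity must be used directly, e.g.\ via a Lyapunov-type argument pairing $\bm J$ against a positive Perron eigenvector $\bm v\gg\bm 0$ of $G'(\bm 0)^{\top}$ and exploiting the fact that $\bm v^{\top}G(\bm J)<0$ for $\bm J>\bm 0$ by the sublinearity inequality. The uniform persistence step likewise requires some care because the standard acyclicity framework must be adapted to the asymptotically autonomous perturbation, but this is routine once the limit-system analysis above is in place.
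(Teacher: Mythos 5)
Your proposal is correct and follows essentially the same route as the paper: both exploit that $d_S=d_I$ makes $\bm Q=\bm S+\bm I$ satisfy $\bm Q'=d\mathcal{L}\bm Q$, hence $\bm Q(t)\to N\bm\alpha$, and both reduce the problem to the threshold dynamics of the cooperative, strictly sublinear logistic-type limit equation for $\bm I$, with the sign condition supplied by Proposition \ref{prop2}-(i). The only real difference lies in the transfer from the non-autonomous $\bm I$-equation to its autonomous limit: the paper squeezes $\bm I(t)$ between the stable nonnegative equilibria $\hat{\bm I}^{(\pm\varepsilon)}$ of the $\varepsilon$-perturbed autonomous systems and lets $\varepsilon\to0$, which also yields persistence directly (since $\hat{\bm I}^{(-\varepsilon)}\gg\bm 0$ when $\mathcal{R}_0>1$, combined with Theorem \ref{TH1-2}-(ii)), whereas you route the same step through the Mischaikow--Smith--Thieme chain-transitivity framework plus a separate uniform-persistence argument; both are legitimate, the paper's being the more elementary and self-contained.
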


\noindent Theorems \ref{TH3} and \ref{TH4} provides sufficient conditions on the model parameters under which the EE is unique and globally stable whenever it exists. The next result examines the uniqueness of the  EE solution  under some  hypotheses.

\begin{tm}\label{TH5}  Fix $N>0$, $d_S>0$ and $d_I>0$. Assume that $\bm \mu=\bm0$, {\bf (A1)-(A2)} hold, and $\hat{\mathcal{R}}_0>1$. 
     \item[\rm (i)]    If   $d_S\ge d_I$, then system \eqref{model} has no EE solution in $\mathcal{E}$ if $ \mathcal{R}_0\le 1$, and has a unique EE in $\mathcal{E}$ if $ \mathcal{R}_0>1$.

     \item[\rm (ii)] If   $N(\bm 1-2\bm r)\circ\bm\alpha\ge \bm r\circ\bm \zeta$, then $\mathcal{R}_0>1$ and    system \eqref{model}  has a unique EE solution in $\mathcal{E}$.
    
\end{tm}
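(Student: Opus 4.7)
The plan is to reduce the EE system to a scalar-parameterized problem via the conservation law $\mathcal{L}(d_S\bm S+d_I\bm I)=\bm 0$ obtained from adding the two equations of \eqref{EE-system}. Since $\sigma_*(\mathcal{L})=0$ is simple with eigenvector $\bm\alpha$, this yields $d_S\bm S+d_I\bm I=c\bm\alpha$ for a scalar $c=d_SN-(d_S-d_I)\|\bm I\|_1$. Substituting $\bm S=(c\bm\alpha-d_I\bm I)/d_S$ will reduce the EE problem to finding $\bm I\gg\bm 0$ with $d_I I_i<c\alpha_i$ satisfying
\[
d_I\mathcal{L}\bm I+\hat f(\cdot\,;c)\circ\bm I=\bm 0,\qquad \hat f_i(u;c):=\frac{\beta_i(c\alpha_i-d_I u)}{d_S\zeta_i+c\alpha_i+(d_S-d_I)u}-\gamma_i,
\]
coupled via the consistency $c=d_SN-(d_S-d_I)\|\bm I\|_1$. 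The key structural fact, from a short direct computation, is that $\partial_u\hat f_i(u;c)<0$ and $\partial_c\hat f_i(u;c)>0$ for every $c>0$, \emph{irrespective of the sign of $d_S-d_I$}.

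\textbf{Part (i), non-existence.} Under $d_S\ge d_I$, I would first extract the bounds $\bm S\le(c/d_S)\bm\alpha$, $\bm S+\bm I\ge(c/d_S)\bm\alpha$, and $c\le d_SN$ from the conservation law, which combined with the monotonicity of $y\mapsto y/(\zeta_i+y)$ yield the pointwise estimate $\beta_i S_i/(\zeta_i+S_i+I_i)\le\beta_i N\alpha_i/(\zeta_i+N\alpha_i)$. Any EE satisfies $\sigma_*\bigl(d_I\mathcal{L}+\mathrm{diag}(\beta_iS_i/(\zeta_i+S_i+I_i))-\mathrm{diag}(\bm\gamma)\bigr)=0$ with $\bm I$ its Perron eigenvector, so entrywise monotonicity of $\sigma_*$ on irreducible quasi-positive matrices together with Proposition \ref{prop2}(i) will give $\mathcal{R}_0\ge 1$. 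The residual case $\mathcal{R}_0=1$ is ruled out by the strict monotonicity of $\sigma_*$, which forces pointwise equality in the above bound, whence $S_i=N\alpha_i+N\alpha_iI_i/\zeta_i>N\alpha_i$ for every $i$, contradicting $\|\bm S\|_1<N$.

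\textbf{Part (i), uniqueness.} With existence supplied by Theorem \ref{TH1-2}(ii), suppose $\bm I_1,\bm I_2$ are two positive EE with parameters $c_1,c_2$. Setting $\kappa:=\max_i I_{1,i}/I_{2,i}$ attained at $i^*$, quasipositivity of $\mathcal{L}$ gives $(\mathcal{L}\bm I_1)_{i^*}\le\kappa(\mathcal{L}\bm I_2)_{i^*}$; plugging into the equations and dividing by $\kappa I_{2,i^*}$ yields $\hat f_{i^*}(I_{1,i^*};c_1)\ge\hat f_{i^*}(I_{2,i^*};c_2)$. The two strict monotonicities of $\hat f_{i^*}$ then force $c_1>c_2$ whenever $\kappa>1$; the symmetric min-ratio argument (with $\lambda:=\min_i I_{1,i}/I_{2,i}$) forces $c_1<c_2$ whenever $\lambda<1$. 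Hence WLOG $\bm I_1\le\bm I_2$, and if $\bm I_1\ne\bm I_2$ then $c_1<c_2$. But $d_S\ge d_I$ makes $c=d_SN-(d_S-d_I)\|\bm I\|_1$ non-increasing in $\|\bm I\|_1$, so $\bm I_1\lneqq\bm I_2$ forces $c_1\ge c_2$, a contradiction.

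\textbf{Part (ii).} To establish $\mathcal{R}_0>1$, I would apply the variational bound $\sigma_*(d_I\mathcal{L}+D)\ge\min_i D_{ii}$ (using the test vector $\bm\alpha$, for which $\mathcal{L}\bm\alpha=\bm 0$) to $D_{ii}=\beta_iN\alpha_i/(\zeta_i+N\alpha_i)-\gamma_i$; the hypothesis $N(\bm 1-2\bm r)\circ\bm\alpha\ge\bm r\circ\bm\zeta$ rewrites as $D_{ii}\ge\gamma_iN\alpha_i/(\zeta_i+N\alpha_i)>0$ for each $i$, so $\sigma_*(F-V)>0$ and hence $\mathcal{R}_0>1$ by Proposition \ref{prop2}(i); existence of an EE then follows from Theorem \ref{TH1-2}(ii). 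Uniqueness when $d_S\ge d_I$ is covered by the part-(i) argument. The \textbf{main obstacle} is uniqueness when $d_S<d_I$: now $c$ is \emph{increasing} in $\|\bm I\|_1$, so $\bm I_1\lneqq\bm I_2$ with $c_1<c_2$ is no longer contradictory. My plan is to exploit hypothesis (ii) together with the EE bound $\bm I\le N\bm\alpha$ (available for $d_S\le d_I$, since $c\le d_IN$ then forces $d_I I_i\le c\alpha_i\le d_IN\alpha_i$) to re-establish the pointwise bound $\beta_iS_i/(\zeta_i+S_i+I_i)\le\beta_iN\alpha_i/(\zeta_i+N\alpha_i)$ at any EE — matching the bound that held for free under $d_S\ge d_I$ — and then to reapply the spectral-radius-monotonicity argument together with the max/min-ratio comparison to close the uniqueness.
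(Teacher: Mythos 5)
Your setup, your proof of part (i), and your verification that the hypothesis of (ii) forces $\mathcal{R}_0>1$ are all correct, and part (i) is handled by a genuinely different route than the paper's: the paper reduces to the scalar function $\mathcal{G}(l)=l\sum_j\bigl((\alpha_j-d_IP_j^{(l)})+d_SP_j^{(l)}\bigr)$ built from the one-parameter family $\bm P^{(l)}$ of Lemma \ref{lem3} and shows $\mathcal{G}'>0$ when $d_S\ge d_I$, whereas your max/min-ratio comparison at the extremal index, combined with the monotonicity of $c=d_SN-(d_S-d_I)\|\bm I\|_1$ in $\|\bm I\|_1$, gets non-existence and uniqueness in one self-contained stroke without the bifurcation machinery. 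Your non-existence argument for $\mathcal{R}_0\le 1$ (pointwise bound on the infection force, strict monotonicity of $\sigma_*$, and the equality case $S_i=N\alpha_i(1+I_i/\zeta_i)$ contradicting $\|\bm S\|_1<N$) checks out.

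The genuine gap is the uniqueness in part (ii) when $d_S<d_I$, which is the hard content of the theorem and which you leave as an unexecuted ``plan.'' The plan as stated would not close. The pointwise bound $\beta_iS_i/(\zeta_i+S_i+I_i)\le\beta_iN\alpha_i/(\zeta_i+N\alpha_i)$ compares each EE to the DFE; it can only feed the spectral-monotonicity argument, which yields $\mathcal{R}_0\ge 1$ (already known) and says nothing about two distinct EEs. Meanwhile your ratio comparison genuinely fails here for the reason you identify: with $d_S<d_I$ the conclusion ``$\bm I_1\lneqq\bm I_2$ and $c_1<c_2$'' is consistent with $c$ being increasing in $\|\bm I\|_1$, and a quantitative version does not rescue it, since at the min-ratio index the favorable term $\partial_c\hat f\cdot(d_I-d_S)(\|\bm I_2\|_1-\|\bm I_1\|_1)$ involves the full $\ell^1$-difference and need not be dominated by the single-component term $|\partial_u\hat f|\,(I_{2,j^*}-I_{1,j^*})$. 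What is actually needed — and what the paper supplies — is the componentwise lower bound $\bm I\gg(1-d_S/d_I)\|\bm I\|_1\bm\alpha$ at every EE, proved by exhibiting $(1-d_S/d_I)\|\bm I\|_1\bm\alpha$ as a strict subsolution of the $\bm I$-equation; this is exactly where the hypothesis $N(\bm 1-2\bm r)\circ\bm\alpha\ge\bm r\circ\bm\zeta$ enters (together with $\bm I\ll N\bm\alpha$). That bound guarantees that the quantity $\frac{d_I}{d_S}\bigl(I_j-(1-\frac{d_S}{d_I})\|\bm I\|_1\alpha_j\bigr)$ is positive, so that scaling the \emph{whole vector} $\bm I^{(2)}$ by $\nu=(\bm I^{(1)}/\bm I^{(2)})_m<1$ produces a strict subsolution of the equation whose unique stable positive solution is $\bm I^{(1)}$, forcing $\nu\bm I^{(2)}\ll\bm I^{(1)}$ and contradicting the definition of $\nu$. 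Incidentally, that same lower bound implies your desired pointwise upper bound on the infection force, but the implication goes in the unhelpful direction: you would need the subsolution construction first, and even then the DFE comparison does not substitute for the two-EE comparison.
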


\noindent We complement Theorem \ref{TH5} with the following result on the global structure of the set of EE solutions of system \eqref{model} as $\mathcal{R}_0$ varies.

\begin{tm}\label{TH6}  Fix $d_I>0$, $d_S>0$, and suppose that $\hat{\mathcal{R}}_0>1$. Then there is $0<\mathcal{R}_{\min}\le 1$ such that system \eqref{model} has no EE for $\mathcal{R}_0<\mathcal{R}_{\min}$ and at least one EE solution if $\mathcal{R}_{\min}<\mathcal{R}_0<\hat{\mathcal{R}}_0$. Moreover,  as $\mathcal{R}_0$ varies from $\mathcal{R}_{\min}$ to $\hat{\mathcal{R}}_0$, the set of EE solutions of \eqref{model} forms a simple  curve $\mathcal{C}_*:=\{(\mathcal{R}_0,\bm S,\bm I)=(f(l),\bm S(\cdot;l),\bm I(\cdot;l)) : l>\mathcal{N}_0\}$, where $\mathcal{N}_0$ is as in Proposition \eqref{prop2}-{\rm (iii)}. $(f(l),\bm S(\cdot; l), \bm I(\cdot;l))$ is analytic function of $l>\mathcal{N}_0$ and satisfies $\bm 0\ll \bm I(\cdot;l)\ll \bm I(\cdot;\tilde{l})$ for all $\tilde{l}>l>\mathcal{N}_0$,
\begin{equation}\label{Th6-eq1}
    \lim_{l\to \mathcal{N}_0^+}(f(l),\bm S(\cdot;l),\bm I(\cdot;l))=(1,\mathcal{N}_0\bm\alpha, \bm 0), \, \lim_{l\to\infty}f(l)=\hat{\mathcal{R}}_0, \, \lim_{l\to\infty}\sum_{j\in\Omega}S_j(\cdot;l)=\infty, \, \text{and}\, \lim_{l\to\infty}\sum_{j\in\Omega}I_j(\cdot;l)=\infty.
\end{equation} 
Furthermore,  
\begin{itemize}
    \item[\rm (i)] 
${\mathcal{R}}_0=1$ is a forward  transcritical bifurcation point if 
\begin{equation}\label{TH6-eq1}
    d_I\sum_{j\in\Omega}\frac{\zeta_j\eta_j\eta_j^*\beta_j(\alpha_j-\eta_j)}{(\zeta_j+\mathcal{N}_0\alpha_j)^2}<d_S\sum_{j\in\Omega}\frac{\beta_j\eta_j\eta_j^*\alpha_j(\mathcal{N}_0\eta_j+\zeta_j)}{(\zeta_j+\mathcal{N}_0\alpha_j)^2};
\end{equation}

\item[\rm (ii)] ${\mathcal{R}}_0=1$ is a backward  transcritical bifurcation point if 
\begin{equation}\label{TH6-eq2}
    d_I\sum_{j\in\Omega}\frac{\zeta_j\eta_j\eta_j^*\beta_j(\alpha_j-\eta_j)}{(\zeta_j+\mathcal{N}_0\alpha_j)^2}>d_S\sum_{j\in\Omega}\frac{\beta_j\eta_j\eta_j^*\alpha_j(\mathcal{N}_0\eta_j+\zeta_j)}{(\zeta_j+\mathcal{N}_0\alpha_j)^2},
\end{equation}
\end{itemize}
where $\bm \eta\gg \bm 0$ and $\bm\eta^*\gg\bm 0$ are the right and left positive eigenvectors associated with $\sigma_*\big(d_I\mathcal{L}+{\rm diag}\big(\big(\mathcal{N}_0\bm\beta\circ\bm\alpha/(\bm \zeta+\mathcal{N}_0\bm\alpha )-\bm\gamma\big)\big)\big)$  satisfying $\|\bm\eta\|_1=\|\bm\eta^*\|_1=1$, respectively.
In particular,  $\mathcal{R}_0=1$ is  always  a forward transcritical bifurcation point if 
$d_S\ge d_I$.



\begin{figure}
    \centering
    \includegraphics[width=01\textwidth,height=0.2\textheight]{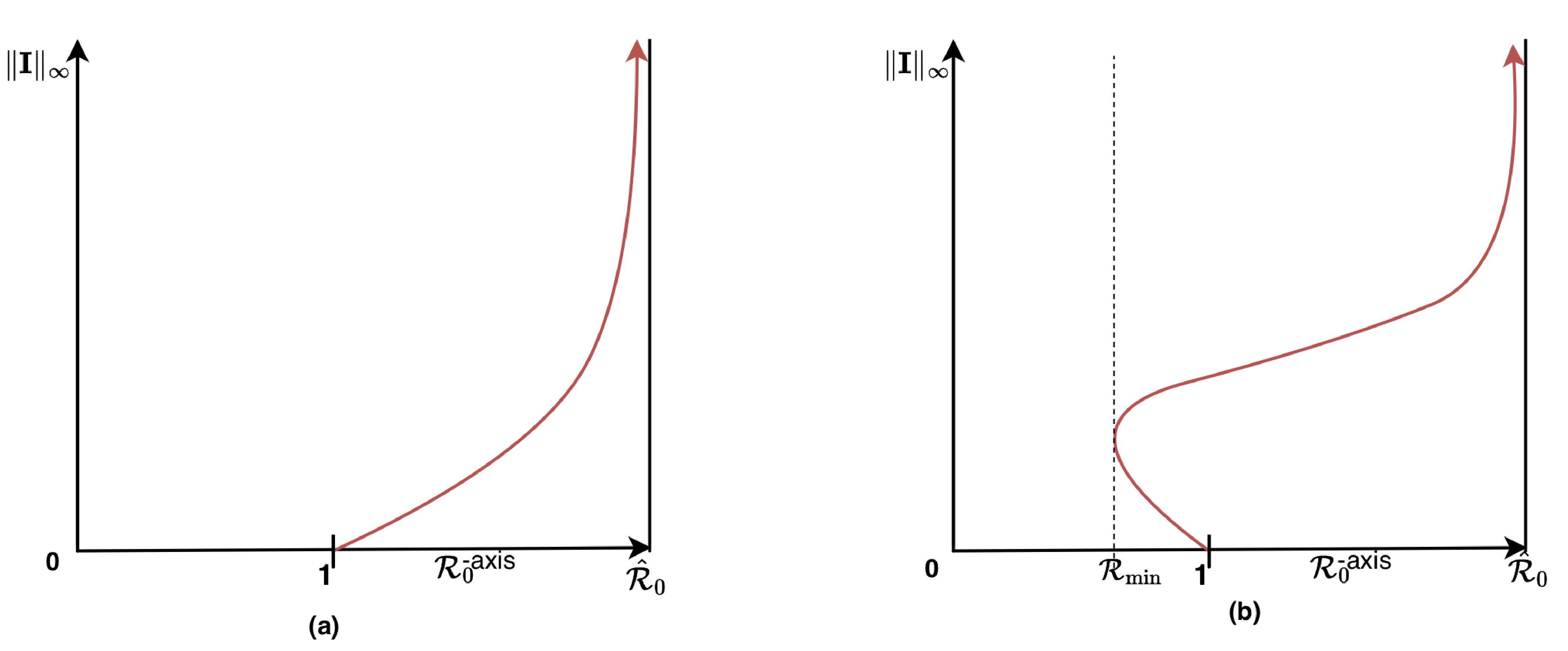}
    \caption{Schematic illustration of   bifurcation curve of $\|\bm I\|_{\infty}$ at EEs as $\mathcal{R}_0$ varies from $0<\mathcal{R}_{\min}\le 1$ to $\hat{\mathcal{R}}_0$. Figure {\bf (a)} corresponds to the case of fixed $d_S\ge d_I>0$  as described in Theorem \ref{TH6}.  Figure {\bf (b)} corresponds to the case of fixed $d_I>0$ and $0<d_S<d_{\rm up}^*$   as described in Theorem \ref{TH6} and Proposition \ref{prop3}-{\rm(ii)}.
    }
    \label{fig2}
\end{figure}

\end{tm}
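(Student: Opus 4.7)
The plan is to treat Theorem \ref{TH6} as a global analytic bifurcation result for the equilibrium system \eqref{EE-system-1} (with $\bm\mu=\bm 0$), using the total population $N=\|\bm S+\bm I\|_1$ as the scalar bifurcation parameter. Writing $\bm S=N\bm\alpha+\bm v$ and restricting to the hyperplane $\bm 1^T(\bm v+\bm I)=0$ recasts the system as a real-analytic equation $\Phi(N,\bm v,\bm I)=\bm 0$ with trivial (DFE) branch $(\bm v,\bm I)\equiv(\bm 0,\bm 0)$. Proposition \ref{prop2} pinpoints $N=\mathcal{N}_0$ as the unique point on this branch where $D_{(\bm v,\bm I)}\Phi$ is singular; its kernel is one-dimensional and spanned by $(\bm w_0,\bm\eta)$, with $\bm\eta\gg\bm 0$ as in the theorem and $\bm w_0$ the unique solution of the linearized susceptible equation satisfying $\bm 1^T(\bm w_0+\bm\eta)=0$, while the cokernel is spanned by $(\bm 0,\bm\eta^*)$. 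The strict monotonicity of $\mathcal{R}_0$ in $N$ (Proposition \ref{prop2}(iii)) supplies the transversality condition, so the analytic Crandall--Rabinowitz theorem yields a real-analytic local branch $l\mapsto(N(l),\bm v(l),\bm I(l))$ through $l=\mathcal{N}_0$, normalized so that the coefficient of $\bm\eta$ in $\bm I(l)$ equals $l-\mathcal{N}_0$; thus $\bm I(l)=(l-\mathcal{N}_0)\bm\eta+O((l-\mathcal{N}_0)^2)$, giving the DFE limit in \eqref{Th6-eq1}.

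For parts (i) and (ii), I would compute $N''(\mathcal{N}_0)$: differentiating $\Phi=\bm 0$ twice in $l$ at $l=\mathcal{N}_0$ and pairing the second-order equation with $(\bm 0,\bm\eta^*)$, the solvability condition expresses $N''(\mathcal{N}_0)$ as a positive multiple of the difference of the two sums in \eqref{TH6-eq1}--\eqref{TH6-eq2}. The $d_I$-weighted sum comes from the second derivative of the saturation factor (accounting for the $(\alpha_j-\eta_j)$ coefficient), while the $d_S$-weighted sum arises from the cross-product $\bm v'(\mathcal{N}_0)\circ\bm I'(\mathcal{N}_0)$ whose $\bm v'$ component equals $\bm w_0$ and hence inherits a $d_S^{-1}$ factor from the linearized susceptible equation. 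This produces the dichotomy (i)--(ii). When $d_S\ge d_I$, the termwise inequalities $\alpha_j-\eta_j<\alpha_j$ (since $\eta_j>0$) and $\zeta_j\le\mathcal{N}_0\eta_j+\zeta_j$ force \eqref{TH6-eq1} immediately, so the bifurcation is forward. Next, I would extend the local branch globally via an analytic global bifurcation theorem (Dancer/Buffoni--Toland). The maximal connected branch must either return to the DFE branch, close into a loop, or be unbounded; the first alternative is excluded by the uniqueness of $\mathcal{N}_0$ as a bifurcation point. To rule out the loop and simultaneously prove the monotonicity $\bm 0\ll\bm I(\cdot;l)\ll\bm I(\cdot;\tilde l)$ for $\tilde l>l>\mathcal{N}_0$, differentiate the EE equations along $l$: $\bm I'(l)$ solves a linear system whose coefficient matrix is quasi-positive and irreducible with definite-signed Perron eigenvalue, so Perron--Frobenius forces $\bm I'(l)\gg\bm 0$ at each regular point, and analyticity propagates positivity through any fold point in $N(l)$. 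The curve is therefore simple and unbounded.

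For the limits at infinity, compactness of the EE set on bounded $N$-ranges and monotonicity of $\bm I$ imply $\|\bm I(l)\|_1\to\infty$ as $l\to\infty$. Summing the second equation of \eqref{EE-system-1} gives $\bm 1^T(\bm\gamma\circ\bm I)=\bm 1^T(\bm\beta\circ\bm S\circ\bm I/(\bm\zeta+\bm S+\bm I))$, which cannot hold as $\|\bm I\|_1\to\infty$ unless $\|\bm S\|_1\to\infty$ as well, so $N(l)\to\infty$. Then $N\alpha_j\beta_j/(\zeta_j+N\alpha_j)\to\beta_j$ for each $j$, and continuity of the spectral radius yields $f(l)=\mathcal{R}_0(N(l))\to\hat{\mathcal{R}}_0$; continuity of $f$ together with $f(\mathcal{N}_0^+)=1$ then gives $\mathcal{R}_{\min}:=\inf_{l>\mathcal{N}_0}f(l)\in(0,1]$.

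The main obstacle is the globalization step: ensuring the branch remains a simple curve along which $\bm I$ is strictly componentwise monotone. Secondary bifurcations could in principle turn the branch into a more complicated variety, so one must show the one-dimensional Perron kernel structure of the linearization persists along the entire branch, not merely near $\mathcal{N}_0$. Translating the abstract Crandall--Rabinowitz formula into the explicit inequalities \eqref{TH6-eq1}--\eqref{TH6-eq2}, and in particular verifying the $d_S\ge d_I$ dominance at the level of the full expansion, also requires careful bookkeeping of how the kernel component $\bm w_0$ depends on $d_S$ via the linearized susceptible equation.
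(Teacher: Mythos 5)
Your proposal identifies the local picture correctly (Crandall--Rabinowitz at $N=\mathcal{N}_0$, a solvability condition determining the bifurcation direction, and a termwise comparison giving forwardness when $d_S\ge d_I$), but it misses the single idea on which the paper's entire proof rests, and that omission leaves the hardest parts of the theorem unproven. Summing the two equilibrium equations in \eqref{EE-system-1} gives $\mathcal{L}(d_S\bm S+d_I\bm I)=\bm 0$, hence $d_S\bm S+d_I\bm I=\kappa\bm\alpha$ for some $\kappa>0$ at \emph{every} EE (Lemma \ref{lem5}). Substituting $\bm S=\tfrac{\kappa}{d_S}(\bm\alpha-d_I\tilde{\bm I})$ reduces the full EE set to the one-parameter family of scalar logistic-type problems \eqref{transform equation}, each of which has a \emph{unique} positive solution $\bm P^{(l)}$, strictly increasing in $l$ by the comparison principle since $\partial_l\mathcal{F}\gg\bm 0$. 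The curve $\mathcal{C}_*$, the total ordering $\bm I(\cdot;l)\ll\bm I(\cdot;\tilde l)$, the fact that every EE lies on $\mathcal{C}_*$, and the nonexistence of EEs for $N<N_{\min}=\min_{l\ge\mathcal{N}_0}\mathcal{G}(l)$ (hence for $\mathcal{R}_0<\mathcal{R}_{\min}$) all follow immediately, and the bifurcation direction reduces to the sign of $\mathcal{G}'(\mathcal{N}_0)$ computed in Lemma \ref{lem4}.

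Without this reduction, your globalization step does not close. A Dancer/Buffoni--Toland continuum is only a connected set containing the bifurcation point: it need not be a simple curve, it need not contain all EEs, and so $\mathcal{R}_{\min}:=\inf_l f(l)$ does not rule out EEs off the branch when $\mathcal{R}_0<\mathcal{R}_{\min}$. Your monotonicity argument is also not sound as stated: the linearization of the coupled $(\bm S,\bm I)$ system along the branch is not quasi-positive (the $\bm S$-equation carries the sign-indefinite coefficient $\bm\gamma-\bm\beta\circ\bm S/(\bm\zeta+\bm S+\bm I)$ multiplying $\bm I$), so Perron--Frobenius does not yield $\bm I'(l)\gg\bm 0$, and ``analyticity propagates positivity through fold points'' does not exclude secondary bifurcations. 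You flag this obstacle yourself; the paper's answer to it is precisely the identity $d_S\bm S+d_I\bm I=\kappa\bm\alpha$, which converts the global problem into a monotone scalar one where these difficulties do not arise.
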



\noindent An immediate consequence of Theorem \ref{TH6} is that when all parameters of the system \eqref{model} are fixed, there is at most a finite number of EE solutions. Moreover, the $\bm I$-components at the EE solutions can be totally ordered. Theorem \ref{TH6} also shows that when all the parameters are fixed, but the total population size and hence $\mathcal{R}_0$ varies, then $\mathcal{R}_0=1$ is a forward transcritical bifurcation point for the set of EE solutions if 
the susceptible population disperses faster than the infected population. 
{Our next result identifies sufficient conditions on the parameters of the system \eqref{model} under which a backward bifurcation occurs at $\mathcal{R}_0=1$.

\medskip

\begin{prop}\label{prop3} Suppose that $|\Omega|=2$ and $\mathcal{L}$ is symmetric. Suppose also that $\bm\zeta\in{\rm span}(\bm 1)$, $\bm \zeta\gg \bm 0$, $\bm r\notin{\rm span}(\bm 1)$, $\gamma_1<\beta_1$, $\|\bm\gamma\|_1<\|\bm\beta\|_1$, and $\mathcal{N}_{\rm up}^*=\frac{\gamma_1\zeta_1}{(\beta_1-\gamma_1)\alpha_1}$, where $\mathcal{N}_{\rm up}^*$ is defined by \eqref{N-star-def}. Then $d_*=\infty$ in Proposition \ref{prop2}-{\rm (iv)}. Moreover, for every $d_I>0$, the following conclusions hold.

\begin{itemize}
    \item[\rm (i)] If  $\eta_2\sqrt{\beta_2}\le\eta_1\sqrt{\beta_1}$,  then $\mathcal{R}_0=1$ is a forward transcritical bifurcation point for any $d_S>0$. 
    \item[\rm (ii)] If $\eta_2\sqrt{\beta_2}>\eta_1\sqrt{\beta_1}$,  then there is $0<d_{\rm up}^*<d_I$ such that $\mathcal{R}_0=1$ is a backward transcritical bifurcation point for every $0<d_S<d^*_{\rm up}$, while it is a forward transcritical bifurcation point for every $d_S>d^*_{\rm up}$. 
\end{itemize}   
    Furthermore, it holds that 
    \begin{equation}\label{prop3-eq1}
        \Big(1-\frac{(\beta_1-\gamma_1)}{d_IL}\Big)_+<\frac{\eta_2}{\eta_1}<1 \quad \forall\ d_I>0,
    \end{equation}
    where $L:=L_{12}=L_{21}>0$ and $\bm\eta$ is the positive eigenvector associated with $\sigma_*\big(d_I\mathcal{L}+{\rm diag}\big(\big(\mathcal{N}_0\bm\beta\circ\bm\alpha/(\bm \zeta+\mathcal{N}_0\bm\alpha )-\bm\gamma\big)\big)\big)$  satisfying $\|\bm\eta\|_1=1$.
\end{prop}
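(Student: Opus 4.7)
\medskip

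\noindent\textbf{Proof plan for Proposition \ref{prop3}.}

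First I would establish that $d_*=\infty$. Because $\mathcal{L}$ is symmetric with $|\Omega|=2$ we have $\bm\alpha=(1/2,1/2)^T$. Since $\bm r\notin{\rm span}(\bm 1)$, $\bm\beta$ is not parallel to $\bm\gamma$, so (the $\bm\zeta=\bm 0$ version of) Proposition \ref{prop2}-{\rm(ii)} yields that $\hat{\mathcal{R}}_0$ is strictly decreasing in $d_I$ with $\lim_{d_I\to0^+}\hat{\mathcal{R}}_0=\max_j\beta_j/\gamma_j\ge\beta_1/\gamma_1>1$ and $\lim_{d_I\to\infty}\hat{\mathcal{R}}_0=\|\bm\beta\|_1/\|\bm\gamma\|_1>1$. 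Hence $\hat{\mathcal{R}}_0>1$ for every $d_I>0$, so by Proposition \ref{prop2}-{\rm(iii)} the threshold $\mathcal{N}_0(d_I)$ is defined for all $d_I>0$, i.e., $d_*=\infty$. The same argument (via Proposition \ref{prop2}-{\rm(iv-2)}) gives $\mathcal{N}_0(d_I)>\mathcal{N}^*_{\rm up}$ for every $d_I>0$, a fact needed below.

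Next I would simplify the bifurcation criterion from Theorem \ref{TH6}. Symmetry of $\mathcal{L}$ and of $\operatorname{diag}(\cdot)$ forces $\bm\eta^*=\bm\eta$; with $\alpha_j=1/2$ and $\zeta_j=\zeta_1$ the common denominator $(\zeta_1+\mathcal{N}_0/2)^2$ cancels and the criterion reduces to comparing $d_I\zeta_1 A$ and $(d_S/2) B$, where
\[
A=\sum_{j=1,2}\beta_j\eta_j^2(\tfrac12-\eta_j),\qquad B=\mathcal{N}_0\sum_{j=1,2}\beta_j\eta_j^3+\zeta_1\sum_{j=1,2}\beta_j\eta_j^2>0.
\]
Using $\eta_1+\eta_2=1$, so $\tfrac12-\eta_2=-(\tfrac12-\eta_1)$, I can factor
\[
A=(\tfrac12-\eta_1)\bigl(\sqrt{\beta_1}\eta_1-\sqrt{\beta_2}\eta_2\bigr)\bigl(\sqrt{\beta_1}\eta_1+\sqrt{\beta_2}\eta_2\bigr).
\]
Once I show $\eta_1>\eta_2$ (hence $\tfrac12-\eta_1<0$), the sign of $A$ reduces to the sign of $\sqrt{\beta_2}\eta_2-\sqrt{\beta_1}\eta_1$: in case (i) we get $A\le0$, so $d_I\zeta_1 A\le0<(d_S/2)B$ for every $d_S>0$, i.e., forward bifurcation; in case (ii), $A>0$, and the threshold is $d^*_{\rm up}:=2d_I\zeta_1A/B>0$, with backward for $d_S<d^*_{\rm up}$ and forward for $d_S>d^*_{\rm up}$. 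The estimate $d^*_{\rm up}<d_I$ will follow from the identity $B-2\zeta_1 A=(\mathcal{N}_0+2\zeta_1)\sum_j\beta_j\eta_j^3>0$.

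The key remaining ingredient, and the one I expect to be the most delicate, is the two-sided bound \eqref{prop3-eq1} on $\eta_2/\eta_1$, which also supplies the $\eta_1>\eta_2$ inequality used above. Writing $A_0:=d_I\mathcal{L}+\operatorname{diag}(\bm a-\bm\gamma)$ with $a_j=\mathcal{N}_0\beta_j/(2\zeta_1+\mathcal{N}_0)$, the fact that $\mathcal{R}_0=1$ at $\mathcal{N}_0$ gives $\sigma_*(A_0)=0$ and $A_0\bm\eta=\bm 0$. The first row with $L:=L_{12}=L_{21}$ reads $d_IL(\eta_2-\eta_1)=(\gamma_1-a_1)\eta_1$, so
\[
\frac{\eta_2}{\eta_1}=1+\frac{\gamma_1-a_1}{d_IL}.
\]
A direct computation shows that $a_1(\mathcal{N}^*_{\rm up})=\gamma_1$ and $a_1$ is strictly increasing in $\mathcal{N}_0$; since $\mathcal{N}_0>\mathcal{N}^*_{\rm up}$ we obtain $a_1>\gamma_1$, hence $\eta_2/\eta_1<1$. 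On the other hand $a_1<\beta_1$, which gives $\eta_2/\eta_1>1-(\beta_1-\gamma_1)/(d_IL)$; combined with positivity of $\bm\eta$ this yields the lower bound $(1-(\beta_1-\gamma_1)/(d_IL))_+$.

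Finally I would assemble the pieces: the signs established in the previous paragraph show $\eta_1>\eta_2$ (so $\eta_1>\tfrac12$), which justifies the sign analysis of $A$, and the two cases (i)/(ii) then follow directly from the reduced criterion. The main obstacle is the careful monotonicity argument for $a_1$ together with the strict inequality $\mathcal{N}_0(d_I)>\mathcal{N}^*_{\rm up}$ that powers the upper bound $\eta_2/\eta_1<1$; everything else is algebra once the symmetry has been exploited.
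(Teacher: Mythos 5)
Your plan is correct and follows essentially the same route as the paper's proof: $d_*=\infty$ from $\|\bm\beta\|_1>\|\bm\gamma\|_1$, the strict inequality $\mathcal{N}_0>\mathcal{N}^*_{\rm up}$ via Proposition \ref{prop2}-(iv-2), the explicit eigenvector ratio $\eta_2/\eta_1=1+(\gamma_1-a_1)/(d_IL)$ with $\gamma_1<a_1<\beta_1$ giving \eqref{prop3-eq1} and $\eta_2<\tfrac12<\eta_1$, and the reduction of the Theorem \ref{TH6} criterion using $\bm\eta^*=\bm\eta$, with the same threshold $d^*_{\rm up}$ and the same identity showing $d^*_{\rm up}<d_I$. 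Your factorization of $A$ as $(\tfrac12-\eta_1)(\sqrt{\beta_1}\eta_1-\sqrt{\beta_2}\eta_2)(\sqrt{\beta_1}\eta_1+\sqrt{\beta_2}\eta_2)$ is a marginally slicker packaging of the paper's comparison of $(\eta_2/\eta_1)^2(\beta_2/\beta_1)$ with $1$, but it is the same computation.
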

\begin{rk}\label{Rk3} Assume  the hypotheses of Proposition \ref{prop3} and set  $L:=L_{12}=L_{21}>0$. In addition, if  $\beta_1<\beta_2$,   then thanks to \eqref{prop3-eq1} and Proposition \ref{prop3}-{\rm (ii)}, for every $d_I>d_0^*:=\frac{\beta_1-\gamma_1}{L(1-\sqrt{\beta_1}/\sqrt{\beta_2})}$,  there is $d^*_{\rm up}=d^*_{\rm up}(d_I)>0$ such that $\mathcal{R}_0=1$ is a transcritical backward bifurcation point for every $0<d_S<d^*_{\rm up}$. It then follows from Theorem \ref{TH6} that for every fixed $d_I>d_0^*$ and $0<d_S<d^*_{\rm up}$, the epidemic model \eqref{model} has at least two EE solutions 
for some range of the value $N>0$ corresponding to $\mathcal{R}_0<1$. This is strongly in contrast with the dynamics of solutions of \eqref{standard-incidence}, since the latter has no EE solution when its BRN $\hat{\mathcal{R}}_0$ is less than or equal to one. {Table \ref{Ex8-T} gives numerical simulations for the existence of EEs when $\mathcal{R}_0<1$ under the hypotheses of Proposition \ref{prop3}.}
\end{rk}
}

\subsubsection{Asymptotic profiles of EEs of system \eqref{model} when $\bm\mu=\bm 0$.}

We investigate the profiles of the EE solutions as either $d_S$ or $d_I$ becomes significantly small. Our first result concerns the case of $d_S$ tending to zero while $d_I>0$ is fixed. In the subsequent results, recall that $\bm r=\bm\gamma/\bm\beta$.

\begin{tm}\label{TH7}  Suppose that $\bm\mu=\bm0$. Fix $ d_I>0$ and $N>0$, and suppose that $\mathcal{R}_0>1$. For every $d_S>0$, let $(\bm S,\bm I)$ be an EE solution of \eqref{model} in $\mathcal{E}$.  Then $\bm I-(\sum_{j\in\Omega}I_j)\bm\alpha\to \bm 0$ as $d_S\to 0^+$. Furthermore, the following conclusions hold.
\begin{itemize}
    \item[\rm (i)] If either $\bm r_{M}\ge 1$ or $\bm r_M<1$ and $N\le \|\bm\zeta\circ {\bm r}/(\bm 1-{\bm r})\|_1$, then $\|\bm I\|_1\to 0$ and $\|\bm S\|_1\to N$  as $d_S\to 0^+$. 
    \begin{itemize}
        \item[\rm (i-1)] If either $\bm r_M\ge 1$ or $\bm r_M<1$ and $N<\|\bm\zeta\circ\bm r/(\bm 1-\bm r)\|_1$, then, up to a subsequence, as $d_S$ tends to zero, $(\bm S,\frac{1}{d_S}\bm I)\to (l^*(\bm\alpha-d_I\bm P^*),l^*\bm P^*)$ where $l^*>\mathcal{N}_0$  and  $\bm 0\ll \bm P^*\ll \frac{1}{d_I}\bm\alpha$ satisfy
    \begin{equation}\label{Eq1-TH7}
       \begin{cases} 0=d_I\mathcal{L}\bm P^* +\bm\beta\circ(l^*(\bm\alpha-d_I\bm P^*)/(\bm\zeta +l^*(\bm\alpha-d_I\bm P^*))-\bm r)\circ\bm P^*\cr 
       N=l^*\sum_{j\in\Omega}(\alpha_j-d_IP^*_j).
       \end{cases}
    \end{equation}
    Here $\mathcal{N}_0$ is given by Proposition \ref{prop2}-{\rm (iii)}.
    \item[\rm (i-2)] If $\bm r_M<1$ and $ N=\|\bm\zeta\circ\bm r/(\bm 1-\bm r)\|_1$, then, up to a subsequence, as $d_S\to 0^{+}$, either $(\bm S,\frac{1}{d_S}\bm I)$ has the asymptotic profiles described in {\rm(i-1)}, or $\bm S\to \bm\zeta\circ\bm r/(\bm 1-\bm r)$.
    \end{itemize}

    \item[\rm (ii)] If ${\bm r}_M<1$ and $N>\|\bm\zeta\circ {\bm r}/(\bm 1-{\bm r})\|_1$, then, up to a subsequence, as $d_S\to 0^+$, one of the following holds. 
    \begin{itemize}
    \item[\rm (ii-1)] $(\bm S,\bm I)\to (\bm S^*,\bm I^*)$ where 
    \begin{equation}\label{Eq2-TH7}
\bm S^*:= \Big(\bm\zeta+\frac{(N-\|\bm\zeta\circ\bm r/(\bm 1-\bm r)\|_1)}{(1+\|\bm\alpha\circ\bm r/(\bm 1-\bm r)\|_1)}\bm\alpha\Big)\circ(\bm r/(\bm 1-\bm r))\quad  
\text{and} 
\quad 
\bm I^*:= \frac{(N-\|\bm\zeta\circ\bm r/(\bm 1-\bm r)\|_1)}{(1+\|\bm\alpha\circ\bm r/(\bm 1-\bm r)\|_1)}\bm\alpha.
\end{equation}
 
 \item[\rm (ii-2)]   $(\bm S,\frac{1}{d_S}\bm I)\to (l^*(\bm\alpha-d_I\bm P^*),l^*\bm P^*)$ where $l^*>\mathcal{N}_0$ and $\bm0\ll \bm P^*\ll \frac{1}{d_I}\bm\alpha$ solve \eqref{Eq1-TH7}. 
 \end{itemize}
 Furthermore, {\rm (ii-1)} always holds {if either $N> \|\bm \zeta\circ\bm r/((\bm 1-\bm r )\circ\bm\alpha)\|_{\infty}$ or $N= \|\bm \zeta\circ\bm r/((\bm 1-\bm r )\circ\bm\alpha)\|_{\infty} $ and $\bm \zeta\circ\bm r/((\bm 1-\bm r )\circ\bm\alpha){\notin}{\rm span}(\bm 1)$.}
\end{itemize}
    
\end{tm}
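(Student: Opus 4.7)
The plan is to exploit the compactness of the family of EE solutions inside $\mathcal E$ together with the summed identity $d_S\mathcal L\bm S+d_I\mathcal L\bm I=\bm 0$ (obtained by adding the two lines of \eqref{EE-system}) to pass to subsequential limits and classify them. Since $(\bm S,\bm I)\in\mathcal E$ is uniformly bounded, letting $d_S\to 0^+$ gives $\mathcal L\bm I\to\bm 0$; because $\sigma_*(\mathcal L)=0$ is simple with kernel $\mathrm{span}(\bm\alpha)$ by \textbf{(A1)}, writing $\bm I=(\sum_{j\in\Omega} I_j)\bm\alpha+\bm W$ with $\sum_{j\in\Omega} W_j=0$ and using the invertibility of $\mathcal L$ on the complement of $\mathrm{span}(\bm\alpha)$ yields $\|\bm W\|\lesssim\|\mathcal L\bm I\|\to 0$, which is the initial assertion of the theorem.

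I would then extract a subsequence with $\bm S\to\bm S^\infty\ge\bm 0$ and $\bm I\to c\bm\alpha$ where $c\ge 0$ and $\|\bm S^\infty\|_1+c=N$. Passing to the limit in the first line of \eqref{EE-system}, the dispersal term $d_S\mathcal L\bm S$ disappears and one obtains the algebraic identity $\bigl(\bm\gamma-\bm\beta\circ\bm S^\infty/(\bm\zeta+\bm S^\infty+c\bm\alpha)\bigr)\circ c\bm\alpha=\bm 0$. When $c>0$, since $\bm\alpha\gg\bm 0$, this forces the componentwise equation $\gamma_i(\zeta_i+S_i^\infty+c\alpha_i)=\beta_i S_i^\infty$, which is consistent with $\bm S^\infty\ge\bm 0$ only if $\bm r_M<1$; then $S_i^\infty=r_i(\zeta_i+c\alpha_i)/(1-r_i)$, and inserting into the mass constraint reduces to
\begin{equation*}
c\bigl(1+\|\bm\alpha\circ\bm r/(\bm 1-\bm r)\|_1\bigr)=N-\|\bm\zeta\circ\bm r/(\bm 1-\bm r)\|_1.
\end{equation*}
Under the hypotheses of (i) the right-hand side is non-positive, forcing $c=0$, i.e.\ $\|\bm I\|_1\to 0$ and $\|\bm S\|_1\to N$; under those of (ii) it determines $c>0$ uniquely, reproducing exactly the profile \eqref{Eq2-TH7} of (ii-1); in the borderline case $N=\|\bm\zeta\circ\bm r/(\bm 1-\bm r)\|_1$ of (i-2), $c=0$ is the only admissible limit.

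Whenever $c=0$ along a subsequence, I would extract the finer scaling by setting $\bm Q:=\bm I/d_S$. Dividing the summed identity by $d_S$ yields $\mathcal L(\bm S+d_I\bm Q)=\bm 0$, hence $\bm S+d_I\bm Q=l\bm\alpha$ for a scalar $l=l(d_S)>0$; rewriting the $\bm I$-equation says that $\bm Q$ is a positive right Perron eigenvector of $d_I\mathcal L+\mathrm{diag}(\bm\beta\circ\bm S/(\bm\zeta+\bm S+\bm I)-\bm\gamma)$ at spectral bound zero. Harnack-type bounds for positive eigenvectors, combined with $l=\|\bm S\|_1+(d_I/d_S)\|\bm I\|_1$ and the mass constraint $\|\bm S\|_1+\|\bm I\|_1=N$, give uniform upper and lower bounds on $\bm Q$; extracting further limits one obtains $\bm Q\to l^*\bm P^*$ with $\bm 0\ll\bm P^*\ll\bm\alpha/d_I$ and $\bm S^\infty=l^*(\bm\alpha-d_I\bm P^*)$, and the rescaled $\bm I$-equation passes to the limit to produce \eqref{Eq1-TH7}. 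The strict inequality $l^*>\mathcal N_0$ follows because the limiting eigenvalue condition is precisely the EE-existence threshold at effective total size $l^*(1-d_I\|\bm P^*\|_1)=N$, and Proposition \ref{prop2}(iii) together with the strict positivity of $\bm P^*$ excludes $l^*=\mathcal N_0$.

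The main obstacle will be establishing the uniform upper and lower bounds on $\bm Q=\bm I/d_S$: blow-up $\|\bm Q\|_1\to\infty$ is excluded by combining $\mathcal L(\bm S+d_I\bm Q)=\bm 0$ with the mass constraint and the spectral relation for the Perron eigenvector $\bm I$, while degeneracy $\|\bm Q\|_1\to 0$ is incompatible with the Perron-eigenvalue equation admitting a nontrivial limiting eigenvector. A secondary technical point is the final ``Furthermore'' of (ii), namely ruling out profile (ii-2): any solution of \eqref{Eq1-TH7} with $\bm\alpha-d_I\bm P^*\gg\bm 0$ and $l^*\sum_{j\in\Omega}(\alpha_j-d_IP_j^*)=N$ satisfies, componentwise at an index realizing the $\ell^\infty$-maximum of $\bm\zeta\circ\bm r/((\bm 1-\bm r)\circ\bm\alpha)$, a pointwise inequality that forces $N\le\|\bm\zeta\circ\bm r/((\bm 1-\bm r)\circ\bm\alpha)\|_\infty$, and in the equality case requires $\bm\zeta\circ\bm r/((\bm 1-\bm r)\circ\bm\alpha)\in\mathrm{span}(\bm 1)$; both are incompatible with the prescribed hypotheses, leaving (ii-1) as the only possible subsequential limit.
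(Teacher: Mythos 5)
Your overall strategy coincides with the paper's: both rest on the conservation identity $\mathcal{L}(d_S\bm S+d_I\bm I)=\bm 0$ (equivalently $d_S\bm S+d_I\bm I=\kappa\bm\alpha$), subsequential compactness in $\mathcal{E}$, the dichotomy on whether $\limsup_{d_S\to0^+}\|\bm I\|_1$ is positive, and the rescaling $\bm Q=\bm I/d_S$ when it vanishes. The first assertion, the classification of limits with $c>0$ (the paper's Claim 1), the exclusion of $\|\bm Q\|_1\to0$ via $\mathcal{R}_0>1$, and the comparison argument for the final ``Furthermore'' of (ii) are all sound and essentially reproduce the paper's steps.

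The genuine gap is your claim that blow-up $\|\bm Q\|_1\to\infty$ can always be excluded once $c=0$. Writing $\bm S+d_I\bm Q=l\bm\alpha$, blow-up of $\|\bm Q\|_1$ is the same as $l\to\infty$; in that regime $\|\bm\alpha-d_I\bm Q/l\|_1\le N/l\to0$, so the rescaled $\bm I$-equation degenerates to $\bm S^\infty/(\bm\zeta+\bm S^\infty)=\bm r$, forcing $\bm S^\infty=\bm\zeta\circ\bm r/(\bm 1-\bm r)$ and hence $N=\|\bm\zeta\circ\bm r/(\bm 1-\bm r)\|_1$. The upper bound on $\bm Q$ is therefore obtained only by contradiction with $N\ne\|\bm\zeta\circ\bm r/(\bm 1-\bm r)\|_1$, and it is genuinely unavailable in the borderline case (i-2) --- which is exactly why the theorem lists the extra alternative $\bm S\to\bm\zeta\circ\bm r/(\bm 1-\bm r)$ there. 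As written, your outline would either stall at this step or wrongly conclude that (i-2) always reduces to the (i-1) profile. The repair is the paper's: treat $\limsup l<\infty$ and $\limsup l=\infty$ as separate alternatives, obtaining the (i-1)-type profile in the first and $\bm S\to\bm\zeta\circ\bm r/(\bm 1-\bm r)$ in the second, and only in cases (i-1) and (ii) (with $c=0$) invoke the hypothesis on $N$ to rule the second alternative out. A minor further point: in the ``Furthermore'' step the pointwise inequality should be evaluated at an index maximizing $S^*_j/\alpha_j$, not at one maximizing $\zeta_jr_j/((1-r_j)\alpha_j)$.
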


\begin{rk} { Assume that $\bm\mu=\bm\zeta=\bm 0$ so that system \eqref{model} reduces to system \eqref{standard-incidence}. In addition, if  $\bm r_M=1$ and $\hat{\mathcal{R}}_0>1$,  then it follows from the proof of the first assertion of Theorem \ref{TH7}-{\rm (i)} that at the EEs, the total infected population tends to zero as $d_S$ tends to zero. This complements the results of \cite{allen2007asymptotic,chen2020asymptotic,li2019dynamics} on the profiles of EEs of \eqref{standard-incidence} as $d_S$ tends to zero, where it is assumed that $\bm r_m<1<\bm r_M$.  }
    
\end{rk}

 When $\bm{r}_M \ge 1$, or equivalently $\beta_i \le \gamma_i$ for some $i \in \Omega$, Theorem \ref{TH7}-{\rm (i)} suggests that reducing the dispersal rate of the susceptible population can significantly diminish the disease's impact. This conclusion also holds if the total population size $N$ is less than or equal to the threshold $ \|\bm{\zeta} \circ \bm{r} / (\bm{1} - \bm{r})\|_1$ when $\bm{r}_M < 1$. However, if $\bm{r}_M <1$ and $N$ exceeds this threshold, Theorem \ref{TH7}-{\rm (ii)} indicates that the disease may still persist even if the movement of the susceptible population is entirely restricted.  Our next result concerns the profiles of EE solutions of \eqref{model} as the dispersal rate of the infected population becomes very small.

\begin{tm}\label{TH8}  Suppose that $\bm\mu=\bm 0$. Fix $d_S>0$ and $N>0$.  If $\|N\bm\alpha/(\bm r\circ(\bm\zeta+N\bm\alpha))\|_{\infty}>1$, then there is $d_0>0$ such that system \eqref{model} has a unique EE solution $(\bm S,\bm I)$ in $\mathcal{E}$ for every $0<d_I<d_0$.  Furthermore, for every $j\in\Omega$,
\begin{equation}\label{N-star}
    \lim_{d_I\to0^+}( S_j, I_j)=\Big(N^*\alpha_j, \frac{(N^*(1- r_j)\alpha_j- r_j\zeta_j)_+}{ r_j} \Big),
\end{equation}
where $0<N^*<N$ is  uniquely determined by  the algebraic equation
\begin{equation}\label{N-star-2}
    N=N^*+\sum_{j\in\Omega}\frac{(N^*(1-r_j)\alpha_j-r_j\zeta_j)_+}{r_j}.
\end{equation}

\end{tm}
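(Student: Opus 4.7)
The plan is to obtain existence and uniqueness of the EE via Theorem \ref{TH5}-(i) and then pass to the limit in the EE-system \eqref{EE-system}. By Proposition \ref{prop2}-(ii) the hypothesis rewrites as $\lim_{d_I\to 0^+}\mathcal{R}_0=\max_{j\in\Omega}\frac{N\alpha_j}{r_j(\zeta_j+N\alpha_j)}>1$, so $\mathcal{R}_0>1$ for all sufficiently small $d_I$. Comparing the generators in \eqref{R-0} and \eqref{R-hat-0} entrywise gives $\mathcal{R}_0\le\hat{\mathcal{R}}_0$, hence $\hat{\mathcal{R}}_0>1$ as well. Choosing $d_0>0$ with $d_0<d_S$ small enough, Theorem \ref{TH5}-(i) then provides a unique EE $(\bm S_{d_I},\bm I_{d_I})\in\mathcal{E}$ for every $d_I\in(0,d_0)$.

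Since the family lies in the compact set $\mathcal{E}$, along any subsequence $d_I^{(n)}\to 0$ one may extract a limit $(\bm S_n,\bm I_n)\to(\bm S^*,\bm I^*)$. Passing to the limit in the $\bm I$-component of \eqref{EE-system} yields
\begin{equation*}
(\bm\beta\circ\bm S^*/(\bm\zeta+\bm S^*+\bm I^*)-\bm\gamma)\circ\bm I^*=\bm 0,
\end{equation*}
and substituting this identity into the limit of the $\bm S$-component gives $d_S\mathcal{L}\bm S^*=\bm 0$, whence $\bm S^*=N^*\bm\alpha$ for some $N^*\ge 0$ by \eqref{DFE-eq1}. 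The displayed identity then forces $I_j^*=(N^*(1-r_j)\alpha_j-r_j\zeta_j)/r_j$ on any patch with $I_j^*>0$, which in particular requires $N^*(1-r_j)\alpha_j>r_j\zeta_j$ there.

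The main obstacle is the converse direction: I must rule out $I_{j_0}^*=0$ at a patch $j_0$ satisfying $N^*(1-r_{j_0})\alpha_{j_0}>r_{j_0}\zeta_{j_0}$. Under those two conditions $\gamma_{j_0}-\beta_{j_0}S^*_{j_0}/(\zeta_{j_0}+S^*_{j_0})<-\epsilon$ for some $\epsilon>0$, and by continuity this strict inequality persists along the sequence. Rewriting the $j_0$-th equation of the $\bm I$-system as
\begin{equation*}
d_I^{(n)}\sum_{i\ne j_0}L_{j_0 i}I_{n,i}
=I_{n,j_0}\bigl[\gamma_{j_0}-\beta_{j_0}S_{n,j_0}/(\zeta_{j_0}+S_{n,j_0}+I_{n,j_0})-d_I^{(n)}L_{j_0 j_0}\bigr],
\end{equation*}
the left-hand side is strictly positive by the quasi-positivity and irreducibility of $\mathcal{L}$ together with $\bm I_n\gg\bm 0$, while the bracket on the right is eventually at most $-\epsilon/2<0$ since $d_I^{(n)}|L_{j_0 j_0}|\to 0$. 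This contradiction yields $I_j^*=(N^*(1-r_j)\alpha_j-r_j\zeta_j)_+/r_j$ for every $j\in\Omega$. Inserting into $N=\|\bm S^*\|_1+\|\bm I^*\|_1$ produces \eqref{N-star-2}. Its right-hand side is continuous and strictly increasing in $N^*\in(0,\infty)$, vanishes at $0$ and tends to $\infty$, and at $N^*=N$ exceeds $N$ by the hypothesis, so a unique $N^*\in(0,N)$ solves it. Every subsequential limit thus coincides, yielding full convergence as $d_I\to 0^+$.
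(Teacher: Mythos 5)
Your proposal is correct, and its overall skeleton (reduce to Theorem \ref{TH5}-(i) for existence and uniqueness when $d_I<d_S$ is small, extract subsequential limits from the compact set $\mathcal{E}$, and pin down $N^*$ through the strictly increasing algebraic equation \eqref{N-star-2}) matches the paper. The technical core, however, is carried out differently. The paper identifies the limit of $\bm S$ from the algebraic identity $d_S\bm S+d_I\bm I=\kappa\bm\alpha$ of Lemma \ref{lem5}-(i), which gives the quantitative bound $\|\bm S-(\sum_j S_j)\bm\alpha\|_1\le 2d_IN/d_S$, and then obtains the limit of each $I_j$ --- positive part and all --- in one stroke by solving the $j$-th equation of \eqref{EE-system} explicitly via the quadratic formula and passing to the limit in that closed-form expression. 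You instead pass to the limit in both equations, use the resulting identity $(\bm\beta\circ\bm S^*/(\bm\zeta+\bm S^*+\bm I^*)-\bm\gamma)\circ\bm I^*=\bm 0$ to kill the reaction term in the $\bm S$-equation and conclude $\bm S^*\in{\rm span}(\bm\alpha)$, and then resolve the dichotomy $I_j^*>0$ versus $I_j^*=0$ by a sign argument: on a patch where $N^*(1-r_{j_0})\alpha_{j_0}>r_{j_0}\zeta_{j_0}$ but $I^*_{j_0}=0$, the influx $d_I\sum_{i\ne j_0}L_{j_0i}I_i$ is strictly positive (quasipositivity plus irreducibility force a positive off-diagonal entry in row $j_0$ when $|\Omega|\ge 2$) while the right-hand side is eventually negative, a contradiction. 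Your route is softer and avoids the explicit quadratic-formula computation, at the cost of a separate case analysis (and a trivial separate remark for $|\Omega|=1$, where the influx sum is empty but the EE equation gives the formula directly); the paper's route is more computational but yields the full limit \eqref{N-star} uniformly in $j$ without distinguishing cases. Both are complete, and your verification that the root of \eqref{N-star-2} lies strictly in $(0,N)$ under the standing hypothesis is the same monotonicity argument the paper uses.
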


\begin{rk} Assume that the hypotheses of Theorem \ref{TH8} hold. Then there is some $i\in\Omega$ such that $N\alpha_i>r_i(\zeta_i+N\alpha_i)$, which implies that $\tilde{\Omega}=\{j\in\Omega : \beta_j>\gamma_j\}$ is not empty.
\begin{itemize}
\item[\rm (i)]If $\Omega\setminus\tilde{\Omega}\ne \emptyset$, then by \eqref{N-star}, the infected populations at EEs residing on the patches of $\Omega\setminus\tilde{\Omega}$ converge to zero as $d_I$ becomes very small. Note also from the fact that $N>N^*$ in Theorem \ref{TH8}, the infected populations at the EEs persist on some of the patches of $\tilde{\Omega}$ as $d_I$ gets very small. In particular, if $\Omega$ consists of only two patches, say $\Omega=\{1,2\}$, and $\tilde{\Omega}=\{2\}$, then as the dispersal rate $d_I$ of the infected population approaches zero, we have that at the EE solution, the infected population living on patch 2 persist while those living on patch 1 die out (see Numerical Experiment 13).

\item[\rm (ii)] Set $N_{\rm critical}:=\Big(\max_{j\in\tilde{\Omega}}\frac{r_j\zeta_j}{(1-r_j)\alpha_j}\Big)\Big(1+\sum_{j\in\tilde{\Omega}}\frac{(1-r_j)\alpha_j}{r_j}\Big)-\sum_{j\in\tilde{\Omega}}\zeta_j$. It follows from \eqref{N-star} and \eqref{N-star-2} that the infected populations at the EEs persist exactly on all patches of $\tilde{\Omega}$ as $d_I$ tends to zero  if and only if $N>N_{\rm critical}$.  Indeed, consider the function 
$$
g(N^*)=N^*+\sum_{j\in\Omega}\frac{(N^*(1-r_j)\alpha_j-r_j\zeta_j)_+}{r_j}=N^*+\sum_{j\in\tilde{\Omega}}\frac{(N^*(1-r_j)\alpha_j-r_j\zeta_j)_+}{r_j}\quad N^*\ge 0,
$$
$g$ is strictly increasing and continuous, $g(0)=0$, and $g(N^*)\to\infty$ as $N^*\to\infty$.  Note also that for $\underline{N}^*=\max_{i\in\tilde{\Omega}}\frac{\zeta_ir_i}{(1-r_i)\alpha_i}$, we have 
$$ 
g(\underline{N}^*)=\Big(1+\sum_{i\in\tilde{\Omega}}\frac{(1-r_i)\alpha_i}{r_i}\Big)\Big(\max_{i\in\tilde{\Omega}}\frac{\zeta_ir_i}{(1-r_i)\alpha_i}\Big)-\sum_{i\in\tilde{\Omega}}\zeta_i.
$$
Thus, if $N>N_{\rm critical}=g(\underline{N}^*)$, by the intermediate value theorem and the strict monotonicity of $g$, there is a unique $N^*>\underline{N}^*$ such that $g(N^*)=N$. Since $N^*>\underline{N}^*$, then $N^*(1-r_i)\alpha_i>r_i\zeta_i$ for all $i\in\tilde{\Omega}$. However, if $N{\leq}N_{\rm critical}$, then the unique positive number $N^*$ satisfying $g(N^*)={N}$ must be less than or equal to $ \underline{N}^*$, in which case the set $\{i \in\Omega : N^*(1-r_i)\alpha_i\le r_i\zeta_i\}$ is not empty.

\item[\rm (iii)] If $\bm r_M<1$ and $N>\max\{\|\bm\zeta\circ \bm r/((\bm 1-\bm r)\circ\bm\alpha)\|_{\infty},\|{\bm\zeta\circ\bm r/}{((\bm 1-\bm r)\circ\bm\alpha)}\|_{\infty}\Big(1+\|{(\bm 1-\bm r)\circ\bm \alpha/}{\bm r}\|_1\Big)-\|\bm \zeta\|_1\}$, it follows from Theorems \ref{TH7}-{\rm (ii)} and \ref{TH8} that, as either the dispersal rate of susceptible or infected population becomes very small, the disease will persist on all patches. 

\end{itemize}
\end{rk}

\subsection{Numerical Simulations}
In this section, we carry out some numerical simulations to illustrate our theoretical results. For all the simulations, we consider  two patches, that is $\Omega=\{1,2\}$, and take $L_{12}=0.4$, $L_{21}=0.1$. So $L_{11}=-0.1$, $L_{22}=-0.4$ and $\bm\alpha=(0.8, 0.2)^{T}$. We also fix $N=4$ in Experiment 1 through Experiment 7. We simulate two scenarios: $\bm\mu>0$ and $\bm\mu=0$.

\medskip

\subsubsection{Case of $\bm\mu>0$}

In this subsection, we simulate the large-time behavior of solutions of system \eqref{model} when $\bm\mu>0$. 
We fix parameters $d_{S}=1$, $d_{I}=1$, $\bm\beta=(1, 1)^{T}$, $\bm\gamma=(1, 1)^T$, $\bm\zeta=(0.5, 0.5)^T$. We vary the values of $\bm\mu$ and $(\bm S_{0}, \bm I_{0})$ to see how the long-time behavior of solutions of \eqref{model} changes. Experiment 1 concerns the case of $\bm\mu\gg\bm 0$, Experiment 2 focuses on the case of $\mu_1>0$ and $\mu_2=0$, while Experiment 3 is for the case of $\mu_1=0$ and $\mu_2>0$. These three simulations are consistent with Theorem \ref{TH1}.

\medskip

\noindent{\bf Experiment 1.}
Let $\bm\mu=(0.1, 0.1)^T$. We take $(\bm S^{0}, \bm I^{0})=((1,1)^T,(1,1)^T)$. Numerically, we observe that $(\bm S(t), \bm I(t))\to ((2.8635,0.7159)^T,\bm 0)\approx \big(\big(N-\int_0^{\infty}\sum_{j\in\Omega}\mu_j I_j(t)dt\big)\bm\alpha,{\bm 0}\big)$ as $t$ becomes large (see Figure \ref{Ex1}(a)). We then take different initials, we observe the same phenomenon (see Figure \ref{Ex1}(b) for $(\bm S^{0}, \bm I^{0})=((1.5,0.1)^T,(0.5,1.9)^T)$ and Figure \ref{Ex1}(c) for $(\bm S^{0}, \bm I^{0})=((0.1,1.9)^T,(0.5,1.5)^T)$).

\begin{figure}[!ht]
\begin{center}
\subfigure[]{
\resizebox*{0.300\linewidth}{!}{\includegraphics{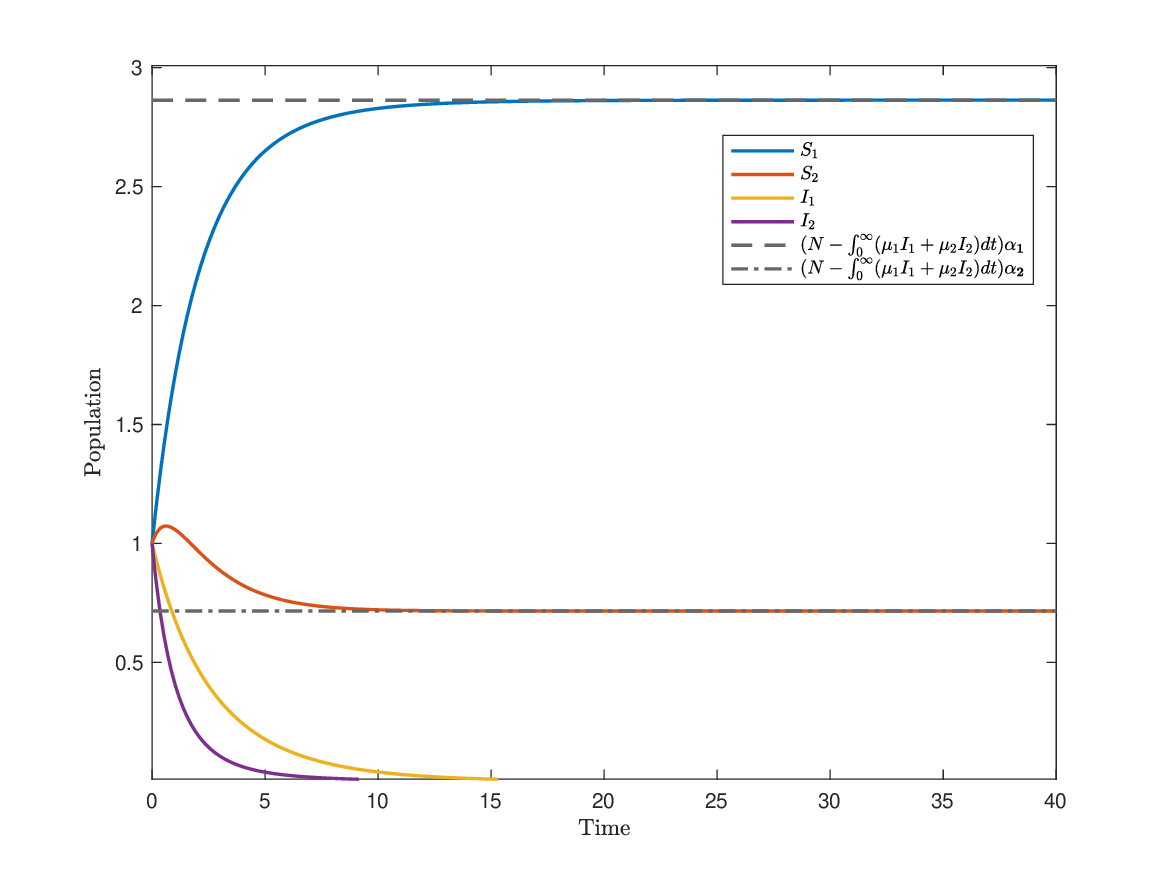}}}
\subfigure[]{\resizebox*{0.300\linewidth}{!}{\includegraphics{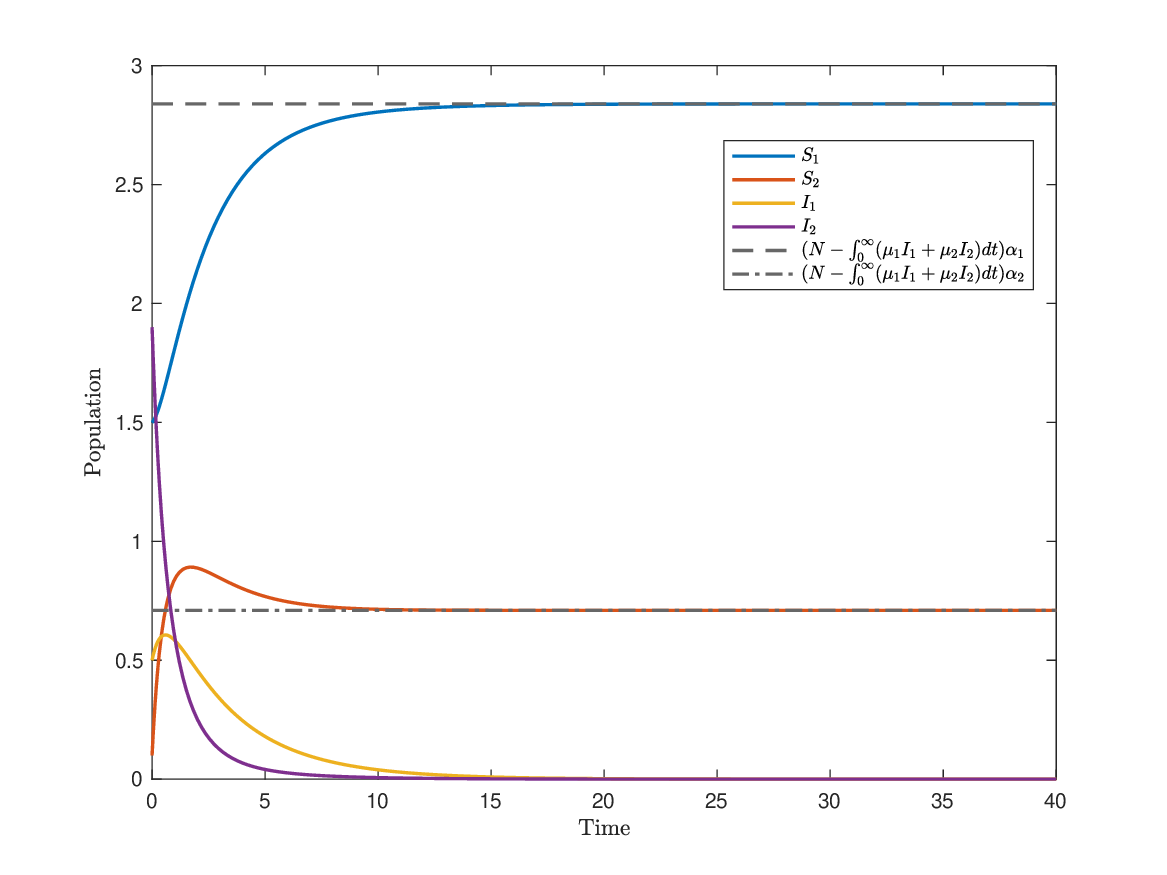}}}
\subfigure[]{\resizebox*{0.300\linewidth}{!}{\includegraphics{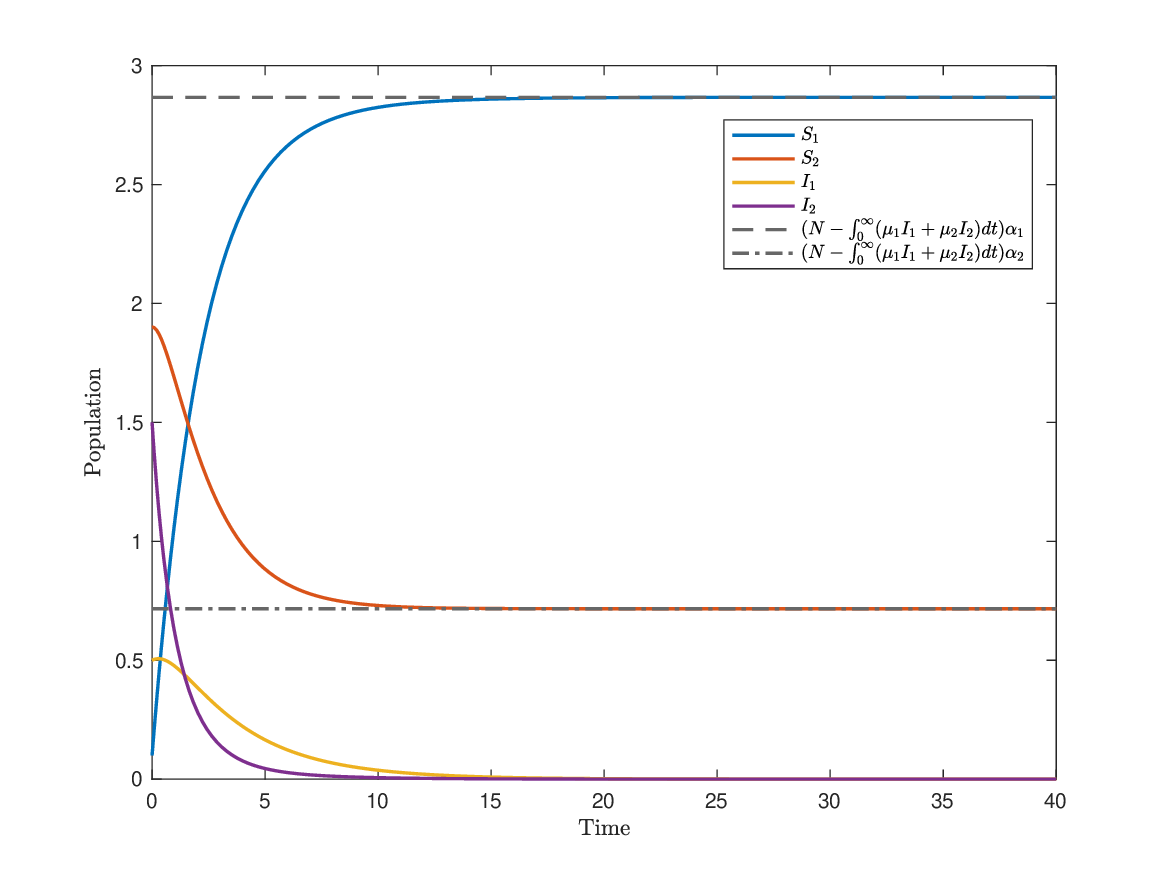}}}
\caption{Numerical simulations illustrating global dynamics of \eqref{model} when $\bm\mu=(0.1, 0.1)^T\gg \bm 0$. }.
\label{Ex1}
\end{center}
\end{figure}

\medskip

\noindent{\bf Experiment 2.} 
Let $\bm\mu=(0.1, 0)^T$. We take $(\bm S^{0}, \bm I^{0})=((1,1)^T,(1,1)^T)$. Numerically, we observe that $(\bm S(t), \bm I(t))\to ((2.9562,0.7391)^T,\bm 0)\approx \big(\big(N-\int_0^{\infty}\sum_{j\in\Omega}\mu_j I_j(t)dt\big)\bm\alpha,{\bm 0}\big)$ as $t$ becomes large (see Figure \ref{Ex2}(a)). We then take different initials, we observe the same phenomenon (see Figure \ref{Ex2}(b) for $(\bm S^{0}, \bm I^{0})=((1.5,0.1)^T,(0.5,1.9)^T)$ and Figure \ref{Ex2}(c) for $(\bm S^{0}, \bm I^{0})=((0.1,1.9)^T,(0.5,1.5)^T)$). 

\begin{figure}[!ht]
\begin{center}
\subfigure[]{
\resizebox*{0.300\linewidth}{!}{\includegraphics{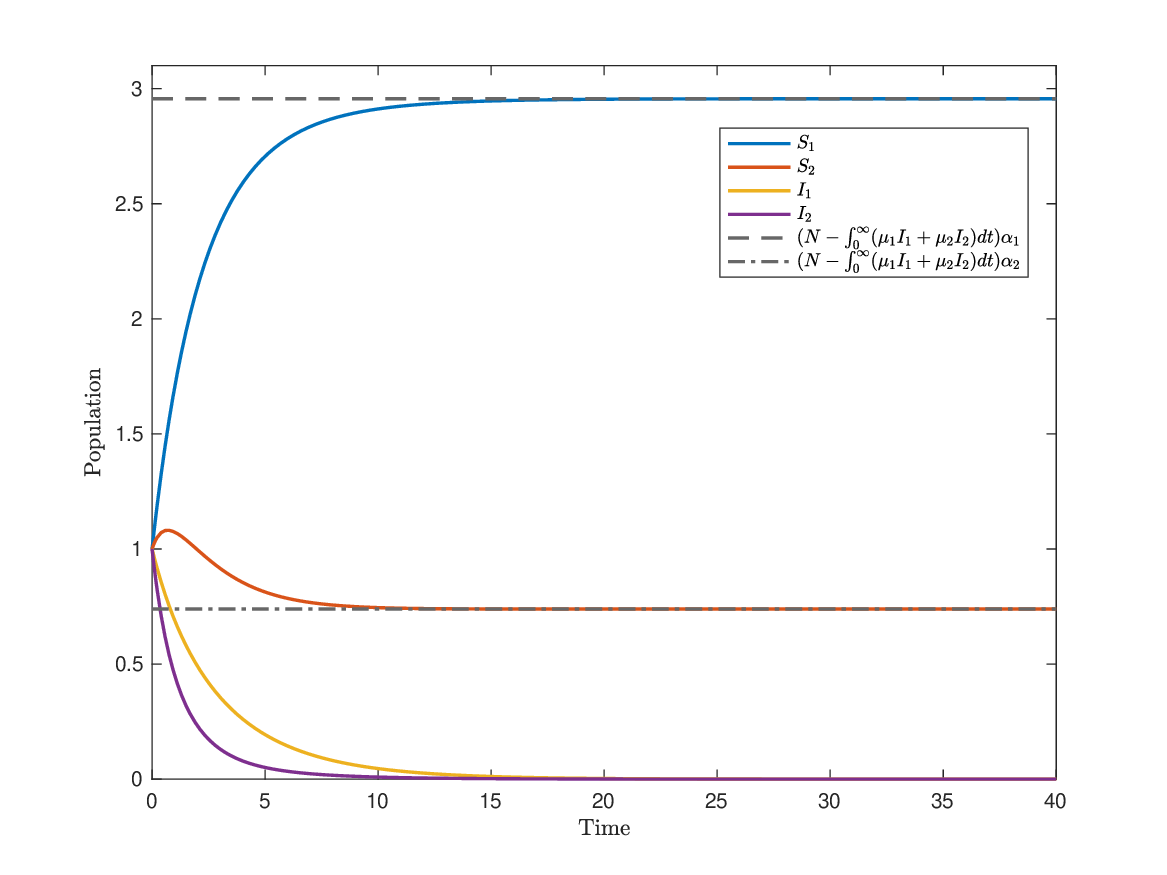}}}
\subfigure[]{\resizebox*{0.300\linewidth}{!}{\includegraphics{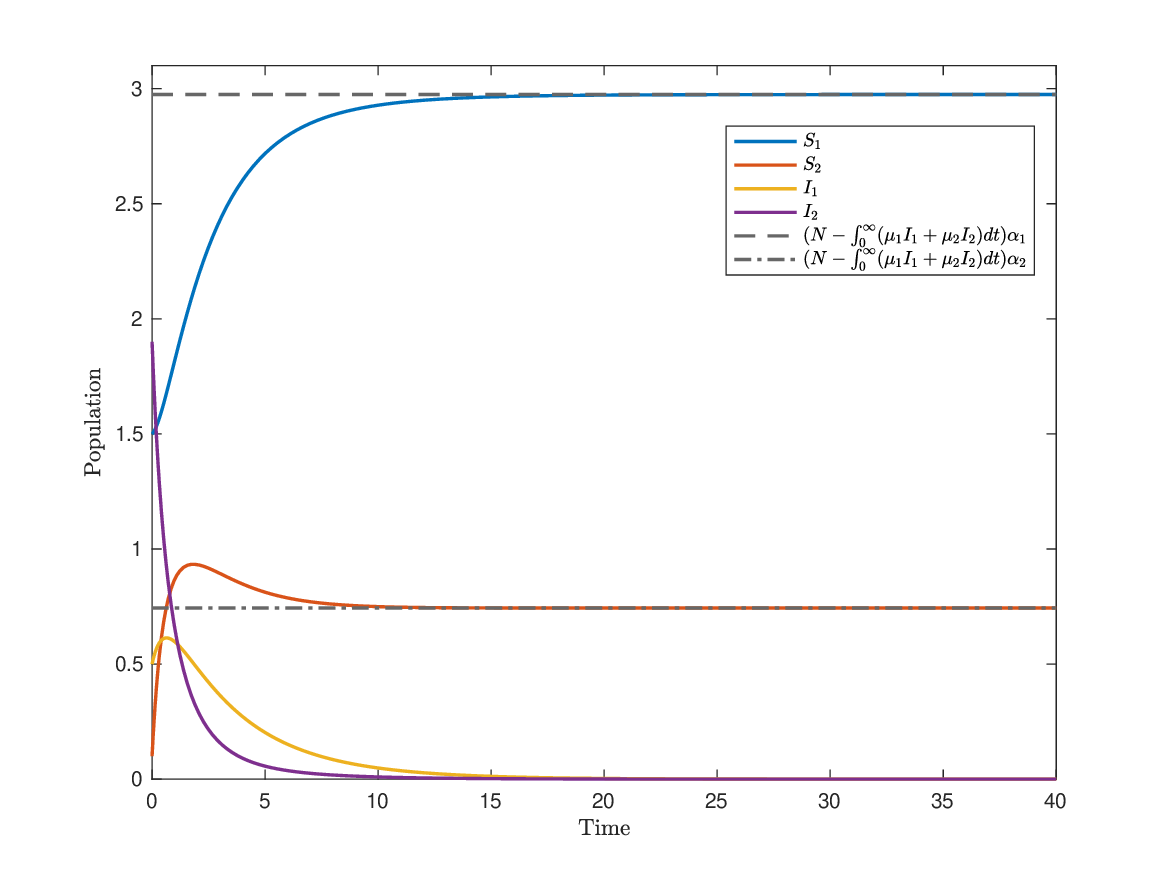}}}
\subfigure[]{\resizebox*{0.300\linewidth}{!}{\includegraphics{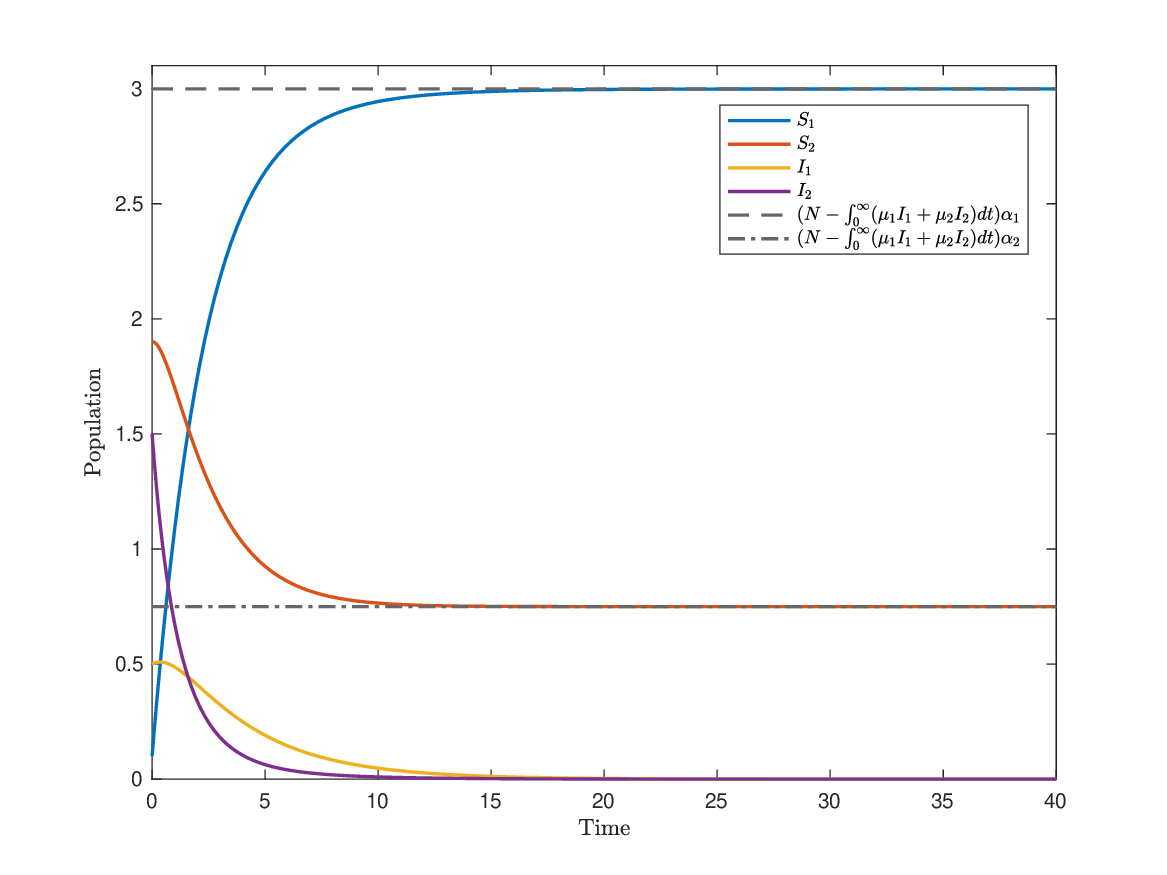}}}
\caption{Numerical simulations illustrating global dynamics of \eqref{model} when $\bm\mu=(0.1, 0)^T$}.
\label{Ex2}
\end{center}
\end{figure}

\medskip

\noindent{\bf Experiment 3.}  
Let $\bm\mu=(0,0.1)^T$. We take $(\bm S^{0}, \bm I^{0})=((1,1)^T,(1,1)^T)$. Numerically, we observe that $(\bm S(t), \bm I(t))\to ((3.0854,0.7713)^T,\bm 0)\approx \big(\big(N-\int_0^{\infty}\sum_{j\in\Omega}\mu_j I_j(t)dt\big)\bm\alpha,{\bm 0}\big)$ as $t$ becomes large (see Figure \ref{Ex3}(a)). We then take different initials, we observe the same phenomenon (see Figure \ref{Ex3}(b) for $(\bm S^{0}, \bm I^{0})=((1.5,0.1)^T,(0.5,1.9)^T)$ and Figure \ref{Ex3}(c) for $(\bm S^{0}, \bm I^{0})=((0.1,1.9)^T,(0.5,1.5)^T)$). 

\begin{figure}[!ht]
\begin{center}
\subfigure[]{
\resizebox*{0.300\linewidth}{!}{\includegraphics{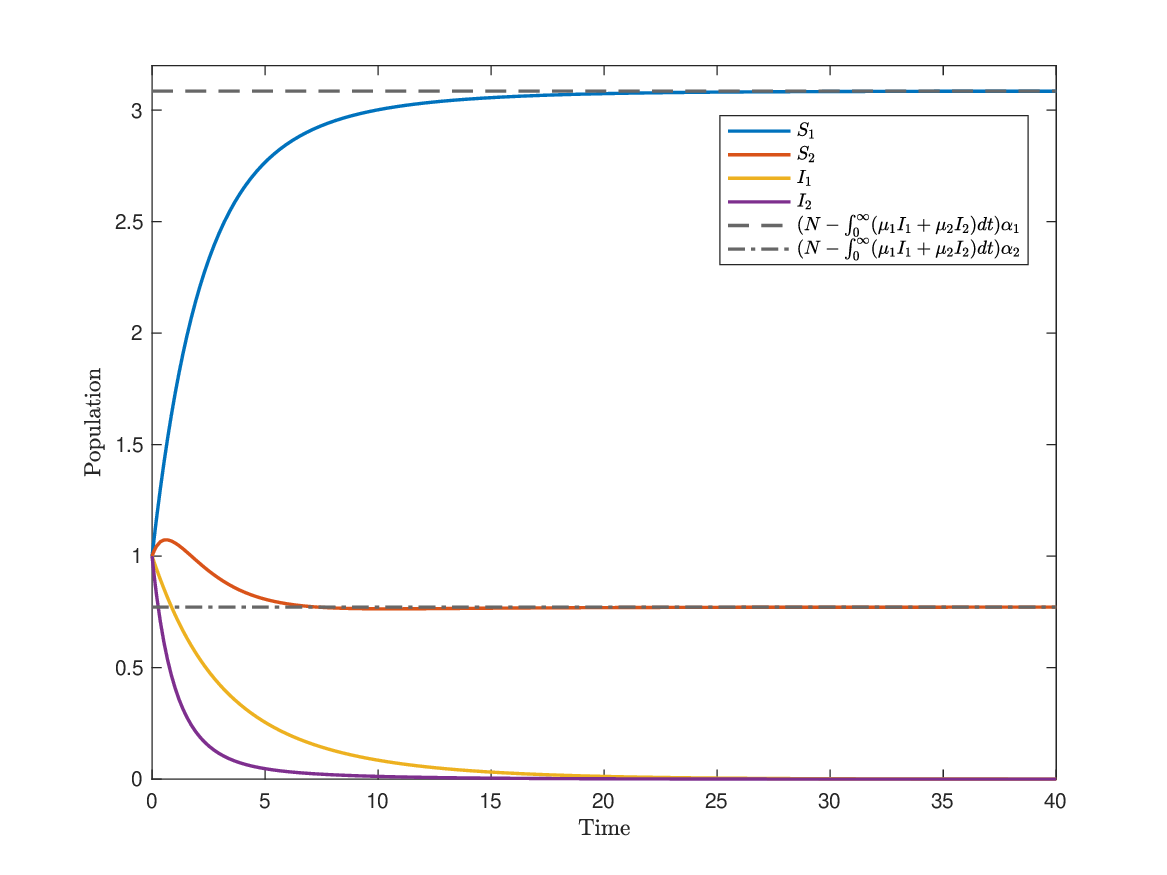}}}
\subfigure[]{\resizebox*{0.300\linewidth}{!}{\includegraphics{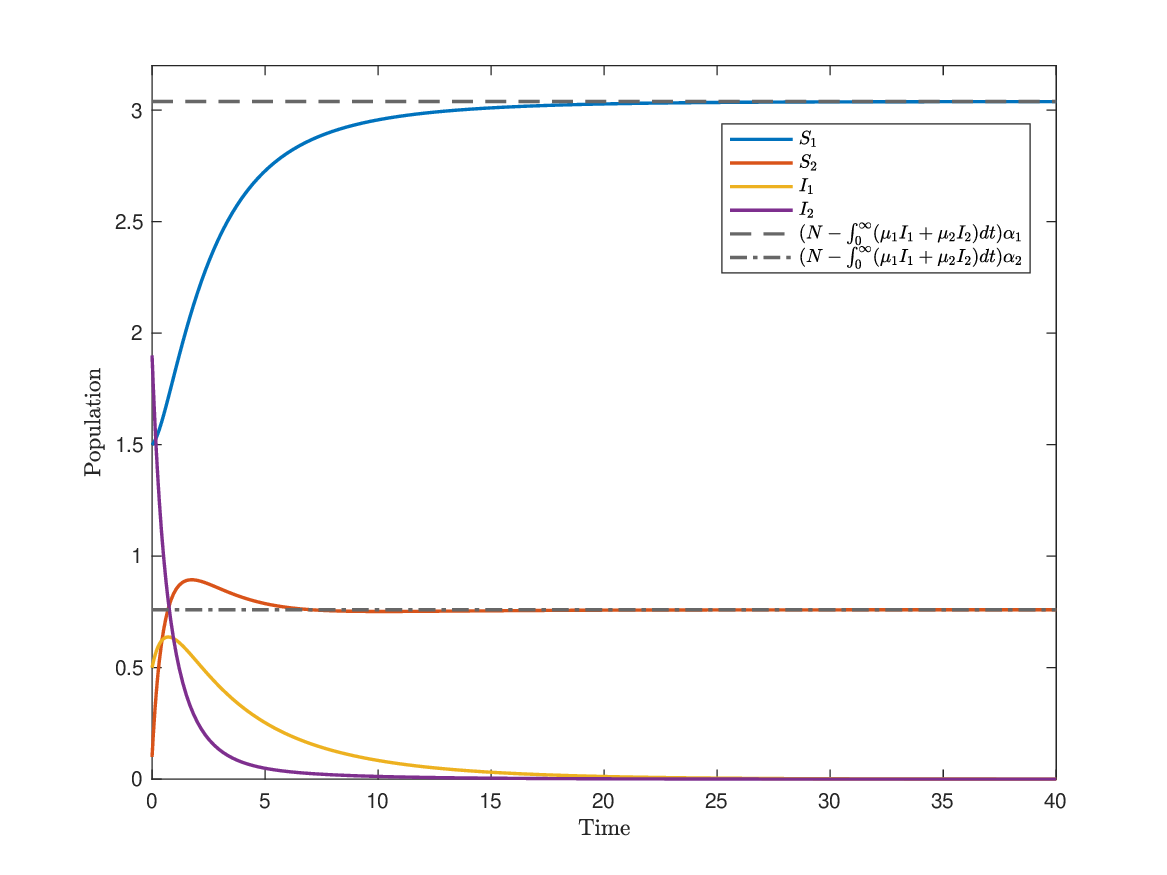}}}
\subfigure[]{\resizebox*{0.300\linewidth}{!}{\includegraphics{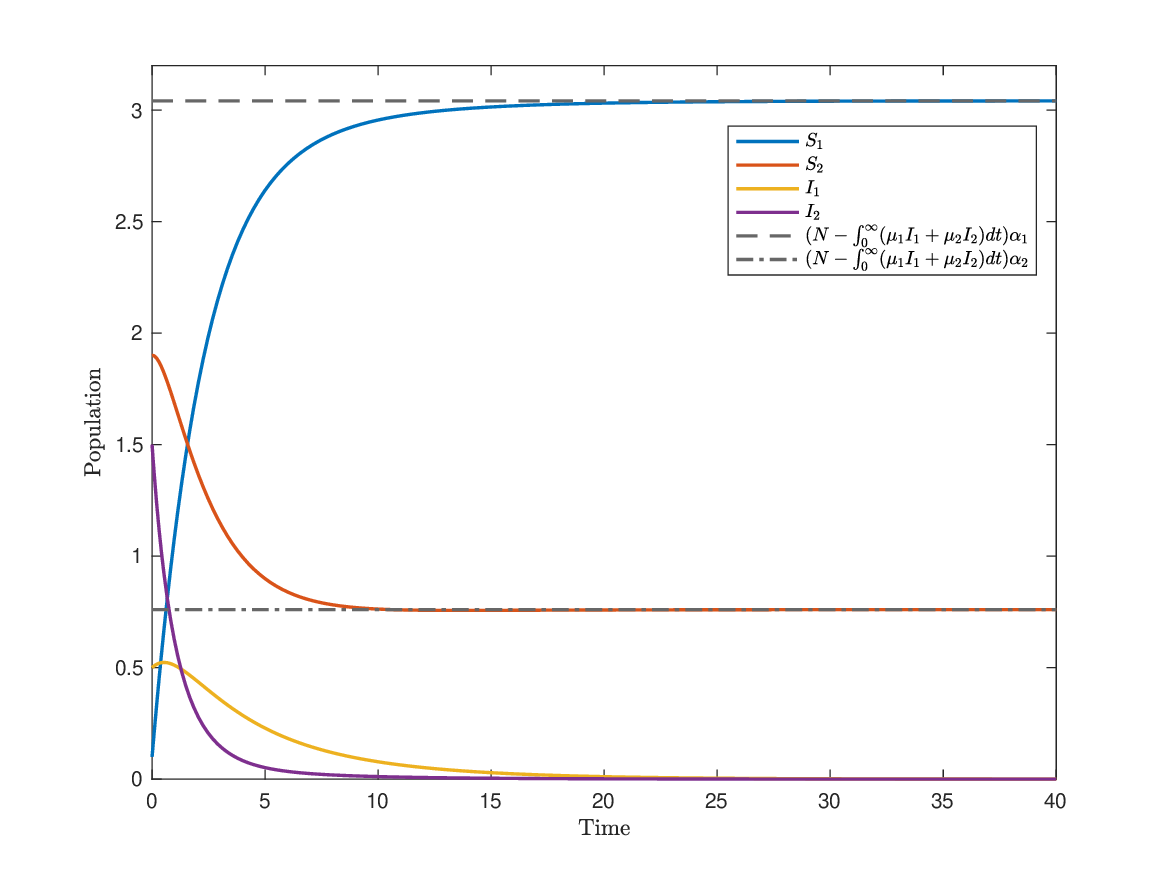}}}
\caption{Numerical simulations illustrating global dynamics of \eqref{model} when $\bm\mu=(0, 0.1)^T$}.
\label{Ex3}
\end{center}
\end{figure}

\subsubsection{Case of $\bm \mu=\bm 0$}

In this subsection, we simulate the global dynamics of \eqref{model} when $\bm\mu=0$. 
We vary the values of $\bm\beta$, $\bm\gamma$, $\bm\zeta$ and $(\bm S^{0}, \bm I^{0})$ to see how these parameters affect the global dynamics of \eqref{model}.

\medskip

\noindent{\bf Experiment 4.} 
Let $\bm\beta=(1, 1)^{T}$, $\bm\gamma=(1, 1)^T$, $\bm\zeta=(0.5, 0.5)^T$, and $d_I=1$. Then we have $\tilde{\mathcal{R}}_0=0.8889<1$. Take $(\bm S^{0}, \bm I^{0})=((1,1)^T,(1,1)^T)$ and $d_S=1$. As time becomes large, we observe that $(\bm S(t),\bm I(t))$ goes to $ (N\bm\alpha,\bm 0)=((3.2,0.8)^{T}, \bm 0$) (see Figure \ref{Ex4}(a)). Taking different initial data, we observe the same phenomenon (see Figure \ref{Ex4}(b) for $(\bm S^{0}, \bm I^{0})=((1.5,0.1)^T,(0.5,1.9)^T)$ and Figure \ref{Ex4}(c) for $(\bm S^{0}, \bm I^{0})=((0.1,1.9)^T,(0.5,1.5)^T)$). This simulation indicates that $(N\bm\alpha,\bm 0)$ is global asymptotically stable, which is consistent with theorem \ref{TH2}. 
Next, we vary the dispersal rate $d_S$ of the susceptible population: First, let $d_{S}=2$, $(\bm S^{0}, \bm I^{0})=((1,1)^T,(1,1)^T)$ and keep the other parameters the same as before. We observe that $(\bm S(t),\bm I(t))$ still goes to $(N\bm\alpha,\bm 0)=((3.2,0.8)^{T}, \bm 0$) (see Figure \ref{Ex4-2}(a)). Next, we decrease the values of $d_{S}$, we observe the same phenomenon (see Figure \ref{Ex4-2}(b) for $d_{S}=0.5$ and Figure \ref{Ex4-2}(c) for $d_{S}=10^{-5}$). For each $d_{S}$, if we choose different initial data, we also observe the convergence of $(\bm S(t),\bm I(t))$ to 
$(N\bm\alpha,\bm 0)=((3.2,0.8)^{T}, \bm 0)$. These simulations are consistent with Theorem \ref{TH2}. Moreover, the simulations indicate that when $d_{S}$ becomes smaller, it takes a longer time for the solution to stabilize at the DFE. This strongly highlights the effect of the susceptible population on the dynamics of the disease.

\begin{figure}[!ht]
\begin{center}
\subfigure[]{
\resizebox*{0.300\linewidth}{!}{\includegraphics{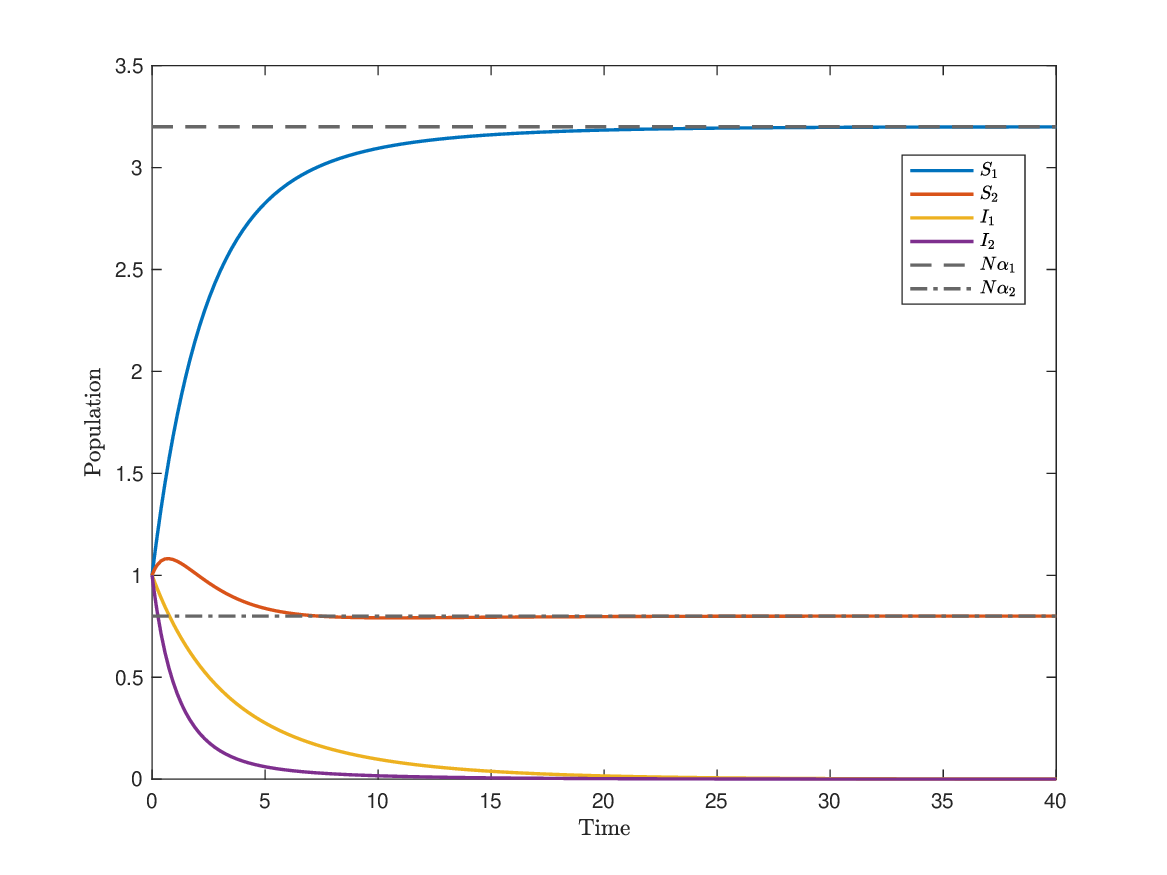}}}
\subfigure[]{\resizebox*{0.300\linewidth}{!}{\includegraphics{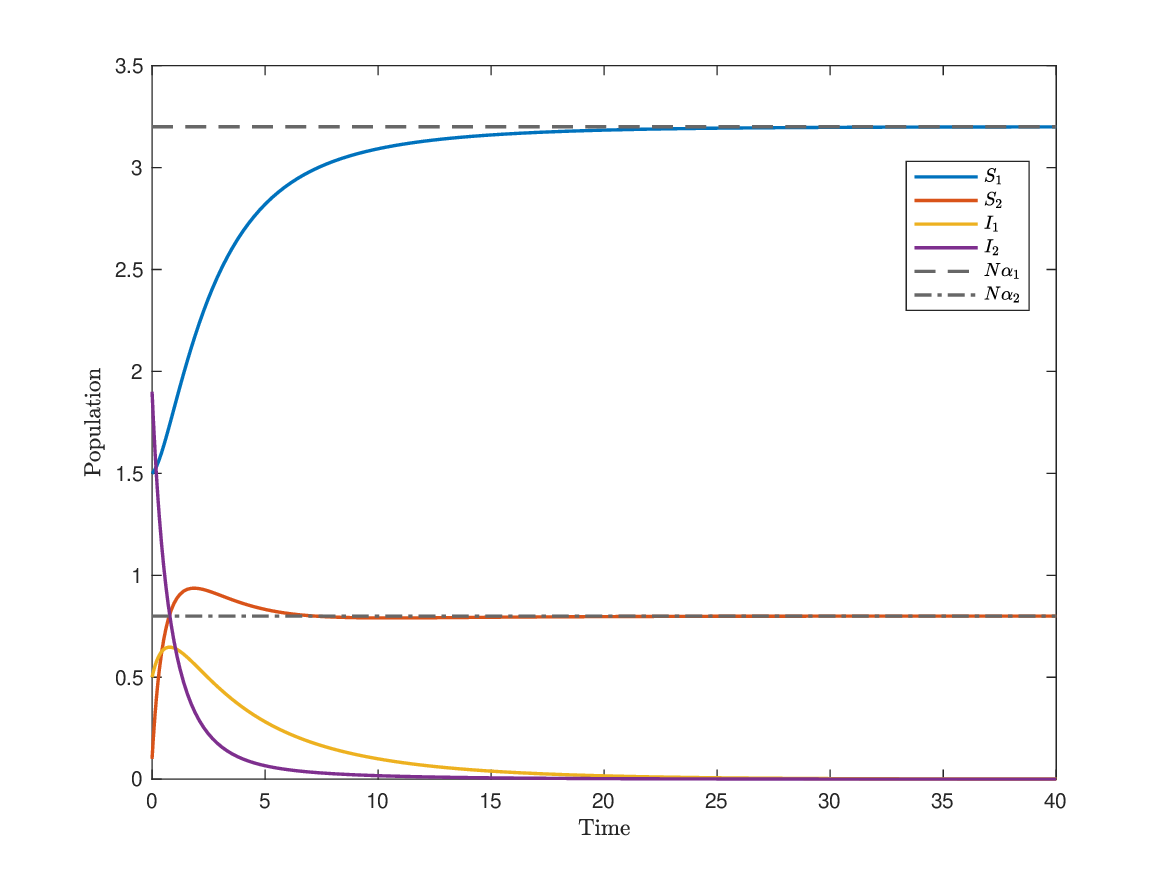}}}
\subfigure[]{\resizebox*{0.300\linewidth}{!}{\includegraphics{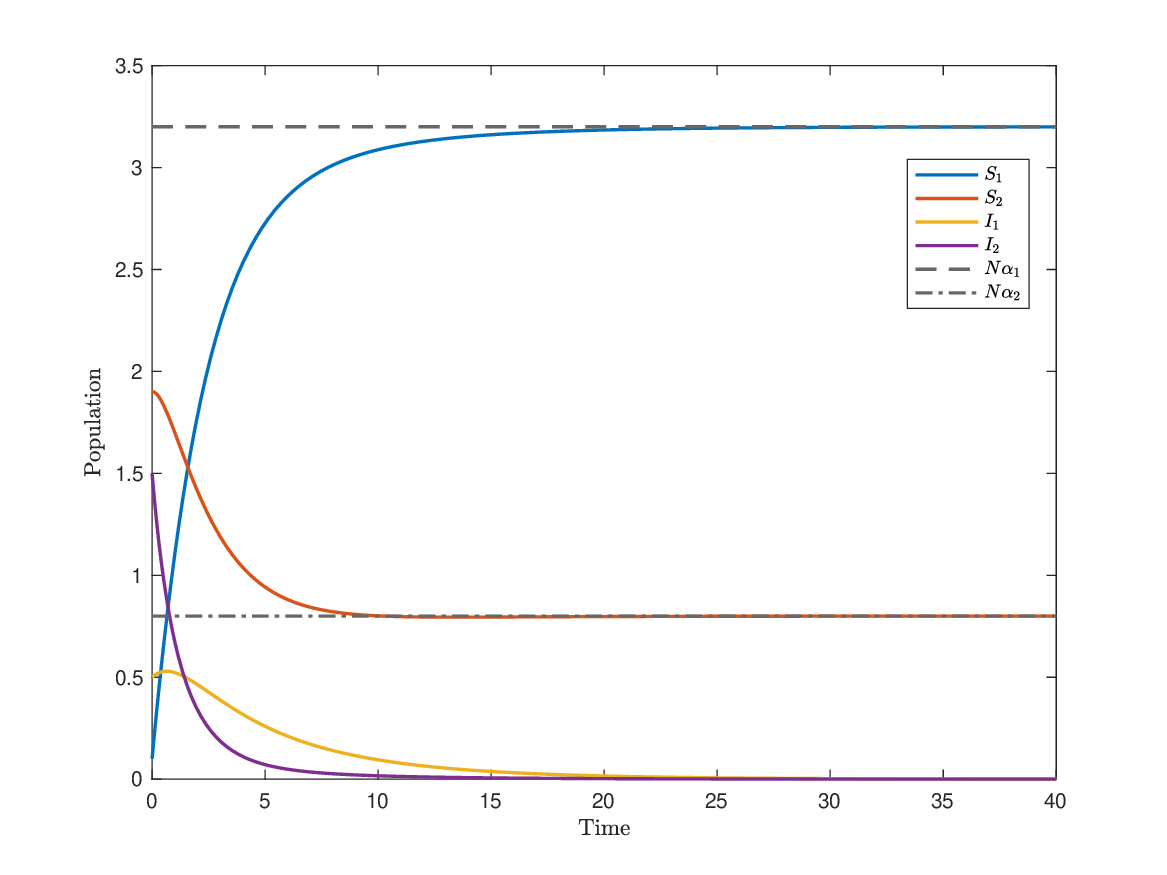}}}
\caption{Numerical simulations illustrating global dynamics of \eqref{model} when $\bm\mu=(0, 0)^T$ and the hypotheses of Theorem \ref{TH2} are satisfied for the same  population dispersal rates with three different initial data.}
\label{Ex4}
\end{center}
\end{figure}

\begin{figure}[!ht]
\begin{center}
\subfigure[]{
\resizebox*{0.300\linewidth}{!}{\includegraphics{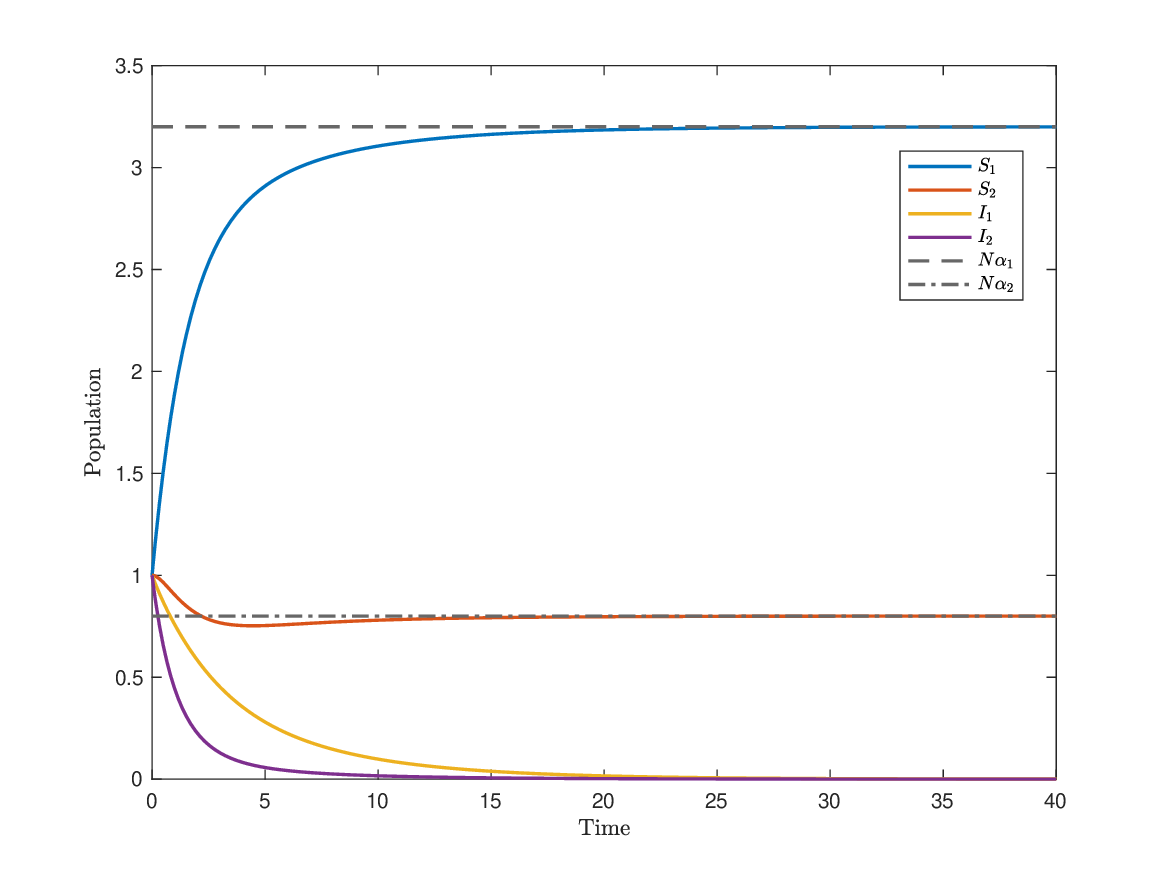}}}
\subfigure[]{\resizebox*{0.300\linewidth}{!}{\includegraphics{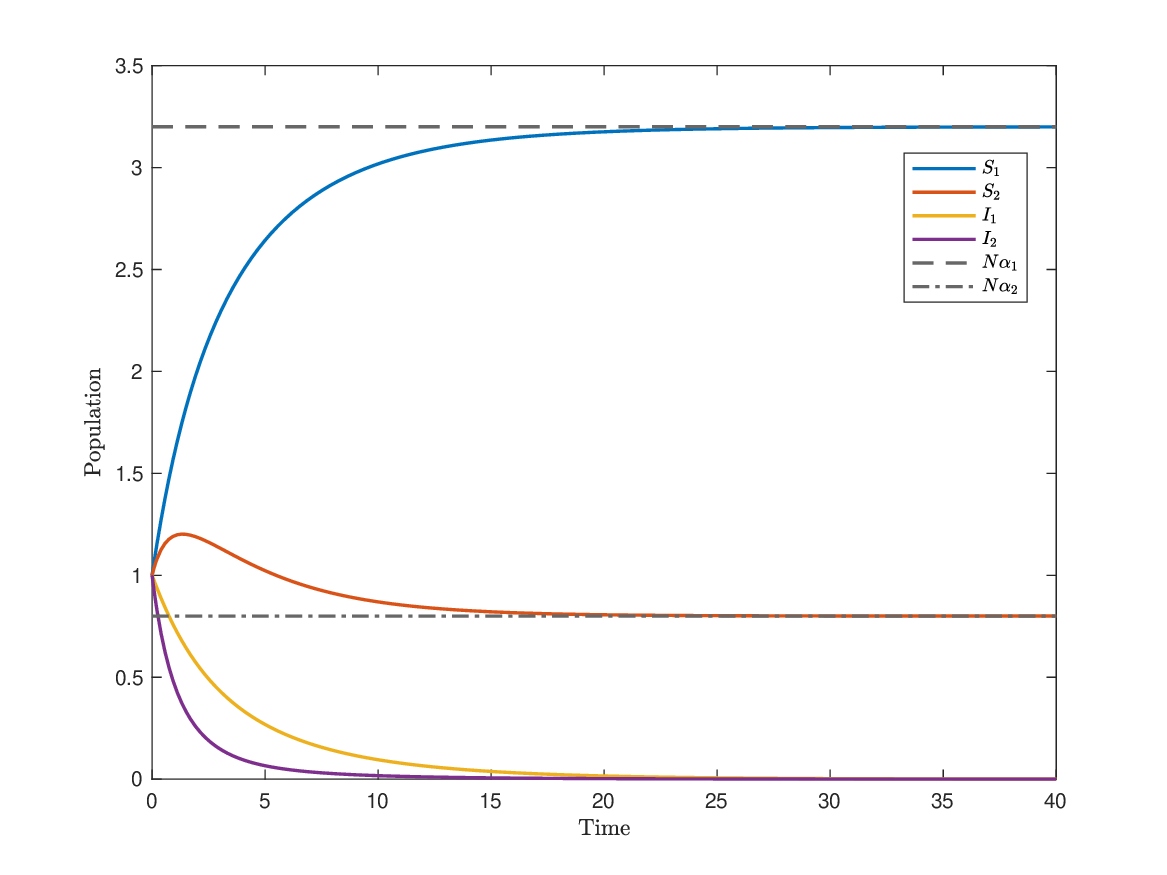}}}
\subfigure[]{\resizebox*{0.300\linewidth}{!}{\includegraphics{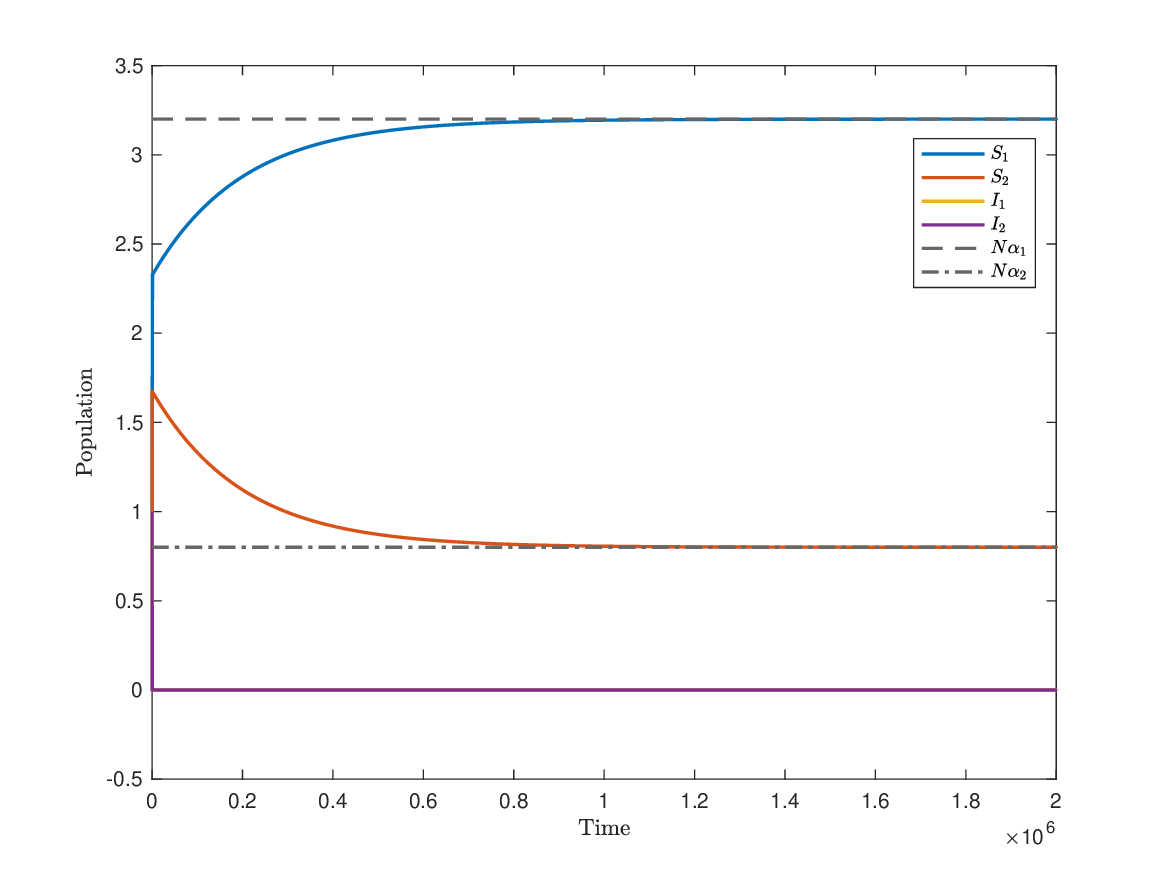}}}
\caption{Numerical simulations illustrating global dynamics of \eqref{model} when $\bm\mu=(0, 0)^T$ under the hypotheses of Theorem \ref{TH2} with the same initial data but different population dispersal rates. (a): $d_{S}=2$ and $d_I=1$, (b): $d_{S}=0.5$ and $d_I=1$, (c): $d_{S}=10^{-5}$ and $d_I=1$.}
\label{Ex4-2}
\end{center}
\end{figure}

\medskip

\noindent{\bf Experiment 5.} 
Let $\bm\beta=\bm\gamma=(1.5, 0.5)^{T}$, $\bm\zeta=(0.8,0.2)^{T}$.  Hence ${\bm r}=\bm 1\in{\rm span}(\bm 1)$ and $\bm\zeta=\bm\alpha\in{\rm span}(\bm\alpha)$, so that the hypotheses of Theorem \ref{TH3} hold. With these choices, we have that $\tau=1$ and $m=1$ in Remark \ref{RK2}, and hence $\mathcal{R}_0=N/(\tau(m+N))=\frac{4}{5}<1$. Let $d_S=0.5$ and $d_I=2$. 
Then, we subsequently run our numerical simulations  for initial data  $(\bm S^{0}, \bm I^{0})=((1,1)^T,(1,1)^T)$ (see Figure \ref{Ex5}(a)), $(\bm S^{0}, \bm I^{0})=((1.5,0.1)^T,(0.5,1.9)^T)$ (see Figure \ref{Ex5}(b)), and $(\bm S^{0}, \bm I^{0})=((0.1,1.9)^T,(0.5,1.5)^T)$) (see Figure \ref{Ex5}(c)). As time becomes larger and larger, we observe numerically that $(\bm S(t),\bm I(t))$ goes to $(N\bm\alpha,\bm 0)=((3.2,0.8)^{T}, \bm 0$), which agrees with the conclusions of Theorem \ref{TH3} (i) and Remark \ref{RK2}-{\rm (i)}.

\begin{figure}[!ht]
\begin{center}
\subfigure[]{
\resizebox*{0.300\linewidth}{!}{\includegraphics{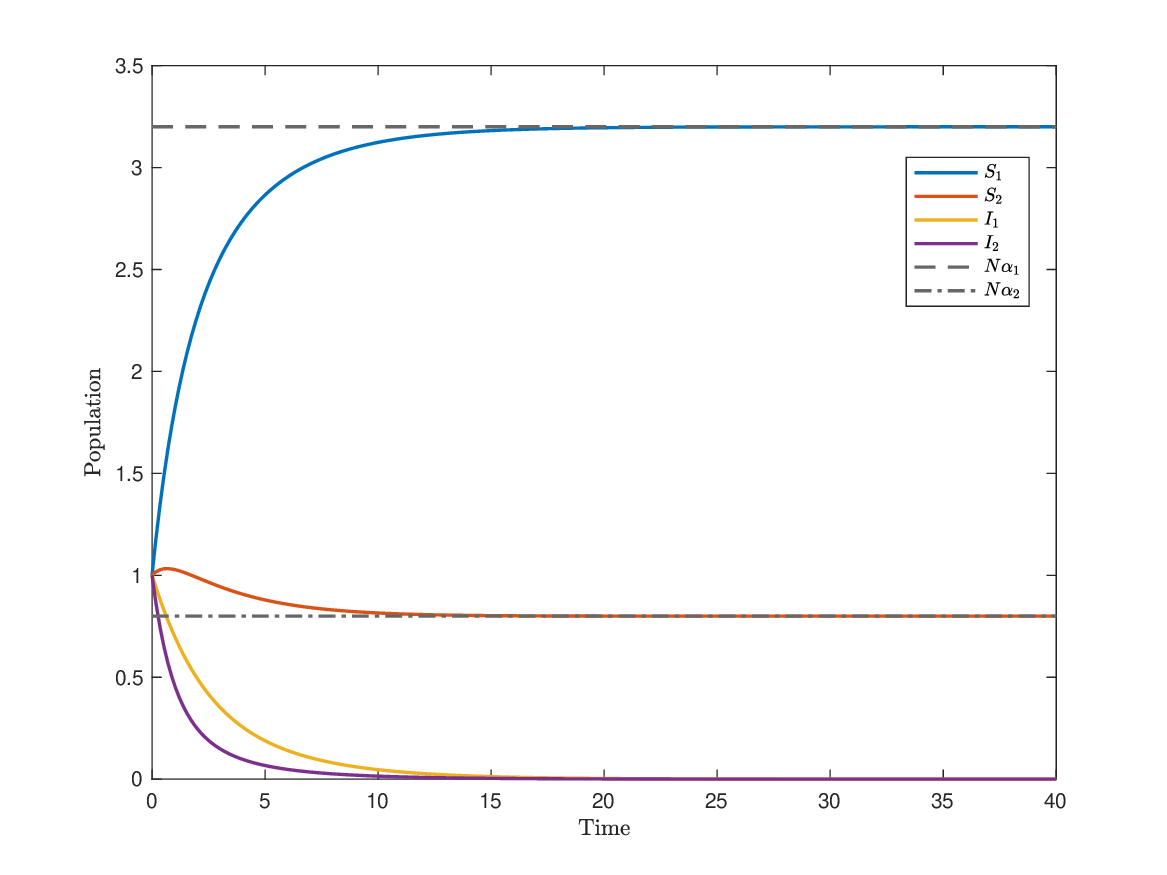}}}
\subfigure[]{\resizebox*{0.300\linewidth}{!}{\includegraphics{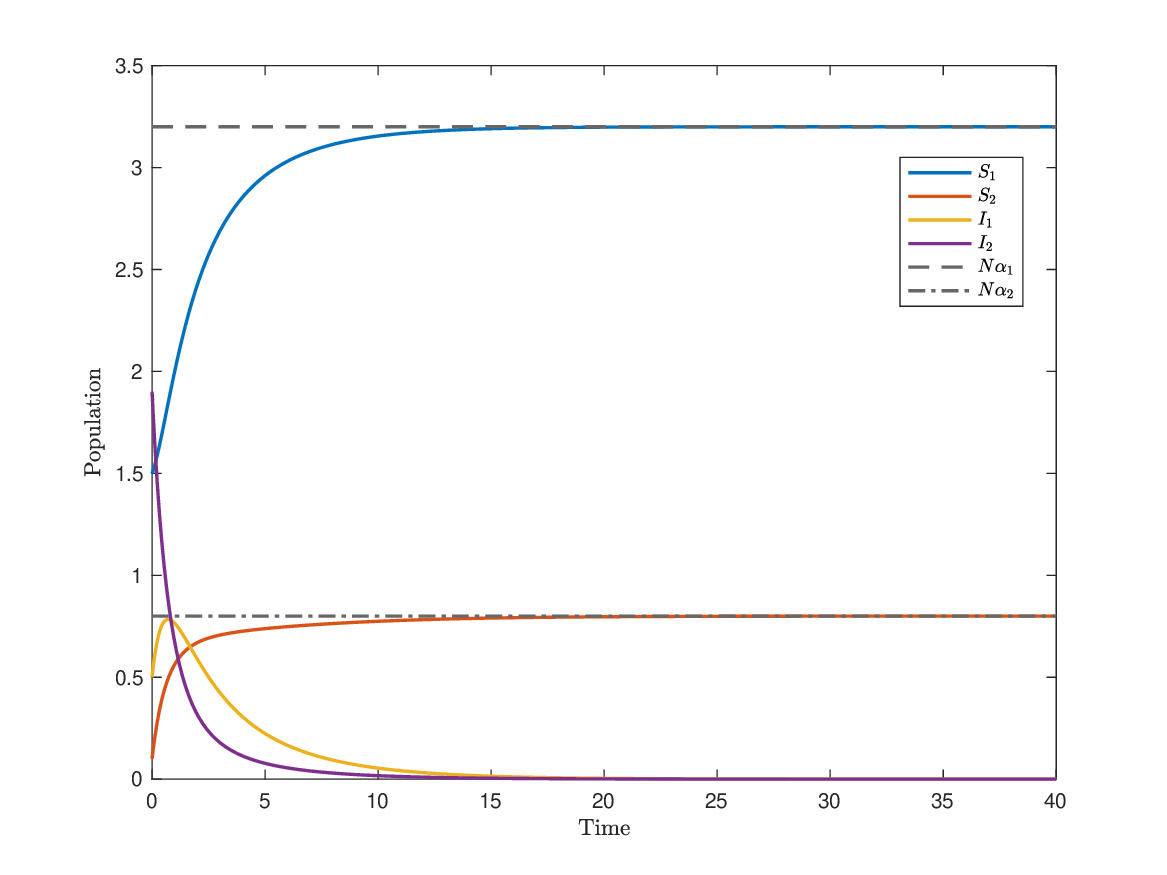}}}
\subfigure[]{\resizebox*{0.300\linewidth}{!}{\includegraphics{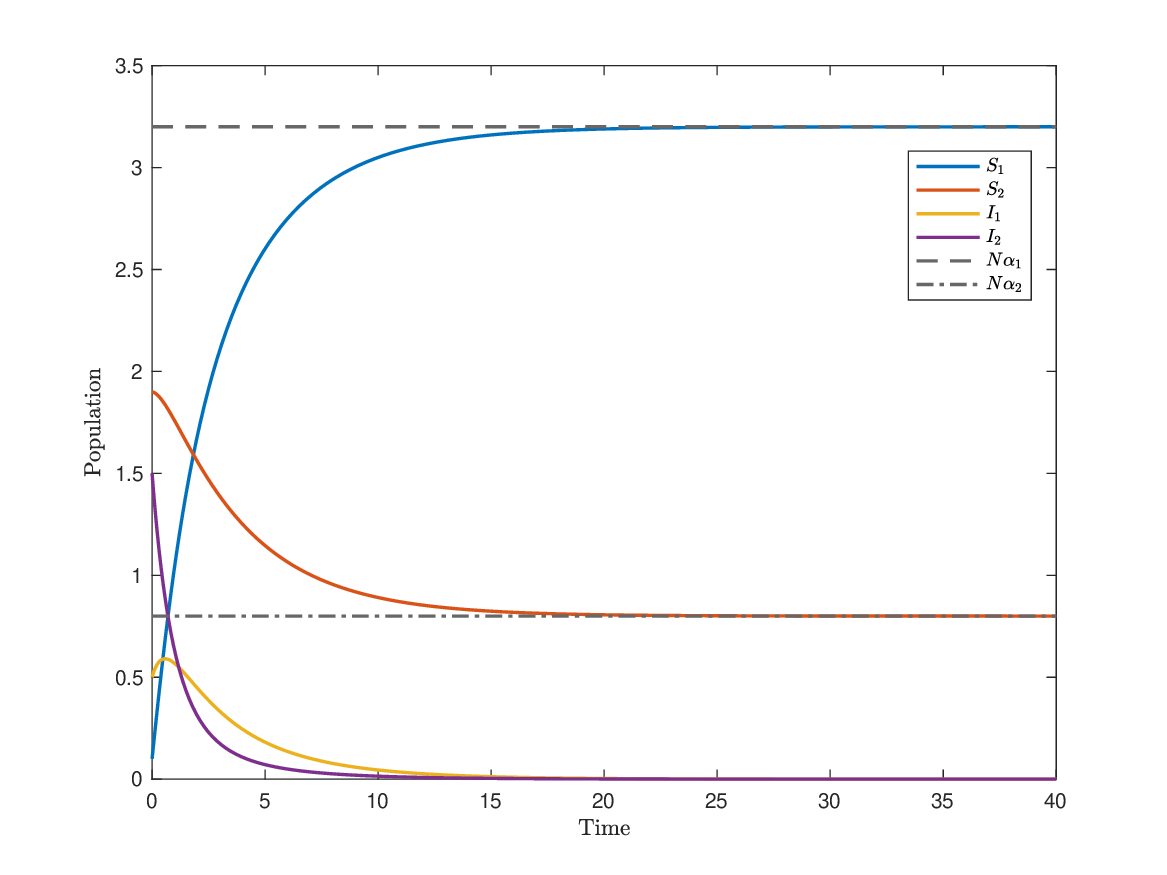}}}
\caption{Numerical simulations illustrating global dynamics of \eqref{model} when $\bm\mu=(0, 0)^T$ and ${\mathcal{R}}_0<1$ when the hypotheses of Theorem \ref{TH3}-{\rm (i)} and Remark \ref{RK2}-{\rm (i)} are satisfied.}.
\label{Ex5}
\end{center}
\end{figure}

\medskip

\noindent{\bf Experiment 6.} 
Let $\bm\gamma=(1.5, 0.5)^{T}$, $\bm\beta=2\bm\gamma=(3, 1)^{T}$, $\bm\zeta=5\bm\alpha=(4,1)^{T}$, so that the hypotheses of Theorem \ref{TH3} hold. In this case, we have $\tau=\frac{1}{2}<1$ and $m=5$ in Remark \ref{RK2}, and hence $\mathcal{R}_0=N/(\tau(m+N))=\frac{8}{9}<1$. Let $d_{S}=0.5$ and $d_{I}=2$. Then, we subsequently run our numerical simulations  for initial data  $(\bm S^{0}, \bm I^{0})=((1,1)^T,(1,1)^T)$ (see Figure \ref{Ex6}(a)), $(\bm S^{0}, \bm I^{0})=((1.5,0.1)^T,(0.5,1.9)^T)$ (see Figure \ref{Ex6}(b)), and $(\bm S^{0}, \bm I^{0})=((0.1,1.9)^T,(0.5,1.5)^T)$) (see Figure \ref{Ex6}(c)). For each initial condition, we observe that the disease is eventually  eradicated and the susceptible population stabilizes at 
$(3.2,0.8)^{T}=N\bm\alpha$ eventually, which is consistent with  Theorem \ref{TH3}-{\rm (i)} and Remark \ref{RK2}-{\rm (ii)}.


\begin{figure}[!ht]
\begin{center}
\subfigure[]{
\resizebox*{0.300\linewidth}{!}{\includegraphics{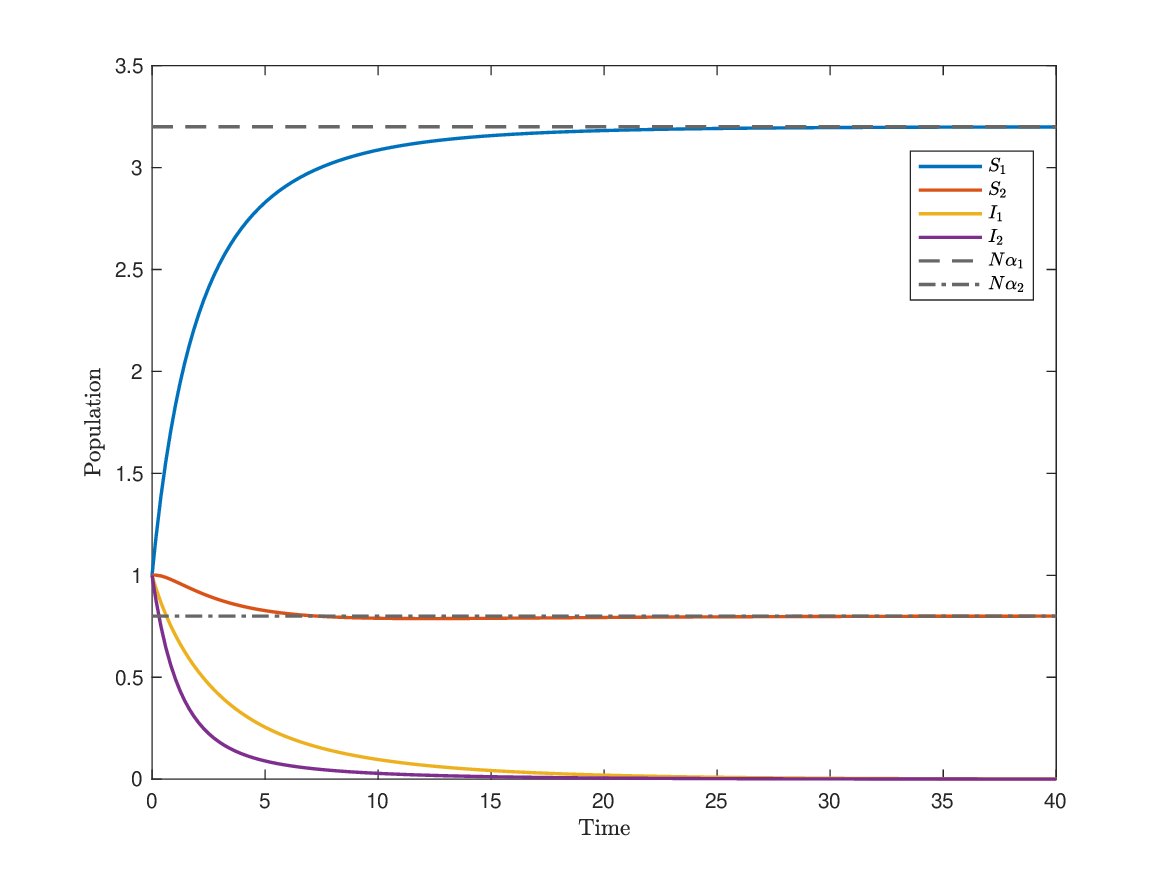}}}
\subfigure[]{\resizebox*{0.300\linewidth}{!}{\includegraphics{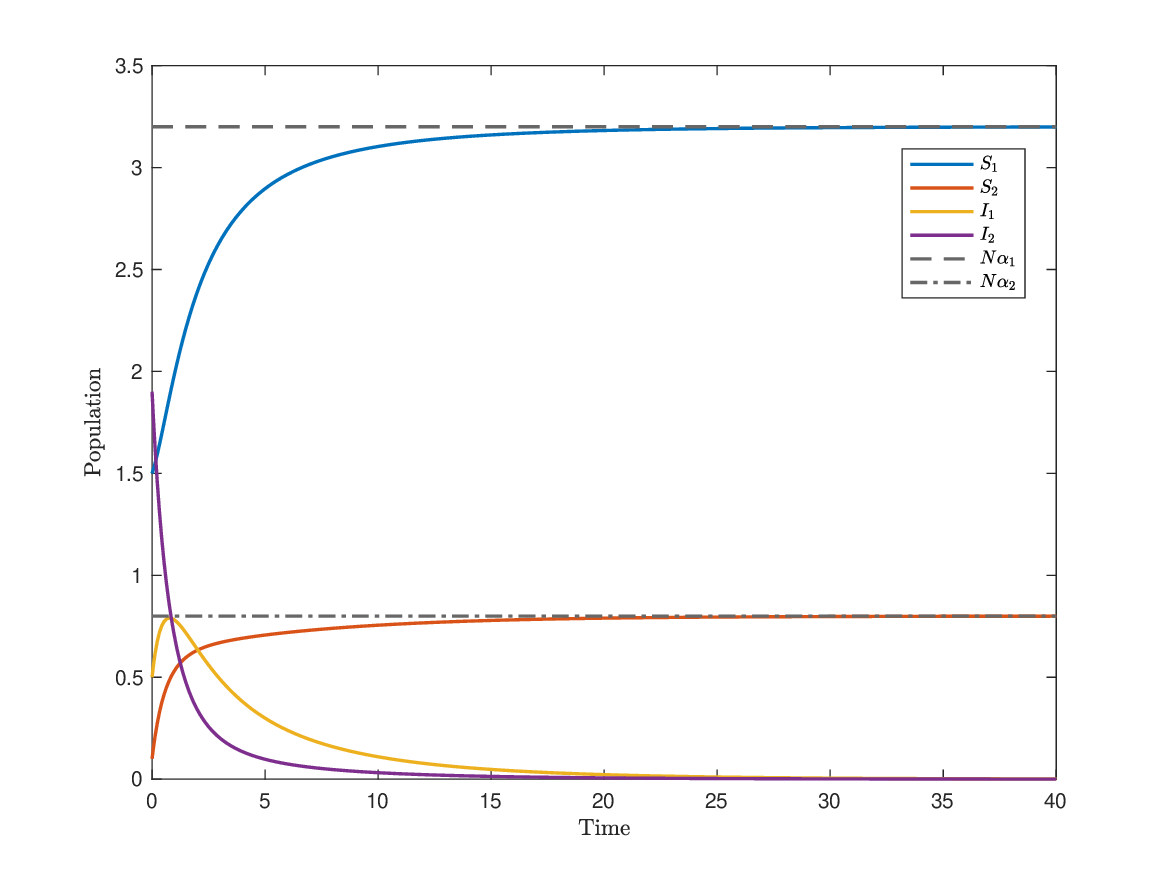}}}
\subfigure[]{\resizebox*{0.300\linewidth}{!}{\includegraphics{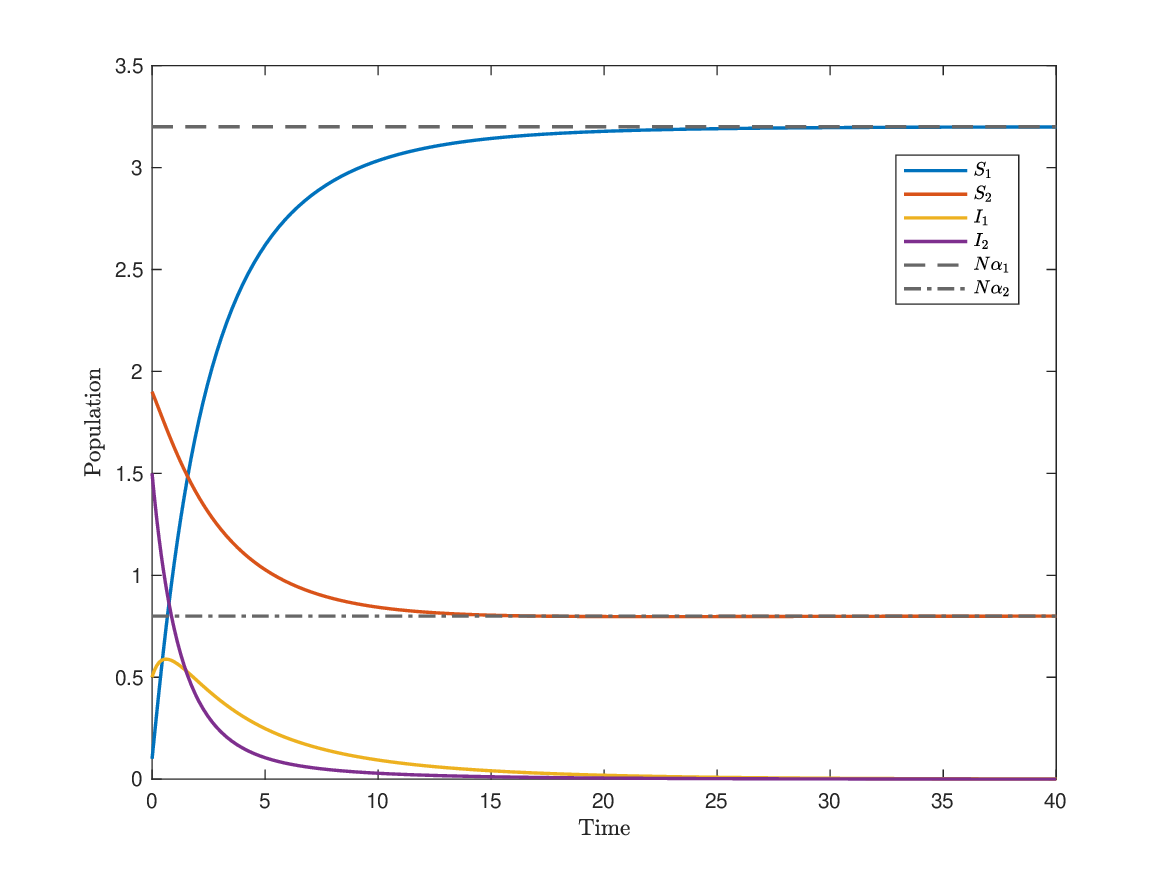}}}
\caption{Numerical simulations illustrating global dynamics of \eqref{model} when $\bm\mu=(0, 0)^T$ and ${\mathcal{R}}_0<1$ when the hypotheses of Theorem \ref{TH3}-{\rm (i)} and Remark \ref{RK2}-{\rm (ii)} are satisfied.}.
\label{Ex6}
\end{center}
\end{figure}

\medskip

\noindent{\bf Experiment 7.} 
Let $\bm\gamma=(1.5, 0.5)^{T}$, 
$\bm\beta=(3,1)^{T}$,
$\bm\zeta=(0.8,0.2)^{T}$. Hence ${\bm r}=\frac{1}{2}\bm 1\in{\rm span}(\bm 1)$ and $\bm\zeta=\bm\alpha\in{\rm span}(\bm\alpha)$, so that the hypotheses of Theorem \ref{TH3} hold. With these choices, we have that $\tau=\frac{1}{2}$ and $m=1$ in Remark \ref{RK2}, and hence $\mathcal{R}_0=N/(\tau(m+N))=\frac{8}{5}=1.6>1$. Let $d_S=0.5$ and $d_I=2$. Taking $(\bm S^{0}, \bm I^{0})=((1,1)^T,(1,1)^T)$. As time becomes larger and larger, we observe numerically that $(\bm S(t),\bm I(t))$ goes to $((2,0.5)^{T},(1.2,0.3)^{T})=(\tau(N+m)\bm \alpha,((1-\tau)N-\tau m)\bm \alpha)$, which is the unique EE solution of \eqref{model} (see Figure \ref{Ex7}(a)). Taking other initial data, we also observe that $(\bm S(t),\bm I(t))\to ((2,0.5)^{T},(1.2,0.3)^{T})$ (see Figure \ref{Ex7}(b) for $(\bm S^{0}, \bm I^{0})=((1.5,0.1)^T,(0.5,1.9)^T)$ and Figure \ref{Ex7}(c) for $(\bm S^{0}, \bm I^{0})=((0.1,1.9)^T,(0.5,1.5)^T)$), which implies that the unique EE solution is globally stable. This simulation agrees with the conclusions of Theorem \ref{TH3}-{\rm (ii)} and Remark \ref{RK2}-{\rm(iii)}.

\begin{figure}[!ht]
\begin{center}
\subfigure[]{
\resizebox*{0.300\linewidth}{!}{\includegraphics{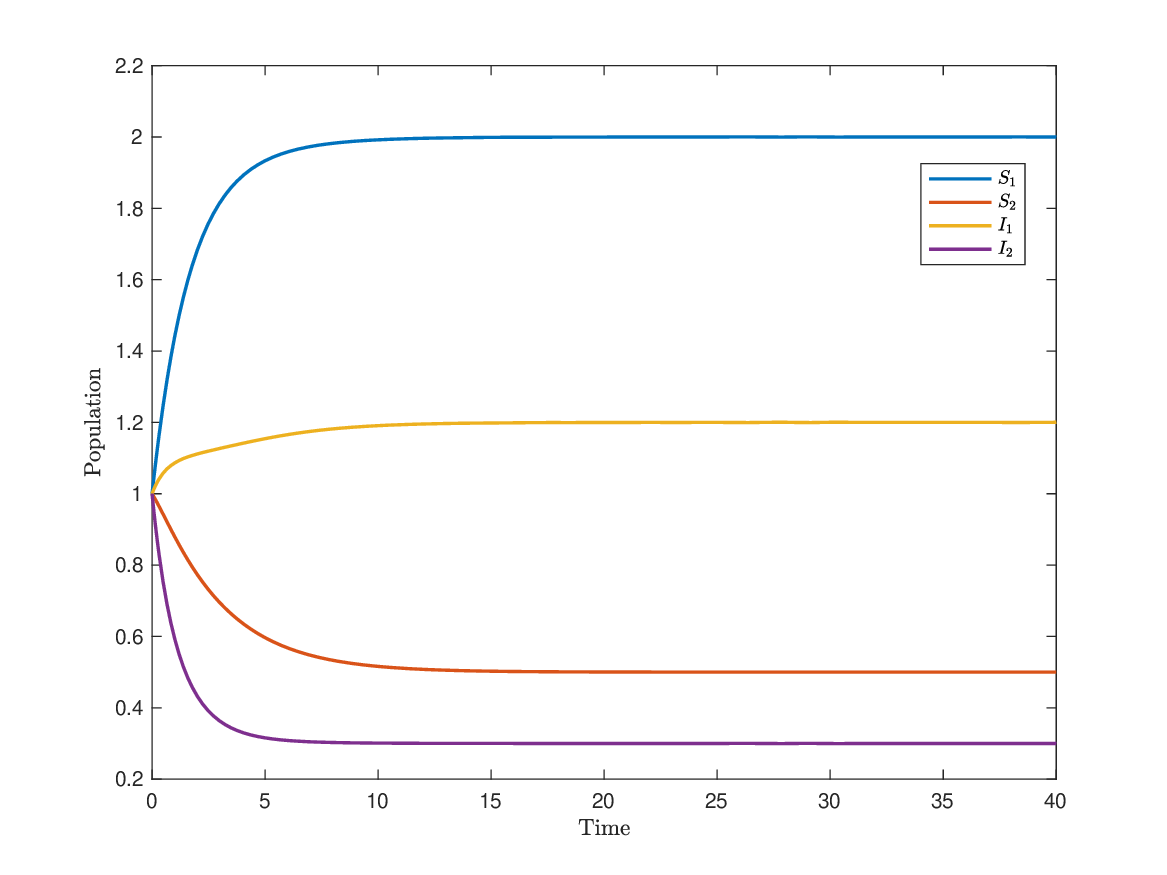}}}
\subfigure[]{\resizebox*{0.300\linewidth}{!}{\includegraphics{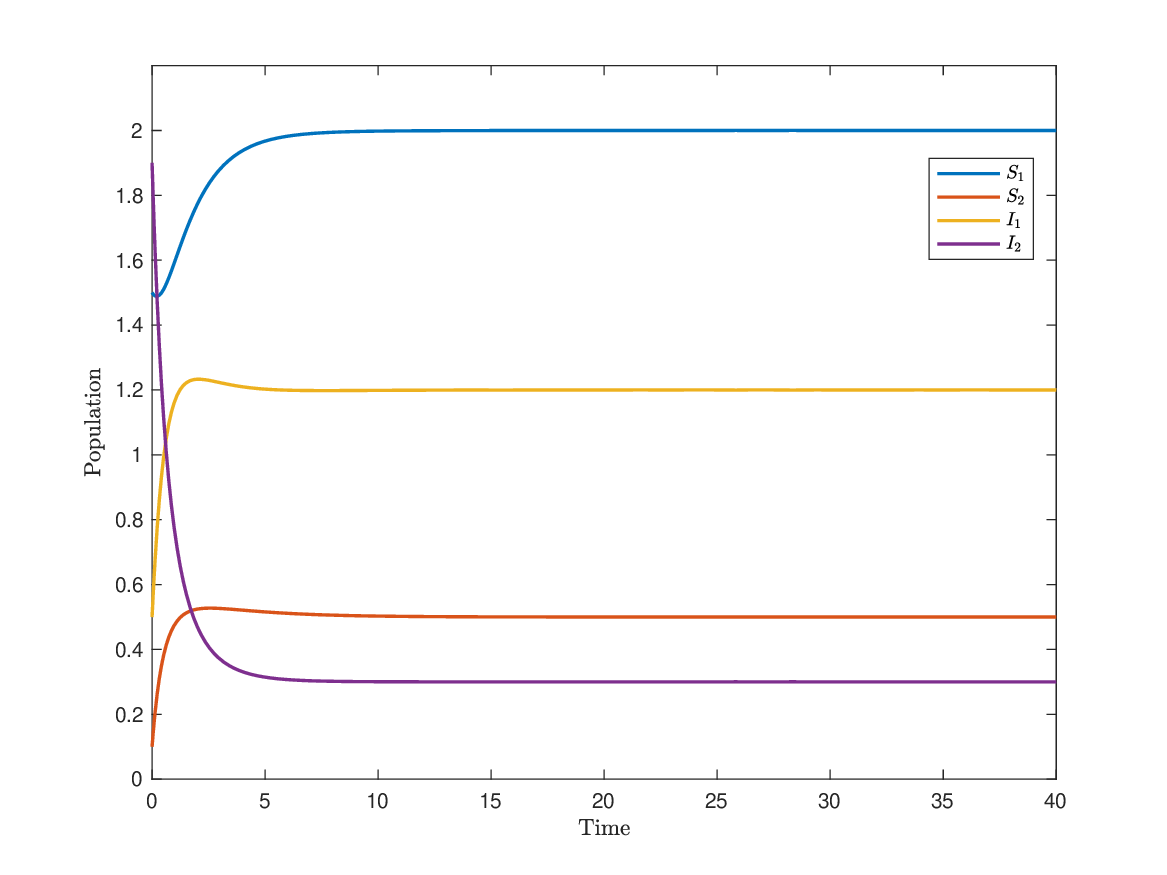}}}
\subfigure[]{\resizebox*{0.300\linewidth}{!}{\includegraphics{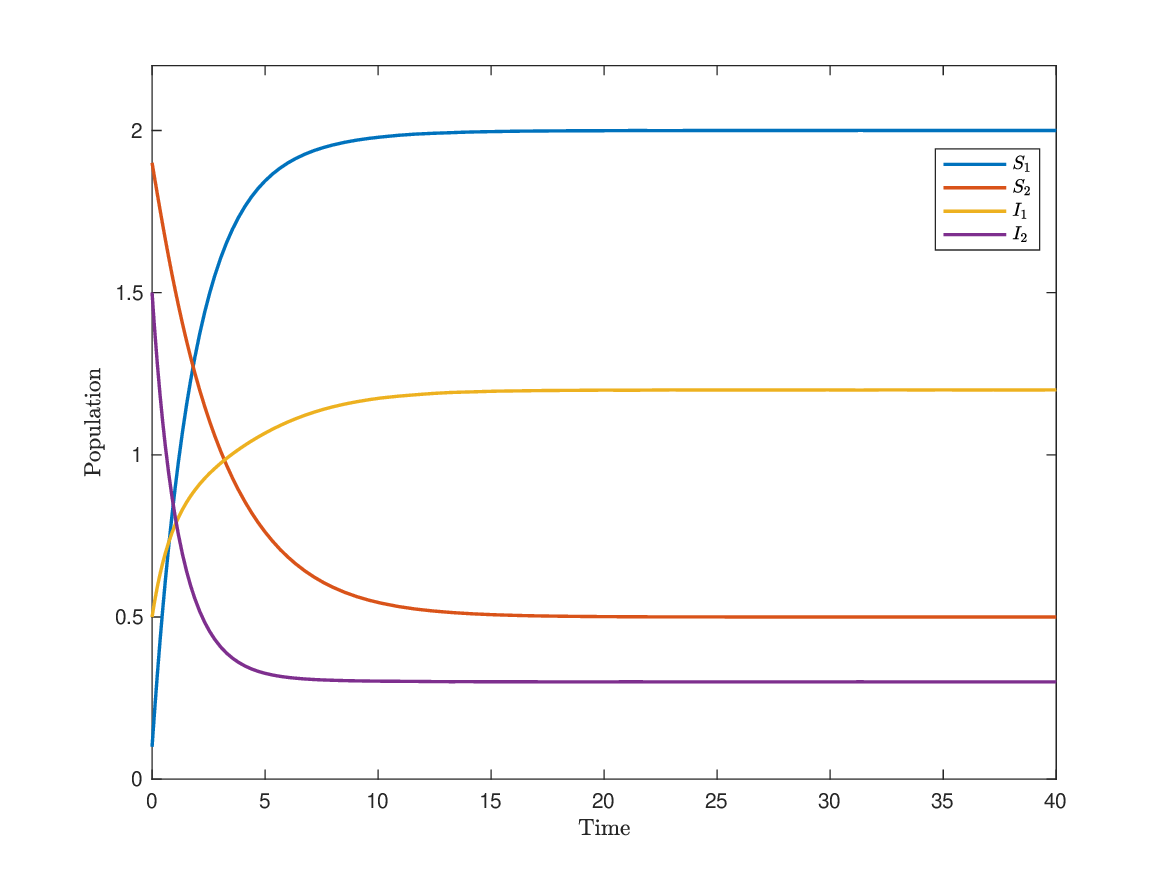}}}
\caption{Numerical simulations illustrating global dynamics of \eqref{model} when $\bm\mu=(0, 0)^T$ and ${\mathcal{R}}_0>1$ when the hypotheses of Theorem \ref{TH3}-{\rm (ii)} and Remark \ref{RK2}-{\rm(iii)} are satisfied.}
\label{Ex7}
\end{center}
\end{figure}

\medskip

\noindent{\bf Experiment 8.}  Let $\bm\gamma=(1.5, 0.5)^{T}$, 
$\bm\beta=(4.5,1)^{T}$, $\bm\zeta=(0.8,0.1)^{T}$ and $d_{S}=d_{I}=1$. With these choices, $\hat{\mathcal{R}}_0=2.9438>1$. We vary $N$ so that $\mathcal{R}_0$ varies. In Table \ref{Ex8}, we choose some values for $N$, we then obtain the corresponding values for $\mathcal{R}_0$. For each $\mathcal{R}_0$, we take the initial data $(\bm S^{0}, \bm I^{0})$ listed in Table \ref{Ex8}. We observe that when $\mathcal{R}_0\leq 1$, $(\bm S, \bm I)\to (N\bm\alpha,\bm0)$ as time becomes large, and when $\mathcal{R}_0>1$, $(\bm S, \bm I)$ goes to an EE solution of \eqref{model} as time becomes large. Moreover, when $\mathcal{R}_0>1$ is fixed and  we change the initial data, we observe the same EE solution, which indicates that the EE solution is unique. This observation is consistent with the conclusion of Theorems \ref{TH4} and  \ref{TH5}-(i). Observe that when $\mathcal{R}_0$ is close to $\hat{\mathcal{R}}_0$, both $S_1+S_2$ and $I_1+I_2$ become very large as time evolves (see the last column of Table \ref{Ex8}), which is consistent with the limiting profiles obtained in equation \eqref{Th6-eq1} of 
Theorem \ref{TH6}.
{Next, we keep $\bm\gamma$, $\bm\zeta$ and $d_{S}$ and then change $d_I$ and $\bm\beta$ to $d_{I}=2>d_{S}$ and $\bm\beta=(4.5,2)^{T}$. For this case, we have $N(\bm 1-2\bm r)\circ\bm\alpha=(\frac{16}{15},\frac{2}{5})^{T}>\bm r\circ\bm \zeta=(\frac{4}{15},\frac{1}{40})^{T}$ and $\mathcal{R}_0=2.5610>1$. Taking $(\bm S^{0}, \bm I^{0})=(1,1,1,1)^{T}$, we observe that as time becomes large, the solution goes to an EE solution $(1.329,0.245,1.906,0.520)^{T}$(see Figure \ref{Ex8-2}(a)). Taking other initial data, we observe the same EE solution (see Figure \ref{Ex8-2}(b) for $(\bm S^{0}, \bm I^{0})=(1.5,0.1,0.5,1.9)^{T}$ and Figure \ref{Ex8-2}(c) for $(\bm S^{0}, \bm I^{0})=(0.1,1.9,0.5,1.5)^{T}$), which indicates that the EE solution is unique. This simulation is consistent with Theorem \ref{TH5}(\rm ii).}
Finally, we simulate the existence of EE solutions when $\mathcal{R}_0<1$. let $L_{12}=L_{21}=0.5$, $\bm\zeta=(1,1)^{T}$, $\bm\beta=(2,4)^{T}$, $\bm\gamma=(1,3)^{T}$ so that the hypotheses of Proposition \ref{prop3} are satisfied. Let $d_{I}=4$ and $d_{S}=0.001$. Table \ref{Ex8-T} shows the EE solution as $\mathcal{R}_0$ varies from 0.9991 to 0.9955. When $\mathcal{R}_0<0.9955$, there is no EE solution. In the latter case, our  numerical solutions indicate that 
solutions converge to the DFE. 

\medskip

\begin{table}[!ht]
\centering
\begin{tabular}{|c| c| c |c |c| c |c|}
 \hline
 $N$ & 0.1 &0.45&0.5&0.55&1&$10^{6}$\\
 \hline
$\mathcal{R}_0$ &0.2788&0.9323& 1&1.0632&1.4886&2.943846\\
 \hline
$(\bm S_{0}, \bm I_{0})^T$ &$\begin{pmatrix} 0.025\\0.025\\0.025\\0.025\end{pmatrix}$&$\begin{pmatrix} 
0.15\\0.1\\0.1\\0.1
\end{pmatrix}$&$\begin{pmatrix} 
0.2\\0.1\\0.1\\0.1
\end{pmatrix}$&$\begin{pmatrix} 
0.25\\0.1\\0.1\\0.1
\end{pmatrix}$&$\begin{pmatrix} 
0.25\\0.25\\0.25\\0.25
\end{pmatrix}$&$\begin{pmatrix} 
$$2.5\times10^{5}$$\\$$2.5\times10^{5}$$\\$$2.5\times10^{5}$$\\$$2.5\times10^{5}$$
\end{pmatrix}$\\  
\hline
 DFE &$\begin{pmatrix} 0.08\\0.02\\0\\0\end{pmatrix}$&$\begin{pmatrix} 
0.36\\0.09\\ 0\\ 0
\end{pmatrix}$&$\begin{pmatrix} 
0.4\\0.1\\0\\0
\end{pmatrix}$&$\begin{pmatrix} 
0.44\\0.11\\0\\0
\end{pmatrix}$&$\begin{pmatrix} 
0.8\\ 0.2\\0\\0
\end{pmatrix}$&$\begin{pmatrix} 
$$8\times10^{5}$$\\$$2\times10^{5}$$\\0\\0
\end{pmatrix}$\\
\hline
 EE & None & None & None &$\begin{pmatrix} 
0.4138\\0.1036\\0.0262\\0.0064
\end{pmatrix}$&$\begin{pmatrix} 
0.5362\\0.1394\\0.2638\\0.0606
\end{pmatrix}$&$\begin{pmatrix} 
$$2.7\times10^{5}$$\\$$0.9\times10^{5}$$\\$$5.3\times10^{5}$$\\$$1.1\times10^{5}$$
\end{pmatrix}$\\
\hline
\end{tabular}
\vspace{\abovecaptionskip}
\caption{Numerical calculation of ${\mathcal{R}_0}$,  DFE and EE}
\label{Ex8}
\end{table}

\begin{table}[!ht]
\centering
\begin{tabular}{|c| c| c |c |c| c |}
 \hline
 $N$ & 3.82 & 3.81&3.80&3.79&3.78\\
 \hline
$\mathcal{R}_0$ &0.9991&0.9982& 0.9973&0.9964&0.9955\\
 \hline
 EE & $\begin{pmatrix} 
1.3440\\2.4685\\0.0039\\0.0036
\end{pmatrix}$ & $\begin{pmatrix} 
1.3727\\2.4308\\0.0034\\0.0031
\end{pmatrix}$ & $\begin{pmatrix} 
1.4077\\2.3867\\0.0029\\0.0027
\end{pmatrix}$ &$\begin{pmatrix} 
1.4546\\2.3309\\0.0024\\0.0021
\end{pmatrix}$&$\begin{pmatrix} 
1.5455\\2.2315\\0.0016\\0.0014
\end{pmatrix}$\\
\hline
\end{tabular}
\vspace{\abovecaptionskip}
\caption{Numerical calculation of EE when ${\mathcal{R}_0<1}$}
\label{Ex8-T}
\end{table}



\begin{figure}[!ht]
\begin{center}
\subfigure[]{
\resizebox*{0.300\linewidth}{!}{\includegraphics{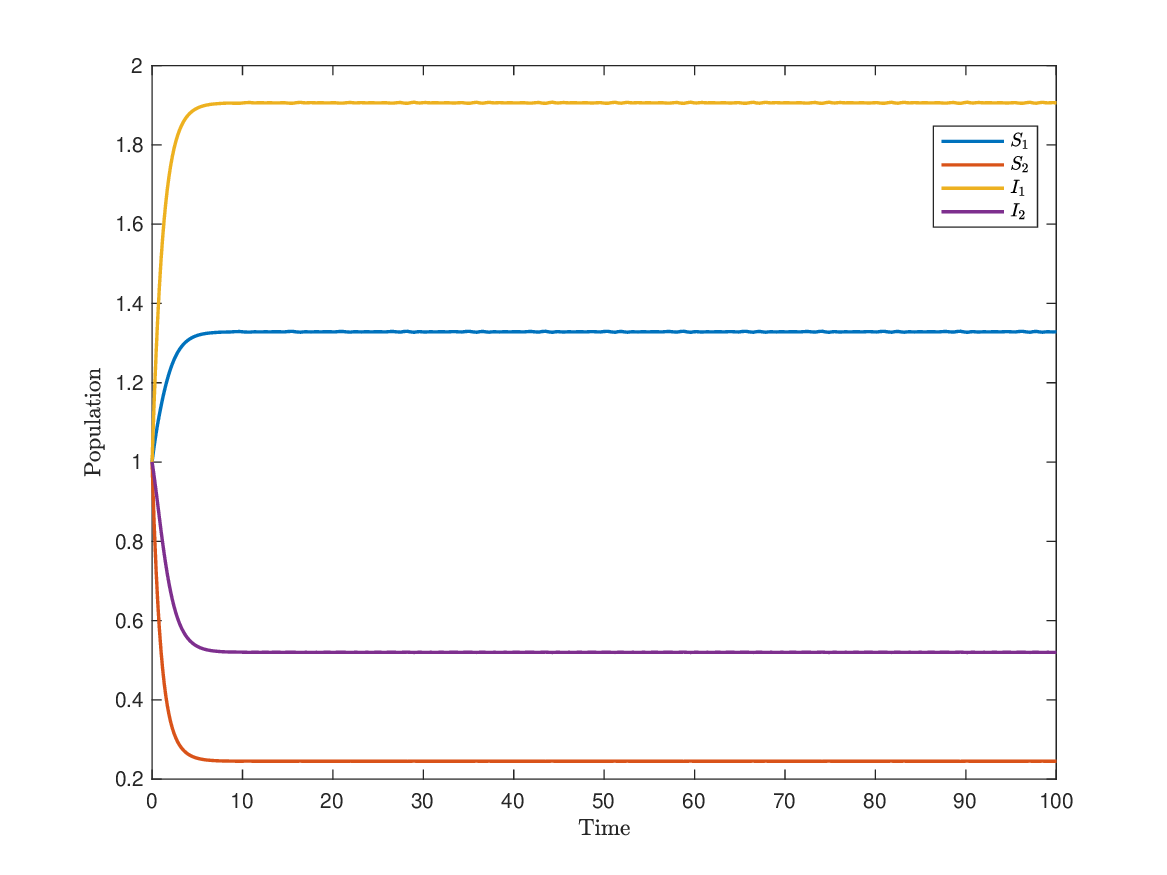}}}
\subfigure[]{\resizebox*{0.300\linewidth}{!}{\includegraphics{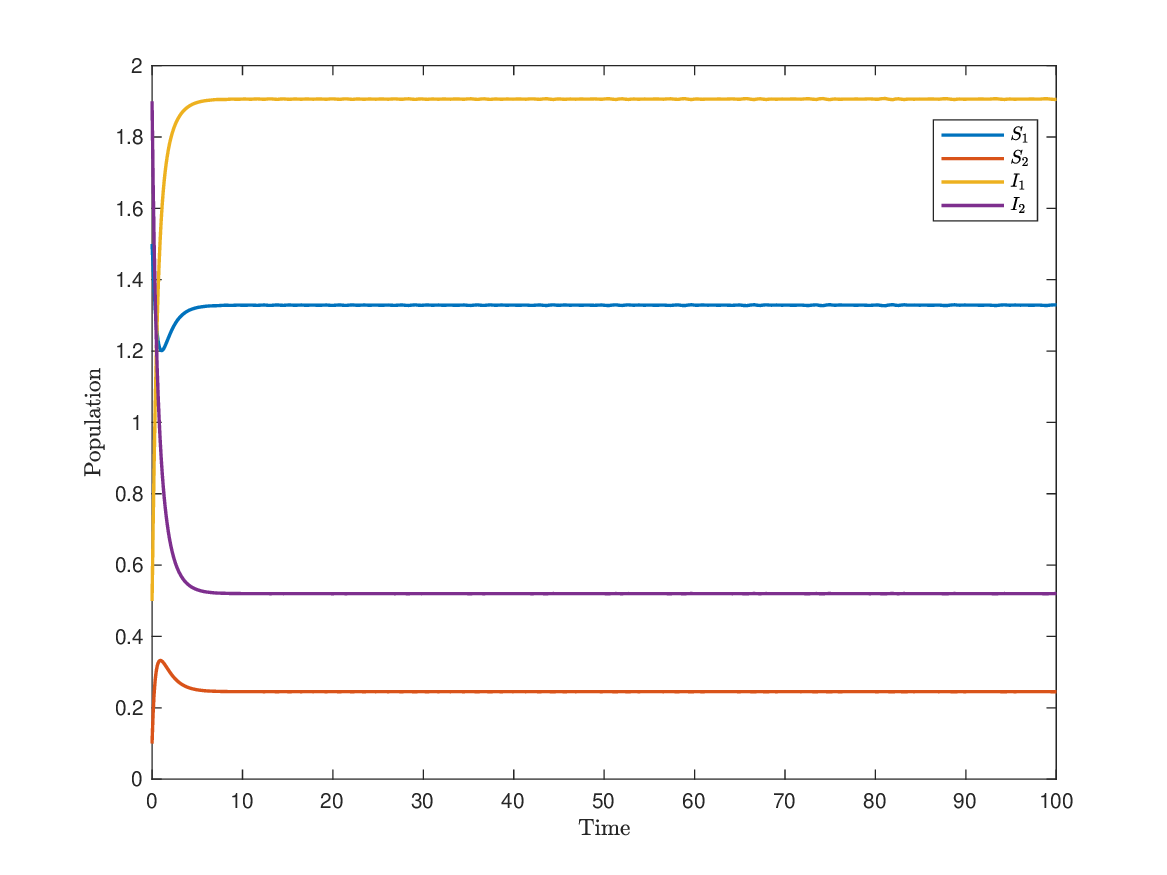}}}
\subfigure[]{\resizebox*{0.300\linewidth}{!}{\includegraphics{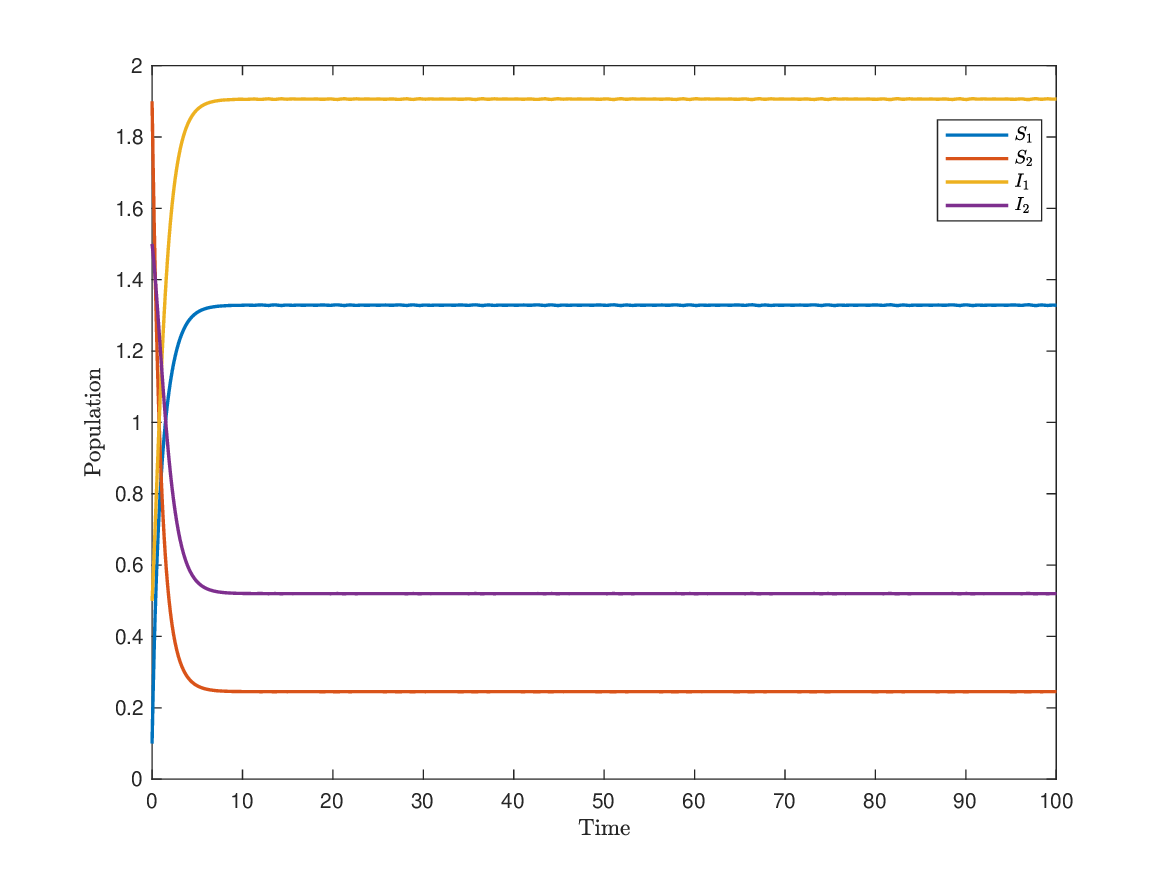}}}
\caption{Numerical simulations illustrating global dynamics of \eqref{model} when $\bm\mu=(0, 0)^T$ and ${\mathcal{R}}_0>1$ when the hypotheses of Theorem \ref{TH5}-{\rm (ii)} is satisfied.}
\label{Ex8-2}
\end{center}
\end{figure}

\medskip

\noindent{\bf Experiment 9.}
Let $\bm\gamma=(1.5, 0.5)^{T}$, 
$\bm\beta=(14,1)^{T}$,
$\bm\zeta=(0.8,0.1)^{T}$ and $d_{I}=1$.
Then $\bm r=(\frac{3}{28}, \frac{1}{2})^{T}$ and $\|\bm\zeta\circ {\bm r}/(\bm 1-{\bm r})\|_1=0.196$. So $\bm r_M=0.5<1$. Take $N=0.16<0.196$. We have $\mathcal{R}_0=1.2512>1$. 
For $d_{S}=10^{-1}$, 
we observe that there is an EE solution $(\bm S,\bm I)=(0.1006,0.04,0.0167,0.0027)^{T}$ (see Figure \ref{Ex9}(a)). As $d_{S}$ becomes smaller and smaller, we observe that the $\bm I$-component of EE goes to $(0,0)^{T}$ and the $\bm S$-component of EE goes to $(0.096,0.064)^{T}$ (see Figure \ref{Ex9}(c)). So we have $\|\bm I\|_1\to 0$ and $\|\bm S\|_1\to N$  as $d_S\to 0^+$, which is consistent with Theorem \ref{TH7}(i). In addition, 
we also simulate $(\bm S, \frac{1}{d_{S}}\bm I)$ and 
observe that $(\bm S, \frac{1}{d_{S}}\bm I)$ approaches $(0.0976,0.0624,0.6765,0.1312)^{T}\approx(l^*(\bm\alpha-d_I\bm P^*),l^*\bm P^*)$ where $l^{*}=0.9677$, $\bm P^*=(0.6991, 0.1356)^{T}$. This simulation is consistent with Theorem \ref{TH7}(i-1).

\begin{figure}[!ht]
\begin{center}
\subfigure[]{
\resizebox*{0.300\linewidth}{!}{\includegraphics{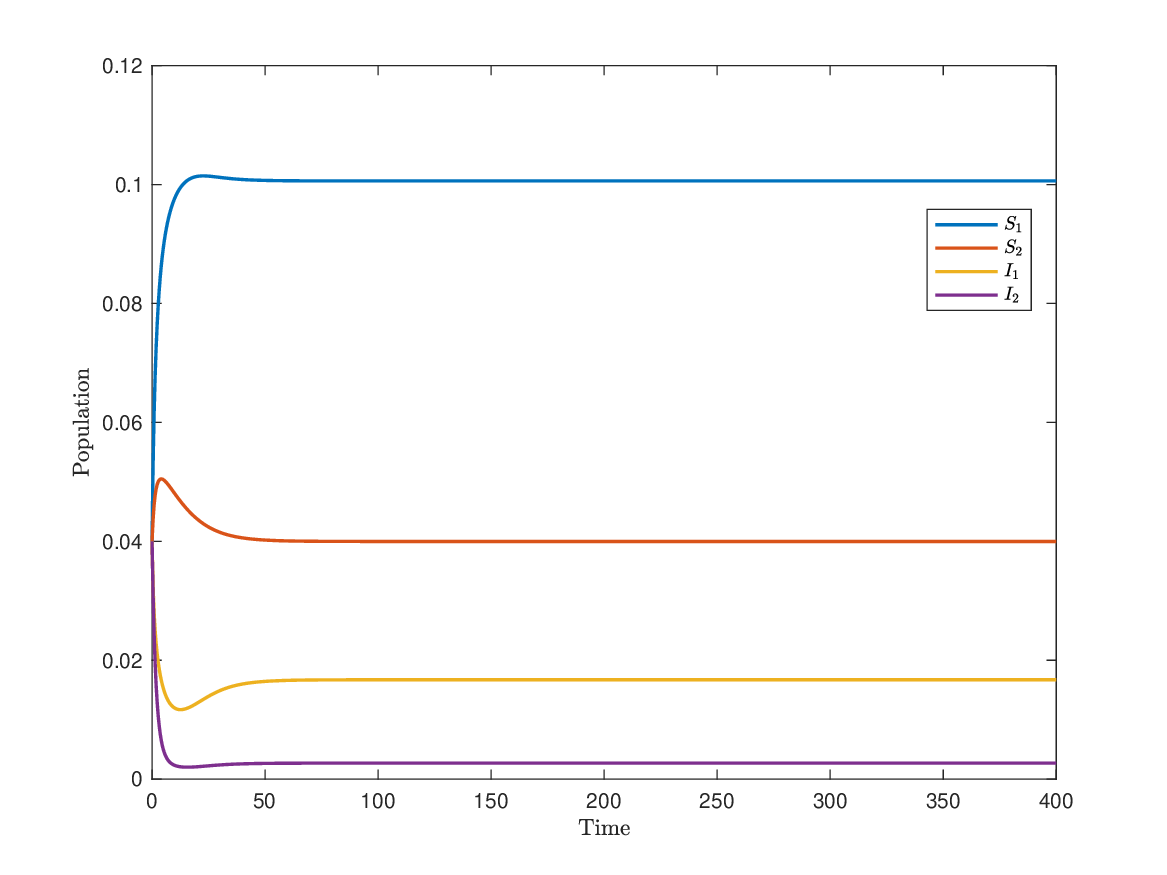}}}
\subfigure[]{\resizebox*{0.300\linewidth}{!}{\includegraphics{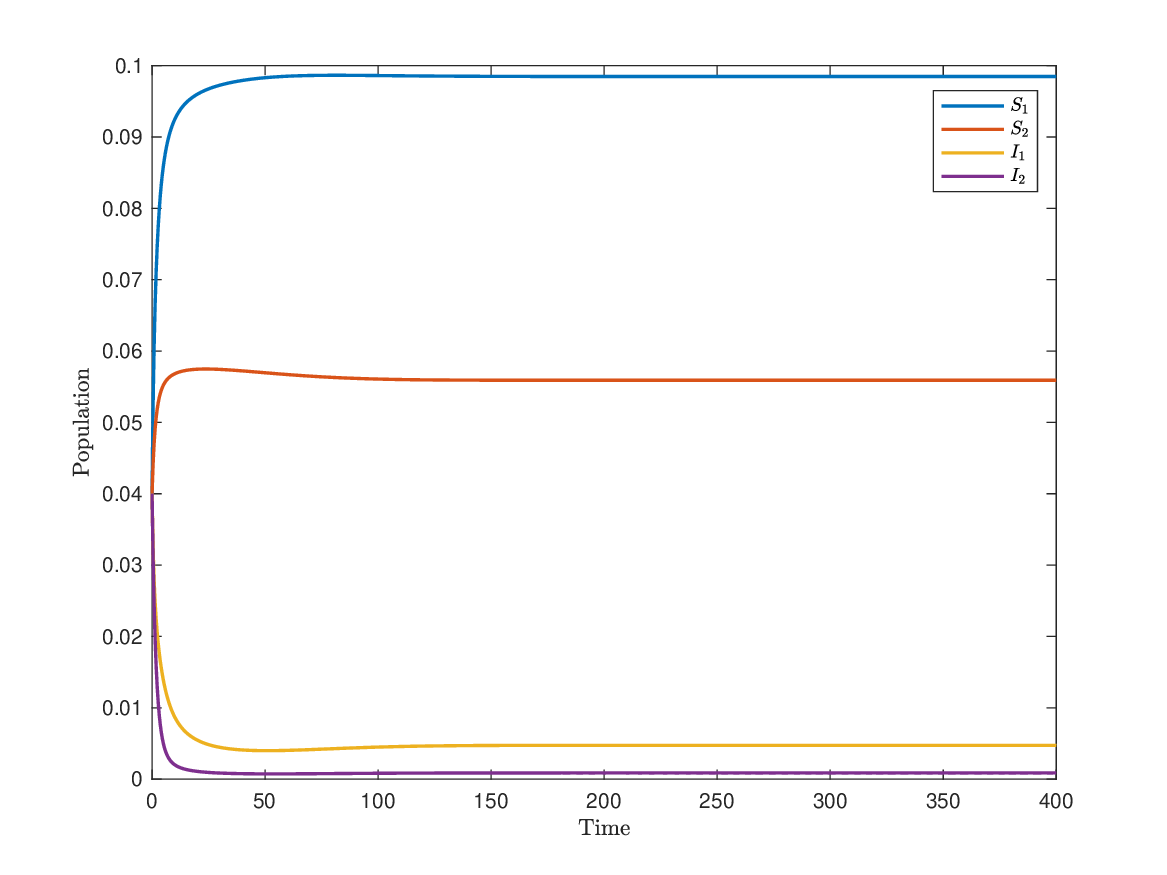}}}
\subfigure[]{\resizebox*{0.300\linewidth}{!}{\includegraphics{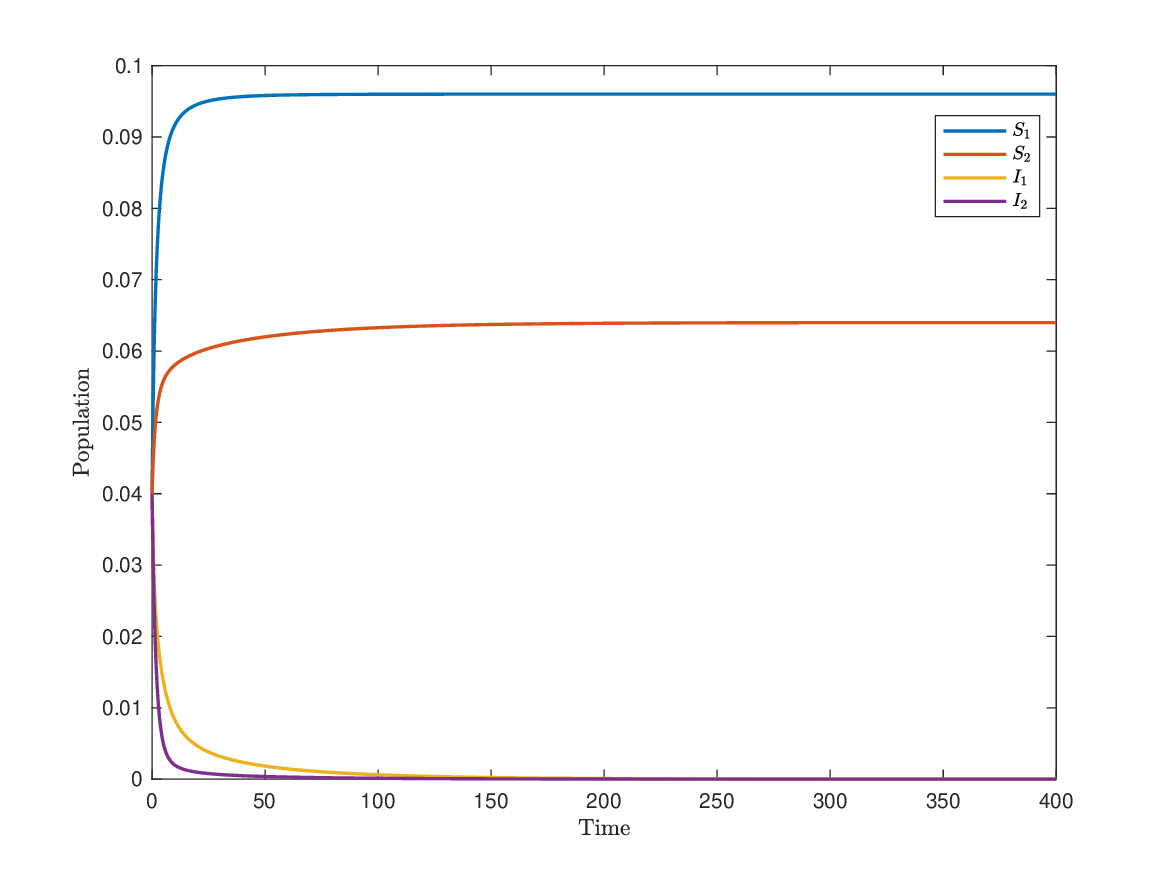}}}
\caption{Asymptotic profiles of EEs of \eqref{model} when $\bm r_M<1$ and $N<\|\bm\zeta\circ {\bm r}/(\bm 1-{\bm r})\|_1$: (a) $d_{S}=10^{-1}$, (b) $d_{S}=10^{-2}$, (c) $d_{S}=10^{-6}$.} 
\label{Ex9}
\end{center}
\end{figure}

\medskip

\noindent{\bf Experiment 10.}
Let $\bm\gamma$, $\bm\beta$, $\bm\zeta$, and $d_{I}$ be the same as in Experiment 9. Take $N=0.196$, then $\mathcal{R}_0=1.4862>1$. For this parameter setting, we have $\bm r_M<1$ and $N=\|\bm\zeta\circ {\bm r}/(\bm 1-{\bm r})\|_1$. For $d_{S}=10^{-1}$, 
Figure \ref{Ex10}(a) shows that there is an EE solution $(\bm S,\bm I)=(0.1025,0.0520,0.0353,0.0062)^{T}$.
As $d_{S}$ decreases, we observe that 
the EE solution $(\bm S,\bm I)\to ((0.096, 0.1)^{T}, (0,0)^{T})=(\bm\zeta\circ\bm r/(\bm 1-\bm r),\bm 0)$ (see Figure \ref{Ex10}(c)), which agrees with Theorem \ref{TH7}(i-2).

\begin{figure}[!ht]
\begin{center}
\subfigure[]{
\resizebox*{0.325\linewidth}{!}{\includegraphics{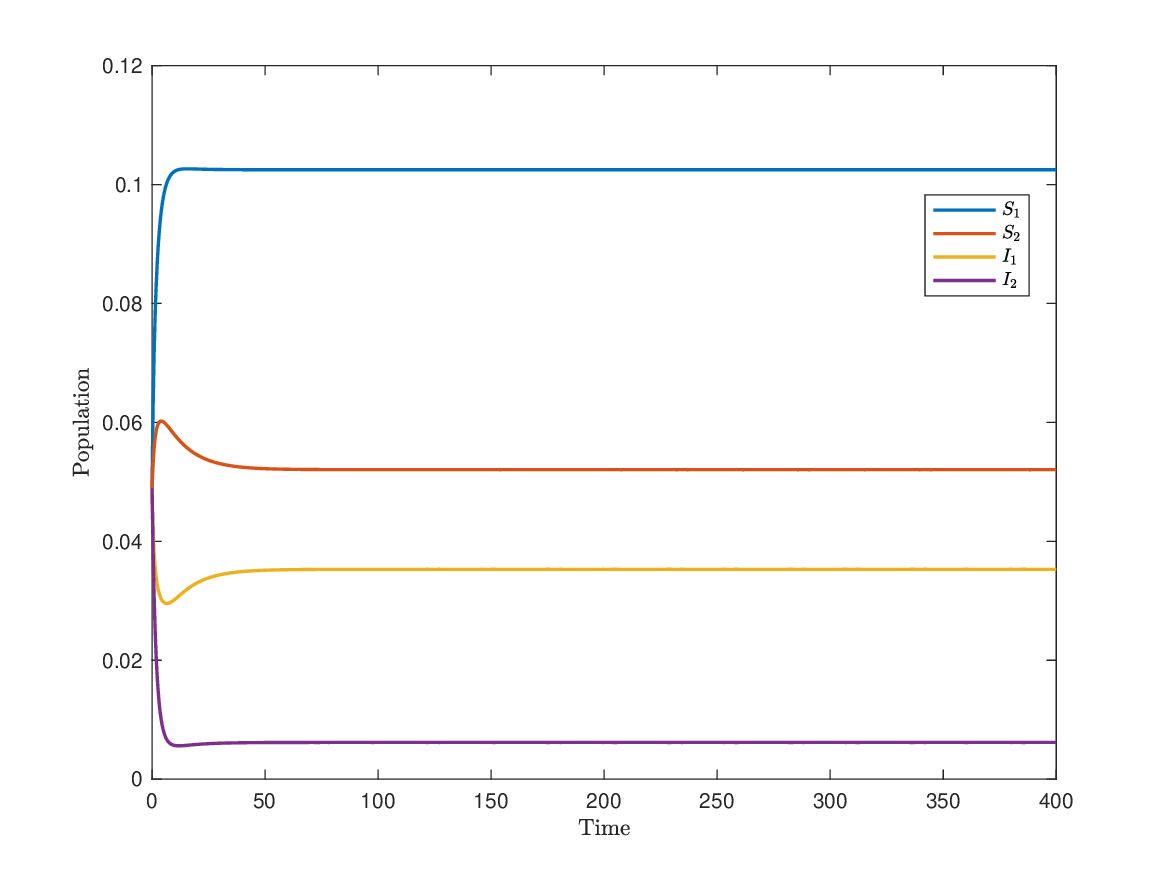}}}
\subfigure[]{\resizebox*{0.325\linewidth}{!}{\includegraphics{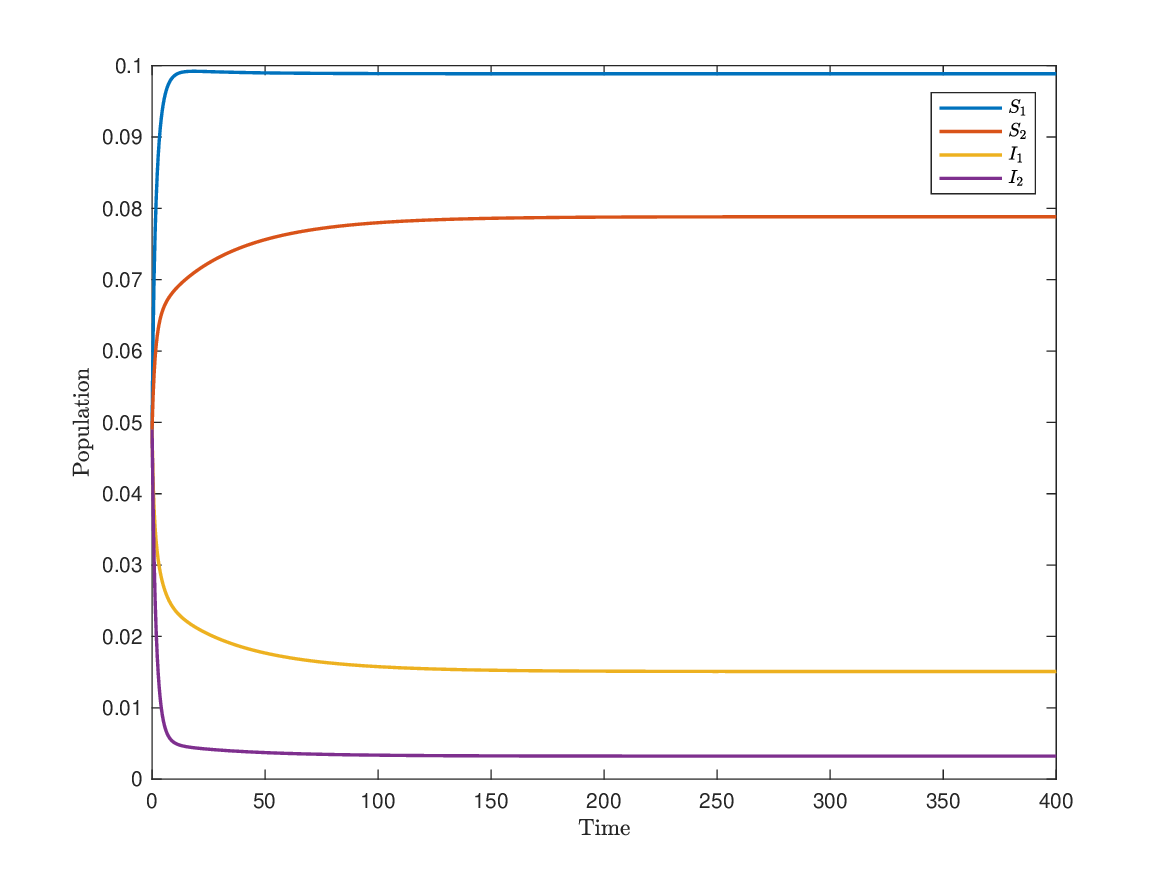}}}
\subfigure[]{\resizebox*{0.325\linewidth}{!}{\includegraphics{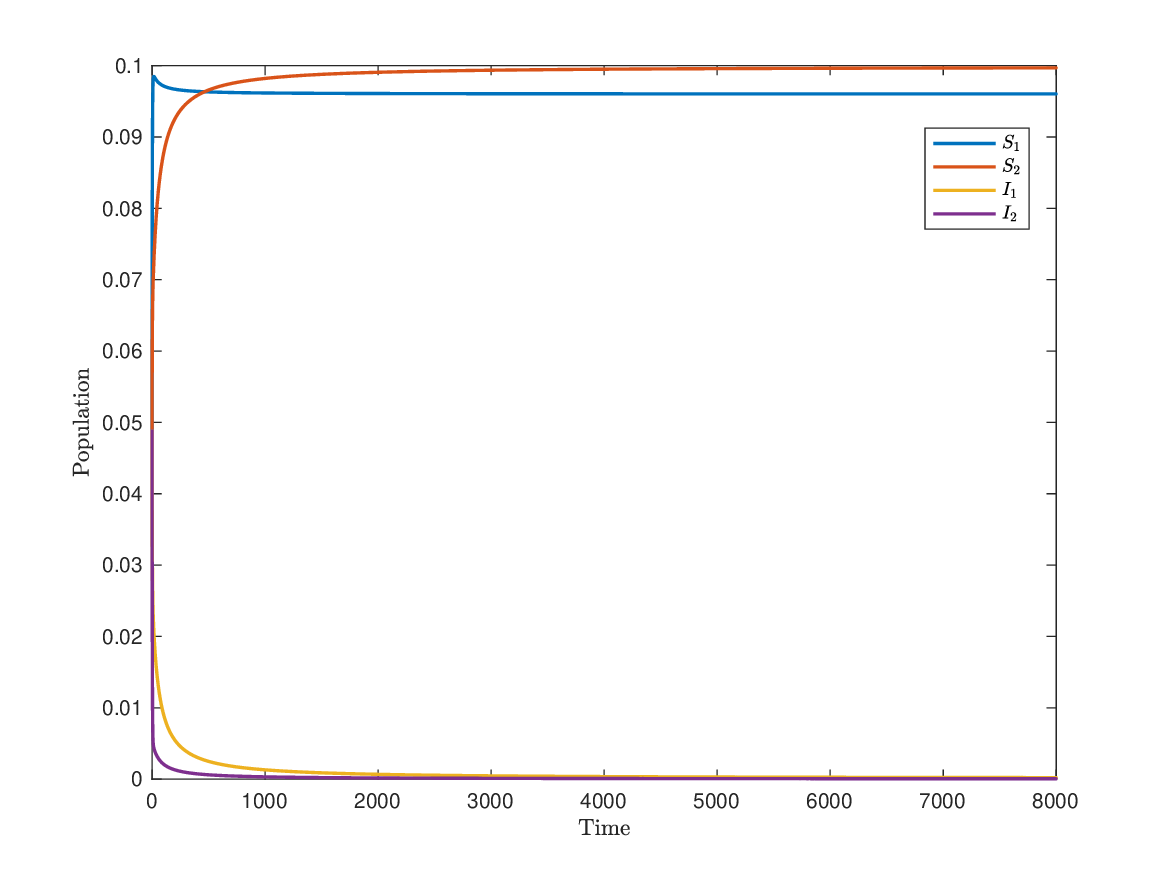}}}
\caption{Asymptotic profiles of EEs of \eqref{model} when $\bm r_M<1$ and $N=\|\bm\zeta\circ {\bm r}/(\bm 1-{\bm r})\|_1$: (a) $d_{S}=10^{-1}$, (b) $d_{S}=10^{-2}$, (c) $d_{S}=10^{-6}$,} 
\label{Ex10}
\end{center}
\end{figure}

\medskip

\noindent{\bf Experiment 11.}
Let $\bm\gamma$, $\bm\beta$, $\bm\zeta$, and $d_{I}$ be the same as in Experiment 9. Take $N=4$, then $\mathcal{R}_0=7.2327>1$. For this parameter setting, we have $\bm r_M<1$ and $N>\|\bm\zeta\circ {\bm r}/(\bm 1-{\bm r})\|_1$. Let $d_{S}=10^{-1}$, 
we observe that there is an EE solution $(0.3935,0.5813,2.4588,0.5644)^{T}$ (see Figure \ref{Ex11}(a)). We then decrease the value of $d_{S}$, we observe that the EE solution goes to $(\bm S^{*}, \bm I^{*})=((0.3778,0.6870)^T,(2.3481,0.5870)^{T})$ (see Figure \ref{Ex11}(c)), where $\bm S^{*}$ and $\bm I^{*}$ are given by \eqref{Eq2-TH7}. This simulation is consistent with Theorem \ref{TH7}(ii).

\begin{figure}[!ht]
\begin{center}
\subfigure[]{
\resizebox*{0.325\linewidth}{!}{\includegraphics{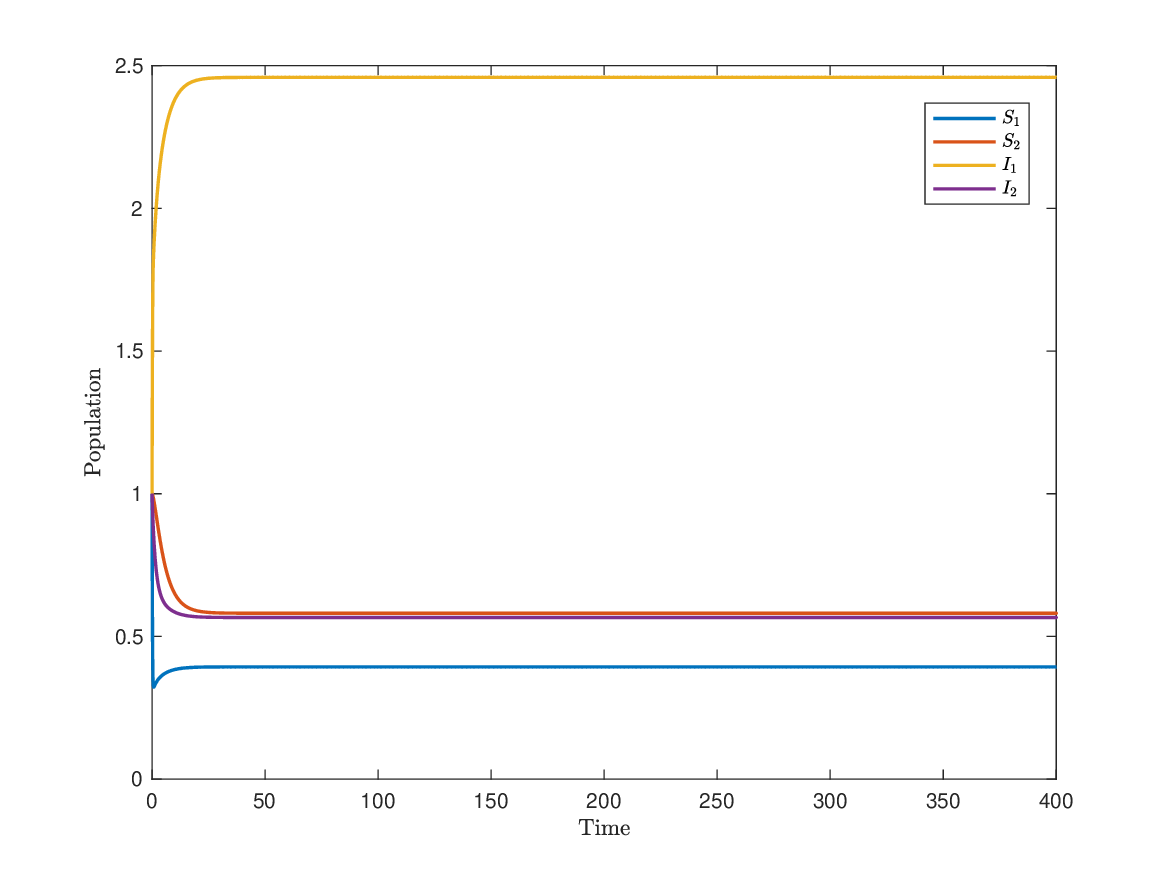}}}
\subfigure[]{\resizebox*{0.325\linewidth}{!}{\includegraphics{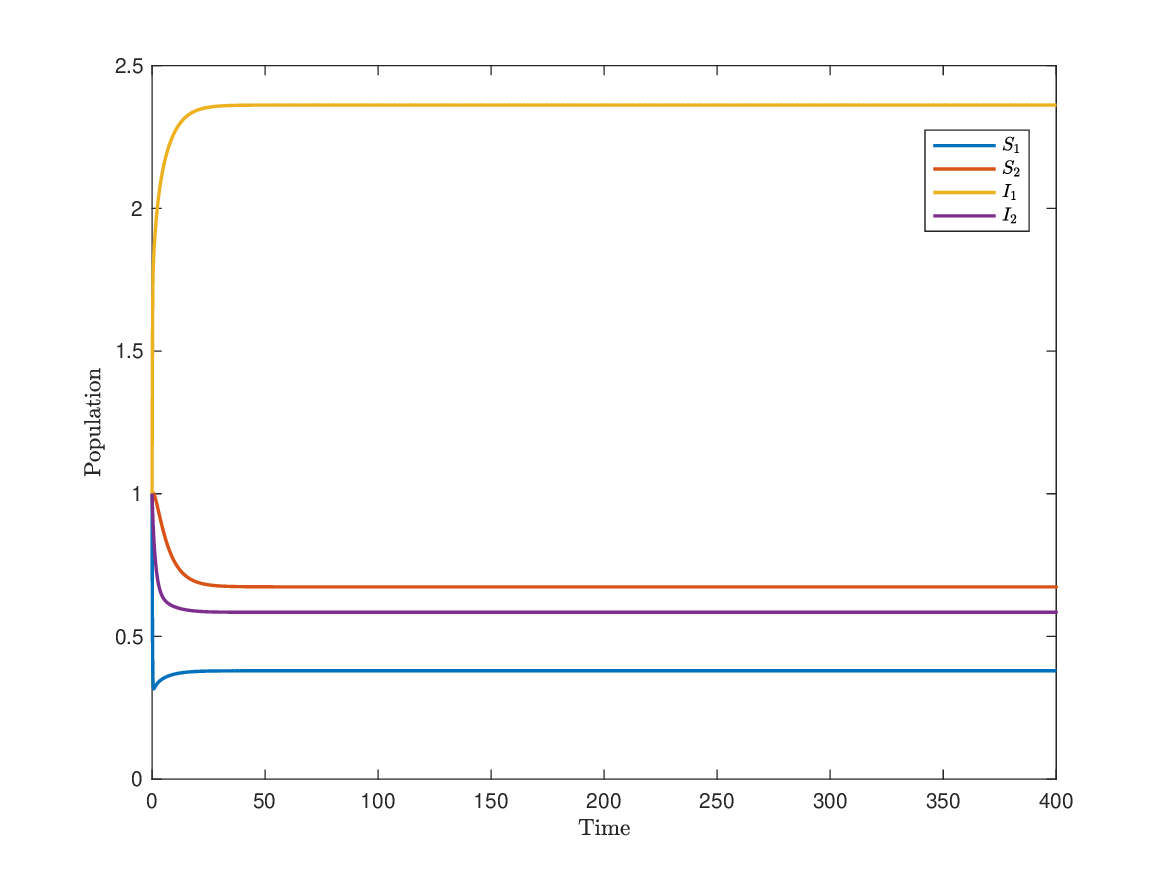}}}
\subfigure[]{\resizebox*{0.325\linewidth}{!}{\includegraphics{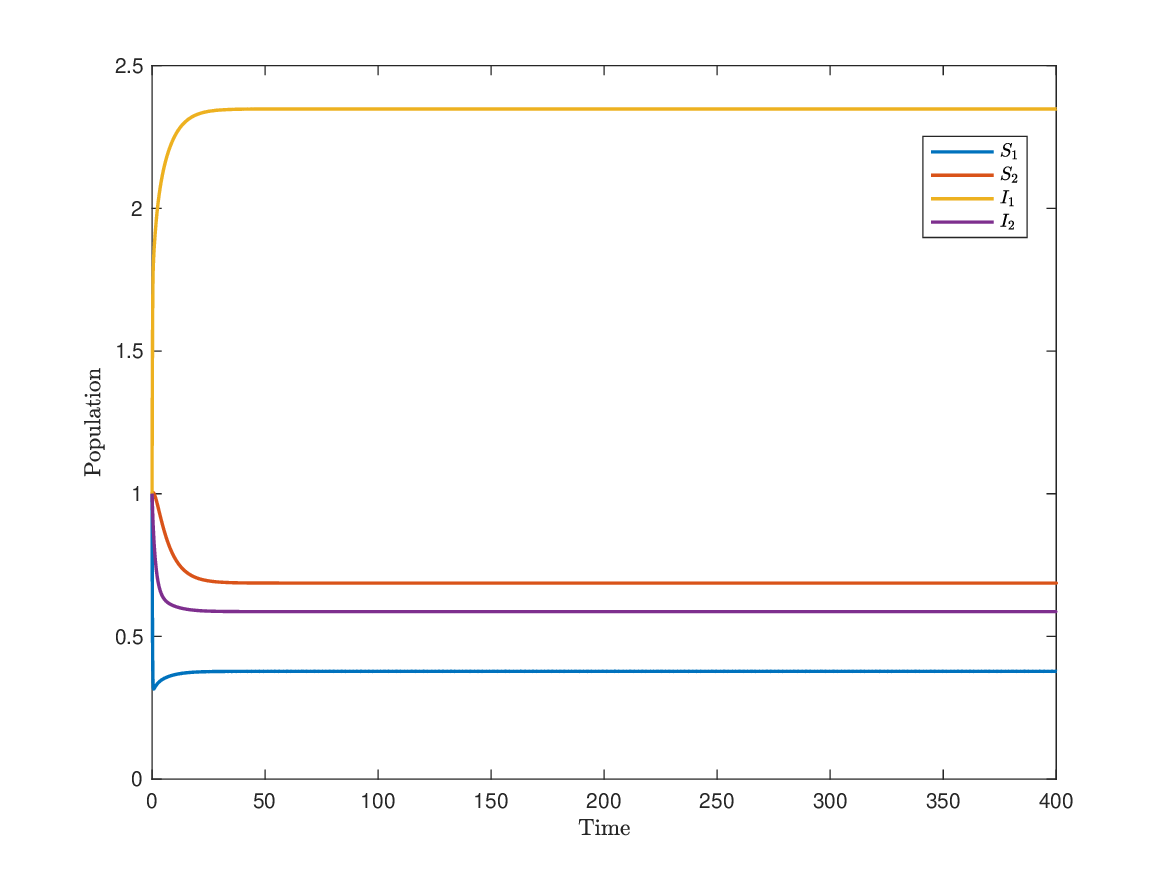}}}
\caption{Asymptotic profiles of EEs of \eqref{model} when $\bm r_M<1$ and $N>\|\bm\zeta\circ {\bm r}/(\bm 1-{\bm r})\|_1$: (a) $d_{S}=10^{-1}$, (b) $d_{S}=10^{-2}$, (c) $d_{S}=10^{-6}$} 
\label{Ex11}
\end{center}
\end{figure}

\medskip

\noindent{\bf Experiment 12.}
Let $\bm\gamma$, $\bm\beta$, and $d_{I}$ be the same as in Experiment 9. Take $\bm\zeta=(0.5,1)^{T}$, $N=4$, then $\mathcal{R}_0=1.0253>1$. For this parameter setting, we have $\bm r_M=3>1$.
Let $d_{S}=10^{-1}$, we observe that there is an EE solution $(3.2668,0.7190,0.0036,0.0106)^{T}$ (see Figure \ref{Ex12}(a)). As $d_{S}$ decreases, we observe that the EE solution goes to $(3.3504,0.6496,0,0)^{T}$ (see Figure \ref{Ex12}(c)). So we have $\|\bm I\|_1\to 0$ and $\|\bm S\|_1\to N$  as $d_S\to 0^+$, which is consistent with Theorem \ref{TH7}(i). In addition, 
we also simulate $(\bm S, \frac{1}{d_{S}}\bm I)$ and  
observe that $(\bm S, \frac{1}{d_{S}}\bm I)$ approaches $(3.3504,0.6496,0.0686,0.2052)^{T}\approx(l^*(\bm\alpha-d_I\bm P^*),l^*\bm P^*)$ where $l^{*}=4.2723$, $\bm P^*=(0.0160, 0.0478)^{T}$. This simulation is consistent with Theorem \ref{TH7}(i-1).

\begin{figure}[!ht]
\begin{center}
\subfigure[]{
\resizebox*{0.325\linewidth}{!}{\includegraphics{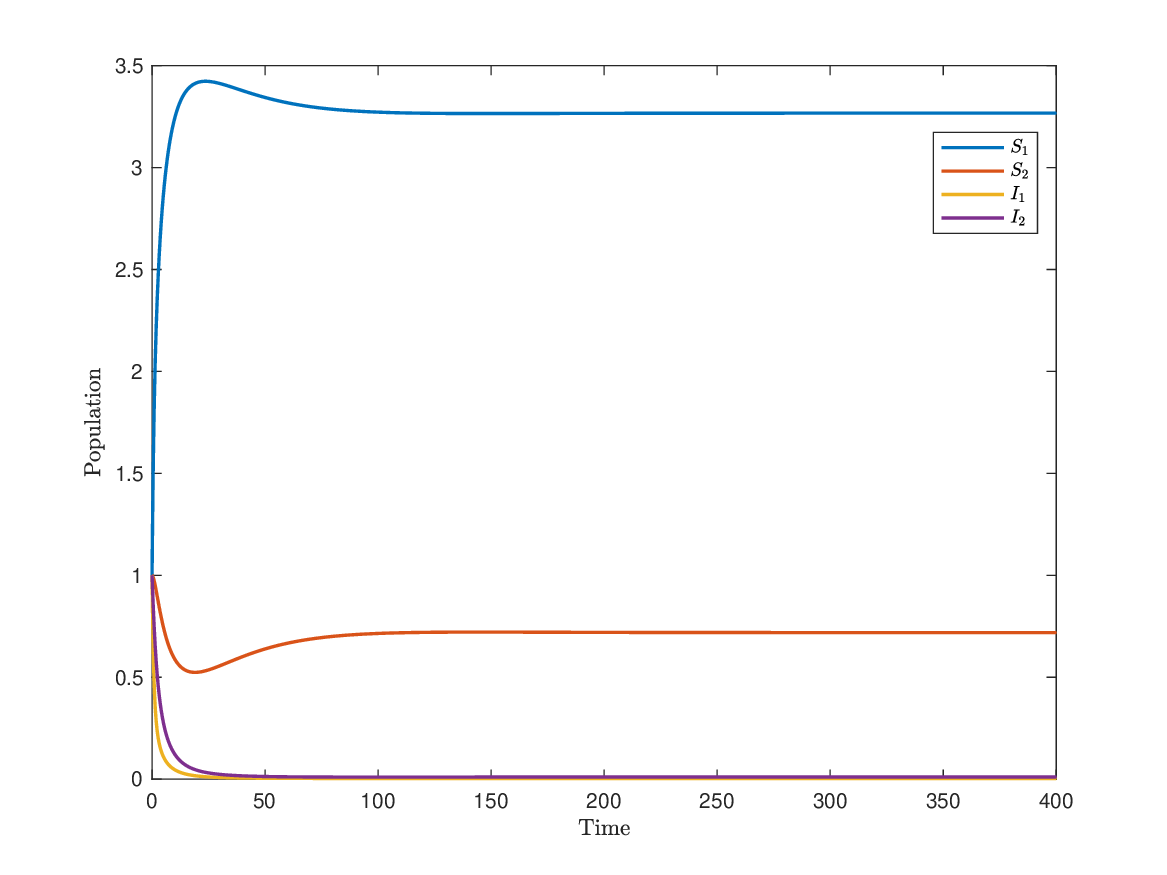}}}
\subfigure[]{\resizebox*{0.325\linewidth}{!}{\includegraphics{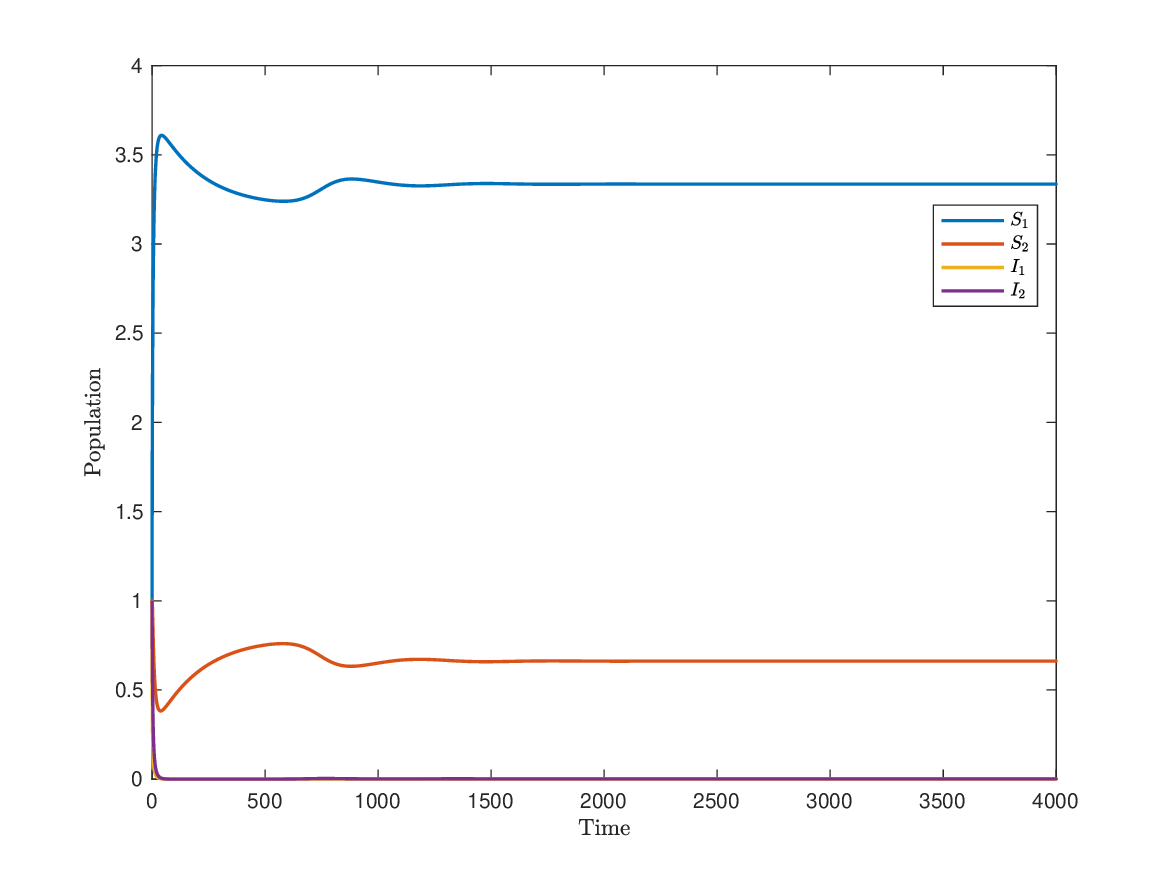}}}
\subfigure[]{\resizebox*{0.325\linewidth}{!}{\includegraphics{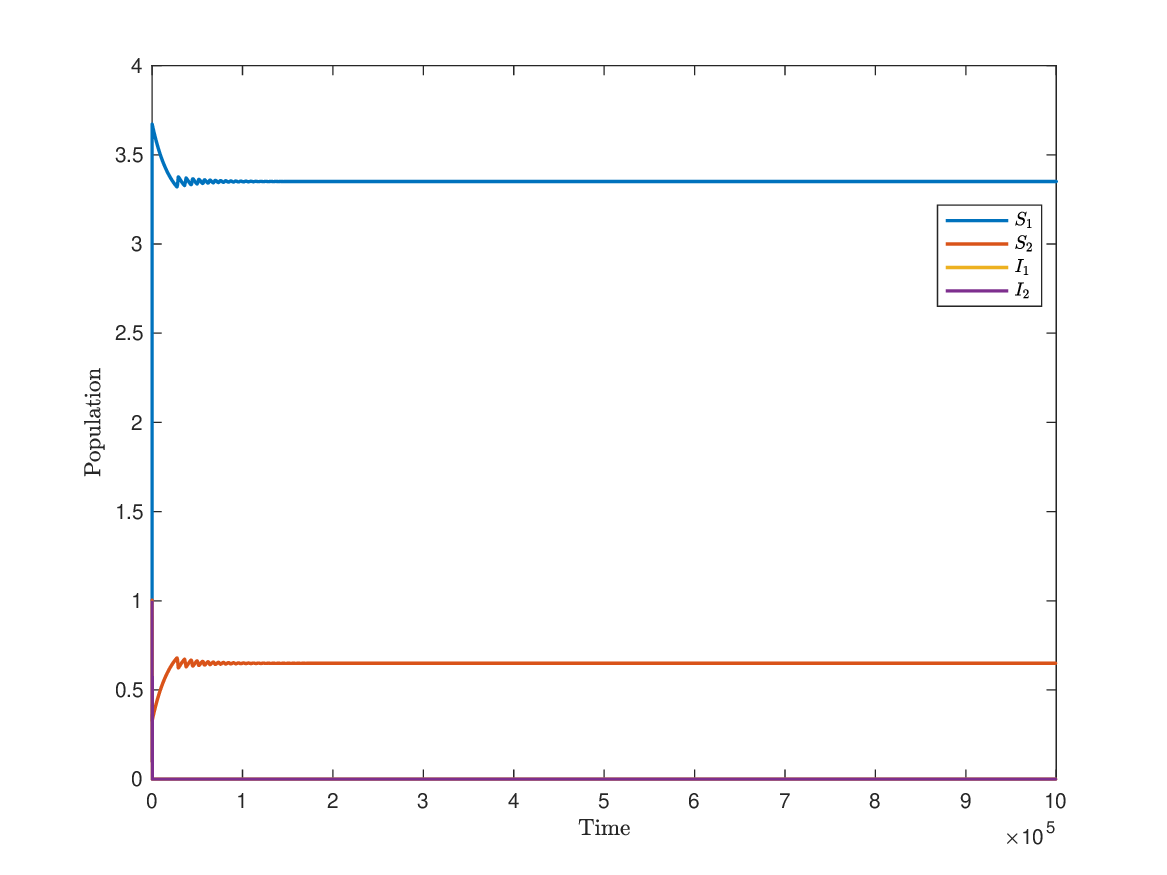}}}
\caption{Asymptotic profiles of EEs of \eqref{model} when $\bm r_M>1$: (a) $d_{S}=10^{-1}$, (b) $d_{S}=10^{-2}$, (c) $d_{S}=10^{-4}$} 
\label{Ex12}
\end{center}
\end{figure}

\medskip

\noindent{\bf Experiment 13.}
Let $\bm\gamma=(1.5,0.5)^{T}$, $\bm\beta=(0.5,1)^{T}$, $\bm\zeta=(0.8,0.1)^{T}$ and $d_{S}=1$. Take $N=4$, then 
$N\bm\alpha/(\bm r\circ(\bm\zeta+N\bm\alpha))=(\frac{4}{15}, \frac{16}{9})^{T}$. So
$\|N\bm\alpha/(\bm r\circ(\bm\zeta+N\bm\alpha))\|_{\infty}=\frac{16}{9}>1$. With these choices, the 
hypotheses of Theorem \ref{TH8} holds. We then choose a set of $d_{I}$. We observe that for every $0<d_{I}\leq0.1$, there is a unique EE solution $(\bm S,\bm I)$ (see Figure \ref{Ex13}). Moreover, as $d_{I}$ becomes smaller and smaller, $(S_{1}, I_{1})\to(2.7333,0)^{T}$ and $(S_{2}, I_{2})\to(0.6833,0.5833)^{T}$, which agrees with Theorem \ref{TH8} with $N^{*}\approx 3.4166$ in \eqref{N-star}.

\begin{figure}[!ht]
\begin{center}
\subfigure[]{
\resizebox*{0.325\linewidth}{!}{\includegraphics{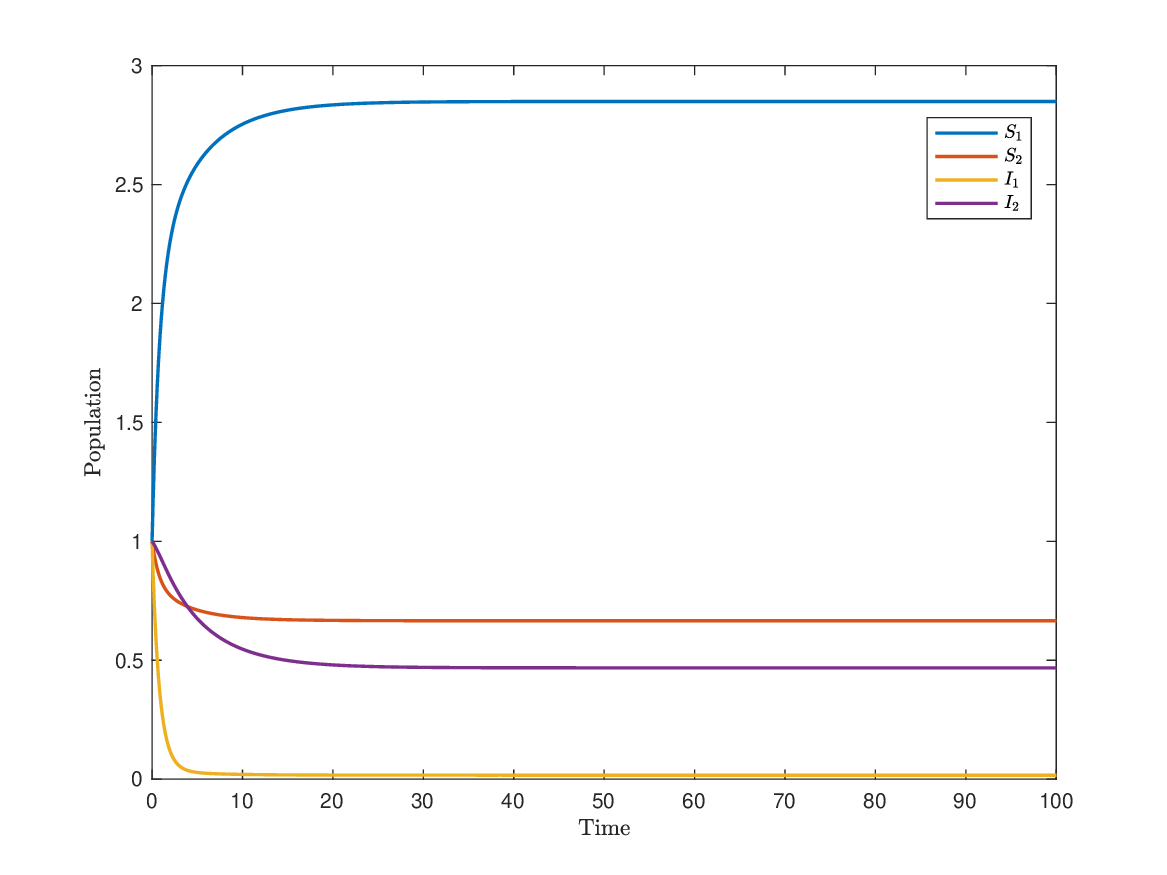}}}
\subfigure[]{\resizebox*{0.325\linewidth}{!}{\includegraphics{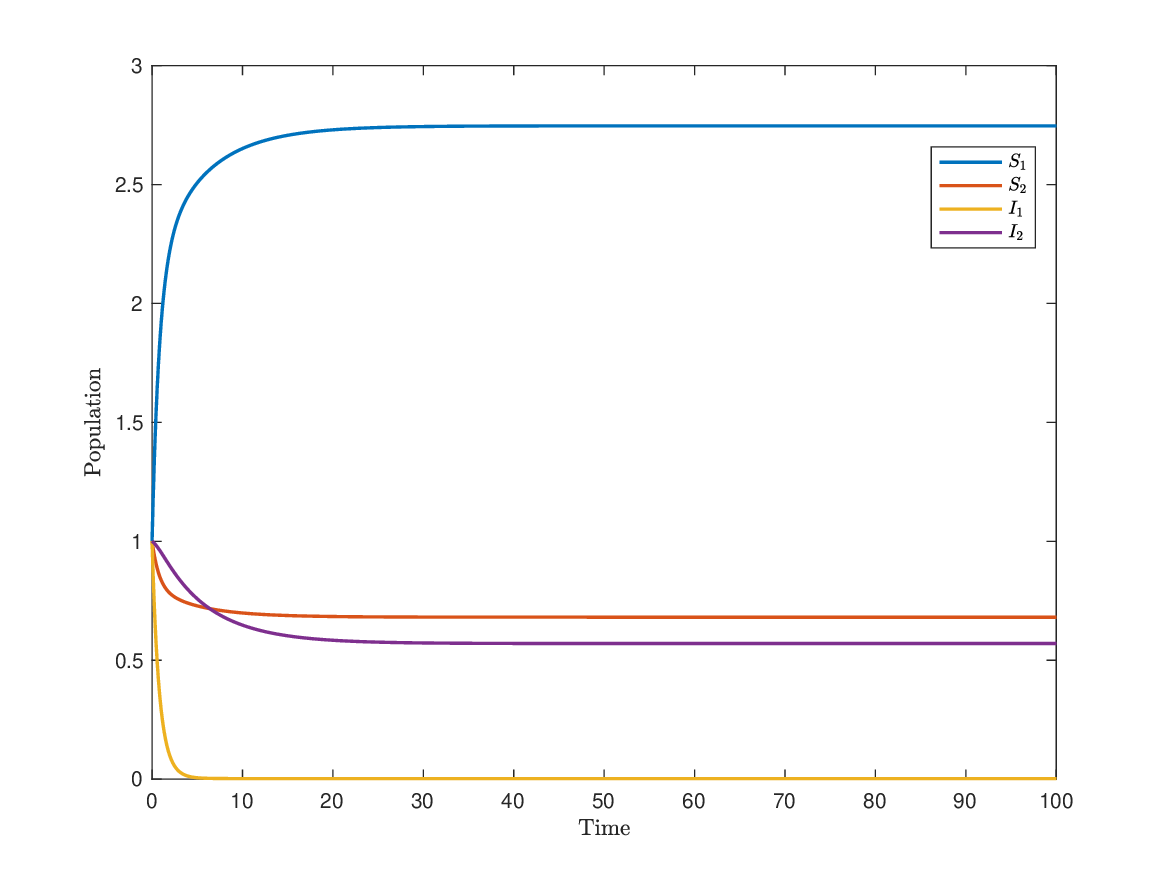}}}
\subfigure[]{\resizebox*{0.325\linewidth}{!}{\includegraphics{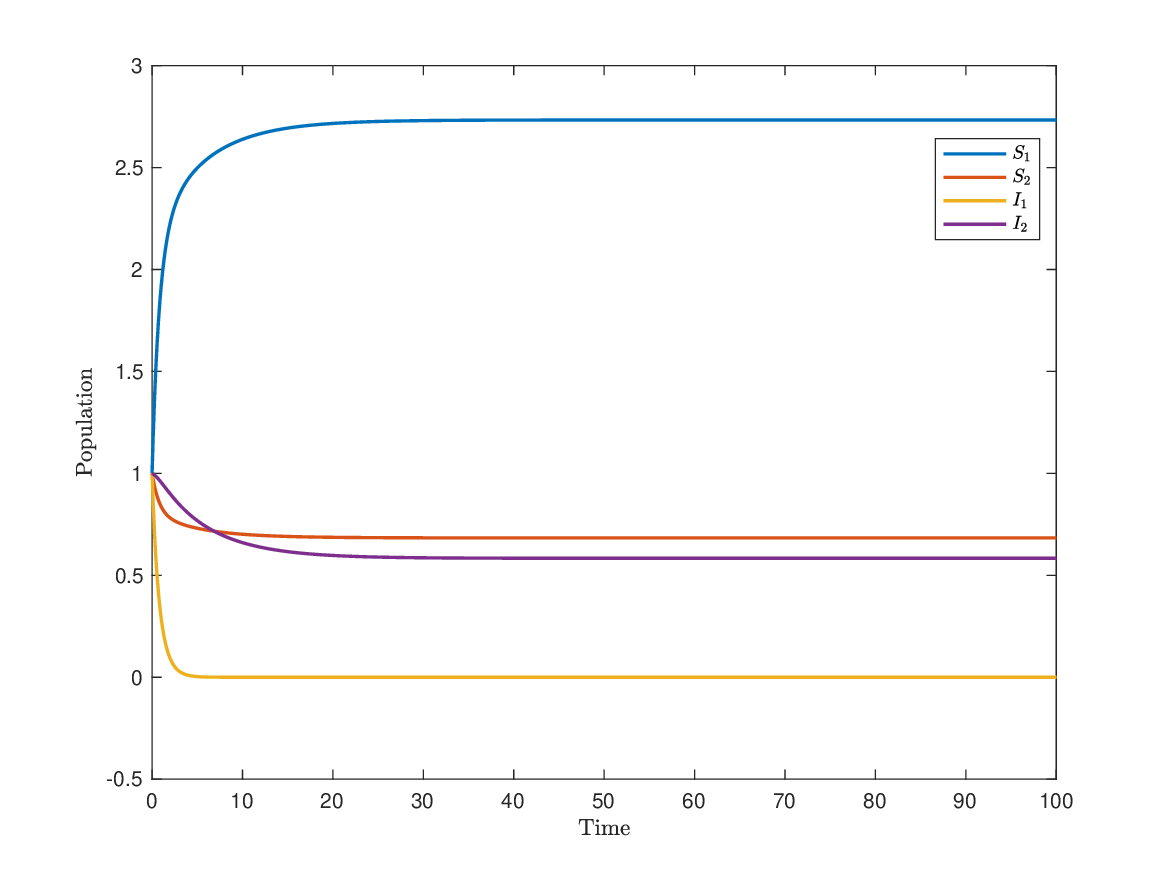}}}
\caption{Asymptotic profiles of EEs of \eqref{model} when the 
hypotheses of Theorem \ref{TH8} is satisfied: (a) $d_{S}=10^{-1}$, (b) $d_{S}=10^{-2}$, (c) $d_{S}=10^{-6}$} 
\label{Ex13}
\end{center}
\end{figure}

\subsection{Discussion}

This work examined the global dynamics of solutions to a multiple-patch epidemic model with saturated incidence mechanism \eqref{model}. In the first part, we focus on scenario when only the disease fatality rate is taken into consideration, that  $\bm\mu>\bm 0$, while the other demographics factors are negligible. In such a setting,  Theorem \eqref{TH1} predicts the eventual extinction of the disease. Moreover, our numerical simulations from experiments 1, 2, and 3 confirm our theoretical results. In the case of two patches epidemic network, these experiments discussed all the three possible scenarios.

 In the second part of our investigation, we assume that all factors affecting the total population size, including the disease-induced fatality rate, are negligible and set $\bm \mu = \bm 0$ in system \eqref{model}. Under this assumption, Theorems \ref{TH2}, \ref{TH10}, \ref{TH3}, and \ref{TH4} establish the global stability of the Disease-Free Equilibrium (DFE) under certain conditions. Specifically, if all patches are of low or moderate risk—meaning the disease transmission rate is less than or equal to the recovery rate in all patches—the disease will eventually be eradicated. Our analysis also reveals the existence of a new threshold quantity, $\tilde{\mathcal{R}}_0$ (defined by \eqref{tilde-R-0}), which is greater than or equal to the Basic Reproduction Number (BRN) $\mathcal{R}_0$ (given by formula \eqref{R-0}) for system \eqref{model}. The disease will be eradicated if $\tilde{\mathcal{R}}_0 \leq 1$. Figure \ref{Fig1_1} illustrates the curves of $\mathcal{R}_0$ and $\tilde{\mathcal{R}}_0$ with respect to the total population size. Notably, $\tilde{\mathcal{R}}_0$ and $\mathcal{R}_0$ are independent of the susceptible population dispersal rate. Our simulations in Experiment 4 demonstrate the dynamics of solutions as established by Theorem \ref{TH2}; Experiments 5, 6, and 7 illustrate the three possible scenarios under the hypotheses of Theorem \ref{TH3}, as explained in Remark \ref{RK2}; and the first part of Experiment 8 shows the global dynamics of solutions when the susceptible and infected populations have the same dispersal rate, as described in Theorem \ref{TH4}. 

 An interesting result established in \cite{SW2024a} is that the multiple-patch epidemic model \eqref{mass-action-incidence} may have multiple EEs for some range of the parameters when its BRN is less than one.  Similarly, unlike the corresponding model without a saturation effect \eqref{standard-incidence}, we find that the combination of saturation incidence, spatial heterogeneity among patches, and population movements can result in multiple endemic equilibria even when $\mathcal{R}_0<1$ and the susceptible population disperses very slowly while the infected population move faster (See Remark \ref{Rk3}). It is important to note that when the susceptible population disperses at least as quickly as the infected population, Theorem \ref{TH5}-{\rm (i)} shows that the existence of an EE solution depends entirely on whether $\mathcal{R}_0$ exceeds one. Furthermore, if $\mathcal{R}_0 > 1$ and $d_S \geq d_I$, the EE is always unique. If the requirement $d_S \geq d_I$ is relaxed, Theorem \ref{TH5}-{\rm (ii)} indicates that the EE is unique if $\mathcal{R}_0 > 1$ and the total population size is sufficiently large. Figure \ref{fig2} provides  illustrative pictures of the bifurcation diagram of  $\|\bm I\|_{\infty}$ at the EEs as $\mathcal{R}_0$ varies. Our simulations in the second part of Experiment 8 confirm these theoretical results. Consequently, when the susceptible population disperses at least as quickly as the infected population, our findings suggest that any disease control strategy aimed at reducing the BRN could effectively mitigate the impact of the disease.

 To better understand the structure of the set $\mathcal{C}_*$  of EEs for system \eqref{model} when $\bm \mu = \bm 0$, we establish in Theorem \ref{TH6} that, with fixed population dispersal rates, $\mathcal{C}_*$  forms a simple, connected, and unbounded curve that bifurcates from the set of DFEs at $\mathcal{R}_0 = 1$ as the  BRN varies. Furthermore, $\mathcal{R}_0 = 1$ is always a transcritical forward bifurcation point if either  the epidemic network consists of exactly two patches, or the susceptible population move faster than the infected population. It remains an open question whether this conclusion still holds if any of the scenarios {\rm (i)-(iii)} of Theorem \ref{TH6} are violated.

 During the recent COVID-19 pandemic, many countries adopted strategies to control the spread of the disease by limiting population movements. To assess the effectiveness of these strategies, researchers can examine the asymptotic behavior of epidemic equilibria (EEs) as the dispersal rates of populations approach zero. In this context, Theorem \ref{TH7} demonstrates that when $\bm\mu = \bm 0$, the impact of the disease can be significantly reduced by limiting the dispersal rate of susceptible populations if the epidemic network includes at least one patch of moderate or low risk, or if the total population size drops below a certain critical threshold. Specifically, if $\Omega$ consists solely of high-risk patches, i.e., $\tilde{\Omega} = \{i \in \Omega : \beta_i > \gamma_i\} = \Omega$, and $N \le \|\bm\zeta \circ \bm r / (\bm 1 - \bm r)\|_{1}$, then system \eqref{model} with $\bm\mu = \bm 0$ and $\bm \zeta \gg \bm 0$ predicts that the I-components of EEs will go extinct as $d_S$ becomes very small. This contrasts sharply with predictions from system \eqref{standard-incidence} under the same condition $\tilde{\Omega} = \Omega$. Additionally, according to \cite{SW2024a}, unlike the scenario described in Theorem \ref{TH7}, the total population size significantly affects the asymptotic behavior of the EEs in system \eqref{mass-action-incidence} as $d_S$ approaches zero when $\Omega = \tilde{\Omega}$. The conclusions of Theorem \ref{TH7} are illustrated through the simulations in Experiments 9, 10, 11, and 12. Theorem \ref{TH8} and Experiment 13 detail the asymptotic limits of EEs as the infected population dispersal rate approaches zero.

\section{Preliminary results and Proof of Proposition \ref{prop2}}\label{Sec3}
Recall that   $\bm \alpha$ is the eigenvector associated with $\sigma_*(\mathcal{L})=0$ as described in \eqref{alpha-eq}. 
We  recall the following Harnack's inequality type result for discrete dynamical  models from \cite{DBS2023}.

\begin{lem}\cite[Lemma 3.1]{DBS2023}\label{Harnck-lemma}  Suppose that  {\bf (A1)} holds. Let $d>0$ and $ \bm M\in C(\mathbb{R}_+, \mathbb{R}^n)$ such that 
\begin{equation*}
    \sup_{t\ge 0}\|\bm M(t)\|_{\infty}\le m_{\infty}<\infty.
\end{equation*}
Then there is a positive number $c_{d,m_{\infty}}$ such that any nonnegative solution $\bm X(t)$ of 
\begin{equation*}
    {\bm X'}=d\mathcal{L}\bm X +\bm M(t)\circ \bm X, \ t>0
\end{equation*}
satisfies
\begin{equation*}
    \|\bm X(t)\|_{\infty}\le c_{d,m_{\infty}}\bm X_{m}(t),\quad \forall\ t\ge 1.
\end{equation*}
    
\end{lem}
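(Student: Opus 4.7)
The plan is to exploit the strong positivity of the semigroup $\{e^{td\mathcal{L}}\}_{t>0}$ generated by the quasi-positive, irreducible matrix $d\mathcal{L}$, and to show that the multiplicative perturbation by $\bm M(t)\circ$ can be absorbed into two-sided exponential envelopes with constant $m_\infty$. The strategy is to derive matching upper and lower bounds for $\bm X(t)$ in terms of $e^{d\mathcal L}\bm X(t-1)$, then use the positivity of the entries of $e^{d\mathcal L}$ to conclude. A key point is that the Harnack constant must be uniform in $t\ge 1$, which forces us to work on a sliding window of length one rather than on $[0,t]$.

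For the envelope step, I would fix $t\ge 1$ and set $\bm V_+(s):=e^{-m_\infty s}\bm X(t-1+s)$ and $\bm V_-(s):=e^{m_\infty s}\bm X(t-1+s)$ for $s\in[0,1]$. A direct differentiation yields
\begin{equation*}
\bm V_+'(s)=d\mathcal{L}\bm V_+(s)+(\bm M(t-1+s)-m_\infty\bm 1)\circ\bm V_+(s),
\end{equation*}
and the analogous identity for $\bm V_-$ with $+m_\infty\bm 1$ replacing $-m_\infty\bm 1$. Since $-2m_\infty\le M_i-m_\infty\le 0\le M_i+m_\infty\le 2m_\infty$, and since $d\mathcal L$ generates a positive semigroup, the standard comparison principle for ODE systems governed by quasi-positive matrices (applied by examining $\bm W(s):=e^{sd\mathcal L}\bm V_\pm(0)-\bm V_\pm(s)$ and similar) gives
\begin{equation*}
e^{-m_\infty}\,e^{d\mathcal{L}}\bm X(t-1)\;\le\;\bm X(t)\;\le\;e^{m_\infty}\,e^{d\mathcal{L}}\bm X(t-1).
\end{equation*}

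For the Harnack step, by assumption (A1) the matrix $\mathcal L$ is quasi-positive and irreducible, so $e^{d\mathcal L}$ has all entries strictly positive. Hence there exist constants $0<\kappa_0\le C_0$ depending only on $d$ and $\mathcal L$ such that $\kappa_0\le (e^{d\mathcal L})_{ij}\le C_0$ for every $i,j\in\Omega$. Combining the envelope bound with this entrywise estimate, for any $i,k\in\Omega$ and $t\ge 1$,
\begin{equation*}
X_i(t)\;\le\;e^{m_\infty}\sum_{j\in\Omega}(e^{d\mathcal L})_{ij}X_j(t-1)\;\le\;e^{m_\infty}C_0\|\bm X(t-1)\|_1,
\end{equation*}
\begin{equation*}
X_k(t)\;\ge\;e^{-m_\infty}\sum_{j\in\Omega}(e^{d\mathcal L})_{kj}X_j(t-1)\;\ge\;e^{-m_\infty}\kappa_0\|\bm X(t-1)\|_1.
\end{equation*}
Taking the ratio, the common factor $\|\bm X(t-1)\|_1$ cancels, so $X_i(t)/X_k(t)\le e^{2m_\infty}C_0/\kappa_0$ for all $i,k$. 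Maximizing in $i$ and minimizing in $k$ yields the desired inequality with $c_{d,m_\infty}:=e^{2m_\infty}C_0/\kappa_0$.

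The main obstacle, and the reason the argument must be done on the window $[t-1,t]$ rather than on $[0,t]$, is that a naive global bound of the form $e^{-m_\infty t}e^{td\mathcal L}\bm X(0)\le\bm X(t)\le e^{m_\infty t}e^{td\mathcal L}\bm X(0)$ would introduce a constant growing like $e^{2m_\infty t}$ after applying the entrywise bounds on $e^{td\mathcal L}$, which is not uniform in $t$. Localizing to a unit-length interval decouples the exponential factor from $t$, since the semigroup only needs to be evaluated at time one, and this is exactly where the strong positivity of $e^{d\mathcal L}$ provides the uniform two-sided bound $\kappa_0\le (e^{d\mathcal L})_{ij}\le C_0$ that closes the argument.
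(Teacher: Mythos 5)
Your proof is correct. Note that the paper does not prove this lemma at all: it is imported verbatim from \cite[Lemma 3.1]{DBS2023}, so there is no in-paper argument to compare against. Your argument is the standard one for such discrete Harnack inequalities and it is sound: the two-sided envelope $e^{-m_\infty}e^{d\mathcal L}\bm X(t-1)\le\bm X(t)\le e^{m_\infty}e^{d\mathcal L}\bm X(t-1)$ follows from the variation-of-constants comparison you sketch (the forcing term $-\bm N(s)\circ\bm V(s)$ has a sign because $\bm X\ge\bm 0$ and $\|\bm M\|_\infty\le m_\infty$, and $e^{sd\mathcal L}$ is a positive semigroup), and the strong positivity of $e^{d\mathcal L}$ coming from quasi-positivity plus irreducibility in {\bf (A1)} gives the uniform entrywise bounds $\kappa_0\le(e^{d\mathcal L})_{ij}\le C_0$. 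Your observation that the estimate must be localized to a unit window to keep the constant uniform in $t$ is exactly the right point. The only cosmetic omission is the degenerate case $\|\bm X(t-1)\|_1=0$: there the ratio step is vacuous, but the upper envelope forces $\bm X(t)=\bm 0$, so the claimed inequality holds trivially; a one-line remark to that effect would make the argument airtight.
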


\noindent Let us also  recall the following three lemmas from \cite{SW2024a}.

\begin{lem}\cite[Lemma 1]{SW2024a}\label{lem2} Suppose that  {\bf (A1)} holds. 
    Let $d>0$  and $\bm F :\mathbb{R}_+\to \mathbb{R}^n$ be a continuous map satisfying $\|\bm F(t)\|_1\to 0$ as $t\to\infty$. If  $\bm X(t)$ is a bounded solution of the system
    \begin{equation*}
       \begin{cases}
       \bm X'(t)=d\mathcal{L}\bm X(t) +\bm F(t),\ t>0,\cr 
        \bm X(0)=\bm X^0\in\mathbb{R}^n,
        \end{cases}
    \end{equation*}
    then $\bm X(t)-({\sum_{j\in\Omega} X_j(t)})\bm\alpha\to\bm 0$ as $t\to\infty$. 
    In particular, if $\bm F(t)={\bf 0} $ for all $t\ge 0$,  then $\bm X(t)\to ({\sum_{j\in\Omega}X^0_j})\bm\alpha$ as $t\to\infty$.
\end{lem}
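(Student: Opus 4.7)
The plan is to decompose the solution $\bm X(t)$ into a component along $\bm\alpha$ and a mean-zero complement, and then exploit the exponential decay of the semigroup $e^{td\mathcal{L}}$ on the latter. Since $\bm 1^T\mathcal{L}=\bm 0$ by (A1), summing the components of the ODE yields $S'(t)=\bm 1^T\bm F(t)$, where $S(t):=\sum_{j\in\Omega}X_j(t)$. Setting $\bm Y(t):=\bm X(t)-S(t)\bm\alpha$, and using $\mathcal{L}\bm\alpha=\bm 0$ together with $\bm 1^T\bm\alpha=1$, a direct computation shows
\[
\bm Y'(t)=d\mathcal{L}\bm Y(t)+\tilde{\bm F}(t),\qquad \tilde{\bm F}(t):=\bm F(t)-(\bm 1^T\bm F(t))\bm\alpha,
\]
with $\bm 1^T\bm Y(t)=0$ for all $t\ge 0$ and $\|\tilde{\bm F}(t)\|_1\le 2\|\bm F(t)\|_1\to 0$ as $t\to\infty$. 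Boundedness of $\bm X$ forces boundedness of $S$ and hence of $\bm Y$.

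The next step is to establish exponential decay of the semigroup on the invariant hyperplane $W:=\{\bm v\in\mathbb{R}^n:\bm 1^T\bm v=0\}$. Note $W$ is $\mathcal{L}$-invariant because $\bm 1^T\mathcal{L}=\bm 0$, and $W$ is complementary to the one-dimensional eigenspace $\mathrm{span}(\bm\alpha)$ of the eigenvalue $\sigma_*(\mathcal{L})=0$. Applying Perron--Frobenius to $\mathcal{L}+cI$ for any $c>0$ large enough to make this matrix nonnegative (it is irreducible by (A1)), one concludes that $0$ is a simple eigenvalue of $\mathcal{L}$ and all other eigenvalues have strictly negative real part. Thus $\mathcal{L}|_W$ has spectral abscissa $-\delta$ for some $\delta>0$, and there exists $C>0$ such that $\|e^{td\mathcal{L}}\bm v\|_1\le Ce^{-d\delta t}\|\bm v\|_1$ for every $\bm v\in W$ and $t\ge 0$.

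The proof now concludes by Duhamel's formula,
\[
\bm Y(t)=e^{td\mathcal{L}}\bm Y(0)+\int_0^t e^{(t-s)d\mathcal{L}}\tilde{\bm F}(s)\,ds,
\]
where both $\bm Y(0)$ and each $\tilde{\bm F}(s)$ lie in $W$. The first term is bounded by $C e^{-d\delta t}\|\bm Y(0)\|_1\to 0$. For the integral, split at $t/2$: on $[t/2,t]$ the exponential factor is bounded by $1$ while $\sup_{s\ge t/2}\|\tilde{\bm F}(s)\|_1\to 0$; on $[0,t/2]$ the boundedness of $\tilde{\bm F}$ combined with the factor $e^{-d\delta t/2}$ drives the contribution to $\bm 0$. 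Hence $\bm Y(t)\to\bm 0$, which is the first conclusion. The special case $\bm F\equiv\bm 0$ gives $S(t)\equiv\sum_j X_j^0$, yielding $\bm X(t)\to(\sum_j X_j^0)\bm\alpha$.

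The only substantive technical point is the exponential decay of the restricted semigroup on $W$; this hinges on the simplicity of $\sigma_*(\mathcal{L})=0$, which in turn rests on irreducibility from (A1). Once that spectral fact is in hand, the decomposition and the convolution-splitting estimate are routine.
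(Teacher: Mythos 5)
The paper does not actually prove this lemma; it is imported verbatim from \cite{SW2024a} (``Let us also recall the following three lemmas from \cite{SW2024a}''), so there is no in-paper argument to compare yours against. Judged on its own merits, your proof is the standard and correct route: the identity $\bm 1^T\mathcal{L}=\bm 0$ (which does follow from {\bf (A1)}, since the columns of $\mathcal{L}$ sum to zero) gives $S'(t)=\bm 1^T\bm F(t)$; the decomposition $\bm Y=\bm X-S\bm\alpha$ lands in the invariant hyperplane $W=\{\bm 1^T\bm v=0\}$ with a forcing $\tilde{\bm F}\in W$ of vanishing $\ell^1$-norm; and the Perron--Frobenius argument applied to $\mathcal{L}+cI$ correctly yields that $0$ is a simple eigenvalue of $\mathcal{L}$ with all other eigenvalues in the open left half-plane, hence $\|e^{td\mathcal{L}}\bm v\|_1\le Ce^{-d\delta t}\|\bm v\|_1$ on $W$. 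The ``in particular'' clause follows since $\bm F\equiv\bm 0$ forces $S(t)\equiv\sum_j X_j^0$.

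One step should be tightened: your treatment of the tail $\int_{t/2}^{t}e^{(t-s)d\mathcal{L}}\tilde{\bm F}(s)\,ds$ says the exponential factor is ``bounded by $1$'' while the sup of $\|\tilde{\bm F}\|_1$ over $[t/2,\infty)$ tends to zero. Taken literally, this bounds the tail by $\tfrac{t}{2}\sup_{s\ge t/2}\|\tilde{\bm F}(s)\|_1$, which need not vanish (e.g.\ if $\|\tilde{\bm F}(s)\|_1\sim s^{-1/2}$). The correct bound keeps the kernel and uses its integrability:
\begin{equation*}
\Big\|\int_{t/2}^{t}e^{(t-s)d\mathcal{L}}\tilde{\bm F}(s)\,ds\Big\|_1\le C\sup_{s\ge t/2}\|\tilde{\bm F}(s)\|_1\int_{t/2}^{t}e^{-d\delta(t-s)}\,ds\le \frac{C}{d\delta}\,\sup_{s\ge t/2}\|\tilde{\bm F}(s)\|_1\to 0 .
\end{equation*}
With that one-line correction the argument is complete.
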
  

We give a proof of Proposition \eqref{prop2}.

\begin{proof}[Proof of Proposition \ref{prop2}] Statement  {\rm (i)} is a well known result, see for example \cite[Theorem 2]{DW2002}. The proof of {\rm (ii)} can be found in \cite[Theorem 2.6 and theorem 2.7]{chen2020asymptotic} (see also \cite{gao2021impact}).  Note that $ F$ defined by \eqref{R-0} is strictly increasing in $N>0$, with $F\to {\rm diag}(\bm 0)$ as $N\to 0^+$ and $F\to {\rm diag}(\bm\beta)=\hat{F}$ as $N\to\infty$. Therefore, we have that $\mathcal{R}_0$ is strictly increasing in $N>0$, and \eqref{R-0-limit-in-N} holds. Moreover, since $F$ is continuous in $N>0$, then the existence  of $\mathcal{N}_0(d_I)$ follows by the intermediate value theorem. Hence {\rm (iii)} is proved.  We also note from the implicit function theorem that $\mathcal{N}_0(d_I)$ is continuous in $d_I$.  It remains to show that {\rm (iv)} holds.

\medskip

\noindent To this end, suppose that $\|\bm \beta/\bm \gamma\|_{\infty}>1$. First, note that this implies that $\tilde{\Omega}=\{j\in\Omega : \beta_j>\gamma_i\}$ is not empty. In addition, by the monotonicity of $\hat{\mathcal{R}}_0$ with respect to $d_I$ \cite[Theorem 2.6]{chen2020asymptotic}, there is $d_*\in(0,\infty]$  such that $\hat{\mathcal{R}}_0>1$ if and only if  $0<d_I<d_*$. Note that $d_*$ is independent of $\bm \zeta\gg\bm 0$ since $\hat{\mathcal{R}}_0$ is independent of it.  Hence, by {\rm (iii)}, $\mathcal{N}_0(d_I,\bm \zeta)$ is defined if and only if $0<d_I<d_*$. From here, we complete the proofs of {\rm (iv-1)-(iv-3)}.

\medskip

\noindent{\rm (iv-1)} Fix $\bm \zeta\gg \bm 0$ and suppose that there is $N^*>0$ such that $ (N^*\bm \alpha\circ\bm \beta)/(\bm\zeta+N^*\bm\alpha)=\bm\gamma$. In this case,  taking $N=N^*$ in \eqref{R-0}, we have from \eqref{R-0-limit} that  $\mathcal{R}_0=1$ for all $d_I>0$. Hence $d_*=\infty$ and $\mathcal{N}_0=N^*$ for all $ d_I>0$. 

\medskip

\noindent{\rm (iv-2)} Fix $\bm \zeta\gg \bm 0$ and suppose that for any $N>0$,  $ (N\bm \alpha\circ\bm \beta)/(\bm\zeta+N\bm\alpha)\ne \bm\gamma$. For the sake of clarity, for  every $d_I>0$ and $N>0$,  we set $\mathcal{R}_0=\mathcal{R}_0(d_I,N)$ to emphasize on the dependence of $\mathcal{R}_0$ in \eqref{R-0} with respect to $N>0$ and $d_I$. Fix $0<d_I<\tilde{d}_I<d_*$. We first show that $ (\mathcal{N}_0(d_I,\bm \zeta)\bm \alpha\circ\bm \beta)/(\bm\zeta+\mathcal{N}_0(d_I,\bm \zeta)\bm\alpha)\notin{\rm span}(\bm\gamma)$.  If this was false, there would exist $\tau>0$ such that $ (\mathcal{N}_0(d_I,\bm \zeta)\bm \alpha\circ\bm \beta)/(\bm\zeta+\mathcal{N}_0(d_I,\bm \zeta)\bm\alpha)=\tau\bm\gamma$. This along with \eqref{R-0-limit} implies that   $\mathcal{R}_0(d_I,\mathcal{N}_0(d_I,\bm \zeta))=\tau$. Thus, since $\mathcal{R}_0(d_I,\mathcal{N}_0(d_I,\bm \zeta))=1$, we must have that $\tau=1$, that is  $ (\mathcal{N}_0(d_I,\bm \zeta)\bm \alpha\circ\bm \beta)/(\bm\zeta+\mathcal{N}_0(d_I,\bm \zeta)\bm\alpha)=\bm\gamma$, which is contrary to our initial assumption. Therefore   $ (\mathcal{N}_0(d_I,\bm \zeta)\bm \alpha\circ\bm \beta)/(\bm\zeta+\mathcal{N}_0(d_I,\bm \zeta)\bm\alpha)\notin{\rm span}(\bm\gamma)$. As a result, we can invoke 
Proposition \eqref{prop2}-{\rm (ii)}
to conclude that $$\mathcal{R}_0(\tilde{d}_I,\mathcal{N}_0(d_I,\bm \zeta))<\mathcal{R}_0(d_I,\mathcal{N}_0(d_I,\bm \zeta))=1,$$
since $d_I<\tilde{d}_I$. Therefore, recalling (from {\rm (iii)}) that $\mathcal{R}_0(\tilde{d}_I,N)$ is strictly increasing in $N>0$, and $\mathcal{R}_0(\tilde{d}_I,\mathcal{N}_0(\tilde{d}_I,\bm \zeta))=1$, we must have that  $\mathcal{N}_0(\tilde{d}_I,\bm \zeta)>\mathcal{N}_0(d_I,\bm \zeta)$. This shows that $\mathcal{N}_0$ is strictly increasing in $0<d_I<d_*$. Therefore, $N_*:=\lim_{d_I\to 0^+}\mathcal{N}_0(d_I,\bm \zeta)$ exists in $[0,\infty)$. Recalling that $\mathcal{R}_0(d_I,\mathcal{N}_0(d_I,\bm \zeta))=1$ for all $0<d_I<d_*$, we can use a perturbation arguments and \cite[Theorem 2.7]{chen2020asymptotic} to obtain that 
$$
1=\lim_{d_I\to 0^+}\mathcal{R}_0(d_I,\mathcal{N}_0(d_I,\bm \zeta))=\max_{j\in\Omega}\frac{\beta_jN_*\alpha_j}{\gamma_j(\zeta_j+N_*\alpha_j)}.
$$
Solving for $N_*$ in the last equation, we get $N_*=\min_{j\in\tilde{\Omega}}\frac{\gamma_j\zeta_j}{(\beta_j-\gamma_j)\alpha_j}$ since $\frac{\beta_jN_*\alpha_j}{\gamma_j(\zeta_j+N_*\alpha_j)}<1$ whenever $j\in\Omega\setminus\tilde{\Omega}$.  

\medskip

\noindent{\rm (iv-3)} { Fix $0<d_I<d_*$. Since the diagonal matrix $F$ in \eqref{R-0} is decreasing in $\bm\zeta\gg\bm 0$, then $\mathcal{R}_0$ is strictly decreasing in $\bm \zeta\gg \bm0$. Then, thanks to the properties of $\mathcal{N}_0$ in {\rm (iii)}, we can proceed by a proper modification of the proof of the monotonicity of $\mathcal{N}_0$ in $d_I>0$ to establish   that $\mathcal{N}_0$ is strictly increasing in $\bm \zeta\gg \bm 0$.  Now, for every $\tau>0$ and $\bm\zeta\gg \bm 0$, since
$$
{\rm diag}(\tau\mathcal{N}_0(d_I,\bm\zeta)\bm\alpha\circ \bm\beta/(\tau\bm\zeta+\tau \mathcal{N}_0(d_I,\bm\zeta)\bm\alpha))={\rm diag}(\mathcal{N}_0(d_I,\bm\zeta)\bm\alpha\circ \bm\beta/(\bm\zeta+\mathcal{N}_0(d_I,\bm\zeta)\bm\alpha)),
$$
then 
$$
\rho({\rm diag}(\tau\mathcal{N}_0(d_I,\bm\zeta)\bm\alpha\circ \bm\beta/(\tau\bm\zeta+\tau \mathcal{N}_0(d_I,\bm\zeta)\bm\alpha))V^{-1})=\rho({\rm diag}(\mathcal{N}_0(d_I,\bm\zeta)\bm\alpha\circ \bm\beta/(\bm\zeta+\mathcal{N}_0(d_I,\bm\zeta)\bm\alpha))V^{-1})=1
$$
 which in turn  yields $\mathcal{N}_0(d_I,\tau\bm\zeta)=\tau\mathcal{N}_0(d_I,\bm\zeta)$. Therefore
 $$\bm\zeta_m\mathcal{N}_0(d_I,\bm 1)=\mathcal{N}_0(d_I,\bm\zeta_m\bm 1)\le \mathcal{N}_0(d_I,\bm\zeta)\le \mathcal{N}_0(d_I,\bm\zeta_M\bm 1)= \bm\zeta_M\mathcal{N}_0(d_I,\bm 1).
 $$
}

    
\end{proof}

\section{Proofs of the Main Results}\label{Sec4}

\subsection{Proof of Theorem \ref{TH1}}

\begin{proof}Suppose that {\bf (A1)-(A2)} holds.  Suppose also that $\bm\mu>{\bm 0}.$ Let $(\bm S(t),\bm I(t))$ be a solution of \eqref{model} with a positive initial data satisfying {\bf (A3)}. Recall from \eqref{Eq1:1} that 
\begin{equation}\label{Eq1}
    \frac{d}{dt}\sum_{j\in\Omega}(S_j+I_j)=-\sum_{j\in\Omega}\mu_jI_j\quad t\ge 0.
\end{equation}
Observing that 
$$ 
\sup_{t\ge 1}\|\bm\beta\circ\bm S(t)/(\bm\zeta+\bm S(t)+\bm I(t))-\bm\gamma-\bm\mu\|_{\infty}\leq \|\bm\beta\|_{\infty}+\|\bm\gamma\|_{\infty}+\|\bm\mu\|_{\infty},
$$
it follows from Lemma \ref{Harnck-lemma} that there is $c_{1}=c_1(d_I)$ such that 
\begin{equation}\label{Eq2}
  \|\bm I(t)\|_{\infty}\le c_1\min_{j\in\Omega}I_j(t)\quad \forall\ t\ge 1.  
\end{equation}
Thanks to \eqref{Eq1} and \eqref{Eq2}, 
\begin{equation*}
    \frac{d}{dt}\sum_{j\in\Omega}(S_j+I_j)\le -(\sum_{j\in\Omega}\mu_j)\min_{j\in\Omega}I_j(t)\le -\frac{\|\bm \mu\|_1}{c_1}\|\bm I(t)\|_{\infty}\quad \forall\ t\ge 1.
\end{equation*}
An integration of the last inequality gives
\begin{equation*}
    \sum_{j\in\Omega}(S_j(t)+I_j(t))+\frac{\|\bm \mu\|_1}{c_1}\int_{1}^{t}\|\bm I(s)\|_{\infty}ds\le \sum_{j\in\Omega}(S_j(1)+I_j(1))\leq \|\bm S^0+\bm I^0\|_1\quad \forall\ t\ge 1.
\end{equation*}
As a result, since $\|\bm X\|_{1}\le n\|\bm X\|_{\infty}$ for any $\bm X\in\mathbb{R}^n$, we have 
$$
\int_{1}^{\infty}\sum_{j\in\Omega}I_j(t)dt\le\frac{c_1 n\|\bm S^0+\bm I^0\|_1}{\|\bm \mu\|_1}.
$$
Observing that the mapping $[0,\infty)\ni t\mapsto \sum_{j\in\Omega}I_j(t)$ is Lipschitz continuous, we conclude from the last inequality that $\|\bm I(t)\|_1\to 0$ as $t\to\infty$. This in turn implies that 
$$
\|-\bm\beta\circ\bm I\circ \bm S/(\bm\zeta+\bm S+\bm I)+\bm\gamma\circ\bm I\|_{\infty}\le (\|\bm \beta\|_{\infty}+\|\bm \gamma\|_{\infty})\|\bm I(t)\|_{\infty}\to 0 \quad \text{as}\ t\to\infty.
$$
This along with Lemma \ref{lem2} gives 
\begin{equation}\label{Eq3}
    \lim_{t\to\infty}\|\bm S(t)-(\sum_{j\in\Omega}S_j(t))\bm\alpha\|_{\infty}=0.
\end{equation}
However, by \eqref{Eq1}, we have 
\begin{equation}\label{YU1}
\sum_{j\in\Omega}S_j(t)=\|\bm S^0+\bm I^0\|_1-\int_{0}^t\sum_{j\in\Omega}\mu_jI_j(s)ds -\sum_{j\in\Omega}I_j(t).
\end{equation}
Letting $t\to\infty$, and recalling that $\|\bm I(t)\|_1\to 0$ as $t\to\infty$, then 
\begin{equation}\label{Eq4}
\lim_{t\to\infty}\sum_{j\in\Omega}S_j(t)=\|\bm S^0+\bm I^0\|_1-\int_0^{\infty}\sum_{j\in\Omega}\mu_j I_j(t)dt.
\end{equation}
Therefore, by \eqref{Eq3}, $\bm S(t)\to \big(\|\bm S^0+\bm I^0\|_1-\int_0^{\infty}\sum_{j\in\Omega}\mu_j I_j(t)dt\big)\bm\alpha$ as $t\to\infty$. 

Next, for each $i\in\Omega$, it holds that
\begin{align*}
\frac{dS_i}{dt}=&d_S\sum_{j\in\Omega}L_{ij}S_j+(\gamma_i(\zeta_i+S_i+I_i)-\beta_iS_i)\frac{I_i}{\zeta_i+S_i+I_i}\cr 
\ge & d_S\sum_{j\in\Omega}L_{ij}S_j+(\bm\gamma_m\bm\zeta_m-\|\bm\beta\|_{\infty}\sum_{j\in\Omega}S_j)\frac{I_i}{\zeta_i+S_i+I_i}.
\end{align*}
Thus,
$$
\frac{d}{dt}\sum_{j\in\Omega}S_j\ge (\bm\gamma_m\bm\zeta_m-\|\bm\beta\|_{\infty}\sum_{j\in\Omega}S_j)\sum_{j\in\Omega}\frac{I_j}{\zeta_j+S_j+I_j}\quad t>0.
$$
We can now employ the comparison principle for ODEs to deduce that 
$$
\sum_{j\in\Omega}S_j(t)\ge \min\Big\{\frac{\bm\gamma_m\bm\zeta_m}{\|\bm\beta\|_{\infty}}, \sum_{j\in\Omega}S_j(t_0)\Big\}>0\quad \forall\ t\ge t_0>0.
$$ 
Letting $t\to\infty$ and recalling \eqref{Eq4}, we get 
$$ 
\|\bm S^0+\bm I^0\|_1-\int_0^{\infty}\sum_{j\in\Omega}\mu_j I_j(t)dt=\lim_{t\to\infty}\sum_{j\in\Omega}S_j(t)\ge \min\Big\{\frac{\bm\gamma_m\bm\zeta_m}{\|\bm\beta\|_{\infty}}, \sum_{j\in\Omega}S_j(t_0)\Big\}>0\quad \forall\  t_0>0,
$$
which completes the proof of the theorem.

\end{proof}

{
\noindent For reference, we state the following result on the large-time behavior of solutions of \eqref{model} when $|\Omega|=1$. Suppose that $|\Omega|=1$, hence the dispersal terms cancel out in \eqref{model}. Fix $S^0\ge 0$, $I^0>0$ and set
\begin{equation}\label{YU3}
    S^{*}:=\begin{cases}
    0 & \text{if}\ \beta\ge \mu+\gamma,\cr
        e^{-\frac{\mu I^0}{(S^0+I^0)\gamma}} & \text{if} \ \beta=\mu\cr
        \Big(1-\frac{(\mu-\beta)I^0}{(\mu+\gamma-\beta)(S^0+I^0)}\Big)^{\frac{\mu}{\mu-\beta}} & \text{if} \ \beta\ne\mu \ \text{and}\quad \beta<\mu+\gamma.
    \end{cases}
\end{equation}

\medskip
\begin{tm}\label{TH9} Suppose that $ |\Omega|=1$, $\mu>0$ and $\zeta\ge 0$ in system \eqref{model}.  Let $(S(t),I(t))$ be the solution of \eqref{model} with initial data $(S^0,I^0)$ satisfying {\bf (A3)}. Set $N^0=S^0+I^0>0$ and let $S^*$ be defined by \eqref{YU3}. 
\begin{itemize}
    \item[\rm (i)] If $\zeta>0$, then $(S(t),I(t))\to \Big(S^{(\zeta)},0\Big)$ as $t\to\infty$ for some positive number $S^{(\zeta)}$. Moreover, $S^{(\zeta)}=\frac{\zeta^{\frac{\mu}{\beta}}(N^0+\zeta)}{(I^0+\zeta)^{\frac{\mu}{\beta}}}-\zeta$ if $\beta=\mu+\gamma$, and $S^{(\zeta)}\to 0$ as $\zeta\to0^+$ if $\beta\ge \gamma+\mu$.
    
    \item[\rm(ii)] If $\zeta=0$, then $(S(t),I(t))\to (N^0S^*,0)$ as $t\to\infty$.
\end{itemize}
\end{tm}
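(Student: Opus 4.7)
The plan is to reduce both parts to an explicit analysis of a linear ODE in the $(I,N)$-phase plane, where $N:=S+I$. Theorem \ref{TH1} with $|\Omega|=1$ already gives $I(t)\to 0$ and $S(t)\to S^\infty:=N^0-\mu\int_0^\infty I(s)\,ds\ge 0$; moreover, when $\zeta>0$ the proof of Theorem \ref{TH1} yields the lower bound $S(t)\ge\min\{\gamma\zeta/\beta,\,S(t_0)\}$ for any $t_0>0$, hence $S^\infty=:S^{(\zeta)}>0$, which settles the convergence assertion in (i). It remains to compute $S^\infty$ explicitly. Since $dN/dt=-\mu I<0$ whenever $I>0$, $N$ is strictly decreasing and $I$ may be regarded as a function of $N$; a direct computation using $dI/dt = I[\beta S-(\mu+\gamma)(\zeta+N)]/(\zeta+N)$ and $S=N-I$ yields the linear ODE
\begin{equation*}
\frac{dI}{du}-\frac{\beta}{\mu\,u}\,I=\frac{\mu+\gamma-\beta}{\mu}+\frac{\beta\zeta}{\mu\,u},\qquad u:=\zeta+N.
\end{equation*}
Multiplying by the integrating factor $u^{-\beta/\mu}$ and integrating (with a logarithmic variant when $\beta=\mu$) gives, for $\beta\ne\mu$,
\begin{equation*}
I(u)=c\,u-\zeta+C\,u^{\beta/\mu},\qquad c:=\frac{\mu+\gamma-\beta}{\mu-\beta},
\end{equation*}
where $C$ is fixed by the initial datum $I(\zeta+N^0)=I^0$.

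For part (i), sending $t\to\infty$ so that $u\to u^{(\zeta)}:=\zeta+S^{(\zeta)}$ and $I\to 0$ produces the algebraic relation $c\,u^{(\zeta)}-\zeta+C(u^{(\zeta)})^{\beta/\mu}=0$. When $\beta=\mu+\gamma$ we have $c=0$; the relation collapses to $\zeta=C(u^{(\zeta)})^{\beta/\mu}$ and, upon substituting the explicit $C$, yields the stated closed-form expression for $S^{(\zeta)}$. The second claim—that $S^{(\zeta)}\to 0$ as $\zeta\to 0^+$ when $\beta\ge\mu+\gamma$—is immediate from this formula when $\beta=\mu+\gamma$ since $\zeta^{\mu/\beta}\to 0$. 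For $\beta>\mu+\gamma$, $c\in(0,1)$; let $u_0^\infty$ be any subsequential limit of $u^{(\zeta)}$ as $\zeta\to 0^+$. Either $u_0^\infty=0$, in which case $S^{(\zeta)}\to 0$ along that subsequence, or the limit relation at $u_0^\infty>0$ forces $(u_0^\infty)^{\beta/\mu-1}=c(N^0)^{\beta/\mu}/(cN^0-I^0)$, which requires $cN^0>I^0$ and, since $I^0>0$, yields $u_0^\infty>N^0$, contradicting the mass bound $u^{(\zeta)}\le\zeta+N^0$. Every subsequential limit therefore equals $0$.

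For part (ii), setting $\zeta=0$ reduces the explicit solution to $I(N)=cN+CN^{\beta/\mu}$ (for $\beta\ne\mu$) with $C=(I^0-cN^0)(N^0)^{-\beta/\mu}$; and for $\beta=\mu$ the ODE becomes $dI/dN-I/N=\gamma/\mu$, solved by $I(N)=N\bigl[I^0/N^0+(\gamma/\mu)\ln(N/N^0)\bigr]$. Since $N(t)$ decreases monotonically from $N^0$ and $I(t)\to 0$, $N^\infty$ is the first zero of $I(\cdot)$ encountered along the decreasing trajectory. If $\beta\ge\mu+\gamma$ with $\beta\ne\mu$, the same contradiction used in (i) rules out any positive zero in $[0,N^0)$, giving $N^\infty=0=N^0S^*$. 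If $\beta<\mu+\gamma$ with $\beta\ne\mu$, solving $0=cN^\infty+C(N^\infty)^{\beta/\mu}$ rearranges to $(N^\infty/N^0)^{1-\beta/\mu}=1-(\mu-\beta)I^0/((\mu+\gamma-\beta)N^0)$, matching $N^0S^*$. For $\beta=\mu$, the unique zero in $(0,N^0)$ of the logarithmic expression gives $N^\infty=N^0e^{-\mu I^0/(\gamma N^0)}=N^0S^*$. The main difficulty I anticipate is the root-selection issue: the implicit equation for $u^{(\zeta)}$ or $N^\infty$ may have several nonnegative roots, and one must identify the one actually reached by the physical trajectory. I expect to handle this via the monotonicity of $N(t)$ together with a sign inspection of $C$, rather than through any delicate perturbation estimate.
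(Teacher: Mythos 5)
Your proposal is correct, and it reaches the stated formulas by a genuinely different route than the paper. The paper integrates the identity $\frac{d}{dt}\ln(\zeta+S+I)=\frac{\mu}{\beta}\bigl(\frac{d}{dt}\ln(I+\zeta)-(\beta-\mu-\gamma)\frac{I}{I+\zeta}\bigr)$ in \emph{time}, which leaves a residual integral $\int_0^t\frac{I}{I+\zeta}\,ds$ in the orbit relation; this integral drops out exactly when $\beta=\mu+\gamma$ (giving the closed form for $S^{(\zeta)}$) and is handled by a one-sided bound when $\beta>\mu+\gamma$. For part (ii) the paper then solves a Bernoulli-type ODE for $Z(t)=\frac{I(t)}{I^0}e^{(\mu+\gamma-\beta)t}$ to get the fully explicit time-dependent solution \eqref{Z-formula}. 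You instead eliminate time entirely: using $dN/dt=-\mu I<0$ you obtain a \emph{linear} first-order ODE for $I$ as a function of $u=\zeta+N$ and hence a genuine first integral $I=cu-\zeta+Cu^{\beta/\mu}$ with no leftover quadrature. This buys you an implicit algebraic equation for the limit in \emph{all} parameter regimes (more than the theorem asks for), at the cost of the root-selection issue you flag; that issue does resolve as you anticipate, since in each sign regime of $c$ and $C$ the function $N\mapsto I(N)$ has at most one positive zero below $N^0$ and the trajectory, being confined to $\{I>0\}$ with $N$ strictly decreasing, must terminate at the largest such zero (or at $0$ if none exists). Your subsequence argument for $S^{(\zeta)}\to0$ when $\beta>\mu+\gamma$ is more roundabout than the paper's one-line bound $S^{(\zeta)}\le(N^0+\zeta)\bigl[\zeta/(I^0+\zeta)\bigr]^{\mu/\beta}-\zeta$, but it is valid. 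Two small points to make explicit in a write-up: (a) for $\zeta=0$ you still need $I(t)\to0$, which is outside the standing assumption $\bm\zeta\gg\bm0$ of Theorem \ref{TH1}, but follows immediately for $|\Omega|=1$ from $\frac{d}{dt}(S+I)=-\mu I$ and Lipschitz continuity of $I$ (the paper handles it the same way); (b) the case $\beta=\mu$ never interferes with the explicit formulas you need in part (i), since $\beta\ge\mu+\gamma$ together with $\gamma>0$ forces $\beta>\mu$.
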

\begin{proof} Set $N(t)=\zeta+S(t)+I(t)$ for all $t\ge 0$. Then 
$$
\frac{1}{N}\frac{dN}{dt}=-\mu \frac{I}{N}=
\frac{\mu}{\beta}\Big(\frac{1}{I+\zeta}\frac{dI}{dt}-\frac{(\beta-\mu-\gamma)I}{I+\zeta}\Big)\quad t>0.
$$
    Therefore
    \begin{equation*}
        \ln\Big(\frac{N}{N(0)}\Big)=\frac{\mu}{\beta}\Big(\ln\Big(\frac{I+\zeta}{I^0+\zeta}\Big)-(\beta-\gamma-\mu)\int_0^t\frac{I(s)}{I(s)+\zeta}ds\Big) \quad \forall\ t>0.
    \end{equation*}
    Equivalently,
    \begin{equation}\label{YU2}
        N(t)=(N^0+\zeta)\left[\frac{(I(t)+\zeta)}{(I^0+\zeta)}e^{-(\beta-\gamma-\mu)\int_{0}^t\frac{I(s)}{I(s)+\zeta}ds}\right]^{\frac{\mu}{\beta}}\quad \forall\, t>0.
    \end{equation}
    {\rm (i)} Suppose that $\zeta>0$.  Thanks to \eqref{YU2} and the fact that $I(t)\to 0$ as $t\to\infty$ (by Theorem \ref{TH1}),
    $$
    S(t)=N(t)-I(t)-\zeta\to S^{(\zeta)}:=(N^0+\zeta)\left[\frac{\zeta}{I^0+\zeta}\right]^{\frac{\mu}{\beta}}e^{-\frac{\mu(\beta-\mu-\gamma)}{\beta}\int_0^{\infty}\frac{I(s)}{I(s)+\zeta}ds}-\zeta\quad \text{as}\quad t\to\infty.    $$
    
    \noindent{} If $ \beta=\mu+\gamma$, then $ S^{(\zeta)}=(N^0+\zeta)\left[\frac{\zeta}{I^0+\zeta}\right]^{\frac{\mu}{\beta}}-\zeta $. It is clear that $S^{(\zeta)}\to0$ as $\zeta\to0^+$.
    
    \noindent{} If $\beta>\mu+\gamma$, then $ S^{(\zeta)}\le(N^0+\zeta)\left[\frac{\zeta}{I^0+\zeta}\right]^{\frac{\mu}{\beta}}-\zeta\to 0 $ as $\zeta\to 0^+$.

    \medskip

    \noindent{\rm (ii)} Suppose that $\zeta=0$. We note as in the proof of Theorem \eqref{TH1} that $I(t)\to 0$ as $t\to\infty$ since $\mu>0$.  Next, by \eqref{YU2}, $N(t)= N^0Z^{\frac{\mu}{\beta}}(t)$ for all $t>0$, where $Z(t):=\frac{I(t)}{I^0}e^{(\mu+\gamma-\beta)t}$ for all $t>0$. Now, observe that 
    \begin{align*}
    Z^{\frac{\mu}{\beta}-2}\frac{dZ}{dt}=-\frac{I^0\beta}{N^0}e^{-(\mu+\gamma-\beta)t}\quad t>0,\quad Z(0)=1. 
    \end{align*}
    Using elementary analyses for ODEs, we can solve  $Z(t)$ to obtain
    \begin{equation}\label{Z-formula}
        Z(t)=\begin{cases}
            e^{-\frac{\beta I^0}{\gamma N^0}(1-e^{-\gamma t})} & \ \forall\ t>0\  \text{if} \ \beta=\mu,\cr
            \left[1+\frac{\gamma I^0}{N^0}t\right]^{-\frac{\beta}{\gamma}} &\ \forall\ t>0 \ \text{if}\  \beta=\mu+\gamma,\cr
            \left[1-\frac{(\mu-\beta)I^0}{(\mu+\gamma-\beta)N^0}\Big(1-e^{-(\mu+\gamma-\beta)t}\Big)\right]^{\frac{\beta}{\mu-\beta}} &\ \forall\ t>0\  \text{if}\ \mu\ne \beta \ \text{and}\ \beta\ne\mu+\gamma.
        \end{cases} 
    \end{equation}
    Taking limit as $t\to\infty$, we obtain that
    $$
    Z(t)\to \begin{cases}
        e^{-\frac{\beta I^0}{\gamma N^0}}  &\ \text{if} \ \beta=\mu,\cr 
        \left[1-\frac{(\mu-\beta)I^0}{(\mu+\gamma-\beta)N^0}\right]^{\frac{\beta}{\mu-\beta}} &\   \text{if}\ \mu\ne \beta \ \text{and}\ \beta<\mu+\gamma,\cr
        0 &\ \text{if} \ \beta{\ge} \mu+\gamma
    \end{cases}=[S^{*}]^{\frac{\beta}{\mu}}
    \quad \text{as}\ t\to\infty.
    $$
    Therefore, $S(t)=N(t)-I(t)=N^0Z^{\frac{\mu}{\beta}}(t)-I(t)\to N^0S^*$ as $t\to\infty$.
\end{proof}
\begin{rk} Assume the hypotheses of Theorem \ref{TH9}-{\rm (ii)} hold. 
The work \cite[Theorem A.1]{GLPZ2023} also studied the large time behavior of solutions and established the extinction of the population if $\beta{\ge} \mu+\gamma$, and the persistence of only the susceptible population if $\beta{<}\gamma+\mu$. However, if $\beta{<}\mu+\gamma$, the explicit formula for limit of $(S(t),I(t))$ as $t\to\infty$ was not provided in \cite{GLPZ2023}. 

    
\end{rk}

 }

\subsection{Proof of Theorems \ref{TH2}, \ref{TH10}, \ref{TH3}, and \ref{TH4}}

\begin{proof}[Proof of Theorem \ref{TH2}] Suppose that $\bm\mu=\bm 0$ and $\tilde{\mathcal{R}}_0\le 1$. Let $(\bm S(t),\bm I(t))$ be a solution of \eqref{model} with a positive initial data satisfying {\bf (A3)} which belongs to $\mathcal{E}$.  Since $\mathcal{E}$ is invariant for \eqref{model}, then for every $i\in\Omega$, we have 
$$
\frac{\beta_iS_i}{\zeta_i+S_i+I_i}<\frac{\beta_i\sum_{j\in\Omega}S_j}{\zeta_i+\sum_{j\in\Omega}S_j+I_i}<\frac{N\beta_i}{\zeta_i+N+I_i}\quad \forall\ t>0.
$$
Therefore,
\begin{equation}\label{Eqa2-1}
    \bm I'(t)\le d_I\mathcal{L}\bm I +(N\bm\beta/(\bm\zeta+N{\bm 1}+\bm I)-\bm\gamma)\circ \bm I \quad t>0.
\end{equation}
Let $\tilde{\bm \alpha}$ denote the positive eigenfunction associated with $\tilde{\sigma}_*:=\sigma_*(d_I\mathcal{L}+{\rm diag}(N\bm\beta/(\bm\zeta+N{\rm 1})-\bm\gamma))$ satisfying $(\tilde{\bm \alpha})_m=N$. Since $\sum_{j\in\Omega}(S_j(t)+I_j(t))=N$ for all $t\ge 0$, then $\bm I(0)\le \tilde{\bm \alpha}$. Next,  define 
$$
\eta(t)=(\bm I(t)/\tilde{\bm \alpha})_M\quad \forall\ t\ge 0.
$$
Since, $\bm I(0)\le \tilde{\bm \alpha}$, then $\eta(0)\le 1$. Now claim that 
\begin{equation}\label{Eqa2-2}
 \eta(t+\tau)\le \eta(\tau)\quad t,\tau \ge 0.
\end{equation}
Indeed, fix $\tau>0$ and  note that $\tilde{\bm I}(t):=\eta(\tau)e^{t\tilde{\sigma}_*}\tilde{\bm\alpha} $, $t\ge 0$, satisfies 
\begin{equation*}
    \tilde{\bm I}(t)=d_I\mathcal{L}\tilde{\bm I}+(N\bm\beta/(\bm\zeta+N\bm 1)-\bm \gamma)\circ\tilde{\bm I} \quad t\ge 0.
\end{equation*}
Note also from \eqref{Eqa2-1} that the mapping  $\underline{\bm I}(t):=\bm I(t+\tau)$, $t\ge 0,$ satisfies 
\begin{align*}
    \underline{\bm I}'(t)\le & d_I\mathcal{L}\underline{\bm I}+(N\bm\beta/(\bm\zeta+N\bm 1+\underline{\bm I})-\bm\gamma)\circ\underline{\bm I} 
    \le d_I\mathcal{L}\underline{\bm I}+(N\bm\beta/(\bm\zeta+N\bm 1)-\bm\gamma)\circ\underline{\bm I}\quad t>0.
\end{align*}
Therefore, since $\underline{\bm I}(0)=\bm I(\tau)\le \eta(\tau)\tilde{\bm \alpha}=\tilde{\bm I}(0)$, and $\mathcal{L}$ is quasipositive and irreducible, we can employ the comparison principle for cooperative systems to conclude that $\underline{\bm I}(t)\le \tilde{\bm I}(t)$ for all $t> 0$. Equivalently, $\bm I(t+\tau)\le \eta(\tau)e^{\tilde{\sigma}_*t}\tilde{\bm\alpha}$ for all $t\ge 0$. Therefore, observing that $\tilde{\sigma}_*\le 0$ since $\tilde{\mathcal{R}}_0\le 1$, then $\bm I(t+\tau)\le \eta(\tau)\tilde{\bm\alpha}$ for all $t\ge 0$, that is $\eta(t+
\tau)\le \eta(\tau)$ for all $t\ge 0$. This shows that \eqref{Eqa2-2} holds since $\tau\ge 0$ is arbitrary fixed.

\medskip

\noindent Thanks to \eqref{Eqa2-2}, we have that 
$$
\eta^*:=\inf_{t>0}\eta(t)=\lim_{t\to\infty}\eta(t).
$$
Next, we claim that 
\begin{equation}
\label{Eqa2-3}
    \eta^*=0.
\end{equation}
To establish \eqref{Eqa2-3}, we first note from \eqref{Eq2} that 
$$
\eta^*\le\eta(t)\le \|\bm I\|_{\infty}(\bm 1/\tilde{\bm \alpha})_M\le c_1(\bm 1/\tilde{\bm \alpha})_M\bm I_m:=\tilde{c}_1\bm I_m\quad \forall\ t\ge 1,
$$
where $\tilde{c}_1=c_1(\bm 1/\tilde{\bm \alpha})_M$ and $c_1$ is as in \eqref{Eq2}. This along with \eqref{Eqa2-1} implies that 
\begin{align}\label{Eqd-21}
    \bm I'(t)\le d_I\mathcal{L}\bm I +\Big(N\bm\beta/(\bm\zeta+N{\bm 1}+\frac{\eta^*}{\tilde{c}_1}\bm 1)-\bm\gamma\Big)\circ \bm I \quad t>1.
\end{align}
If \eqref{Eqa2-3} was false, that is $\eta^*>0$, then we would have 
$$\sigma_*(d_I\mathcal{L}+{\rm diag}(N\bm\beta/(\bm\zeta+N{\bm 1}+\frac{\eta^*}{\tilde{c}_1}\bm 1)-\bm\gamma))<\sigma_*(d_I\mathcal{L}+{\rm diag}(N\bm\beta/(\bm\zeta+N{\bm 1})-\bm\gamma))\le0.$$
As a result, it follows from \eqref{Eqd-21} that $\|\bm I(t)\|_{\infty}\to 0$ as $t\to\infty$, which in turn gives
$$
\eta^*\le \eta(t)\le \tilde{c}_1\bm I_{m}\to 0 \quad \text{as} \ t\to\infty.
$$
This contradicts our assumption that $\eta^*>0$. Therefore \eqref{Eqa2-3} must hold.

\medskip

\noindent Finally, since \eqref{Eqa2-3} holds, and recalling from the definition of $\eta(t)$ that 
$$
\bm I(t)\le \eta(t)\tilde{\bm \alpha} \quad \forall\ t\ge 0,
$$
 then 
$\bm I(t)\to \bm 0 $ as $t\to\infty$. This along with Lemma \ref{lem2} and the fact that $\sum_{j\in\Omega}S_j(t)=N-\sum_{j\in\Omega}I_j(t)\to N$ as $t\to\infty$ implies that $\bm S(t)\to N\bm\alpha$ as $t\to\infty$. 
\end{proof}

{\begin{proof}[Proof of Theorem \ref{TH10}]
Suppose that $0<N\le\mathcal{N}^*_{\rm up}$. Let $(\bm S(t),\bm I(t))$ be the  solution of \eqref{model} with initial in $\mathcal{E}$ satisfying {\bf (A3)}. We claim that 
\begin{equation}\label{OP1}
    \lim_{t\to\infty}\|\bm I(t)\|_{\infty}=0.
\end{equation}
To this end, we distinguish two cases.

\noindent {\bf Case 1.} In this case, suppose that $\int_0^{\infty}{(\sum_{j\in\Omega}I_j(t))^2}dt<\infty$. In this case, it is easy  to see that \eqref{OP1} holds.

\noindent{\bf Case 2.} Here, we suppose that $\int_{0}^{\infty}{(\sum_{j\in\Omega}I_j(t))^2}dt=\infty$. First, fix $0<\tau_0<1$ such that $(1-\tau_0)N\bm \alpha\le \bm S(1).$
Next, set 
$$
K_*:=\Big({\bm\beta}/((\bm\zeta+2\mathcal{N}^*_{\rm up}\bm 1)\circ(\bm\zeta+\mathcal{N}^*_{\rm up}\bm 1))\Big)_m>0.
$$
Finally, set  $\nu=\frac{K_*(1-\tau_0)}{\tau_0{c_1^2|\Omega|^2}}$, 
where $c_1$ is given by \eqref{Eq2}, and define  
$$ \underline{\bm S}(t)=(1-\tau_0 e^{-\nu\int_1^t{(\sum_{j\in\Omega}I_j)^2}ds})N\bm\alpha\quad t>1.
$$
Then 
\begin{equation}\label{OP2}
\underline{\bm S}(1)= (1-\tau_0)N\bm\alpha\le \bm S(1) \quad \text{and}\quad \underline{\bm S}(t)\ge \underline{\bm S}(1)\gg \bm 0 \quad \forall\ t>1. 
\end{equation}
By computations, we have 
\begin{align*}
    &\frac{d\underline{\bm S}}{dt}-d_S\mathcal{L}\underline{\bm S}-\bm\beta\circ(\bm r-\underline{\bm S}/(\bm\zeta+\underline{\bm S}+\bm I))\circ\bm I\cr
=&\tau_0\nu Ne^{-\nu\int_1^t{(\sum_{j\in\Omega}I_j(s))^2}ds}{(\sum_{j\in\Omega}I_j(t))^2}\bm\alpha-\bm\beta\circ\Big(\bm r-\underline{\bm S}/(\bm\zeta +\underline{\bm S})+(\underline{\bm S}/(\bm\zeta +\underline{\bm S})-\underline{\bm S}/(\bm\zeta +\underline{\bm S}+\bm I)) \Big)\circ\bm I
\end{align*}
Observe that since $\underline{\bm S}\le \mathcal{N}^*_{\rm up}\bm\alpha$, then 
\begin{equation*}
    \bm r-\underline{\bm S}/(\bm\zeta +\underline{\bm S})\ge \bm r-\mathcal{N}^*_{\rm up}\bm\alpha/(\bm\zeta+\mathcal{N}^*_{\rm up}\bm\alpha)=(\bm r\circ \bm\zeta-\mathcal{N}^*_{\rm up}(1-\bm r)\circ\bm\alpha)/(\bm\zeta +\mathcal{N}^*_{\rm up}\bm\alpha)\ge \bm 0\quad \forall\ t>1.
\end{equation*}
Then, since $\bm I\le N\bm 1\le \mathcal{N}^*_{\rm up}\bm 1$, $\underline{\bm S}(1)\le \underline{\bm S}\le \mathcal{N}^*_{\rm up}\bm\alpha \le\mathcal{N}^*_{\rm up}\bm 1$,  and  \eqref{Eq2} holds, we  have
\begin{align}
    &\frac{d\underline{\bm S}}{dt}-d_S\mathcal{L}\underline{\bm S}-\bm\beta\circ(\bm r-\underline{\bm S}/(\bm\zeta+\underline{\bm S}+\bm I))\circ\bm I\cr
\le &\tau_0\nu N e^{-\nu\int_1^t{(\sum_{j\in\Omega}I_j(s))^2}ds}{(\sum_{j\in\Omega}I_j(t))^2}\bm\alpha-\bm\beta\circ\Big({\bm 1}/(\bm\zeta +\underline{\bm S})-{\bm 1}/(\bm\zeta +\underline{\bm S}+\bm I) \Big)\circ\underline{\bm S}\circ\bm I\cr 
=&\tau_0\nu Ne^{-\nu\int_1^t{(\sum_{j\in\Omega}I_j(s))^2}ds}{(\sum_{j\in\Omega}I_j(t))^2}\bm\alpha-\big(\bm\beta/((\bm\zeta+\underline{\bm S}+\bm I)\circ(\bm\zeta+\underline{\bm S}))\big)\circ\underline{\bm S}\circ\bm I{\circ\bm I}\cr 
\le &\tau_0\nu{c_1^2|\Omega|^2}Ne^{-\nu\int_1^t{(\sum_{j\in\Omega}I_j(s))^2}ds}{(\min_{j\in\Omega}I_j)^2}\bm\alpha-K_*\underline{\bm S}\circ\bm I\circ{\bm I}\cr 
\le& \tau_0\nu{c_1^2|\Omega|^2} Ne^{-\nu\int_1^t{(\sum_{j\in\Omega}I_j(s))^2}ds}\bm I\circ{\bm I}\circ\bm\alpha-K_*\underline{\bm S}\circ\bm I\circ{\bm I}\cr 
=&\Big(\nu \tau_0{c_1^2|\Omega|^2} Ne^{-\nu\int_1^t{(\sum_{j\in\Omega}I_j(s))^2}ds }\bm\alpha-K_*\underline{\bm S}\Big)\circ \bm I\circ{\bm I}\cr 
\le & \Big(\nu \tau_0{c_1^2|\Omega|^2}N\bm\alpha-K_*\underline{\bm S}(1)\Big)\circ \bm I\circ{\bm I} 
= (\nu\tau_0{c_1^2|\Omega|^2}-K_*(1-\tau_0))N\bm\alpha\circ \bm I\circ{\bm I}=\bm 0.
\end{align}
Therefore, by \eqref{OP2} and the comparison principle, we have that $\underline{\bm S}(t)\le \bm S(t)$ for all $t\ge 1$. Hence 
$$
(1-\tau_0e^{-\nu\int_1^t{(\sum_{j\in\Omega}I_j(s))^2}ds})N=\sum_{j\in\Omega}\underline{S}_j(t)\le \sum_{j\in\Omega}S_j(t)\le N\quad \forall\, t>1.
$$
Letting $t\to\infty$ in the last inequality and recalling that $\int_{0}^{\infty}{(\sum_{j\in\Omega}I_j(t))^2}dt=\infty$, we obtain that $\|\bm S(t)\|_1\to N$ as $t\to\infty$, which implies that \eqref{OP1} holds.

From the above two cases, we have that \eqref{OP1} holds. Finally, thanks to \eqref{OP1}, we can proceed as in the proof of Theorem \ref{TH2} to conclude that $\bm S(t)\to N\bm\alpha$ as $t\to\infty$. 
\end{proof}
}

\begin{proof}[Proof of Theorem \ref{TH3}]Suppose that $\bm r\in{\rm span}(\bm 1)$ and $\bm \zeta\in{\rm span}(\bm \alpha)$. Hence, there exist $\tau>0$ and $m>0$ such that
$$
\bm r=\tau\bm 1\quad\text{and}\quad
\bm\zeta=m\bm\alpha.
$$
Hence, by \eqref{R-0-limit}, it holds that
\begin{equation}\label{Eq18}
\mathcal{R}_0=\frac{N}{\tau(m+N)}.
\end{equation}
Now, for each $i\in\Omega$, we have
\begin{align}\label{Eq5}
\frac{dS_i}{dt}=
d_S\sum_{j\in\Omega}L_{ij}S_j+\beta_i(\tau m\alpha_i+\tau I_i-(1-\tau)S_i)\frac{I_i}{\zeta_i+S_i+I_i},
\end{align}
and 
\begin{align}\label{Eq6}
\frac{dI_i}{dt}
= d_I\sum_{j\in\Omega}L_{ij}I_j+\beta_i((1-\tau)S_i-\tau m\alpha_i-\tau I_i)\frac{I_i}{\zeta_i+S_i+I_i}.
\end{align}

\noindent{\bf Case 1.} First, we suppose that $\tau\ge 1 $ and show that $ \bm I(t)\to\bm0$ and $\bm S(t)\to N\bm\alpha$ as $t\to\infty$.  Then, by \eqref{Eq6}, for each $i\in\Omega$,
\begin{align*}
    \frac{d I_i}{dt}
    \le & d_I\sum_{j\in\Omega}L_{ij}I_j-\beta_i((\tau-1)S_i+\tau m\alpha_i+\tau I_i)\frac{I_i}{\sum_{k\in\Omega}\zeta_k+\sum_{k\in\Omega}(S_k+I_k)}\cr
    \le & d_I\sum_{j\in\Omega}L_{ij}I_j-\frac{\tau m\alpha_i\beta_i}{\|\bm\zeta\|_1+N}I_i
    \le  d_I\sum_{j\in\Omega}L_{ij}I_j-\frac{\tau m\bm\alpha_m\bm\beta_m}{\|\bm\zeta\|_1+N}I_i.
\end{align*}
Hence 
$$
\bm I'\le d_I\mathcal{L}\bm I-\frac{\tau m\bm\alpha_m\bm\beta_m}{\|\bm\zeta\|_1+N}\bm I\quad\quad  t>0,
$$
which thanks to the comparison principle for cooperative systems yields 
$$
\bm I(t)\le e^{-\frac{\tau m\bm\alpha_m\bm\beta_m}{\|\bm\zeta\|_1+N}t}e^{td_I\mathcal{L}}\bm I(0)\quad t\ge 0.
$$
Therefore, 
$$
\|\bm I(t)\|_{\infty}\le e^{-{\frac{\tau m\bm\alpha_m\bm\beta_m}{\|\bm\zeta\|_1+N}t}}\|e^{td_I\mathcal{L}}\bm I(0)\|_{\infty}\le e^{-\frac{\tau m\bm\alpha_m\bm\beta_m}{\|\bm\zeta\|_1+N}t}\|\bm I(0)/\bm\alpha\|_{\infty}\|\bm\alpha\|_{\infty}\to 0 \quad \text{as}\ t\to\infty.
$$
This along with Lemma \ref{lem2} and the fact that $\sum_{j\in\Omega}S_j=N-\sum_{j\in\Omega}I_i\to N$ as $t\to\infty$ imply that $\bm S(t)\to N\bm\alpha$ as $t\to\infty$.

\medskip

\noindent{\bf Case 2.} Next, suppose that $ 0<\tau< 1$. 
We introduce the change of variables 
$$
\bm Z(t)=(1-\tau)\bm S(t), \quad \bm V=\tau \bm \zeta +\tau \bm I(t) \quad \text{and}\quad \bm F(t)=\bm\beta\circ\bm I(t)/(\bm \zeta +\bm S(t)+\bm I(t))\quad \forall\ t\ge 0.
$$
Hence, thanks to \eqref{Eq5} and \eqref{Eq6} and the fact that $\mathcal{L}\bm\zeta =0$, we have that 
\begin{equation}\label{Eq7}
    \begin{cases}
    \bm Z'(t)=d_S\mathcal{L}\bm Z +(1-\tau)(\bm V(t)-\bm Z(t))\circ \bm F(t)\quad t>0,\cr 
    \bm V'(t)=d_I\mathcal{L}\bm V+\tau(\bm Z(t)-\bm V(t))\circ\bm F(t)\quad t>0.
    \end{cases}
\end{equation}

\noindent Treating $\bm F(t)\gg 0$, for all $t>0$, as given in \eqref{Eq7}, then system \eqref{Eq7} is a cooperative system. Next, define 
\begin{equation}\label{Eq10-1}
c_*(t)=\min\Big\{(\bm Z(t)/\bm \alpha)_m, (\bm V(t)/\bm \alpha)_m\Big\}\quad \text{and}\quad c^*(t)=\max\Big\{(\bm Z(t)/\bm \alpha)_M, (\bm V(t)/\bm \alpha)_M\Big\}\quad \forall\ t\ge 0.
\end{equation}
Fix $t_0> 0$. Observe that $({\bm Z}_{*,t_0}(t),\bm V_{*,t_0}(t)):=(c_*(t_0)\bm\alpha,c_*(t_0)\bm\alpha)$,  $t\ge t_0$,   solves \eqref{Eq7} on $(t_0,\infty)$ and $({\bm Z}_{*,t_0}(t),\bm V_{*,t_0}(t))\le (\bm Z(t_0),\bm V(t_0))$. Hence, by the comparison principle for cooperative systems, we have that 
$$
({\bm Z}_{*,t_0}(t),\bm V_{*,t_0}(t))\le (\bm Z(t),\bm V(t)) \quad \forall\ t\ge t_0.
$$
Thus 
\begin{equation}\label{Eq8}
    c_*(t_0)\le c_*(t)\quad  \forall\ t\ge t_0>0. 
\end{equation}
Similarly, observe that $({\bm Z}^*_{t_0}(t),\bm V^*_{t_0}(t)):=(c^*(t_0)\bm\alpha,c^*(t_0)\bm\alpha)$,  $t\ge t_0$ ,  solves \eqref{Eq7} on $(t_0,\infty)$ and also $({\bm Z}^*_{t_0}(t),\bm V^*_{t_0}(t))\ge (\bm Z(t_0),\bm V(t_0))$. Hence, by the comparison principle for cooperative systems, we have that 
$$
({\bm Z}^*_{t_0}(t),\bm V^*_{t_0}(t))\ge (\bm Z(t),\bm V(t)) \quad \forall\ t> t_0.
$$
Thus 
\begin{equation}\label{Eq9}
    c^*(t_0)\ge c^*(t)\quad  \forall\ t\ge t_0>0. 
\end{equation} 
Since $t_0> 0$ was arbitrary fixed, it follows from \eqref{Eq8}-\eqref{Eq9} that $c_*(t)$ and $c^*(t)$ are monotone nondecreasing and nonincreasing in $t>0$, respectively. Hence, 
\begin{equation}\label{Eq10}
    \tilde{c}_{*}:=\sup_{t\ge 0}c_*(t)=\lim_{t\to\infty} c_*(t)\quad \text{and}\quad \tilde{c}^*=\inf_{t>0}c^*(t)=\lim_{t\to\infty}c^*(t).
\end{equation}
From this point, we distinguish two subcases. 

\medskip

\noindent{\bf Sub-case 1.} Here, we suppose that $ (1-\tau)N\le \tau m$ and show that $\bm I(t)\to \bm 0$ as $t\to\infty$.  Suppose to the contrary that the latter conclusion is false. Hence, thanks to \eqref{Eq2}, there is a sequence $\{t_n\}_{n\ge 1}$ converging to infinity such that 
\begin{equation}\label{Eq11}
\inf_{n\ge 1}\bm I_m(t_n)>0.
\end{equation}
Since $\mathcal{E}$ is invariant for system \eqref{model}, then $\sup_{t>0}\max_{i\in\Omega}(|\frac{d S_i(t)}{dt}|+|\frac{d I_i(t)}{dt}|)<\infty$. Therefore, by the Arzela-Ascoli theorem, possibly after passing to a subsequence, we may suppose that there is $(\bm S^{\infty}(t),\bm I^{\infty}(t))$, of class $C^1$, such that 
$$
(\bm S(t+t_n),\bm I(t+t_n))\to (\bm S^{\infty}(t),\bm I^{\infty}(t)) \quad \text{as} \ n\to \infty,
$$
locally uniformly in $t\in \mathbb{R}$. Moreover,  $(\bm S^{\infty}(t),\bm I^{\infty}(t))$ is an entire solution of \eqref{model}. As a result, $(\bm Z(t+t_n),\bm V(t+t_n))\to ((1-\tau)\bm S^{\infty}(t),\tau\bm\zeta+\tau\bm I^{\infty}(t) ):=(\bm Z^{\infty}(t),\bm V^{\infty}(t))$ and $\bm F(t+t_n)\to \bm\beta\circ \bm I^{\infty}(t)/(\bm \zeta +\bm S^{\infty}(t)+\bm I^{\infty}(t)):=\bm F^{\infty}(t)$ as $n\to\infty$, locally uniformly on $\mathbb{R}$. Furthermore,

\begin{equation}\label{Eq12}
    \begin{cases}
       \frac{d\bm Z^{\infty}}{dt}=d_S\mathcal{L}\bm Z^{\infty}+(1-\tau)(\bm V^{\infty}-\bm Z^{\infty})\circ\bm F^{\infty}(t) & t\in\mathbb{R},\cr 
       \frac{d\bm V^{\infty}}{dt}=d_I\mathcal{L}\bm V^{\infty}+\tau(\bm Z^{\infty}-\bm V^{\infty})\circ\bm F^{\infty}(t) & t\in\mathbb{R}.
    \end{cases}
\end{equation}
By \eqref{Eq11} and the fact that $\bm I(t_n)\to \bm I^{\infty}(0)$ as $n\to\infty$, then $\bm I^{\infty}(0)\gg\bm0$. This along with the fact that $(\bm S^{\infty}(t),\bm I^{\infty}(t))$ is an entire solution of \eqref{model} and strict positivity of the $\{e^{t\mathcal{L}}\}_{t>0}$ that ${\bm I}^{\infty}(t)\gg\bm 0$ for all $t\in\mathbb{R}$, and hence $\bm F^{\infty}(t)\gg\bm 0$ for all $t\in\mathbb{R}$. Next, observe that from \eqref{Eq10-1} and \eqref{Eq10}
\begin{equation}\label{Eq14-1}
\tilde{c}_*=\lim_{n\to\infty}c_*(t+n)=\min\Big\{(\bm Z^{\infty}(t)/\bm\alpha)_m, (\bm V^{\infty}(t)/\bm\alpha)_m\Big\}\quad \forall\ t\in\mathbb{R}
\end{equation}
and 
\begin{equation}\label{Eq14-2}
\tilde{c}^*=\lim_{n\to\infty}c^*(t+n)={\max}\Big\{(\bm Z^{\infty}(t)/\bm\alpha)_{M}, (\bm V^{\infty}(t)/\bm\alpha)_{M}\Big\}\quad \forall\ t\in\mathbb{R}.
\end{equation}
Hence,
\begin{equation*}
    (\tilde{c}_*\bm \alpha,\tilde{c}_*\bm \alpha)\le (\bm Z^{\infty}(t),\bm V^{\infty}(t))\le (\tilde{c}^*\bm \alpha,\tilde{c}^*\bm\alpha)\quad \forall\ t\in\mathbb{R}.
\end{equation*}
Now, we claim that 
\begin{equation}\label{Eq15}
    (\tilde{c}_*\bm \alpha,\tilde{c}_*\bm \alpha)= (\bm Z^{\infty}(t),\bm V^{\infty}(t))\quad \forall\ t\in\mathbb{R}.
\end{equation}
Suppose to the contrary that \eqref{Eq15} is false. Thus there is $t_0\in\mathbb{R}$ such that $(\tilde{c}_*\bm \alpha,\tilde{c}_*\bm \alpha)< (\bm Z^{\infty}(t_0),\bm V^{\infty}(t_0))$.  Set $\tilde{\bm Z}=e^{Mt}(\bm Z^{\infty}(t)-\tilde{c}_*\bm\alpha)$ and $\tilde{\bm V}=e^{Mt}(\bm V^{\infty}(t)-\tilde{c}_*\bm\alpha) $ for $t>t_0$, where $M> \sup_{t\in\mathbb{R}}\|\bm F{^{\infty}}(t)\|_{\infty}$ is fixed.  Observing that 
$$
\begin{cases}
    \frac{d\tilde{\bm Z}}{dt}\ge d_S\mathcal{L}\tilde{\bm Z} +{(1-\tau)}\bm F{^{\infty}}(t)\circ \tilde{\bm V} & t\in\mathbb{R}, \cr
    \frac{d\tilde{\bm V}}{dt}\ge d_I\mathcal{L}\tilde{\bm V} +{\tau}\bm F{^{\infty}}(t)\circ \tilde{\bm Z} & t\in\mathbb{R},
\end{cases}
$$
and recalling that $\mathcal{L}$ generates a strongly positive matrix semigroup,  then 
$$
\tilde{\bm Z}(t)\ge e^{(t-t_0)d_S\mathcal{L}}\tilde{\bm Z}(t_0)+(1-\tau)\int_{t_0}^te^{(t-s)d_S\mathcal{L}}\bm F^{\infty}(s)\circ \tilde{\bm V}(s)ds\quad \forall\ t>t_0$$   
{and} 
$$\tilde{\bm V}(t)\ge e^{(t-t_0)d_S\mathcal{L}}\tilde{\bm V}(t_0)+\tau\int_{t_0}^te^{(t-s)d_S\mathcal{L}}\bm F^{\infty}(s)\circ \tilde{\bm Z}(s)ds \quad \forall\ t>t_0.
$$
Therefore, since  $\bm F^{\infty}(t)\gg \bm 0$, $\tilde{\bm Z}(t)\ge \bm 0$, and $\tilde{\bm V}(t)\ge \bm 0$  for all $t\in\mathbb{R}$, $\mathcal{L}$ generates a strongly positive matrix semigroup, and $(\bm 0,\bm 0)<(\tilde{\bm Z}(t_0),\tilde{\bm V}(t_0))$, we conclude that  $(\bm 0,\bm 0)\ll (\tilde{\bm Z}(t),\tilde{\bm V}(t))$ for all $t>t_0$.
This in turn implies that
$$
\tilde{c}_*< \min\Big\{(\bm Z^{\infty}(t)/\bm\alpha)_m, (\bm V^{\infty}(t)/\bm\alpha)_m\Big\}\quad \forall\ t>t_0,
$$
which contradicts with \eqref{Eq14-1}. Therefore, \eqref{Eq15} holds. Now, by \eqref{Eq15} and \eqref{Eq14-2}, we have that 
\begin{equation}\label{Eq16}
    (\tilde{c}^*\bm \alpha,\tilde{c}^*\bm \alpha)= (\bm Z^{\infty}(t),\bm V^{\infty}(t))\quad \forall\ t\in\mathbb{R}.
\end{equation}
Therefore, 
$$((1-\tau)\bm S^{\infty}(t), \tau \bm\zeta +\tau\bm I^{\infty}(t))=(\bm Z^{\infty}(t),\bm V^{\infty}(t))=  (\tilde{c}^*\bm \alpha,\tilde{c}^*\bm \alpha)= (\tilde{c}_*\bm \alpha,\tilde{c}_*\bm \alpha)\quad \forall\ t\in\mathbb{R},$$
or equivalently,
\begin{equation}\label{Eq17}
(\bm S^{\infty}(t),\bm I^{\infty}(t))=\Big(\frac{\tilde{c}^*}{1-\tau}\bm \alpha, \big(\frac{\tilde{c}^*}{\tau}-m\big)\bm \alpha\Big)\quad \forall\ t\in\mathbb{R},
\end{equation}
 where we  used the fact that $\bm\zeta=m\bm\alpha$.   Recalling that $\bm I^{\infty}(t)\gg\bm 0$ for all $t\in\mathbb{R}$, it follows from \eqref{Eq17} that 
 \begin{equation*}
     \frac{\tilde{c}^*}{\tau}-m>0.
 \end{equation*}
Thanks to \eqref{Eq17} again, and  recalling that $(\bm S^{\infty}(t),\bm I^{\infty}(t))\in\mathcal{E}$ for all $t\in\mathbb{R}$, and $\sum_{j\in\Omega}\alpha_j=1$, then 
$$
N=\frac{\tilde{c}^*}{1-\tau}+\frac{\tilde{c}^*}{\tau}-m=\frac{\tilde{c}^*}{(1-\tau)\tau}-m,
$$
from which it follows that
\begin{equation}\label{Eq20}
\tau m-(1-\tau)N=\tau m-\frac{\tilde{c}^*}{\tau}+(1-\tau)m=-\Big(\frac{\tilde{c}^*}{\tau}-m\Big).
\end{equation}
We then conclude from \eqref{Eq17} that $\tau m<(1-\tau)N$, which is contrary to our initial assumption. Therefore, it holds that $\bm I(t)\to \bm 0$ as $t\to\infty$.  This along with Lemma \ref{lem2} implies that $\bm S(t)\to N\bm\alpha$ as $t\to\infty$.

\medskip

\noindent{\bf Sub-case 2.} Next, we suppose that $(1-\tau)N>\tau m$ and show that $\bm S\to (N-h)\bm \alpha$ and $\bm I\to h\bm\alpha$ as $t\to\infty$ where $h=(1-\tau)N-\tau m$.  To this end, we first show that 
\begin{equation}\label{Eq19}
    \liminf_{t\to\infty}{\bm I_{m}(t)}>0.
\end{equation}
Since $(1-\tau)N>\tau m$, it follows from \eqref{Eq18} that $\mathcal{R}_0>1$. Thus, \eqref{Eq19} follows from \eqref{TH1-2-eq1}.  Next, let $\{t_n\}_{n\ge 1}$ be any sequence converging to infinity. By the similar arguments as in sub-case 1, possibly after passing to a subsequence, there is an entire solution $(\bm S^{\infty}(t),\bm I^{\infty}(t))$ of \eqref{model} such that $(\bm S(t+t_n),\bm I(t+t_n))\to (\bm S^{\infty}(t),\bm I^{\infty}(t) )$ as $n\to\infty$, for $t$ locally uniformly in $\mathbb{R}$. Furthermore, $(\bm Z(t+t_n),\bm V(t+t_n))\to ((1-\tau)\bm S^{\infty}(t),\tau\bm\zeta+\tau\bm I^{\infty}(t) ):=(\bm Z^{\infty}(t),\bm V^{\infty}(t))$ and $\bm F(t+t_n)\to \bm\beta\circ \bm I^{\infty}(t)/(\bm \zeta +\bm S^{\infty}(t)+\bm I^{\infty}(t)):=\bm F^{\infty}(t)$ as $n\to\infty$, locally uniformly on $\mathbb{R}$, and $(\bm Z^{\infty}(t),\bm V^{\infty}(t))$ is an entire solution of \eqref{Eq12}. Moreover, since \eqref{Eq19} holds, then $\inf_{t\in\mathbb{R}}\min_{i\in\Omega}{F^{\infty}_i(t)}>0$. We can conclude that \eqref{Eq15} and \eqref{Eq16} hold, from which we derive that \eqref{Eq17} also holds. Note from \eqref{Eq20} that 
$$
\frac{\tilde{c}^*}{\tau}-m=(1-\tau)N-\tau m=:h\quad \text{and}\quad  \frac{\tilde{c}^*}{1-\tau}=\tau(N+m)=N-h.
$$
Hence, by \eqref{Eq17}, $(\bm S^{\infty}(t),\bm I^{\infty}(t))=((N-h)\bm\alpha, h\bm \alpha)$ for all $t\in\mathbb{R}$. Noting that $((N-h)\bm\alpha, h\bm \alpha) $ is independent of the sequence $\{t_n\}_{n\ge 1}$, which was arbitrary chosen, we conclude that $(\bm S(t),\bm I(t))\to ((N-h)\bm\alpha, h\bm \alpha)$ as $t\to\infty$. This completes the proof of sub-case 2.

\medskip

Now thanks to \eqref{Eq18}, we see that if $\mathcal{R}_0\le 1$, then either case 1 or sub-case 1 holds, and thus $(\bm S(t),\bm I(t))\to (N\bm\alpha,\bm 0)$ as $t\to\infty$. So {\rm (i)} is proved.  Note again from \eqref{Eq18} that if $\mathcal{R}_0>1$, then sub-case 2 holds and then $(\bm S(t),\bm I(t))\to ((N-h)\bm \alpha, h\bm \alpha)$ as $t\to\infty$, where $h=(1-\tau)N-\tau m$. In the later case we have that $ ((N-h)\bm \alpha, h\bm \alpha)$ is the unique EE solution of \eqref{model}, which completes the proof of {\rm (ii)}.
    
\end{proof}

\begin{proof}[Proof of Theorem \ref{TH4}] Suppose that $d:=d_S=d_I$. Set $\bm Q:=\bm S+\bm I$. Summing up the two equations of \eqref{model}, we get 
\begin{equation*}
    \bm Q'(t)=d\mathcal{L}\bm Q(t)\quad t>0.
\end{equation*}
Hence, since $\sum_{j\in\Omega}Q_j(t)=\sum_{j\in\Omega}(S_j(t)+I_j(t))=N$ for all $t\ge 0$, we conclude from Lemma \ref{lem2} that 
\begin{equation}\label{Eqa2-4}
    \lim_{t\to\infty}\bm Q(t)=N\bm\alpha.
\end{equation}
Observe that 
\begin{equation}\label{Eqa2-6}
    \bm I'(t)=d\mathcal{L}\bm I(t)+(\bm \beta\circ(\bm Q(t)-\bm I )/(\bm \zeta +\bm Q(t)) -\bm \gamma)\circ \bm I\quad t>0.
\end{equation}
Next, for every $|\varepsilon|\ll N$, let $\hat{\bm I}^{(\varepsilon)}$ denote the unique nonnegative stable solution of 
\begin{equation}\label{Eqa2-7}
0=d\mathcal{L}\hat{\bm I}+(\bm\beta\circ((N+\varepsilon)\bm\alpha -\hat{\bm I})/(\bm\zeta+(N+\varepsilon)\bm\alpha)-\bm \gamma)\circ\hat{\bm I}. 
\end{equation}
Note that $\hat{\bm I}^{(\varepsilon)}$ is nondecreasing and continuous in $|\varepsilon|\ll N$.  Thanks to \eqref{Eqa2-4} and \eqref{Eqa2-6} and the comparison principle for cooperative systems, we have that 
$$
\hat{\bm I}^{(-\varepsilon)}\le \liminf_{t\to\infty}\bm I(t)\le \limsup_{t\to\infty}\bm I(t)\le \hat{\bm I}^{(\varepsilon)}\quad \forall\ 0<\varepsilon\ll N.
$$
Letting $\varepsilon\to 0^+$, then $\bm I(t)\to \hat{\bm I}^{(0)}$ as $t\to\infty$. Now, if $\mathcal{R}_0\le 1$, then $\sigma_*(d_I\mathcal{L}+{\rm diag}(N\bm\beta\circ\bm\alpha/(\bm\zeta+N\bm\alpha)-\bm \gamma))\le 0$, in which case $\hat{\bm I}=\bm 0$ is the unique nonnegative solution of \eqref{Eqa2-7} for $\varepsilon=0$. Thus $\hat{\bm I}^{(0)}=\bm 0$ and $\|\bm I(t)\|_{\infty}\to0$ as $t\to\infty$ when $\mathcal{R}_0\le 1$. In this case, it follows from \eqref{Eqa2-4} that $\bm S(t)\to N\bm \alpha$ as $t\to\infty$, which proves  {\rm(i)}.

\medskip

\noindent Next, suppose that $\mathcal{R}_0>1$ and hence $\sigma_*(d_I\mathcal{L}+{\rm diag}(N\bm\beta\circ\bm\alpha/(\bm\zeta+N\bm\alpha)-\bm r))>0$. It follows from classical results on logistic-type reaction equations that system \eqref{Eqa2-7} has a unique positive solution. Note also from Theorem \ref{TH1-2}-{\rm (ii)} that  $\hat{\bm I}^{(0)}\gg \bm 0 $  since $\mathcal{R}_0>1$, $\bm  I (t)\to \hat{\bm I}^{(0)} $ and \eqref{TH1-2-eq1} holds.  Therefore, $\bm I(t)$ converges to the unique positive solution $\hat{\bm I}^{(0)}$ of \eqref{Eqa2-7} with $\varepsilon=0$ as $t\to\infty$. In this case, thanks again to \eqref{Eqa2-4}, we have that $(\bm S(t),\bm I(t))\to (N\bm\alpha-\hat{\bm I}^{(0)},\hat{\bm I}^{(0)})$ as $t\to\infty$, where $\hat{\bm I}^{(0)}$ is the unique positive solution of \eqref{Eqa2-7}. Since $(N\bm\alpha-\hat{\bm I}^{(0)},\hat{\bm I}^{(0)})$ is independent of the initial data, then it is the unique EE solution of \eqref{model}. 
    
\end{proof}

\subsection{Proof of Theorems \ref{TH5} and \ref{TH6}}
For every $l>0$, consider the system of algebraic equations in $\bm P\ge \bm 0$:

\begin{equation}\label{transform equation}
    0=d_I\mathcal{L}\bm P +\big(l\bm\beta\circ(\bm \alpha-d_I\bm P)/(\bm\zeta +l(\bm\alpha-d_I\bm P)+d_Sl\bm P) -\bm\gamma\big)\circ\bm P.
\end{equation}

\begin{lem} \label{lem3} Fix $d_I>0$ and $d_S>0$.  Let $\hat{\mathcal{R}}_0$ be defined by \eqref{R-hat-0}.
    \begin{itemize}
    \item[\rm (i)] If $\hat{\mathcal{R}}_0\le 1$, then system \eqref{transform equation} has no positive solution for any $l>0$.
        \item[\rm (ii)] Suppose that $\hat{\mathcal{R}}_0>1$ and let  $\mathcal{N}_0=\mathcal{N}_0(d_I,\bm\zeta)$ be as in Proposition \ref{prop2}-{\rm (iii)}.
        \begin{itemize}
            \item[\rm (ii-1)] System \eqref{transform equation} has no positive solution if $l\le \mathcal{N}_0$.
            \item[\rm (ii-2)] If $l>\mathcal{N}_0$, then system \eqref{transform equation} has exactly one positive solution ${\bm P}^{(l)}\gg\bm 0$. Furthermore, the function $(\mathcal{N}_0,\infty)\ni l\mapsto {\bm P}^{(l)}$ is analytic, strictly increasing, 
            \begin{equation}\label{Eqb-1}
            \bm 0\ll d_I{\bm P}^{(l)}\ll\bm \alpha, \quad    \lim_{l\to \mathcal{N}_0^+}{\bm P}^{(l)}=\bm 0\quad \text{and}\quad \lim_{l\to\infty}{\bm P}^{(l)}=\bm P^*,
            \end{equation}
            where $\bm0\ll \bm P^*\ll \frac{1}{d_{I}}\bm \alpha$ is the unique  positive solution of 
            \begin{equation}\label{Eqb-2}
                0=d_I\mathcal{L}\bm P+\big(\bm\beta\circ(\bm\alpha-d_I\bm P)/(\bm\alpha-d_I\bm P+d_S\bm P)-\bm\gamma\big)\circ\bm P
            \end{equation}
            \item[\rm (ii-3)] Let  $\bm\eta\gg 0$ satisfying $\|\bm \eta\|_1=1$ be the positive eigenvector associated with $\sigma_*\big(d_I\mathcal{L}+{\rm diag}(\big(\mathcal{N}_0\bm\beta\circ\bm\alpha/(\bm \zeta+\mathcal{N}_0\bm\alpha )-\bm\gamma\big))\big)$. Let also $\bm\eta^*\gg 0$ satisfying $\|\bm \eta^*\|_1=1$ be the positive eigenvector associated with $\sigma_*\big(d_I\mathcal{L}^T+{\rm diag}(\big(\mathcal{N}_0\bm\beta\circ\bm\alpha/(\bm \zeta+\mathcal{N}_0\bm\alpha )-\bm\gamma\big))\big)$. Then
            \begin{equation}\label{Eqb-3}
                \lim_{l\to \mathcal{N}_0^+}\frac{{\bm P}^{(l)}}{l-\mathcal{N}_0}=\lim_{l\to \mathcal{N}_0^+}\frac{d{\bm P}^{(l)}}{dl}=\left(\frac{\sum_{j\in\Omega}\frac{\beta_j\alpha_j\zeta_j\eta_j\eta^*_j}{(\zeta_j+\mathcal{N}_0\alpha_j)^2}}{\sum_{j\in\Omega}\frac{\mathcal{N}_0\eta_j^2\beta_j(d_I\zeta_j+\mathcal{N}_0d_S\alpha_j){\eta^*_j}}{(\zeta_j+\mathcal{N}_0\alpha_j)^2}}\right)\bm\eta.
            \end{equation}
        \end{itemize}
    \end{itemize}
\end{lem}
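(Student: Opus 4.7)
The plan is to treat \eqref{transform equation} as a nonlinear eigenvalue problem on the cone $\{\bm P:\bm 0\le d_I\bm P\le\bm\alpha\}$, exploiting two scalar monotonicity properties of the diagonal nonlinearity
\[
M_i(P_i,l):=\frac{l\beta_i(\alpha_i-d_IP_i)}{\zeta_i+l(\alpha_i-d_IP_i)+d_SlP_i}-\gamma_i.
\]
A direct quotient-rule computation yields $\partial_{P_i}M_i=-l\beta_i(d_I\zeta_i+ld_S\alpha_i)/(\zeta_i+l\alpha_i+l(d_S-d_I)P_i)^2<0$ and, because $\zeta_i>0$, $\partial_lM_i=\beta_i(\alpha_i-d_IP_i)\zeta_i/(\zeta_i+l(\alpha_i-d_IP_i)+d_SlP_i)^2>0$. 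Since each $M_i$ depends only on $P_i$ and $\mathcal{L}$ is quasipositive and irreducible, \eqref{transform equation} is a cooperative system with strictly decreasing scalar nonlinearity, amenable to Perron--Frobenius and monotone-dynamics techniques.

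For {\rm (i)} and {\rm (ii-1)}, suppose $\bm P\gg\bm 0$ with $d_I\bm P\ll\bm\alpha$ solves \eqref{transform equation}. Then $\bm P$ is the Perron eigenvector of the irreducible quasipositive matrix $d_I\mathcal{L}+{\rm diag}(M(\bm P,l))$ at eigenvalue $0$. The strict scalar monotonicity $M(\bm P,l)\ll M(\bm 0,l)$, combined with strict monotonicity of the spectral bound for irreducible matrices, yields $\sigma_*(d_I\mathcal{L}+{\rm diag}(M(\bm 0,l)))>0$, which by \eqref{R-0} and Proposition \ref{prop2}-{\rm (i)} is exactly the statement that $\mathcal{R}_0$ at population size $l$ exceeds one. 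By Proposition \ref{prop2}-{\rm (iii)} this forces $\hat{\mathcal{R}}_0>1$ and $l>\mathcal{N}_0$, contradicting either hypothesis.

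For {\rm (ii-2)}, set $\Phi_l(\bm P):=d_I\mathcal{L}\bm P+\bm P\circ M(\bm P,l)$. At $l>\mathcal{N}_0$ the trivial equilibrium of the cooperative ODE $\dot{\bm P}=\Phi_l(\bm P)$ is linearly unstable, the flow is inward on the face $d_IP_i=\alpha_i$, and the system is cooperative irreducible, so monotone-dynamical-systems theory delivers a maximal positive equilibrium $\bm P^{(l)}$ with $\bm 0\ll d_I\bm P^{(l)}\ll\bm\alpha$. Uniqueness is by a Perron-type ratio argument: if $\bm P_1,\bm P_2$ are positive solutions and $r_j:=P_{1,j}/P_{2,j}$ attains its maximum at $i^*$, then dividing the $i^*$-th equation for $\bm P_1$ by $P_{1,i^*}$ and using $L_{i^*,j}\ge 0$ for $j\neq i^*$ together with $r_j\le r_{i^*}$ gives $-M_{i^*}(P_{1,i^*},l)\le -M_{i^*}(P_{2,i^*},l)$; strict monotonicity of $M_{i^*}$ forces $P_{1,i^*}\le P_{2,i^*}$, so $\max_jr_j\le 1$, and the symmetric argument yields $\bm P_1=\bm P_2$. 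Analyticity in $l$ follows from the implicit function theorem, since the Jacobian $D\Phi_l(\bm P^{(l)})=d_I\mathcal{L}+{\rm diag}(M(\bm P^{(l)},l)+\bm P^{(l)}\circ\partial_PM(\bm P^{(l)},l))$ has strictly negative spectral bound by the same monotonicity argument. Differentiating in $l$ then gives $d\bm P^{(l)}/dl=-D\Phi_l^{-1}(\partial_lM\circ\bm P^{(l)})\gg\bm 0$, as the resolvent of a stable quasipositive irreducible matrix is strongly positive and $\partial_lM>0$. The limit $l\to\mathcal{N}_0^+$ equals $\bm 0$ by uniqueness and analyticity, while $l\to\infty$ is handled by compactness ($d_I\bm P^{(l)}\le\bm\alpha$) and passage to the limit in \eqref{transform equation} to recover \eqref{Eqb-2}, whose unique positive solution is obtained by the same techniques.

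For {\rm (ii-3)}, I plan to carry out a Lyapunov--Schmidt reduction at $(\bm P,l)=(\bm 0,\mathcal{N}_0)$: write $\bm P^{(l)}=c(l-\mathcal{N}_0)\bm\eta+\bm w$ with $\bm w=o(l-\mathcal{N}_0)$ and $\sum_jw_j\eta_j^*=0$, substitute into \eqref{transform equation}, and expand to second order in $l-\mathcal{N}_0$; the first-order balance vanishes because $A\bm\eta=\bm 0$ where $A:=d_I\mathcal{L}+{\rm diag}(M(\bm 0,\mathcal{N}_0))$, and pairing the second-order balance with $\bm\eta^*$ via the Fredholm alternative isolates $c$. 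Plugging in $\partial_PM_j(0,\mathcal{N}_0)=-\mathcal{N}_0\beta_j(d_I\zeta_j+\mathcal{N}_0d_S\alpha_j)/(\zeta_j+\mathcal{N}_0\alpha_j)^2$ and $\partial_lM_j(0,\mathcal{N}_0)=\beta_j\alpha_j\zeta_j/(\zeta_j+\mathcal{N}_0\alpha_j)^2$ then reproduces \eqref{Eqb-3} exactly. The main obstacle is the uniqueness step in {\rm (ii-2)}: since $\bm P\mapsto\bm P\circ M(\bm P,l)$ is neither globally concave nor sub-homogeneous, the standard comparison tricks fail and the ratio/Perron argument above appears to be the essential route. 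A secondary technical difficulty is the careful bookkeeping in the Lyapunov--Schmidt expansion for {\rm (ii-3)}, where the first-order contributions from both $\partial_{\bm P}M$ and $\partial_lM$ must be tracked simultaneously to obtain the exact numerator $\beta_j\alpha_j\zeta_j$ and denominator $\mathcal{N}_0\eta_j^2\beta_j(d_I\zeta_j+\mathcal{N}_0d_S\alpha_j)$ appearing in the stated formula.
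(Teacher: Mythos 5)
Your proposal is correct and follows essentially the same route as the paper: the same strict monotonicity of the diagonal per-capita rate in $\bm P$ and in $l$ drives everything, the threshold is identified with the sign of $\sigma_*(d_I\mathcal{L}+{\rm diag}(M(\bm 0,l)))$ via Proposition \ref{prop2}, the super-solution $\frac{1}{d_I}\bm\alpha$ yields the upper bound, and your Lyapunov--Schmidt expansion at $(\bm 0,\mathcal{N}_0)$ reproduces exactly the Crandall--Rabinowitz coefficient $\dot h(0)$ that the paper quotes from \cite{CR1971}. The only substantive difference is that you prove existence, uniqueness and linear stability of the positive equilibrium by hand (inward flow on the face $d_IP_i=\alpha_i$, instability of $\bm 0$, and the Perron ratio argument at the maximizing index), where the paper simply invokes classical results for logistic-type cooperative systems; both are valid, and your version is self-contained.
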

\begin{proof} Define 
\begin{equation*}
    \mathcal{F}(l,\bm P)=l\bm\beta\circ(\bm\alpha-d_I\bm P)/(\bm\zeta +l(\bm\alpha-d_I\bm P)+d_Sl\bm P) -\bm\gamma\quad \quad -\frac{1}{d_Sl}\bm\zeta<\bm P<\frac{1}{d_I}\bm \alpha ,\ l>0, 
\end{equation*}
so that  \eqref{transform equation} can be written as 
$$
0=d_I\mathcal{L}\bm P+\bm P\circ\mathcal{F}(l,\bm P).
$$
Note that $\mathcal{F}$ is analytic and 

\begin{equation}\label{Eqb-7}
    \partial_{ p_j}\mathcal{F}_i(l,\bm P)=\begin{cases}-\frac{ l\beta_i(d_I(\zeta_i +d_Sl P_i)+ld_S(\alpha_i -d_I P_i))}{(\zeta_i +l(\alpha_i-d_IP_i)+ld_SP_i)^2} & \text{if}\ j=i,\cr 
    0 & \text{if}\ j\ne i
    \end{cases}\quad -\frac{1}{d_Sl}\bm\zeta<\bm P<\frac{1}{d_I}\bm \alpha,\  l>0.
\end{equation}
Therefore, $\mathcal{F}(l,\bm P)<\mathcal{F}(l,\tilde{\bm P} )$ for all $l>0$, $\bm 0\le \tilde{\bm P}\ll\bm P< \frac{1}{d_I}\bm \alpha$. Thus, by the classical results on the logistic-type equations, \eqref{transform equation} has a (unique) positive solution ${\bm P}^{(l)}$ if and only if $\sigma^{(l)}_*:=\sigma_*(d_I\mathcal{L}+{\rm diag}(l\bm\beta\circ\bm\alpha/(\bm \zeta+l\bm\alpha)-\bm\gamma))>0$. Taking $l=N$ in \eqref{R-0}, and then set $\mathcal{R}_0^{(l)}:=\mathcal{R}_0$ to indicate the dependence of $\mathcal{R}_0$ on $l>0$, we  have from Proposition \ref{prop2}-{\rm (i)} that $\mathcal{R}_0^{(l)}-1$ and $\sigma_*^{(l)}$ have the same sign. Therefore, \eqref{transform equation} has a (unique) positive solution ${\bm P}^{(l)}$ if and only if $\mathcal{R}_0^{(l)}>1$.

\medskip

{\rm (i)} Suppose that $\hat{\mathcal{R}}_0\le 1$. Since by Proposition \eqref{prop2}-{\rm (iii)} $\mathcal{R}_0^{(l)}<\hat{\mathcal{R}}_0$ for all $l>0$, then by the previous development, we have that \eqref{transform equation} has no positive solution for every $l>0$. 

\medskip

{\rm (ii)} Suppose that $\hat{\mathcal{R}}_0>1$ and let $\mathcal{N}_0$ be given by Proposition \eqref{prop2}-{\rm (iii)}. Hence, if $0<l\le \mathcal{N}_0$, we have that $\mathcal{R}_0^{(l)}\le 1$ and system \eqref{transform equation} has no positive solution. However, if $l>\mathcal{N}_0$, we have that $\mathcal{R}_0^{(l)}>1$, and hence \eqref{transform equation} has a unique positive solution ${\bm P}^{(l)}$. Furthermore, since {\bf (A1)} holds, we have that ${\bm P}^{(l)}\gg \bm 0$. It is easy to see that $\bm P:=\frac{1}{d_I}\bm \alpha$ is a strict super-solution of \eqref{transform equation}, and hence ${\bm P}^{(l)}\ll \frac{1}{d_I}\bm \alpha$  for all $l>\mathcal{N}_0$. Here, we have used the fact $\bm P^{(l)}$, $l>\mathcal{N}_0$,  is the unique positive and linearly/globally stable solution of the initial value problem associated with \eqref{transform equation} with respect to solutions with positive initials. 
Observe that 
\begin{equation}\label{Eqb-9}
    \partial_l\mathcal{F}(l,\bm P)=\bm\beta\circ\bm\zeta\circ(\bm\alpha-d_I\bm P)/((\bm\zeta+l(\bm\alpha-d_I\bm P)+d_Sl\bm P)^2)\gg 0 \quad \forall\ l>0,\,\ \ \bm 0\le \bm P\ll \frac{1}{d_I}\bm \alpha.
\end{equation}
Therefore, by the comparison principle, we have that ${\bm P}^{(l)}\ll \bm P^{(\tilde{l})}$ whenever $\mathcal{N}_0<l<\tilde{l}$. It follows from classical theory of the logistic-type equations that ${\bm P}^{(l)}$ is linearly stable, and by the implicit function theorem and the analycity of $\mathcal{F}$, we have that ${\bm P}^{(l)}$ is an analytic function of $l>\mathcal{N}_0$. Since \eqref{transform equation} has no positive solution for $l=\mathcal{N}_0$, and $\bm 0\ll {\bm P}^{(l)}\ll \frac{1}{d_I}\bm \alpha$ for all $l>\mathcal{N}_0$, we deduce that the limit in the middle term of \eqref{Eqb-1} holds. Next, since ${\bm P}^{(l)}$ is strictly increasing in $l>\mathcal{N}_0$ and bounded above by $\frac{1}{d_I}\bm\alpha$, it converges to some $\bm P^*\in (\bm 0,\frac{1}{d_I}\bm\alpha]$ as $l\to\infty$. Observing that 
\begin{align*}
l\bm\beta\circ(\bm\alpha-d_I{\bm P}^{(l)})/(\bm\zeta +l(\bm\alpha-d_I{\bm P}^{(l)})+d_Sl{\bm P}^{(l)})=&\bm\beta(\bm \alpha-d_I{\bm P}^{(l)})/(\frac{1}{l}\bm\zeta+(\bm \alpha-d_I{\bm P}^{(l)})+d_S{\bm P}^{(l)})\cr 
\to& \bm\beta(\bm\alpha-d_I\bm P^*)/(\bm\alpha-d_I\bm P^*+d_S\bm P^*) \quad \text{as}\quad l\to\infty,
\end{align*}
taking limit as $l\to\infty$ in \eqref{transform equation}, we have that $\bm P^*$ solves \eqref{Eqb-2}.  It is easy to see that $\bm P=\frac{1}{d_I}\bm\alpha$ is a strict super-solution of \eqref{Eqb-2}, hence $\bm P^*\ll \frac{1}{d_I}\bm \alpha$ by {\bf (A1)}.

\medskip

To prove \eqref{Eqb-3}, we use the bifurcation theory. To this end, set 

$$
\mathcal{H}(l,\bm P):=d_I\mathcal{L}\bm P+\big(l\bm\beta\circ(\bm \alpha-d_I\bm P)/(\bm\zeta +l(\bm\alpha-d_I\bm P)+d_Sl\bm P) -\bm\gamma\big)\circ\bm P\quad l>0, \ -\frac{1}{d_Sl}\bm\zeta<\bm P<\frac{1}{d_I}\bm \alpha.
$$
Thus $\mathcal{H}$ is an analytic function and $\mathcal{H}(l,{\bm P}^{(l)})=\bm 0$ for all $l>\mathcal{N}_0$. Note also that $\mathcal{H}(l,\bm 0)=\bm 0$ for all $l>0$. By computations, 
 for every $l>0$, $-\frac{1}{d_Sl}\bm\zeta<\bm P<\frac{1}{d_I}\bm \alpha$, we have that
\begin{equation}\label{Eqb-8}
\partial_{\bm P}\mathcal{H}(l,\bm P)[\bm Q]=d_I\mathcal{L}\bm Q+\bm Q\circ\mathcal{F}(l,\bm P)+ \bm P\circ\partial_{\bm P}\mathcal{F}(l,\bm P)[\bm Q] \quad \forall\ \bm Q\in \mathbb{R}^n. 
\end{equation}
Thus 
$$
\partial_{\bm P}\mathcal{H}(\mathcal{N}_0,\bm 0)[\bm Q]=d_I\mathcal{L}\bm Q+\big(\mathcal{N}_0\bm\beta\circ\bm\alpha/(\bm \zeta+\mathcal{N}_0\bm\alpha )-\bm\gamma\big)\circ\bm Q\quad  \forall\  \bm Q\in \mathbb{R}^n.
$$
By Proposition \ref{prop2}-{\rm (iii)} and the Perron-Frobeinus Theorem, we have that $\sigma_*(\partial_{\bm P}\mathcal{H}(\mathcal{N}_0,\bm 0))=0$ is a simple eigenvalue of $\partial_{\bm P}\mathcal{H}(\mathcal{N}_0,\bm 0)$ and 
\begin{equation}\label{Eqb-11}
{\rm Ker}(\partial_{\bm P}\mathcal{H}(\mathcal{N}_0,\bm 0))={\rm span}(\bm \eta)\quad \text{and}\quad {\rm Range}(\partial_{\bm P}\mathcal{H}(\mathcal{N}_0,\bm 0))={\rm span}(\bm\eta^*)^T,
\end{equation}
where ${\rm span}(\bm\eta^*)^T$ is the orthogonal complement of ${\rm span}(\bm\eta^*)$. Taking the partial derivative of \eqref{Eqb-8} with respect to $l>0$, we have that
$$
\partial_{(l,\bm P)}\mathcal{H}(l,\bm P)[\bm Q]=\bm Q\circ\partial_l\mathcal{F}(l,\bm P)+\bm P\circ\partial_{(l,\bm P)}\mathcal{F}(l,\bm P)[\bm Q],\ \quad \ l>0,\ \ \bm Q\in\mathbb{R}^n,\ -\frac{1}{d_Sl}\bm\zeta<\bm P<\frac{1}{d_I}\bm \alpha. 
$$
Thus, recalling \eqref{Eqb-9}, we have that 
$$
\partial_{(l,\bm P)}\mathcal{H}(\mathcal{N}_0,\bm 0)[\bm \eta]=\bm\beta\circ\bm\zeta\circ\bm\alpha\circ\bm\eta/((\bm \zeta+\mathcal{N}_0\bm\alpha)\circ(\bm \zeta+\mathcal{N}_0\bm\alpha)),
$$
from which it follows 
\begin{equation}\label{Eqb-14}
<\partial_{(l,\bm P)}\mathcal{H}(\mathcal{N}_0,\bm 0)[\bm \eta],\bm\eta^*>=\sum_{j\in\Omega}\frac{\beta_j\alpha_j\zeta_j\eta_j\eta^*_j}{(\zeta_j+\mathcal{N}_0\alpha_j)^2}>0,
\end{equation}
and hence, by \eqref{Eqb-11}, $\partial_{(l,\bm P)}\mathcal{H}(\mathcal{N}_0,\bm 0)[\bm \eta]\notin {\rm Range}(\partial_{\bm P}\mathcal{H}(\mathcal{N}_0,\bm 0))$. Fix a complement subspace $\mathbb{X}\subset\mathbb{R}^n$ of ${\rm Ker}(\partial_{\bm P}\mathcal{H}(\mathcal{N}_0,\bm 0))$. Then, by \cite[Theorem 1.7]{CR1971}, the solution set of $\mathcal{H}(l,\bm P)=\bm 0$ near $(\mathcal{N}_0,\bm 0)$ consists of precisely the two curves $\mathcal{C}_0:=\{(\mathcal{N}_0,\bm 0): \ l>0\}$ and  $\mathcal{C}_1:=\{(h(s),s\bm\eta+s \tilde{\bm P}(s)) : |s|<\varepsilon\}$, where $h : (-\varepsilon,\varepsilon)\to \mathbb{R}$ and $ \tilde{\bm P} : (-\varepsilon,\varepsilon)\to \mathbb{X}$ are analytic functions satisfying $h(0)=\mathcal{N}_0$, $\tilde{\bm P}(0)=\bm 0$. Furthermore, 
 \begin{equation}\label{Eqb-13}
     \dot{h}(0)=-\frac{<\partial_{(\bm P,\bm P)}\mathcal{H}\big(\mathcal{N}_0,\bm 0\big)[\bm\eta,\bm\eta],\eta^*>}{2<\partial_{(l,\bm P)}\mathcal{H}\big(\mathcal{N}_0,\bm 0\big)[\bm \eta],\eta^*>}.
 \end{equation} 
 Note from \eqref{Eqb-8} that
 \begin{equation*}
\partial_{(\bm P,\bm  P)}\mathcal{H}(l,\bm P)[\bm Q,\tilde{\bm Q}]=\bm Q\circ\partial_{\bm P}\mathcal{F}(l,\bm P)[\tilde{\bm Q}] +\tilde{\bm Q}\circ\partial_{\bm P}\mathcal{F}(l,\bm P)[\bm Q]+\bm P\circ \partial_{(\bm P,\bm P)}\mathcal{F}(l,\bm P)[\bm Q,\tilde{\bm Q}] \quad \forall\ \bm Q,\ \tilde{\bm Q}\in \mathbb{R}^n,
\end{equation*}
so that 
$$
\partial_{(\bm P,\bm  P)}\mathcal{H}(\mathcal{N}_0,\bm 0)[\bm \eta,{\bm \eta}]= 2\bm \eta\circ\partial_{\bm P}\mathcal{F}(\mathcal{N}_0,\bm 0)[\bm \eta].
$$
We use \eqref{Eqb-7} to get
$$
\partial_{\bm P}\mathcal{F}(\mathcal{N}_0,\bm 0)[\bm \eta]=-\mathcal{N}_0\bm\beta\circ\bm \eta\circ (d_I\bm\zeta +\mathcal{N}_0d_S\bm \alpha)/((\bm \zeta+\mathcal{N}_0\bm\alpha)\circ(\bm \zeta+\mathcal{N}_0\bm\alpha)),
$$
which in turn gives
$$
\partial_{(\bm P,\bm  P)}\mathcal{H}(\mathcal{N}_0,\bm 0)[\bm \eta,{\bm \eta}]=-2\mathcal{N}_0\bm\beta\circ\bm \eta\circ\bm\eta\circ (d_I\bm\zeta +\mathcal{N}_0d_S\bm \alpha)/((\bm \zeta+\mathcal{N}_0\bm\alpha)\circ(\bm \zeta+\mathcal{N}_0\bm\alpha)).
$$
This along with \eqref{Eqb-14} and \eqref{Eqb-13} yields that
$$
\dot{h}(0)=\frac{\sum_{j\in\Omega}\frac{\mathcal{N}_0\eta_j^2\beta_j(d_I\zeta_j+\mathcal{N}_0d_S\alpha_j){\eta^*_j}}{(\zeta_j+\mathcal{N}_0\alpha_j)^2}}{\sum_{j\in\Omega}\frac{\beta_j\alpha_j\zeta_j\eta_j\eta^*_j}{(\zeta_j+\mathcal{N}_0\alpha_j)^2}}>0.
$$
Taking $\varepsilon>0$ sufficiently small, we have that {$h$} is strictly increasing on $[-\varepsilon,-\varepsilon]$ and $s\bm\eta+s\tilde{\bm P}(s)\gg \bm 0$ for ${0<s}\le \varepsilon$. Denoting by ${h}^{-1}$, the inverse function of {$h$}, we have that 
 $$
 \bm P^{(l)}=h^{-1}(l)(\bm\eta+\tilde{\bm P}(h^{-1}(l)))\quad \mathcal{N}_0<l<h({\varepsilon}),
 $$
 is the unique positive solution of \eqref{transform equation}. This shows that the function $[\mathcal{N}_0,{h}(\varepsilon))\ni l\mapsto {\bm P}^{(l)}$ is analytic and 
 \begin{equation*}
     \lim_{l\to \mathcal{N}_0^+}\partial_l{\bm P}^{(l)}=\lim_{l\to \mathcal{N}_0^+}\frac{{\bm P}^{(l)}}{l-\mathcal{N}_0}=\lim_{l\to\mathcal{N}_0^+}\frac{h^{-1}(l)}{l-\mathcal{N}_0}(\bm\eta+\tilde{\bm P}(h^{-1}(l)))=\frac{1}{\dot{h}(0)}\bm\eta=\frac{\sum_{j\in\Omega}\frac{\beta_j\alpha_j\zeta_j\eta_j\eta^*_j}{(\zeta_j+\mathcal{N}_0\alpha_j)^2}}{\sum_{j\in\Omega}\frac{\mathcal{N}_0\eta_j^2\beta_j(d_I\zeta_j+\mathcal{N}_0d_S\alpha_j){\eta^*_j}}{(\zeta_j+\mathcal{N}_0\alpha_j)^2}}\bm\eta,
 \end{equation*}
which gives the desired result.
    
\end{proof}

Thanks to Lemma \ref{lem3}, when $\hat{\mathcal{R}}_0>1$, we introduce the function
\begin{equation}\label{mathcal-G-de}
    \mathcal{G}(l)=l\sum_{j\in\Omega}((\alpha_j-d_IP^{(l)}_{j})+d_SP_{j}^{(l)})\quad \forall\ l\ge \mathcal{N}_0,
\end{equation}
where $\bm P^{(\mathcal{N}_0)}:=\bm 0$.   
\begin{lem}\label{lem4}  Fix $d_I>0$ and $d_S>0$. Suppose that $\hat{\mathcal{R}}_0>1$, and let $\mathcal{G}$ be defined by \eqref{mathcal-G-de}. Then $\mathcal{G}$ is continuously differentiable (in fact, analytic on $(\mathcal{N}_0,\infty)$). Furthermore, 
\begin{equation}\label{Eqc-4}
    \frac{d\mathcal{G}(l)}{dl}=\sum_{j\in\Omega}((\alpha_j-d_IP^{(l)}_j)+d_SP^{(l)}_j)+l(d_S-d_I)\sum_{j\in\Omega}\frac{dP^{(l)}_j}{dl}\qquad l\ge \mathcal{N}_0,
\end{equation}
\begin{align}\label{Eqc-5}
    \frac{d\mathcal{G}(\mathcal{N}_0)}{dl}=&1+\frac{(d_S-d_I)\sum_{j\in\Omega}\frac{\beta_j\alpha_j\zeta_j\eta_j\eta^*_j}{(\zeta_j+\mathcal{N}_0\alpha_j)^2}}{\sum_{j\in\Omega}\frac{\eta_j^2\beta_j(d_I\zeta_j+\mathcal{N}_0d_S\alpha_j)\eta^*_j}{(\zeta_j+\mathcal{N}_0\alpha_j)^2}}\cr
=&\frac{d_I\sum_{j\in\Omega}\frac{\zeta_j\eta_j\eta_j^*\beta_j(\eta_j-\alpha_j)}{(\zeta_j+\mathcal{N}_0\alpha_j)^2}+d_S\sum_{j\in\Omega}\frac{\beta_j\eta_j\eta_j^*\alpha_j(\mathcal{N}_0\eta_j+\zeta_j)}{(\zeta_j+\mathcal{N}_0\alpha_j)^2}}{\sum_{j\in\Omega}\frac{\eta_j^2\beta_j(d_I\zeta_j+\mathcal{N}_0d_S\alpha_j)\eta^*_j}{(\zeta_j+\mathcal{N}_0\alpha_j)^2}},
\end{align}
where $\bm \eta$ and $\bm\eta^*$ are as in Lemma \ref{lem3}-{\rm (ii-3)}, and 

\begin{equation}\label{Eqc-6}
    \mathcal{G}(\mathcal{N}_0)=\mathcal{N}_0\quad \text{and}\quad \lim_{l\to\infty}\mathcal{G}(l)=\infty.
\end{equation}

\end{lem}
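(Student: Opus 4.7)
The plan is to exploit the explicit form of $\mathcal{G}$ in \eqref{mathcal-G-de} together with the analytic dependence of $\bm P^{(l)}$ on $l$ guaranteed by Lemma \ref{lem3}{\rm (ii)}. First I would use $\sum_{j\in\Omega}\alpha_j=1$ to rewrite
$$\mathcal{G}(l)=l+l(d_S-d_I)\sum_{j\in\Omega}P_j^{(l)}\qquad \forall\ l\ge\mathcal{N}_0,$$
with the convention $\bm P^{(\mathcal{N}_0)}=\bm 0$. Lemma \ref{lem3}{\rm (ii-2)} then yields analyticity of $\mathcal{G}$ on $(\mathcal{N}_0,\infty)$, while continuity at $\mathcal{N}_0$ is immediate from $\lim_{l\to\mathcal{N}_0^+}\bm P^{(l)}=\bm 0$ (see \eqref{Eqb-1}). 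Differentiating the product $l\cdot\bigl(1+(d_S-d_I)\sum_j P_j^{(l)}\bigr)$ gives \eqref{Eqc-4} directly, and the finite right-derivative provided by \eqref{Eqb-3} upgrades the regularity to continuous differentiability on $[\mathcal{N}_0,\infty)$.

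For \eqref{Eqc-5}, I would substitute $l=\mathcal{N}_0$ into \eqref{Eqc-4}. The first sum collapses to $1$ because $\bm P^{(\mathcal{N}_0)}=\bm 0$, while \eqref{Eqb-3}, combined with $\|\bm\eta\|_1=1$, gives
$$\sum_{j\in\Omega}\frac{dP_j^{(\mathcal{N}_0)}}{dl}=\frac{\sum_{j\in\Omega}\beta_j\alpha_j\zeta_j\eta_j\eta_j^*/(\zeta_j+\mathcal{N}_0\alpha_j)^2}{\sum_{j\in\Omega}\mathcal{N}_0\eta_j^2\beta_j(d_I\zeta_j+\mathcal{N}_0d_S\alpha_j)\eta_j^*/(\zeta_j+\mathcal{N}_0\alpha_j)^2}.$$
The factor $\mathcal{N}_0$ in front of the sum in \eqref{Eqc-4} cancels the $\mathcal{N}_0$ in this denominator, producing the first form of \eqref{Eqc-5}. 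The second (combined) form then follows by placing the two summands over a common denominator and grouping the numerator according to the coefficients of $d_I$ and $d_S$: the cross term $(d_S-d_I)\beta_j\alpha_j\zeta_j\eta_j\eta_j^*$ combines with the $d_I\zeta_j\eta_j^2$ and $\mathcal{N}_0d_S\alpha_j\eta_j^2$ contributions from the main denominator to give exactly the two sums in the asserted numerator. This is routine algebra, and arguably the most tedious — though not conceptually difficult — part of the proof.

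Finally, for \eqref{Eqc-6}, the identity $\mathcal{G}(\mathcal{N}_0)=\mathcal{N}_0$ is immediate from $\bm P^{(\mathcal{N}_0)}=\bm 0$ and $\sum_j\alpha_j=1$. For the divergence as $l\to\infty$, Lemma \ref{lem3}{\rm (ii-2)} yields $\bm P^{(l)}\to\bm P^*$ with $\bm 0\ll d_I\bm P^*\ll\bm\alpha$, so the factor $1+(d_S-d_I)\sum_{j\in\Omega}P_j^{(l)}$ converges to a strictly positive limit. The most delicate (though still elementary) step is the positivity of this limit when $d_S<d_I$: the strict inequality $d_IP_j^*<\alpha_j$ forces $(d_I-d_S)\sum_j P_j^*<(1-d_S/d_I)\sum_j\alpha_j=1-d_S/d_I$, so the limiting factor is at least $d_S/d_I>0$. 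When $d_S\ge d_I$ the limit is trivially at least $1$. Multiplying by $l\to\infty$ then drives $\mathcal{G}(l)\to\infty$, which completes the argument.
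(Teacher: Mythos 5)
Your argument is correct and follows essentially the same route as the paper: differentiate \eqref{mathcal-G-de} using the analyticity of $\bm P^{(l)}$ from Lemma \ref{lem3}, evaluate at $l=\mathcal{N}_0$ via \eqref{Eqb-3} (the $\mathcal{N}_0$ cancellation and the regrouping into the $d_I$- and $d_S$-sums both check out), and use $d_I\bm P^*\ll\bm\alpha$ for the divergence. The only cosmetic difference is in the last step, where the paper bounds $\mathcal{G}(l)\ge l\|\bm\alpha-d_I\bm P^{(l)}\|_1$ by discarding the nonnegative term $d_S\bm P^{(l)}$, whereas you compute the limit of the full factor $1+(d_S-d_I)\sum_{j\in\Omega}P_j^{(l)}$ and verify its positivity separately for $d_S<d_I$; both hinge on the same strict inequality $d_I\bm P^*\ll\bm\alpha$.
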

\begin{proof} The regularity of $\mathcal{G}$ in $l\ge \mathcal{N}_0$ follows from that of the mapping ${\bm P}^{(l)}$ on $l\ge \mathcal{N}_0$. Taking the derivative of \eqref{mathcal-G-de} with respect to $l$ yields \eqref{Eqc-4}. Direct computations from \eqref{Eqc-4} along with \eqref{Eqb-3} yields \eqref{Eqc-5}. Since $\bm P^{(\mathcal{N}_0)}=\bm 0$, then $\mathcal{G}(\mathcal{N}_0)=\mathcal{N}_0$. Finally, since by Lemma \ref{lem3}-{\rm (ii)}, ${\bm P}^{(l)}\to \bm P^*$  as $l\to \infty$ where $\bm 0\ll\bm P^*\ll \frac{1}{d_{I}}\bm \alpha$ is the unique positive solution of \eqref{Eqb-2}, then 
$$
\mathcal{G}(l)\ge l\|\bm\alpha-d_I{\bm P}^{(l)}\|_1\to \infty \quad \text{as} \ l \to \infty, 
$$
which completes the proof of the result.
    
\end{proof}

\noindent The next result connects the last two lemmas \ref{lem3} and \ref{lem4} to the existence of EE solutions of system \eqref{model}.

\begin{lem}\label{lem5} Fix $d_I>0$, $d_S>0$ and $N>0$. Let $\mathcal{G}$ be defined as in \eqref{mathcal-G-de}.
\begin{itemize}
    \item[\rm(i)] If $(\bm S,\bm I)$ is an EE solution of system \eqref{model}, then there is a positive constant $\kappa>0$ such that 
    \begin{equation}\label{Eqc-1}
        d_S\bm S+d_I\bm I=\kappa\bm\alpha.
    \end{equation}
    Furthermore, setting 
    \begin{equation}\label{Eqc-2}
        \tilde{\bm S}=\frac{1}{\kappa}\bm S\quad \text{and}\quad \tilde{\bm I}=\frac{1}{\kappa}\bm I,
    \end{equation}
    then 
    \begin{equation}\label{Eqc-3}
        \tilde{\bm S}=\frac{1}{d_S}(\bm\alpha -d_{I}\tilde{\bm I}),
    \end{equation} 
    $\tilde{\bm I}\gg \bm 0$ solves \eqref{transform equation} with $l=\frac{\kappa}{d_S}$, and $\mathcal{G}(\frac{\kappa}{d_S})=N$. In particular, $\frac{\kappa}{d_S}>\mathcal{N}_0$.

    \item[\rm (ii)] Suppose that for some $l>\mathcal{N}_0$, $\mathcal{G}(l)=N$. Then $(\bm S,\bm I)=(l(\bm \alpha-d_I{\bm P}^{(l)}),d_Sl{\bm P}^{(l)} )$ is an EE  of system \eqref{model}.
    
\end{itemize}
    
\end{lem}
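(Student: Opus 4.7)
The proof hinges on the observation that adding the two equilibrium equations in \eqref{EE-system} yields $\mathcal{L}(d_S\bm S + d_I\bm I)=\bm 0$, so $d_S\bm S + d_I\bm I$ lies in the one-dimensional kernel $\mathrm{span}(\bm\alpha)$ given by \eqref{alpha-eq}. Since at any EE one has $\bm S,\bm I\gg\bm 0$ (by \textbf{(A1)} and the strong positivity of $\{e^{t\mathcal{L}}\}_{t>0}$), there must be a positive scalar $\kappa$ such that $d_S\bm S + d_I\bm I=\kappa\bm\alpha$, which is \eqref{Eqc-1}. Dividing by $\kappa$ immediately produces \eqref{Eqc-3} via $\tilde{\bm S}=(\bm\alpha-d_I\tilde{\bm I})/d_S$.

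For the remaining claims in \rm{(i)}, I would set $l:=\kappa/d_S$ so that $\bm S=l(\bm\alpha-d_I\tilde{\bm I})$ and $\bm I=ld_S\tilde{\bm I}$, and substitute these into the $\bm I$-equation of \eqref{EE-system}. The denominator becomes $\bm\zeta+l(\bm\alpha-d_I\tilde{\bm I})+ld_S\tilde{\bm I}$, and after dividing through by $\kappa$ (using $\mathcal{L}\bm I=\kappa\mathcal{L}\tilde{\bm I}$ and $\bm I=\kappa\tilde{\bm I}$) the identity reduces to exactly \eqref{transform equation} with parameter $l$ and unknown $\tilde{\bm I}$. Since $\tilde{\bm I}\gg\bm 0$, Lemma \ref{lem3}-{\rm (i)}-{\rm (ii-1)} forces $l>\mathcal{N}_0$, and the uniqueness assertion of Lemma \ref{lem3}-{\rm (ii-2)} gives $\tilde{\bm I}=\bm P^{(l)}$. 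Finally, a direct summation shows $\|\bm S\|_1+\|\bm I\|_1=l\sum_{j\in\Omega}((\alpha_j-d_IP_j^{(l)})+d_SP_j^{(l)})=\mathcal{G}(l)$, and this equals $N$ because $(\bm S,\bm I)\in\mathcal{E}$.

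Part \rm{(ii)} is essentially the reverse computation. Given $l>\mathcal{N}_0$ with $\mathcal{G}(l)=N$, set $\bm S:=l(\bm\alpha-d_I\bm P^{(l)})$ and $\bm I:=d_Sl\bm P^{(l)}$; positivity $\bm S,\bm I\gg\bm 0$ follows from the bounds $\bm 0\ll d_I\bm P^{(l)}\ll\bm\alpha$ provided by Lemma \ref{lem3}-{\rm (ii-2)}. The $\bm I$-equation of \eqref{EE-system} is recovered by multiplying \eqref{transform equation} by $d_Sl$ and using $\bm P^{(l)}=\bm I/(d_Sl)$, so the fractional term becomes $\bm\beta\circ\bm S/(\bm\zeta+\bm S+\bm I)$. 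For the $\bm S$-equation, observe that $d_S\bm S+d_I\bm I=d_Sl\bm\alpha$, hence $\mathcal{L}(d_S\bm S+d_I\bm I)=\bm 0$, i.e.\ $d_S\mathcal{L}\bm S=-d_I\mathcal{L}\bm I$; combining with the already-verified $\bm I$-equation and flipping signs yields precisely the $\bm S$-equation. The normalization $\|\bm S\|_1+\|\bm I\|_1=N$ is again just $\mathcal{G}(l)=N$ unpacked.

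No step is particularly delicate; the work is careful algebraic bookkeeping of the substitutions. The one conceptual point is noticing that the combination $d_S\bm S+d_I\bm I$ is what collapses the coupled EE-system into a single scalar-parameter equation for $\tilde{\bm I}$, which is precisely the motivation for introducing \eqref{transform equation} in the first place and explains why the parameter $l=\kappa/d_S$ plays the role it does in $\mathcal{G}$.
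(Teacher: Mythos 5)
Your proposal is correct and follows essentially the same route as the paper: summing the two equilibrium equations to place $d_S\bm S+d_I\bm I$ in $\mathrm{span}(\bm\alpha)$, rescaling by $\kappa$ to reduce the $\bm I$-equation to \eqref{transform equation} with $l=\kappa/d_S$, invoking Lemma \ref{lem3} for $l>\mathcal{N}_0$ and $\tilde{\bm I}=\bm P^{(l)}$, and reading off $\mathcal{G}(l)=N$ from the mass constraint. Your part (ii) merely writes out the reverse substitution that the paper dismisses as "easily verified," and it is carried out correctly.
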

\begin{proof}{\rm(i)} Let $(\bm S,\bm I)$ be an EE solution of \eqref{model}. Adding up the two equations of \eqref{EE-system}, we get 
$$
0=\mathcal{L}(d_S\bm S+d_I\bm I).
$$
Hence, zero is a simple eigenvalue of $\mathcal{L}$ and $\mathcal{L}\bm \alpha=0$ with $\bm\alpha\ne\bm 0$, there is a $\kappa\in \mathbb{R}$ such that $d_S\bm S+d_I\bm I=\kappa\bm\alpha$. Clearly, $\kappa>0$ since $\bm S\gg \bm 0$ and $\bm I\gg 0$. This shows that \eqref{Eqc-1} holds. Defining $(\tilde{\bm S},\tilde{\bm I})$ as in \eqref{Eqc-2} , then \eqref{Eqc-3} is obtained by \eqref{Eqc-2}. Furthermore, replacing $\bm S=\frac{\kappa}{d_S}(\bm \alpha-d_I\tilde{\bm I})$ in the second equation of \eqref{EE-system}, and then divide the resulting equation by $\kappa$, we see that $\tilde{\bm I}$ is a positive solution of \eqref{transform equation} for $l=\frac{\kappa}{d_S}$, and hence $\tilde{\bm I}=\bm P^{(\frac{\kappa}{d_S})}$. This shows that $\frac{\kappa}{d_S}>\mathcal{N}_0$ by Lemma \ref{lem3}-{\rm (ii)}. Finally, since $\sum_{j\in\Omega}(S_j+I_j)=N$, then
$$
N=\sum_{j\in\Omega}(S_j+I_j)=\sum_{j\in\Omega}(\frac{\kappa}{d_S}(\alpha_j-d_I\tilde{I}_j)+\kappa\tilde{I}_j)=\frac{\kappa}{d_S}\sum_{j\in\Omega}((\alpha_j-d_I P_j^{(\frac{\kappa}{d_S})})+d_S P^{(\frac{\kappa}{d_S})}_j)=\mathcal{G}(\frac{\kappa}{d_S}).
$$
This completes the proof of {\rm(i)}.

\medskip

\noindent{\rm (ii)} It can easily be verified.

\end{proof}

Thanks to Lemma \ref{lem5}, we see that system \eqref{model} has an EE solution if and only if there is $l>\mathcal{N}_0$ such that $\mathcal{G}(l)=N$. Using this fact, we can now present the proofs of Theorem \ref{TH5}.

\begin{proof}[Proof of Theorem \ref{TH5}] Fix $N>0$, $d_S>0$, $d_I>0$ and suppose that $\bm\mu=\bm 0$, {\bf (A1)-(A2)} holds, and $\hat{\mathcal{R}}_0>1$.

\medskip

\noindent{\rm (i)} Since by Lemma \ref{lem5}, system \eqref{model} has an EE solution if and only if there is $l>\mathcal{N}_0$ such that $\mathcal{G}(l)=N$, (in which case $(l(\bm\alpha-d_I{\bm P}^{(l)}),d_Sl{\bm P}^{(l)})$ is an EE solution), and the mapping $(\mathcal{N}_0,\infty)\ni l\mapsto {\bm P}^{(l)}$ is strictly increasing by Lemma \ref{lem3}-{\rm (ii-2)}, then to show the uniqueness of  EE solution of system \eqref{model}, whenever it exists,  is equivalent to establishing that $\mathcal{G}$ is strictly increasing. Now, by \eqref{Eqc-4}, we have that $\frac{d\mathcal{G}(l)}{dl}>0$ for all $l>\mathcal{N}_0$ if $d_S\ge d_I$. Therefore, when $d_S\ge d_I$, system \eqref{model} has no EE if $N\le \mathcal{G}(\mathcal{N}_0)=\mathcal{N}_0$, and has a unique EE if $N>\mathcal{G}(\mathcal{N}_0)=\mathcal{N}_0$. 
This shows that assertion {\rm (i)} holds. 

\medskip

 \noindent {\rm (ii)} Next, suppose that  $N(\bm 1-2\bm r)\circ\bm\alpha\ge \bm r\circ\bm \zeta$. Then $\min_{j\in\Omega}\frac{N\alpha_j}{r_j(\zeta_j+N\alpha_j)}>1$, and hence 
 \begin{align}
\frac{\sum_{j\in\Omega}\frac{N\beta_j\alpha_j^2}{\zeta_j+N\alpha_j}}{\sum_{j\in\Omega}\gamma_j\alpha_j}{= \frac{\sum_{j\in\Omega}\Big(\frac{N\alpha_j}{r_j(\zeta_j+N\alpha_j)}\Big)\gamma_j\alpha_j}{\sum_{j\in\Omega}\gamma_j\alpha_j}
\ge\frac{\min_{k\in\Omega}\Big(\frac{N\alpha_k}{r_k(\zeta_k+N\alpha_k)}\Big)\sum_{j\in\Omega}\gamma_j\alpha_j}{\sum_{j\in\Omega}\gamma_j\alpha_j}} =\min_{j\in\Omega}\frac{N\alpha_j}{r_j(\zeta_j+N\alpha_j)}>1.
 \end{align}
It then follows from  \eqref{R-0-limit}  that $\mathcal{R}_0>1$ for all $d_I>0$. Now, fix $d_S>0$ and $d_I>0$. If $d_S\ge d_I$ we know from {\rm (i)} that system \eqref{model} has a unique EE solution. So, we focus on the case of $0<d_S<d_I$. Thanks to Theorem \ref{TH1-2}-{\rm(ii)}, we know that system \eqref{model} has at least one EE solution. We proceed by contradiction to establish that the EE solution is unique. To this end, thanks to Lemma \ref{lem5}, we suppose to the contrary that there exist $l_2>l_1>\mathcal{N}_0$ such that ${N}:=\mathcal{G}(l_1)=\mathcal{G}(l_2)$. Set $\bm I^{(i)}=d_Sl_i\bm P^{(l_i)}$ and $\bm S^{(i)}=l_i(\bm\alpha-d_I\bm P^{(l_i)})$ for $i=1,2$. Since $l_1<l_2$ and the mapping ${\bm P}^{(l)}$ is strictly increasing in $l\ge \mathcal{N}_0$, then $\bm0\ll\bm I^{(1)}\ll \bm I^{(2)}$. Define 
$$
\nu:=(\bm I^{(1)}/\bm I^{(2)})_m.
$$
Then $0<\nu<1$ and $\bm 0\ll \nu\bm I^{(2)}\le \bm I^{(1)}$. Note  that 
$$
d_S\bm S^{(i)}+d_I\bm I^{(i)}=d_Sl_i\bm\alpha\quad i=1,2,
$$
and hence 
\begin{equation}\label{Eqd-8-3}
d_Sl_i=d_I\|\bm I^{(i)}\|_1+d_S\|\bm S^{(i)}\|_1=(d_I-d_S)\|\bm I^{(i)}\|_1+d_S{N}\quad i=1,2.
\end{equation}
From the last two equations, we get
$$
\bm S^{(i)}={N}\bm\alpha -\frac{d_I}{d_S}\Big(\bm I^{(i)}- \Big(1-\frac{d_S}{d_I}\Big)\|\bm I^{(i)}\|_1\bm\alpha \Big)\quad i=1,2.
$$
As a result, for each $i=1,2$, and $j\in\Omega$,
\begin{equation}\label{Eqd-7}
    0=d_I\sum_{k\in\Omega}L_{jk} I^{(i)}_k+\beta_j\left(\frac{\Big({N}\alpha_j -\frac{d_I}{d_S}\Big( I^{(i)}_j- \Big(1-\frac{d_S}{d_I}\Big)\|\bm I^{(i)}\|_1\alpha_j \Big)\Big)}{\Big( \zeta_j+{N}\alpha_j -\frac{d_I}{d_S}\Big( I^{(i)}_j- \Big(1-\frac{d_S}{d_I}\Big)\|\bm I^{(i)}\|_1\alpha_j \Big)+ I^{(i)}_j\Big)}- r_j\right) I^{(i)}_j.
\end{equation}
First, we claim that 
\begin{equation}\label{Eqd-8-0}
    \bm I^{(i)}\ll N\bm\alpha \quad i=1,2.
\end{equation}
Indeed, since for each $i=1,2$, $d_I\bm P^{(l_i)}\ll\bm\alpha$, then  by \eqref{Eqd-8-3}
\begin{align*}
\bm I^{(i)}=\frac{l_id_S}{d_I}d_I\bm P^{(l_i)}\ll\frac{l_id_S}{d_I}\bm\alpha =\Big(\Big(1-\frac{d_S}{d_I}\Big)\|\bm I^{(i)}\|_1+\frac{d_S}{d_I}N\Big)\bm\alpha\ll \Big(\Big(1-\frac{d_S}{d_I}\Big)N+\frac{d_S}{d_I}N\Big)\bm\alpha=N\bm\alpha.
\end{align*}
Secondly, we claim that
\begin{equation}\label{Eqd-8}
\bm I^{(i)}\gg \Big(1-\frac{d_S}{d_I}\Big)\|\bm I^{(i)}\|_1\bm\alpha\quad \quad \quad  i=1,2.
\end{equation}
Indeed, for clarity, we fix $i=1,2$, set $\tau_i=(1-\frac{d_S}{d_I})\|\bm I^{(i)}\|_1>0$. We treat $\tau_i>0$ as given in \eqref{Eqd-7}.  We rewrite \eqref{Eqd-7} as 
\begin{equation}\label{Eqd-13}
    \bm 0=d_I\mathcal{L}\bm I^{(i)}+\bm\beta\circ\Big(\Big(\tilde{\bm A}+\frac{d_I}{d_S}\tau_i\bm\alpha-\frac{d_I}{d_S}\bm I^{(i)}\Big)/\Big(\tilde{\bm B}^{(i)}+\frac{d_I}{d_S}\tau_i\bm\alpha-\frac{d_I}{d_S}\bm I^{(i)}\Big)-\bm r\Big)\circ\bm I^{(i)}
\end{equation}
where $\tilde{\bm A}={N}\bm \alpha \gg \bm0$ and $\tilde{\bm B}^{(i)}=\tilde{\bm A}+\bm\zeta+\bm I^{(i)}\gg \tilde{\bm A}$. 
Noting that for every $0<a<b$, the function 
\begin{equation*}
(0,a)\ni x\mapsto g(x):= \frac{a-x}{b-x}
\end{equation*}
is strictly decreasing, and $\mathcal{L}$ is quasipositive and irreducible matrix, it follows from classical results on logistic-type monotone dynamical systems that $\bm I^{(i)}$ is the unique strictly positive and linearly/globally stable solution of \eqref{Eqd-13}. Therefore, any strict subsolution of \eqref{Eqd-13} must be strictly less than $\bm I^{(i)}$. Hence, we just need to establish that $\tau_i\bm\alpha$ is a strict subsolution of \eqref{Eqd-13}. Replace $\bm I^{(i)}$ by $\tau_i \bm\alpha$ in the right hand side of \eqref{Eqd-13} gives 
\begin{align*}
    &d_I\mathcal{L}(\tau_i\bm \alpha)+\bm\beta\circ\Big(\Big(\tilde{\bm A}+\frac{d_I}{d_S}\tau_i\bm\alpha-\frac{d_I}{d_S}\tau_i\bm \alpha\Big)/\Big(\tilde{\bm B}^{(i)}+\frac{d_I}{d_S}\tau_i\bm\alpha-\frac{d_I}{d_S}\tau_i\bm \alpha\Big)-\bm r\Big)\circ(\tau_i\bm \alpha) 
    = \bm\beta\circ\Big(\Big(\tilde{\bm A}/ \tilde{\bm B}^{(i)}\Big)-\bm r\Big)\circ(\tau_i\bm \alpha).
\end{align*}
Hence, noting by \eqref{Eqd-8-0} that 
\begin{align*}
    \tilde{\bm A}-\bm r\circ\tilde{\bm B}^{(i)}=&  {N}(\bm 1-\bm r)\circ\bm\alpha -\bm r\circ\bm\zeta -\bm r\circ\bm I^{(i)}\cr 
    \gg & {N}(\bm 1-\bm r)\circ\bm\alpha -\bm r\circ\bm\zeta -{N}\bm r\circ\bm \alpha={N}(\bm 1-2\bm r)\circ\bm\alpha-\bm r\circ\bm\zeta\ge \bm 0,
\end{align*}
then $\tau_i\bm\alpha$ is a strict subsolution of \eqref{Eqd-13}. Hence \eqref{Eqd-8} holds.

\medskip

Next, recalling that the function $g$ above is strictly decreasing (for any choice of $0<a<b$), it follows from \eqref{Eqd-7}, \eqref{Eqd-8}, and the fact that $0<\nu<1$ that, for each $j=1,2$,  
\begin{align}\label{Eqd-10}
     0< & d_I\sum_{k\in\Omega}L_{jk}(\nu I^{(2)}_k) +\beta_j\left(\frac{\Big({N}\alpha_j -\nu\frac{d_I}{d_S}\Big( I^{(2)}_j- \Big(1-\frac{d_S}{d_I}\Big)\|\bm I^{(2)}\|_1\alpha_j \Big)\Big)}{\Big( \zeta_j+{N}\alpha_j -\nu\frac{d_I}{d_S}\Big( I^{(2)}_j- \Big(1-\frac{d_S}{d_I}\Big)\|\bm I^{(2)}\|_1\alpha_j \Big)+ I^2_j\Big)}-r_j\right)(\nu I^{(2)}_j)\cr 
    =& d_I\sum_{k\in\Omega}{L}_{jk}(\nu I^{(2)}_k)
    +\beta_j\left(\frac{\Big(\Big({N}\alpha_j -\nu\frac{d_I}{d_S} I^{(2)}_j\Big)+ \Big(\frac{d_I}{d_S}-1\Big)\|\nu\bm I^{(2)}\|_1\alpha_j \Big)}{\Big( \zeta_j+\Big({N}\alpha_j -\nu\frac{d_I}{d_S} I^{(2)}_j\Big)+ \Big(\frac{d_I}{d_S}-1\Big)\|\nu\bm I^{(2)}\|_1\alpha_j + I^{(2)}_j\Big)}-r_j\right)(\nu I^{(2)}_j).
\end{align}
Observing also that, for $j=1,2$, 
\begin{align*}
\frac{\Big(\Big({N}\alpha_j -\nu\frac{d_I}{d_S} I^{(2)}_j\Big)+ \Big(\frac{d_I}{d_S}-1\Big)\|\nu\bm I^{(2)}\|_1\alpha_j\Big) }{\Big( \zeta_j+\Big({N}\alpha_j -\nu\frac{d_I}{d_S} I^{(2)}_j\Big)+ \Big(\frac{d_I}{d_S}-1\Big)\|\nu\bm I^{(2)}\|_1\alpha_j + I^{(2)}_j\Big)}
\le \frac{\Big(\Big({N}\alpha_j -\nu\frac{d_I}{d_S} I^{(2)}_j\Big)+ \Big(\frac{d_I}{d_S}-1\Big)\|\bm I^{(1)}\|_1\alpha_j\Big) }{\Big( \zeta_j+\Big({N}\alpha_j -\nu\frac{d_I}{d_S} I_j^{(2)}\Big)+ \Big(\frac{d_I}{d_S}-1\Big)\|\bm I^{(1)}\|_1\alpha_j + I^{(2)}_j\Big)}
\end{align*}
since $\|\nu\bm I^{(2)}\|_1\le \|\bm I^{(1)}\|_1$, we conclude from \eqref{Eqd-10} and the fact that $\bm 0\ll \bm I^{(1)}\ll \bm I^{(2)}$ that
\begin{align*}
    0<&   d_I\sum_{k\in\Omega}{L}_{jk}(\nu  I^{(2)}_k )+\beta_{j}\left(\frac{\Big({N}\alpha_j -\nu\frac{d_I}{d_S} I^{(2)}_j\Big)+ \Big(\frac{d_I}{d_S}-1\Big)\|\bm I^{(1)}\|_1\alpha_j }{ \zeta_j+\Big({N}\alpha_{j} -\nu\frac{d_I}{d_S} I^{(2)}_j\Big)+ \Big(\frac{d_I}{d_S}-1\Big)\|\bm I^{(1)}\|_1\alpha_j + I^{(2)}_j}-r_j\right)(\nu I^{(2)}_j)\cr 
    < &  d_I\sum_{k\in\Omega}{L}_{jk}(\nu  I^{(2)}_k )+\beta_j\left(\frac{\Big({N}\alpha_j -\nu\frac{d_I}{d_S} I^{(2)}_j\Big)+ \Big(\frac{d_I}{d_S}-1\Big)\|\bm I^{(1)}\|_1\alpha_j }{ \zeta_j+\Big({N}\alpha_j -\nu\frac{d_I}{d_S} I^{(2)}_j\Big)+ \Big(\frac{d_I}{d_S}-1\Big)\|\bm I^{(1)}\|_1\alpha_j + I^{(1)}_j}- r_j\right)(\nu I^{(2)}_j) \quad  j=1,2.
\end{align*}
Therefore, setting 
$$\bm B:= \bm \zeta+{N}\bm\alpha + \Big(\frac{d_I}{d_S}-1\Big)\|\bm I^{(1)}\|_1\bm\alpha +\bm I^{(1)}\gg\bm 0\quad 
\text{and} \quad 
\bm A:={N}\bm\alpha + \Big(\frac{d_I}{d_S}-1\Big)\|\bm I^{(1)}\|_1\bm\alpha \gg \bm 0,
$$
we have that 
\begin{equation}\label{Eqd-11}
0=d_I\mathcal{L}\bm I^{(1)} +\bm\beta\circ\Big((\bm A-{\frac{d_{I}}{d_{S}}}\bm I^{(1)})/(\bm B-{\frac{d_{I}}{d_{S}}}\bm I^{(1)})-\bm r\Big)\circ\bm I^{(1)}
\end{equation}
and 
$$
\bm 0\ll d_I\mathcal{L}(\nu \bm I^{(2)}) +\bm\beta\circ\Big((\bm A-{\frac{d_{I}}{d_{S}}}\nu\bm I^{(2)})/(\bm B-{\frac{d_{I}}{d_{S}}}\nu \bm I^{(2)})-\bm r\Big)\circ(\nu\bm I^{(2)}).
$$
Therefore, since $\mathcal{L}$ is quasipositive and irreducible, $\bm I^{(1)}$ is the positive and locally/globally stable positive solution of \eqref{Eqd-11},  we must have that $\nu\bm I^{(2)}\ll \bm I^{(1)}$. Hence $\nu<(\bm I^{(1)}/\bm I^{(2)})_m$, which yields a contradiction to the definition of $\nu$.  Thus, there is a unique $l>\mathcal{N}_0$ satisfying $\mathcal{G}(l)=N$;  hence system \eqref{model} has a unique EE.



 \end{proof}

\medskip

\noindent Next, we give a proof of Theorem \ref{TH6}.

\begin{proof}[Proof of Theorem \ref{TH6}] Let $\mathcal{G}$ be defined as in \eqref{mathcal-G-de}, and define 
$$
\mathcal{R}_{\min}=\rho({\rm diag}(N_{\min}\bm\alpha\circ\bm\beta/(\bm\zeta+N_{\min}\bm\alpha))V^{-1})\quad \text{where}\quad N_{\min}=\min_{l\ge \mathcal{N}_0}\mathcal{G}(l).
$$
Since $\mathcal{G}$ is continuous and \eqref{Eqc-6} holds, then it achieves it minimal value. Hence $N_{\min}>0$ is well defined, and so is $\mathcal{R}_{\min}$ and $\mathcal{R}_{\min}>0$.  It is clear from $N_{\min}\le \mathcal{G}(\mathcal{N}_0)=\mathcal{N}_0$. Hence, by Proposition \ref{prop2}-{\rm (iii)}, we have that $\mathcal{R}_{\min}\le 1$. Thanks to Lemma \ref{lem5}, system \eqref{model} has no EE solution if $N<N_{\min}$. Thus, since $\mathcal{R}_0$ is strictly increasing and continuous in $N>0$,  system \eqref{model} has no EE solution if $\mathcal{R}_{0}<\mathcal{R}_{\min}$.  Next, by \eqref{Eqc-6} and the intermediate value theorem, for every $N>N_{\min}$, there is $l_{N}>\mathcal{N}_0$ such that $\mathcal{G}(l_{N})=N$, and hence system \eqref{model} has an EE solution.  Therefore, recalling also that $\mathcal{R}_0\to \hat{\mathcal{R}}_0$ as $N\to\infty$, and $\mathcal{R}_0$ is continuous in $N$, we deduce that for every $\mathcal{R}_0\in (\mathcal{R}_{\min},\hat{\mathcal{R}}_0)$, system \eqref{model} has at least one EE solution.

\medskip

\noindent Thanks to Lemma \ref{lem5} again, we see that 
$$ 
\mathcal{C}_*:=\{(\mathcal{R}_0,\bm S,\bm I)=(\sigma_*^{(\mathcal{G}(l))}, l(\bm\alpha-d_I{\bm P}^{(l)}),d_Sl{\bm P}^{(l)}) :\ l>\mathcal{N}_0\}
$$
is a simple curve of EE solution of system \eqref{model}. Here, recall that $\sigma^{(N)}_*:=\sigma_*(d_I\mathcal{L}+{\rm diag}(N\bm\beta\circ\bm\alpha/(\bm \zeta+N\bm\alpha)-\bm\gamma))>0$ for all $N>0$. Moreover, by Lemma \ref{lem5}-{\rm (i)} again, any EE solution of system \eqref{model} belongs to the curve $\mathcal{C}_*$. Observing that 
$$
(\sigma_*^{(\mathcal{G}(l))}, l(\bm\alpha-d_I{\bm P}^{(l)}),d_Sl{\bm P}^{(l)})\to (1,\mathcal{N}_0\bm\alpha,\bm 0)\quad \text{as}\quad l\to\mathcal{N}_0^+,
$$
then the curve $\mathcal{C}_*$ bifurcates of the set of DFE solutions at $\mathcal{R}_0=1$. By Lemma \ref{lem3}-{\rm (ii-2)}, we have  $\sum_{j\in\Omega}l P_j^{(l)}\to \infty$ and $\sum_{j\in\Omega}l(\alpha_j-d_I p^{(l)}_j)\to \infty$ as $l\to\infty$.  Recalling from \eqref{Eqc-6} that $\mathcal{G}(l)\to \infty$ as $l\to\infty$, we conclude from Proposition \eqref{prop2}-{\rm (iii)} that $f(l):=\sigma_*^{(\mathcal{G}(l))}\to \hat{\mathcal{R}}_0$ as $l\to\infty$. This  proves  \eqref{Th6-eq1}.

\medskip

\noindent Next, by the perturbation theory for the principal eigenvalue and the fact that $F$ is analytic in $N>0$,  we have that   $\sigma_*^{(N)}$ is analytic in $N$ (since $F$ defined by \eqref{R-0} is analytic in $N>0$), and hence $f(l)=\sigma_{*}^{(\mathcal{G}(l))}$ is analytic in $l>\mathcal{N}_0$ as the composition of two such functions. Observe also that since $F$ defined by \eqref{R-0} is strictly increasing in $N$ with $\partial_{N}F\gg\bm0 $ for all $N>0$, then    $\sigma_*^{(N)}$ is strictly  increasing in $N>0$,  with $\frac{d\sigma_*^{(N)}}{dN}>0$ for all $N>0$. Therefore,  the bifurcation direction at $\mathcal{R}_0=1$ is completely determined by the sign of $\frac{d\mathcal{G}(\mathcal{N}_0)}{dl}$.  Therefore, the conclusions {\rm (i)} and {\rm(ii)} easily follow from \eqref{Eqc-5}. It is also clear that \eqref{TH6-eq1} holds when $d_S\ge d_I$. Thus, $\mathcal{R}_0=1$ is a forward transcritical bifurcation point when $d_S\ge d_I$.

\end{proof}

{
\begin{proof}[Proof of Proposition \ref{prop3}]We suppose that $|\Omega|=2$, that is we have two patches, and $\mathcal{L}$ is symmetric. Then there is $L>0$ such that 
\begin{equation*}
    \mathcal{L}=\left(\begin{array}{cc}
      -L   & L \\
        L &  -L
    \end{array}\right),
\end{equation*}
and an easy computation gives $\alpha:=\alpha_1=\alpha_2=\frac{1}{2}.$
Next, suppose also that $\bm\zeta\in{\rm span}(\bm 1)$, $\bm\zeta\gg 0$, $\bm r\notin{\rm span}(\bm 1)$, $\gamma_1<\beta_1$, $\|\bm\gamma\|_1<\|\bm\beta\|_1$, and $\mathcal{N}_{\rm up}^*=\frac{\gamma_1\zeta_1}{(\beta_1-\gamma_1)\alpha_1}$, where $\mathcal{N}_{\rm up}^*$ is defined by \eqref{N-star-def}. Then $\|\bm \beta/\bm\gamma\|_{\infty}>1$, $1\in\tilde{\Omega}$ and there is $\zeta>0$ such that $\bm \zeta=\zeta{\bm 1}$. Moreover, since $\|\bm\gamma\|_1<\|\bm\beta\|_1$, then  $d_*=\infty$ in Proposition \ref{prop2}-{\rm (iv)}. Note that since $\bm r \notin{\rm span}(\bm 1)$, then by Remark \ref{Rk0}, the mapping $\mathcal{N}_0$ is strictly increasing in $d_I>0$. Recalling from Proposition \ref{prop2}-{\rm (iv-2)} that 
$$
\lim_{d_I\to 0^+}\mathcal{N}_0=\mathcal{N}^*_{\rm up}=\frac{\gamma_1\zeta}{(\beta_1-\gamma_1)\alpha},
$$
then 
\begin{equation*}
    \mathcal{N}_0>\mathcal{N}^*_{\rm up}\quad \forall\ d_I>0.
\end{equation*}
By Proposition \ref{prop2}-{\rm (iv)}, we have that 
$$
\lim_{d_I\to 0^+}\frac{\mathcal{N}_0\beta_{1}\alpha}{\zeta+\mathcal{N}_0\alpha}=\frac{\Big(\frac{\gamma_{1}\zeta}{(\beta_{1}-\gamma_{1})\alpha}\Big)\beta_{1}\alpha}{\zeta+\Big(\frac{\gamma_{1}\zeta}{(\beta_{1}-\gamma_{1})\alpha}\Big)\alpha}=\gamma_{1}.
$$
As a result, 
\begin{equation}\label{Appedndix-eq5-0}
    \gamma_{1}-\frac{\mathcal{N}_0\beta_{1}\alpha}{\zeta+\mathcal{N}_0\alpha}< 0\quad \forall\ d_I>0.
\end{equation}
Next, fix $d_I>0$. Then $ \mathcal{N}_0>0$ satisfies   $\sigma_*(d_I\mathcal{L}+{\rm diag}(\mathcal{N}_0\bm\beta\circ\bm\alpha/(\bm\zeta+\mathcal{N}_0\bm\alpha)-\bm \gamma))=0$, where  
$$
d_I\mathcal{L}+{\rm diag}(\mathcal{N}_0\bm\beta\circ\bm\alpha/(\bm\zeta+\mathcal{N}_0\bm\alpha)-\bm \gamma)=\left(
\begin{array}{cc}
    \frac{\mathcal{N}_0\beta_1\alpha}{\zeta+\mathcal{N}_0\alpha}-(d_IL+\gamma_1) & d_IL \\
     d_IL& \frac{\mathcal{N}_0\beta_2\alpha}{\zeta+\mathcal{N}_0\alpha} -(d_IL+\gamma_2)
\end{array}
\right).
$$
Therefore,  \begin{equation*}
    \Big(\frac{\mathcal{N}_0\beta_1\alpha}{\zeta+\mathcal{N}_0\alpha}-(d_IL+\gamma_1)\Big)\Big(\frac{\mathcal{N}_0\beta_2\alpha}{\zeta+\mathcal{N}_0\alpha} -(d_IL+\gamma_2)\Big)-(d_IL)(d_IL)=0,
\end{equation*}
\begin{equation}\label{App-eq2-0}
    \frac{\mathcal{N}_0\beta_1\alpha}{\zeta+\mathcal{N}_0\alpha} -(d_IL+\gamma_1)<0\quad \text{and}\quad \frac{\mathcal{N}_0\beta_2\alpha}{\zeta+\mathcal{N}_0\alpha} -(d_IL+\gamma_2)<0.
\end{equation}
Recall that $\bm\eta\gg 0$ is uniquely determined by  
\begin{equation}\label{Appedndix-eq4-0}
(d_I\mathcal{L}+{\rm diag}(\mathcal{N}_0\bm\beta\circ\bm\alpha/(\bm\zeta+\mathcal{N}_0\bm\alpha)-\bm \gamma))\bm\eta=\bm 0 \quad \text{and}\quad \sum_{i=1}^2\eta_i=1.
\end{equation}
Solving for $\bm \eta$ from \eqref{Appedndix-eq4-0}, we obtain that
\begin{equation}\label{Eqp-1}
    \eta_{1}=\frac{L}{{2L+\frac{1}{d_I}\Big(\gamma_{1}-\frac{\mathcal{N}_0\beta_{1}\alpha}{\zeta+\mathcal{N}_0\alpha}}\Big)}\quad \text{and}\quad \eta_{2}=\frac{{L+\frac{1}{d_I}\Big(\gamma_{1}-\frac{\mathcal{N}_0\beta_{1}\alpha}{\zeta+\mathcal{N}_0\alpha}\Big)}}{{2L+\frac{1}{d_I}\Big(\gamma_{1}-\frac{\mathcal{N}_0\beta_{1}\alpha}{\zeta+\mathcal{N}_0\alpha}}\Big)}.
\end{equation}
Now, since \eqref{Appedndix-eq5-0} holds, then 
$$
\eta_{1}=\frac{L}{{2L+\frac{1}{d_I}\Big(\gamma_{1}-\frac{\mathcal{N}_0\beta_{1}\alpha}{\zeta+\mathcal{N}_0\alpha}}\Big)}> \frac{L}{2L}=\alpha_{1}.
$$
Note also that since $0<L<2L$, the mapping $H:\ (-\infty,L)\ni x\mapsto \frac{L-x}{2L-x}$ is strictly decreasing. By \eqref{Appedndix-eq5-0},
\begin{align*}
\eta_{2}=&\frac{{L+\frac{1}{d_I}\Big(\gamma_{1}-\frac{\mathcal{N}_0\beta_{1}\alpha}{\zeta+\mathcal{N}_0\alpha}\Big)}}{{2L+\frac{1}{d_I}\Big(\gamma_{1}-\frac{\mathcal{N}_0\beta_{1}\alpha}{\zeta+\mathcal{N}_0\alpha}}\Big)}
= H\left(\frac{1}{d_I}\Big(\frac{\mathcal{N}_0\beta_{1}\alpha}{\zeta+\mathcal{N}_0\alpha}-\gamma_{1}\Big)\right) <H(0)= \frac{L}{2L}=\alpha_{2}.
\end{align*}
Note that $\bm\eta^*=\bm\eta$ since $\mathcal{L}$ is symmetric. Therefore,

\begin{equation*}
    \sum_{i=1}^2\frac{\zeta_i\eta_i\eta_i^*\beta_i(\eta_i-\alpha_i)}{(\zeta_i+\mathcal{N}_0\alpha_i)^2}=\frac{\zeta\eta_1^2{\beta_1}}{(\zeta+\mathcal{N}_0\alpha)^2}\Big((\eta_1-\alpha)+\Big(\frac{\eta_2}{\eta_1}\Big)^2\Big(\frac{\beta_2}{\beta_1}\Big)(\eta_2-\alpha)\Big).
\end{equation*}
Hence, if $\Big(\frac{\eta_2}{\eta_1}\Big)^2\Big(\frac{\beta_2}{\beta_1}\Big)\le 1$, then  since $\eta_2<\alpha$, we have 
$$
 \sum_{i=1}^2\frac{\zeta_i\eta_i\eta_i^*\beta_i(\eta_i-\alpha_i)}{(\zeta_i+\mathcal{N}_0\alpha_i)^2}\ge \frac{\zeta\eta_1^2{\beta_1}}{(\zeta+\mathcal{N}_0\alpha)^2}\Big((\eta_1-\alpha)+(\eta_2-\alpha)\Big)=\frac{\zeta\eta_1^2{\beta_1}}{(\zeta+\mathcal{N}_0\alpha)^2}\Big((\eta_1+\eta_2)-2\alpha)\Big)=0.
$$
In this case, \eqref{TH6-eq1} holds for any $d_S>0$. Hence $\mathcal{R}_0=1$ is a forward transcritical bifurcation point, which proves {\rm (i)}.

\medskip

Next, suppose that $\Big(\frac{\eta_2}{\eta_1}\Big)^2\Big(\frac{\beta_2}{\beta_1}\Big)> 1$. Then since $\eta_2<\alpha$, we have 
$$
 \sum_{i=1}^2\frac{\zeta_i\eta_i\eta_i^*\beta_i(\eta_i-\alpha_i)}{(\zeta_i+\mathcal{N}_0\alpha_i)^2}< \frac{\zeta\eta_1^2{\beta_1}}{(\zeta+\mathcal{N}_0\alpha)^2}\Big((\eta_1-\alpha)+(\eta_2-\alpha)\Big)=\frac{\zeta\eta_1^2{\beta_1}}{(\zeta+\mathcal{N}_0\alpha)^2}\Big((\eta_1+\eta_2)-2\alpha)\Big)=0.
$$
In this case, we define 
$$
d^*_{\rm up}=\frac{\zeta\sum_{i=1}^2\eta_i^2\beta_i(\alpha-\eta_i)}{\alpha\sum_{j=1}^2\beta_j\eta_j^2(\mathcal{N}_0\eta_j+\zeta)}d_I.
$$
Hence $0<d_{\rm up}^*<d_I$, and  the assertion {\rm (ii)} follows from \eqref{TH6-eq1} and \eqref{TH6-eq2}.

\medskip

Finally, it is clear from \eqref{Eqp-1} and \eqref{Appedndix-eq5-0} that 
$$ 
\eta_1>\eta_2 \quad \text{and}\quad  \frac{\eta_2}{\eta_1}=1-\frac{1}{d_IL}\Big(\frac{\mathcal{N}_0\beta_{1}\alpha}{\zeta+\mathcal{N}_0\alpha}-\gamma_{1}\Big)>1-\frac{1}{d_IL}(\beta_1-\gamma_1).
$$
Then \eqref{prop3-eq1} holds.

\medskip

\end{proof}

}

\subsection{Proofs of Theorems \ref{TH7} and \ref{TH8}}


    

\begin{proof}[Proof of Theorem \ref{TH7}] Assume that the hypotheses of the theorem hold. For every $d_S>0$, let $(\bm S,\bm I)$ be an EE solution of \eqref{model}. Then, for every $d_S>0$, by Lemma \ref{lem5}-{\rm (i)}, there is $\kappa>0$ such that \eqref{Eqc-1} holds. Thus 
since $\|\bm\alpha\|_1=\sum_{j\in\Omega}\alpha_j=1$, we have 
\begin{align}\label{Eqd-1}
\Big\|\bm I-(\sum_{j\in\Omega}I_j)\bm\alpha\Big\|_1\le& \Big\|\bm I-\frac{\kappa}{d_I}\bm\alpha\Big\|_1+\Big\|(\sum_{j\in\Omega}I_j-\frac{\kappa}{d_I})\bm\alpha\Big\|_1
=\|\bm I-\frac{\kappa}{d_I}\bm\alpha\|_1+\Big|\sum_{j\in\Omega}I_j-\frac{\kappa}{d_I}\Big| \cr
=&\Big\|\bm I-\frac{\kappa}{d_I}\bm\alpha\big\|_1+\Big|\|\bm I\|_{1}-\Big\|\frac{\kappa}{d_I}\bm\alpha\Big\|_1\Big| 
\le   2\|\bm I-\frac{\kappa}{d_I}\bm\alpha\|_1 
= 2 \frac{d_S}{d_I}\|\bm S\|_1\le \frac{2d_S}{d_I}N\to 0\ \text{as}\ d_S\to0^+.\cr
\end{align}
This proves the first result of the theorem.  Next, set 
$ M^*:=\limsup_{d_S\to0^+}\sum_{j\in\Omega}I_j.$  Note that $0\le M^*\le N$.  

\medskip

\noindent {\bf Claim 1.} If $M^*>0$, then $\bm r_M<1$, $N>\|\bm\zeta\circ\bm r/(\bm 1-\bm r)\|_1$, $M^*=\frac{(N-\|\bm\zeta\circ\bm r/(\bm 1-\bm r)\|_1)}{(1+\|\bm\alpha\circ\bm r/(\bm 1-\bm r)\|_1)}$. 

\medskip

\noindent So, suppose that $M^*>0$. Hence, possibly after passing to a subsequence, we may suppose that $\sum_{j\in\Omega}I_j\to M^*>0$ as $d_S\to 0^+$. Thus, from the first equation of \eqref{EE-system}, we have 
\begin{align}\label{Eqd-2}
   \| {\bm S}/(\bm \zeta +\bm S+\bm I)-{\bm r}\|_1=d_S\|(\mathcal{L}\bm S)/(\bm I\circ\bm \beta)\|_1\le d_S\frac{\|\mathcal{L}\|\|\bm S\|_1}{\bm I_m\bm\beta_m} \le d_S \frac{\|\mathcal{L}\|N}{\bm I_m\bm \beta_m}\to 0 \quad \text{as}\ d_S\to 0^+.
\end{align} 
This along with the fact that
$$\|\bm S/(\bm \zeta+\bm S+\bm I)\|_{\infty}\le \max_{j\in\Omega}\frac{N}{\zeta_j+N+I_j}\to \max_{j\in\Omega}\frac{N}{\zeta_j+N+M^*\alpha_j}<1 \quad \text{as}\ d_S\to 0^+, $$
implies that ${\bm r}_M<1$. 
Next, since $\bm r_M< 1$, it follows from \eqref{Eqd-1} and \eqref{Eqd-2} that 
\begin{equation}\label{Eqc-7}
\bm S\to (\bm\zeta +M^*\bm\alpha)\circ (\bm r/(\bm 1-{\bm r}))\quad \text{as} \ d_S\to 0^+, 
\end{equation}
from which it follows that

$$
N=\lim_{d_S\to 0^+}\sum_{j\in\Omega}(S_j+I_j)=\sum_{j\in\Omega}\frac{(\zeta _j+M^*\alpha_j)r_j}{1-r_j}+\sum_{j\in\Omega}M^*\alpha_j=\|\bm \zeta\circ \bm r/(\bm 1-\bm r)\|_1+M^*\Big(1+\|{\bm r\circ\bm\alpha}/({\bm 1-\bm r})\|_1\Big).
$$
Solving for $M^*$ in the last equation yields
$ M^*=({N-\|{\bm\zeta\circ\bm r}/{(\bm 1-\bm r)}}\|_1)/({ 1+\|{\bm \alpha\circ\bm r}/({\bm 1-\bm r})}\|_1).$  Recalling from our initial hypothesis that $M^*>0$, then we must have $N>\|{\bm\zeta\circ\bm r}/(\bm 1-\bm r)\|_1$. This completes the proof of Claim 1. 
Now, we proceed to prove {\rm (i)} and {\rm (ii)}.

\medskip

\noindent{\rm (i)} It is clear from Claim 1 that if either $\bm r_M\ge 1$ or $\bm r_M<1$ and $N\le\|{\bm\zeta\circ\bm r}/{(\bm 1-\bm r)}\|_1$, then $M^*=0$, which implies that $\|\bm I\|_1\to 0$ and $\|\bm S\|_1\to N$ as $d_S\to 0^+$. Next, we show that $(\bm S,\bm I)$ has the asymptotic profiles described in {\rm (i-1)} and {\rm (i-2)}.

\medskip

\noindent{\rm (i-1)} Next, suppose that $\bm r_M\ge 1$ or $\bm r_M<1$ and $N<\|{\bm\zeta\circ\bm r}/{(\bm 1-\bm r)}\|_1$.  By Lemma \ref{lem5}, for every $d_S>0$, there is $l>\mathcal{N}_0$ such that $(\bm S,\bm I)=(l(\bm \alpha-d_I{\bm P}^{(l)}),d_Sl{\bm P}^{(l)})$, where ${\bm P}^{(l)}$ is the unique positive solution of \eqref{transform equation}. We first claim that 
\begin{equation}\label{Eqc-8}
    \limsup_{d_S\to 0^+}l<\infty.
\end{equation}
If this is false, then possible after passing to a subsequence, we may suppose that $l\to \infty$ as $d_S\to 0^+$. Furthermore, by the Bolzano-Wierestrass theorem, possible after passing to a further subsequence, we may suppose that $l(\bm \alpha-d_I{\bm P}^{(l)})=\bm S\to \bm S^*$ as $l\to\infty$. Then 
\begin{equation}\label{Eqc-9}
\|\bm \alpha-d_I{\bm P}^{(l)}\|_1=\frac{1}{l}\|\bm S\|_1\le \frac{N}{l}\to 0 \quad \text{as}\ l\to\infty.
\end{equation}
Therefore, since $d_Sl\bm P^{(l)}=\bm I\to \bm 0$ as $d_S\to 0^+$, letting $l\to\infty$ in \eqref{transform equation}, we obtain that 
\begin{equation}\label{Eqc-10}
0=d_I\mathcal{L}(\frac{1}{d_I}\bm\alpha)+\bm \beta\circ(\bm S^*/(\bm \zeta +\bm S^*)-\bm r)\circ(\frac{1}{d_I}\bm \alpha),
\end{equation}
from which we deduce that $ \bm S^*/(\bm \zeta+\bm S^*)=\bm r$ since $\mathcal{L}\bm\alpha=0$ and $\bm \alpha\gg \bm 0$. Solving for $\bm S^*$, we obtain $\bm S^*=\bm\zeta\circ\bm r/(\bm 1-\bm r)$. Hence, we must have $\bm r_M<1$ and $N=\|\bm\zeta\circ\bm r/(\bm 1-\bm r)\|_{1}$, which is contrary to our initial assumption. Therefore \eqref{Eqc-8} holds. 

Since \eqref{Eqc-8} holds, after passing to a subsequence, we may suppose that $l\to l^*\in [\mathcal{N}_0,\infty)$ as $d_S\to 0^+$. Hence $(\bm S,\frac{1}{d_S}\bm I)=(l(\bm\alpha-d_I{\bm P}^{(l)}),l{\bm P}^{(l)})\to (l^*(\bm\alpha-d_I\bm P^{(l^*)}), l^*\bm P^{(l^*)})$ as $d_S\to 0^{+}$. To complete the proof of the result, it remains to argue that $l^*>\mathcal{N}_0$. If this were false, we would have that $\bm S\to \mathcal{N}_0\bm\alpha$, which yields $N=\|\mathcal{N}_0\bm\alpha\|_1=\mathcal{N}_0$. As a result, we get $\mathcal{R}_0=1$, so we get a contradiction. Hence, 
$l^*>\mathcal{N}_0$.

\medskip

\noindent{\rm (i-2)} Suppose that $\bm r_M<1$ and $N=\|\bm\zeta\circ\bm r/(\bm 1-\bm r)\|_1$. If \eqref{Eqc-8} holds, then we can proceed as above to establish that $(\bm S,\frac{1}{d_S}\bm I)$ has the asymptotic profiles described in {\rm(i-1)}. Now, suppose that \eqref{Eqc-8} is false. Thus, by the similar arguments leading to \eqref{Eqc-9}-\eqref{Eqc-10},  after passing to a further subequence,  $\bm S\to \bm S^*$ as $d_S\to 0^+$, where $\bm  S^*>\bm 0$ and satisfies $\bm S^*/(\bm \zeta+\bm S^*)=\bm r$. Solving for $\bm S^*$, we get $\bm S^*=\bm\zeta\circ\bm r/(\bm 1-\bm r)$.

\medskip

\noindent {\rm (ii)} Suppose that $\bm r_M<1$ and $N>\|\bm\zeta\circ\bm r/(\bm 1-\bm r)\|_1$. We proceed in two cases.\\
{\bf Case 1.} Here we suppose that  $M^*>0$. Then it follows from Claim 1,  the arguments leading to \eqref{Eqc-7}, and \eqref{Eqd-1} that, possible after passing to a subsequence, 
$$
\bm S\to \Big(\bm\zeta+\frac{(N-\|\bm\zeta\circ\bm r/(\bm 1-\bm r)\|_1)}{(1+\|\bm\alpha\circ\bm r/(\bm 1-\bm r)\|_1)}\bm\alpha\Big)\circ(\bm r/(\bm 1-\bm r))\quad \text{and} \quad \bm I\to \frac{(N-\|\bm\zeta\circ\bm r/(\bm 1-\bm r)\|_1)}{(1+\|\bm\alpha\circ\bm r/(\bm 1-\bm r)\|_1)}\bm\alpha$$
as $d_S\to 0^+$. In this case, we see that $(\bm S,\bm I)$ has the asymptotic profiles described in {\rm(ii-1)}.\\
{\bf Case 2.} Next, we suppose that $M^*=0$. Then $\bm I\to \bm 0$ and $\|\bm S\|_1\to N$ as $d_S\to 0^+$. Furthermore, since $N\ne \|\bm\zeta\circ\bm r/(\bm 1-\bm r)\|_1$, then $\bm \zeta\circ\bm r/(\bm 1-\bm r)$ is not a limit point of $\{\bm S\}_{d_S>0}$ as $d_S$ tends  to zero.  This shows that $\bm S$ doesn't have the asymptotic profiles in {\rm (i-2)} for any subsequence of $d_S$ converging to zero. Therefore, \eqref{Eqc-8} must hold and hence, up to a subsequence, $(\bm S,\frac{1}{d_S}\bm I)$ has the asymptotic profiles described in {\rm(i-1)} as $d_S\to 0^+$. In the current case, we see that $(\bm S,\bm I)$ has the asymptotic profiles described in {\rm (ii-2)} as $d_S\to 0^{+}$.

\medskip

 It follows from Case 1 and Case 2 that up to a subsequence, $(\bm S,\bm I)$ has one of the asymptotic profiles in {\rm (ii-1)} or {\rm (ii-2)} as $d_S$ tends to zero.   Finally, suppose in addition that either  $N>\|\bm r\circ\bm\zeta/((\bm 1-\bm r)\circ\bm\alpha)\|_{\infty}$ or $N=\|\bm r\circ\bm\zeta/((\bm 1-\bm r)\circ\bm\alpha)\|_{\infty}$ and $\bm \zeta\circ\bm r/((\bm 1-\bm r )\circ\bm\alpha){\notin}{\rm span}(\bm 1)$. We claim that 
$$
M_*:=\liminf_{d_S\to 0^+}\sum_{j\in\Omega}I_j>0.
$$
Suppose to the contrary that $M_*=0$. Hence, possibly after passing to a subsequence, we may suppose that $\sum_{j\in\Omega}I_j\to 0$ as $d_S\to 0^+$. Hence, $(\bm S,\bm I)$ has the asymptotic profiles described in {\rm (ii-2)}. Consequently,  there is $l^*>0$ such that $(\bm S,\frac{1}{d_S}\bm I)=(l(\bm\alpha-d_I{\bm P}^{(l)}),l{\bm P}^{(l)})\to (l^*(\bm\alpha-d_I\bm P^{(l^*)}), l^*\bm P^{(l^*)})$ as $d_S\to 0^{+}$. Setting $ \bm S^*=l^*(\bm\alpha-d_I\bm P^{(l^*)})$, then 
$$
({\rm diag}(\hat{\bm G})-\mathcal{L})\bm S^*={\hat{\bm G}}\circ\bm \zeta\circ\bm r/(\bm 1-\bm r ),
$$
where ${\hat{\bm G}:=l^*(\bm 1-\bm r)\circ\bm \beta\circ\bm P^{l^*}/(\bm\zeta+\bm S^*)\gg \bm0.}$
 Hence, noting that  $\bar{\bm S}^*:=\|\bm \zeta\circ\bm r/((\bm 1-\bm r )\circ\bm\alpha)\|_{\infty}\bm \alpha$ satisfies 
$$
({\rm diag}(\hat{\bm G})-\mathcal{L})\bar{\bm S}^*={\hat{\bm G}}\circ\bar{\bm S}^*\ge \hat{\bm G}\circ\bm \zeta\circ\bm r/(\bm 1-\bm r ),
$$
 $\sigma_*(\mathcal{L}-{\rm diag}({\hat{\bm G}}))<\sigma_*(\mathcal{L})=0$, and $\mathcal{L}-{\rm diag}({\hat{\bm G}})$ is quasipositive and irreducible, then by the comparison principle, we have that $\bm S^*\le \bar{\bm S}^*$ with a strict inequality if $\bm \zeta\circ\bm r/((\bm 1-\bm r )\circ\bm\alpha) \notin {\rm span}(\bm 1)$. Therefore, $N=\|\bm S^*\|_1\le \|\bar{\bm S}^*\|_1= \|\bm \zeta\circ\bm r/((\bm 1-\bm r )\circ\bm\alpha)\|_{\infty}$. This contradicts our initial assumption on $N$ and the fact $\bm S^*\ll \bar{\bm S}^*$ if $\bm \zeta\circ\bm r/((\bm 1-\bm r )\circ\bm\alpha)\notin{\rm span}(\bm 1)$.  Therefore,  $M_*>0$. This rules out {\rm (ii-2)},  hence {\rm (ii-1)}  holds.
    
\end{proof}

\begin{proof}[Proof of Theorem \ref{TH8}] Fix $d_S>0$, $N>0$ and  suppose that $\|N\bm\alpha/(\bm r\circ(\bm \zeta+N\bm\alpha) )\|_{\infty}>1$. Then, by \eqref{R-0-limit}, there is $d_1>0$ such that $\mathcal{R}_0>1$ for all $0<d_I<d_1$. It then follows from Theorem \ref{TH5}-{\rm(i)} that system \eqref{model} has a unique EE solution $(\bm S, \bm I)$ for every $0<d_I<d_0:=\min\{d_1,d_S\}$. Now,  for every $0<d_I<d_0$, by Lemma \ref{lem5}-{\rm (i)}  there is $\kappa>0$ such that \eqref{Eqc-1} holds. By the similar arguments in \eqref{Eqd-1}, we have 
\begin{align}\label{Eqd-3}
   \|\bm S-(\sum_{j\in\Omega}S_j)\bm \alpha\|_1\le& \Big\|\bm S-\frac{\kappa}{d_S}\bm\alpha\Big\|_1+\Big\|(\sum_{j\in\Omega}S_j-\frac{\kappa}{d_S})\bm\alpha\Big\|_1\cr
   =& \Big\|\bm S-\frac{\kappa}{d_S}\bm\alpha\Big\|_1+\Big|\|\bm S\|_1-\|\frac{\kappa}{d_S}\bm\alpha\|_1\Big|\le 2\Big\|\bm S-\frac{\kappa}{d_S}\bm\alpha\Big\|_1\le \frac{2d_I}{d_S}N\to 0\ \text{as}\ d_I\to 0^+.
\end{align}
Next, from the second equation of \eqref{EE-system}, using the quadratic formula and the positivity of $\bm I$,  we have 
\begin{equation}\label{Eqd-5}
    I_j=\frac{\Big(\frac{d_I}{\beta_j}B_j+A_j\Big)+\sqrt{\Big(\frac{d_I}{\beta_j}B_j+A_j\Big)^2+4\frac{d_I}{\beta_j}(r_j-\frac{d_I}{\beta_j}L_{jj})(\zeta_j+S_j)\sum_{i\in\Omega\setminus\{j\}}L_{ji}I_i}}{2\Big(r_j-\frac{d_I}{\beta_j}L_{jj}\Big)} \quad j\in\Omega,
\end{equation}
where  $B_j:=\sum_{i\ne j}L_{ji}I_i+L_{jj}(\zeta_j+S_j)$ and $A_j:=(S_j-r_j(\zeta_j+S_j))$ for all $ j\in\Omega.$  Since $N=\|\bm S\|_1+\|\bm I\|_1$ for all $d_I>0$, then thanks to \eqref{Eqd-3}, possibly after passing to a subsequence, we may suppose that $\bm S \to m\bm\alpha$ as $d_I\to 0^+$ for some $m\in[0,N]$. It then follows from \eqref{Eqd-5} that
\begin{equation}\label{Eqd-6}
I_j\to \frac{(m(1-r_j)\alpha_j-r_j\zeta_j)_+}{r_j} \quad \text{as} \ d_I\to0^+, \quad \forall\ j\in\Omega.
\end{equation}
Hence, we must have that 
\begin{equation}\label{Eqd-4}
    N=\sum_{j\in\Omega}m\alpha_j+\sum_{j\in\Omega}\frac{(m(1-r_j)\alpha_j-r_j\zeta_j)_+}{r_j}=m+\sum_{j\in\Omega}\frac{(m(1-r_j)\alpha_j-r_j\zeta_j)_+}{r_j}=m+\sum_{j\in\tilde{\Omega}}\frac{(m(1-r_j)\alpha_j-r_j\zeta_j)_+}{r_j},
\end{equation}
where $\tilde{\Omega}:=\{j\in\Omega : r_j<1\}$. Note that $\tilde{\Omega}\ne\emptyset$ since $\|N\bm\alpha/(\bm r\circ(\bm\zeta+N\bm\alpha))\|_{\infty}>1$. It is easy to see that the function 
$$
(0,\infty)\ni m\mapsto m+\sum_{j\in\tilde{\Omega}}\frac{(m(1-r_j)\alpha_j-r_j\zeta_j)_+}{r_j} 
$$
is strictly increasing, continuous,  
$$
\lim_{m\to 0^+}\Big(m+\sum_{j\in\tilde{\Omega}}\frac{(m(1-r_j)\alpha_j-r_j\zeta_j)_+}{r_j}\Big)=0\quad \text{and}\quad \lim_{m\to\infty}\Big(m+\sum_{j\in\tilde{\Omega}}\frac{(m(1-r_j)\alpha_j-r_j\zeta_j)_+}{r_j}\Big)=\infty.
$$
It then follows from the implicit function theorem that the algebraic equation \eqref{Eqd-4} has a unique root. This shows $m\in[0,N]$ is independent of the chosen subsequence, and hence $\bm S\to m\bm \alpha$ as $d_I\to 0^+$, where $m\in[0,N]$ is the unique root of \eqref{Eqd-4}. It is clear from \eqref{Eqd-4} that $m>0$ since $N>0$. Next, if $m=N$, then we must have that  $\sum_{j\in\Omega}\frac{(N(1-r_j)\alpha_j-r_j\zeta_j)_+}{r_j}=0$, from which it follows that $N(1-r_j)\alpha_j\le r_j\zeta_j$ for all $j\in\Omega.$ Equivalently, $N\alpha_j/(r_j(\zeta_j+N\alpha_j))\le 1 $ for all $j\in\Omega$. This contradicts our initial assumption $\|N\bm \alpha/(\bm r\circ(\bm\zeta+N\bm\alpha))\|_{\infty}>1$. Therefore, we must also have that $m<N$. Recalling that \eqref{Eqd-6} holds, $\bm S\to m\bm \alpha$ as $d_I\to0^+$, and $0<m<N$ satisfies \eqref{Eqd-4}, the result follows.
    
\end{proof}

 \subsection*{Declarations}
{\bf Ethical Approval:} Not applicable for this study.

\noindent{\bf Competing interests:} The authors declare  that there is no competing interest.

\noindent{\bf Authors' contributions:} All authors contributed equally in designing and conducting the study.



\noindent{\bf Availability of data and materials:} Not applicable.


\end{document}